\documentclass[superscriptaddress, 10pt, amsmath,amssymb, aps, pra, twocolumn,preprintnumbers]{revtex4-2}

\usepackage[utf8]{inputenc}
\usepackage[T1]{fontenc}
\usepackage{amsmath}
\usepackage{amsfonts}
\usepackage{amssymb}
\usepackage{amsthm}
\usepackage{dsfont}
\usepackage{graphicx}
\usepackage[dvipsnames]{xcolor}
\usepackage{enumitem}
\usepackage{hyperref}
\hypersetup{hidelinks}
\usepackage{physics}
\usepackage{bbm}
\usepackage{indentfirst}
\usepackage{thmtools}
\usepackage{thm-restate}
\usepackage{soul}
\usepackage{comment}
\setstcolor{red}
\usepackage{breakurl}
\usepackage{mathtools}
\usepackage{titletoc}
\usepackage{chngcntr}
\usepackage[normalem]{ulem}

\usepackage{titletoc}

\titlecontents{section}
  [1.5em]
  {}
  {\contentslabel{2em}}
  {}
  {\hfill\contentspage}

\titlecontents{subsection}
  [3.5em]
  {}
  {\contentslabel{2.5em}}
  {}
  {\hfill\contentspage}

\usepackage{bibunits}
\defaultbibliographystyle{apsrev4-2_Truncate}
\defaultbibliography{refs}

\usepackage{algorithm,algorithmic}

\usepackage{setspace}

\newtheorem{theorem}{Theorem}
\newtheorem{remark}{Remark}
\newtheorem{definition}{Definition}
\newtheorem{corollary}{Corollary}[theorem]
\newtheorem{lemma}{Lemma}

\makeatletter
\renewcommand\footnoterule{%
  \kern -3pt
  \hrule width 1.6in height 0.4pt
  \kern 3pt
}
\makeatother

\newcommand{\DN}{\mathcal{D}^{\otimes n}_\lambda}
\newcommand{\DNINV}{\mathcal{I}^{\otimes n}_{\lambda'}}
\newcommand{\DNT}{\mathcal{D}^{\otimes 3n}_\lambda}
\newcommand{\swap}{\textnormal{\textsf{SWAP}}}

\newcommand{\nbqp}{{\textnormal{\textsf{NBQP}}}}

\newcommand{\bqp}{{\textnormal{\textsf{BQP}}}}

\makeatletter
\newcommand{\bigboxplus}{\mathop{\vphantom{\sum}\mathchoice
  {\vcenter{\hbox{\Large$\m@th\boxplus$}}}
  {\vcenter{\hbox{\large$\m@th\boxplus$}}}
  {\vcenter{\hbox{\normalsize$\m@th\boxplus$}}}
  {\vcenter{\hbox{\scriptsize$\m@th\boxplus$}}}
}\displaylimits}
\makeatother

\begin{document}
\addtocontents{toc}{\protect\setcounter{tocdepth}{-1}} 

\title{Fault-tolerant quantum processing of physical experiments}
\author{Ishaan Kannan}
\thanks{Equal contribution}
\affiliation{Harvard Quantum Initiative, 60 Oxford St, Cambridge, MA 02138}
\author{Harald Putterman}
\thanks{Equal contribution}
\affiliation{Department of Physics, Harvard University, Cambridge, MA 02138 USA}
\author{Jordan Cotler}
\thanks{Corresponding author}
\email{jcotler@fas.harvard.edu}
\affiliation{Harvard Quantum Initiative, 60 Oxford St, Cambridge, MA 02138}
\affiliation{Department of Physics, Harvard University, Cambridge, MA 02138 USA}

\begin{abstract}
Quantum computers may reveal features of Nature inaccessible to conventional experiments, but manipulating raw quantum data introduces noise that degrades inference even when the processor is fault-tolerant, creating a data-input bottleneck for robust quantum learning. Here we show that quantum fault tolerance can substantially improve the sample complexity of learning from noisy experiments. We encode unknown quantum states from physical experiments into protected quantum memory, enabling fault-tolerant implementations of quantum learning algorithms otherwise degraded by errors. Using this \textit{quantum uploading} procedure, we prove that noisy randomized measurement and multi-copy learning tasks can be performed exponentially faster than by any adaptive strategy that does not immediately encode physical states into error-corrected memory. These separations are not simply due to a reduced effective noise rate: they hold even when uploading is substantially noisier than the bare experimental interface, rigorously establishing immediate encoding as the optimal approach to noise-robust learning. We numerically illustrate the speedups in astronomical imaging, where quantum processing of uploaded photons locates an exoplanet obscured by a bright star using orders of magnitude fewer shots than unencoded baselines. Our results establish a robust interface between quantum computers and natural systems, enabling powerful and practical quantum-enhanced experiments.
\end{abstract}

\maketitle

\begin{bibunit}
\setcounter{tocdepth}{-1} 

\section{Introduction}
\vspace{-2mm}

Quantum information processing offers new ways to learn from physical experiments, motivating recent progress in quantum metrology, learning, and computation. Quantum learning theory assumes that an experiment supplies access to an unknown quantum system, and has demonstrated that one can achieve dramatic advantages in time and sample complexity over conventional experimental methods by directly processing the system on a quantum computer \cite{Cotler_2019, chen2021exponentialseparationslearningquantum, Aharonov_2022,  Huang_adv_2022, cotler2026quantumadvantagesensingproperties,kannan2026exponentialquantumadvantagelearning}. 

However, realizing these advantages in physical experiments, especially in the presence of noise, remains difficult \cite{chen2023complexity, cotler2026noisyquantumlearningtheory, depradenne2026restrictionsnoncliffordfaulttolerance}. In astronomy, condensed-matter physics, chemistry, and analog many-body experiments, for instance, the goal is to learn properties of a bare quantum state or process produced or detected in the laboratory.  In all cases, the unknown quantum system is supplied by Nature, and does not arrive inside an error-correcting code on quantum hardware. As such, any coherent learning protocol must shuttle information between protected logical degrees of freedom on the processor and unprotected physical degrees of freedom in the experiment. Recent work formalized the cost of coupling fault-tolerant processors to unprotected experiments, showing that once uncorrected physical noise accompanies every logical operation on the experimental state, the multi-copy and randomized-measurement primitives behind many known speedups can become exponentially less informative, even when the processor itself is fault-tolerant \cite{cotler2026noisyquantumlearningtheory}. The challenge of interfacing quantum data with fault-tolerant processors thus presents a core obstruction to using quantum computers to learn from rich quantum data in Nature.

In this work, we eliminate this bottleneck by porting techniques from quantum fault tolerance to quantum learning theory and presenting a procedure to lift unknown quantum data into encoded quantum memory. Operationally, the procedure has two steps. First, an experiment-dependent \textit{transduction} maps the physical state into unencoded quantum memory. Second, utilizing techniques for universal quantum computation via magic-state injection~\cite{Fowler_2012,Horsman_2012,Li2015MagicStateFidelity, Gidney2023CleanerMagicStates}, the memory state is \textit{injected} into an error-correcting code, where it can then be processed fault-tolerantly.
The code can be grown to an arbitrarily large distance at constant error overhead. We refer to the combined procedure as \textit{quantum uploading}. We prove that uploading-enabled fault-tolerant algorithms can then learn from noisy quantum experiments using exponentially fewer measurements than any protocol that does not embed the experimental state into an error-correcting code, demonstrating the utility of quantum fault tolerance in quantum learning theory. 

\begin{figure*}[ht!]
\includegraphics[width=\textwidth]{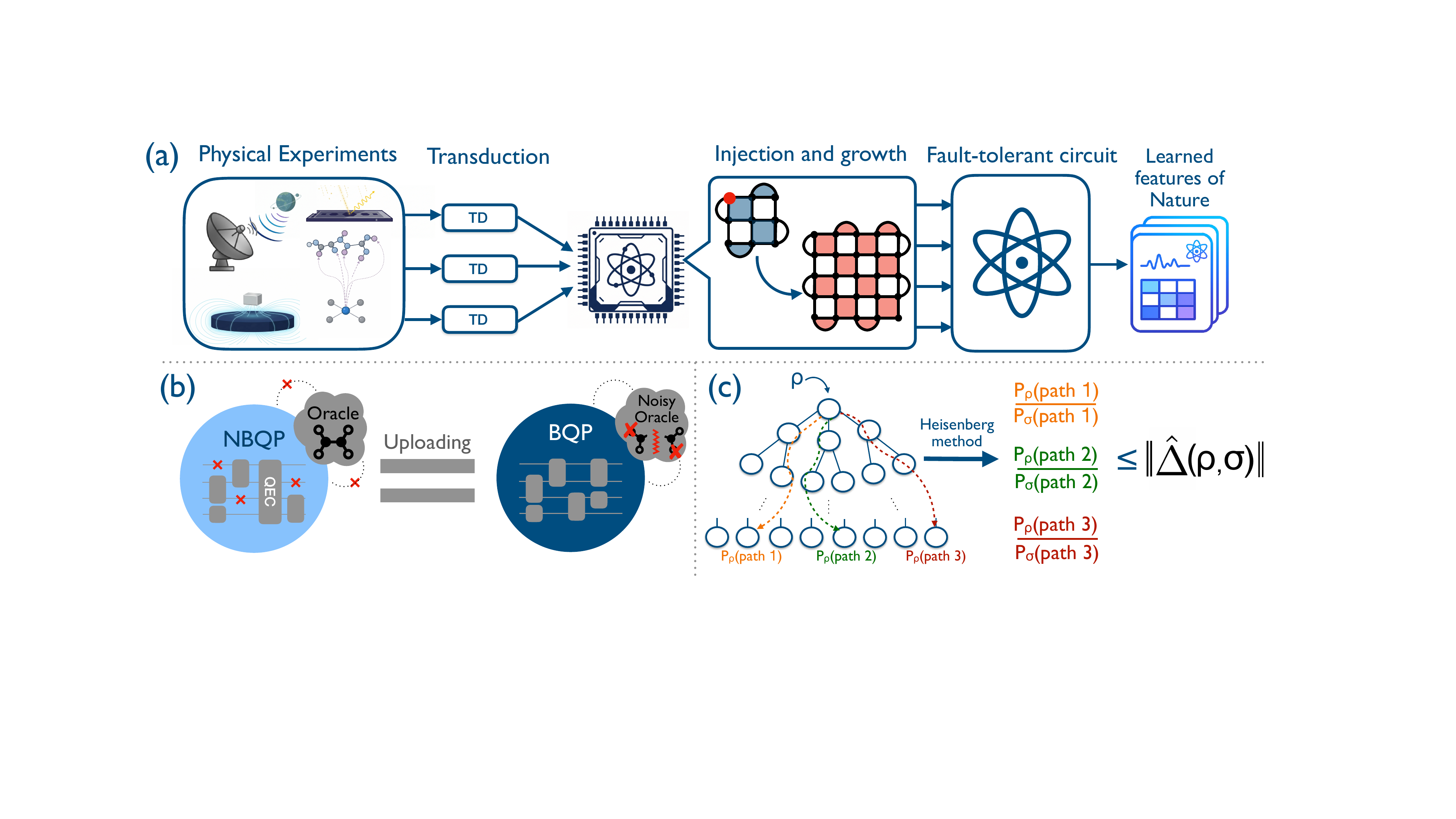}
 \caption{\textbf{Fault-tolerant processing of noisy experiments.} (a) \textit{Schematic of quantum uploading}. Quantum data is collected from physical experiments and transduced onto a quantum processor, which injects the data into a high-distance code patch. After a one-time noise cost, this is processed fault-tolerantly, accurately revealing features of Nature. (b) \textit{Complexity-theoretic implications of uploading}. While the class $\textsf{NBQP}^O$ is superpolynomially separated from $\textsf{BQP}^O$, the existence of uploading implies that a single layer of noise composed with any $\textsf{BQP}$ state oracle makes the classes equivalent. (c) \textit{Heisenberg learning tree method}. Each path along a learning tree is associated with an outcome probability, and hypothesis-testing success in distinguishing two states $\rho$ and $\sigma$ is controlled by the likelihood ratio over typical paths. While controlling this ratio directly is challenging, our Heisenberg method bounds all path ratios by the norm of a fixed distinguishability operator $\hat{\Delta}$; see Supplementary \ref{app:Heisenberg_learning_tree}.}
\label{fig:schematics}
\vspace{-5mm}
\end{figure*}

Uploading changes the learning problem in a simple way. Without uploading, every coherent interaction with the experimental state introduces uncorrected physical noise; with uploading, this exposure is compressed into a one-time effective input channel, while the rest of the learning protocol proceeds fault-tolerantly. We show that this architectural change yields exponential sample-complexity speedups for two well-studied classes of quantum learning primitives, randomized preprocessing and multi-replica measurements, as exemplified by classical shadow tomography~\cite{Huang_2020} and by estimation of third-order moments or cubic observables of general quantum states. These separations hold against any adaptive strategy that does not upload the state, and they are not simply the result of a reduced effective noise rate: the speedups persist even when transduction and injection incur a substantially larger noise rate than the bare experimental interface. Our lower bounds thus circumscribe any error-mitigation strategy that forgoes uploading, and establish that the optimal approach to quantum learning in noisy settings is to immediately upload quantum data rather than to tailor the learning protocol to the noise at hand. We ground these results in an astronomical imaging application based on quantum processing of individual photons~\cite{mokeev2025enhancingopticalimagingquantum}, where uploading requires orders of magnitude fewer shots than unencoded baselines to locate an exoplanet near a bright star.

Information-theoretic lower bounds for quantum learning problems typically use the learning-tree formalism \cite{chen2021exponentialseparationslearningquantum, weiyuan_Paulis, cotler2026noisyquantumlearningtheory}, which has yielded exponential lower bounds when the number of copies of the unknown state is the limited resource. In our setting, copies are not scarce; what is scarce is the ability to process the state coherently before it is uploaded into a code, since any such processing is subject to noise. We therefore introduce the \textit{Heisenberg learning tree method}, which establishes tight sample-complexity lower bounds when the limited resource is noise or another experimentally motivated constraint rather than copy count. This method reduces the challenging task of proving lower bounds for noisy multi-replica protocols to straightforward computation of the norm of a task-dependent distinguishability operator in the Heisenberg picture. This formalism may prove useful as quantum learning and metrology begin to integrate practical constraints into formal separations.

\vspace{3mm}
\vspace{-2em}
\section{Quantum Uploading}
\vspace{-2mm}
Quantum uploading proceeds in two stages: an experiment-dependent transduction maps the physical state into the processor's quantum memory, and a generic injection places the memory state into an error-correcting code, after which all subsequent processing proceeds fault-tolerantly. Injection can be analyzed independently of the experiment at hand. In this section, we provide an explicit surface-code injection gadget, and use the formalism developed in \cite{He2025ComposableFaultTolerance} to prove that it couples to any fault-tolerant circuit with logical error probability exponentially suppressed in distance.

\vspace{-1mm}
\vspace{-1em}
\subsection{Surface-Code State Injection}
\vspace{-2mm}

A useful uploading gadget must satisfy two requirements. First, it must be independent of the input state, since the state supplied by the experiment is unknown. Second, its error must remain a one-time input error rather than growing with the final code distance or with the depth of the subsequent learning circuit. Indeed, speedups from uploading are non-trivial since generic encoding circuits can accumulate noise with depth, adaptive and shallow protocols can mitigate noise directly \cite{Hu_2025}, and even shallow unprotected noise can erase ideal quantum-learning advantages~\cite{cotler2026noisyquantumlearningtheory}. We now give a surface-code gadget that embeds any single-qubit state into a code patch at constant input-error cost, with residual logical failure decreasing with the code distance.

\begin{theorem}[Quantum uploading via surface-code growth, informal]
\label{thm:uploading_informal}
For every code distance $d$, there is a depth-$O(d)$ surface-code uploading procedure $\textnormal{\textsf{Upload}}_d$ that maps any unknown $n$-qubit state $\rho$ into $n$ distance-$d$ logical patches. Under circuit-level noise below threshold, uploading is equivalent, up to logical error exponentially small in $d$ and other physical deviations which do not result in logical errors of subsequent gates, to first applying a one-time local depolarizing channel of strength $\lambda = \Theta(p)$ to the input and then encoding:
\begin{align}
    \rho \longmapsto
    V_d^{\otimes n}\,
\mathcal{D}_{\lambda}^{\otimes n}(\rho)\,
    (V_d^\dagger)^{\otimes n}.
\end{align}
Moreover, any subsequent learning circuit can be run fault-tolerantly, without additional unprotected noise accumulating beyond this one-time input noise.
\end{theorem}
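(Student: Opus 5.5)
The plan is to build $\textsf{Upload}_d$ in two stages and analyze them separately, using the composable fault-tolerance formalism of \cite{He2025ComposableFaultTolerance}. Since the gadget and the noise model act qubit by qubit, it suffices to treat $n=1$ and tensor the result. The first stage is a small injection. I place the unknown qubit at the corner (or center) of a distance-$d_0$ rotated surface-code patch, with $d_0=O(1)$ (say $d_0=3$). The other data qubits are prepared in $\ket{0}$ or $\ket{+}$ so that every stabilizer is deterministic except those touching the injected qubit, as in the Li / Horsman--Fowler injection. I then measure stabilizers for two rounds and postselect or decode on the deterministic checks. In the noiseless case this realizes $\rho\mapsto V_{d_0}\rho V_{d_0}^\dagger$ up to a known Pauli frame, for every input $\rho$. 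This linearity in $\rho$ is what makes the gadget state-independent.

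Under circuit-level Pauli noise of rate $p$, only $O(1)$ locations can produce an undetectable logical fault at this stage. These are essentially faults on the injected qubit and its neighbors before the first stabilizer round. The total logical channel is therefore $\mathrm{id}+O(p)\cdot(\text{Pauli channel})$. I then Pauli-twirl it, either by randomizing the frame with logical Paulis applied transversally or by noting that the surface code's Pauli frame already tracks this. This turns it into a single-qubit Pauli channel. Since we want exactly $\mathcal{D}_\lambda$, I will either symmetrize further with a random logical Clifford from the transversal/fold group, or upper bound the Pauli channel by a depolarizing channel of strength $\lambda=\Theta(p)$. The upper bound works by writing the Pauli channel as $\mathcal{D}_\lambda$ composed with a further Pauli channel. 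That extra channel can be absorbed into the "other physical deviations" clause, or handled by slightly enlarging $\lambda$. Lower bounding $\lambda=\Omega(p)$ follows because a single fault on the injected qubit before encoding is undetectable.

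The second stage is growth from $d_0$ to $d$ by standard patch extension. I initialize new data qubits in $\ket{0}$ or $\ket{+}$ along the appropriate boundaries, merge, and run $O(d)$ rounds of syndrome extraction. Invoking the composable threshold theorem of \cite{He2025ComposableFaultTolerance}, I treat growth as a fault-tolerant gadget whose input is a distance-$d_0$ code block with bounded-weight residual errors. The output is then a distance-$d$ block equal to $V_d$ applied to the logical state, up to logical failure probability $e^{-\Omega(d)}$ and residual correctable physical errors. These residual errors are the "deviations that do not cause logical errors of subsequent gates." Composability with any subsequent fault-tolerant learning circuit follows from the same theorem, since the gadget's output satisfies the input conditions of later gadgets. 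This gives the final claim that no unprotected noise accumulates beyond the one-time channel.

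The main obstacle is the interface between stages. Growth must not amplify the $O(p)$ injection error, and faults during early growth rounds, when the effective distance is still small, must contribute only $O(p)$ rather than a quantity that grows with $d$. I would handle this with a counting argument. A logical fault during growth at intermediate distance $d'$ requires $\Omega(d')$ faults. Summing $\sum_{d'\ge d_0} d'^{O(1)}(Cp)^{\Omega(d')}$ gives $O(p)$ for $p$ below threshold, with the $d'=d_0$ term dominating. That term is absorbed into $\lambda=\Theta(p)$. Depth $O(d)$ follows because growth needs $O(d)$ syndrome rounds.
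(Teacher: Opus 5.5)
Your architecture matches the paper's: a constant-size encoder into a distance-$d_0$ patch, one-step growth to distance $d$ over $O(d)$ rounds, an $O(p)$ floor from faults where the effective distance is still small, and a twirl to reach depolarizing form. The gap is in your second paragraph.

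\textbf{Growth cannot be imported.} You cannot take the growth stage from the threshold theorem of \cite{He2025ComposableFaultTolerance}. The surface-code injection gadget certified there grows the patch incrementally and costs $O(d^2)$ rounds, so one-step growth, which is what gives depth $O(d)$, is not covered. In any case, no gadget starting from a constant-distance patch has logical failure $e^{-\Omega(d)}$. The counting in your last paragraph is therefore the actual content of the proof. It has to be carried out at the level of avoiding sets, so that composition still applies. The paper does this as follows.
\begin{enumerate}
\item Each spacetime location $v$ gets an effective distance $d_v$, e.g.\ $\min(d_1+t,d_2)$ on the original patch.
\item Every connected cluster touching $v$ of size $m\ge d_v$ containing at least $\lceil m/4\rceil$ faults is declared bad.
\item A residual family for clusters reaching the final time slice is added, together with every fault of the constant-size encoder.
\end{enumerate}
Min-weight decoding, combined with the projection lemmas of that framework, is what forces a failing cluster to carry a quarter actual faults. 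The weight enumerator is then bounded by $\sum_v\eta_x^{d_v}/(1-\eta_x)$ with $\eta_x\propto x^{1/4}$. This gives your floor plus $\mathrm{poly}(d)\,\eta_x^{d}$ terms. Because of the exponent $1/4$, the floor is $O(p)$ only when $d_0\ge 4$; the paper requires $d_1\ge\max(4,2c_2)$. So your choice $d_0=3$ does not suffice for this bound.

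\textbf{Failed uploads must still be well-formed.} Absorbing the $O(p)$-probability growth failures into a local input channel requires that, on those fault paths, the patch is still a code state with controlled residual. Only then do later gadgets act correctly on the wrong logical state. Your claim that ``the gadget's output satisfies the input conditions of later gadgets'' is guaranteed only on paths avoiding the gadget's full fault family, and the early-growth failures belong to that family. Without a separate argument, once $n\gtrsim 1/p$ some upload fails with constant probability and composition gives nothing. The paper handles this by appending a friendly EC gadget and splitting the fault family in two. $\mathcal{F}_{\mathrm{grow}}$ has weight $\le cx$; these paths define the input channel $\mathcal{L}_f$. $\mathcal{F}_{\mathrm{res}}$ has weight $\mathrm{poly}(d)e^{-\alpha d}$ and is charged to $\epsilon$.

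\textbf{Depolarizing form.} Drop the option of factoring off $\mathcal{D}_\lambda$ and putting the leftover Pauli channel into ``other physical deviations.'' The leftover is a logical error, not a benign deviation. Moreover, a biased Pauli channel such as pure dephasing does not factor through any $\mathcal{D}_\lambda$ with $\lambda>0$. Use the full single-qubit Clifford twirl of Lemma~\ref{lemma:twirling}. Its random Clifford must act on the raw qubit before uploading and be undone logically afterward. Logical Paulis applied only after encoding do not twirl an input channel.
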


We work with the surface code \cite{DennisKitaevLandahlPreskill2002TopologicalMemory, Horsman_2012, Fowler_2012, Bombin2007OptimalResources} because it is topological and hence admits practically natural circuits for growing the lattice. A depth-$O(d^2)$ construction of an appropriate injection gadget was given in \cite{He2025ComposableFaultTolerance}; our construction instead grows the patch from an initial constant distance to any distance $d$ in a single step~\cite{Li2015MagicStateFidelity, Lodyga2015SimpleSchemeTopologicalCodes}, making it quadratically more depth-efficient than existing constructions with provable guarantees. We obtain new rigorous guarantees for this practically-studied injection scheme by introducing a notion of a spacetime-dependent code distance to bound the error probability during our growth procedure, which may be of independent interest. 

Related guarantees have also been studied for quantum computation more generally: Ref.~\cite{Christandl2024QuantumIO} demonstrates a version of Theorem \ref{thm:uploading_informal} for concatenated-code fault tolerance schemes that simulate any noisy circuit up to input and output error, while Ref.~\cite{He2025ComposableFaultTolerance} instantiates a framework, used in our proof of Theorem \ref{thm:uploading_informal}, for obtaining threshold guarantees for circuits composed of common error correction subroutines such as magic state distillation or injection. These works provide constructions with sufficient guarantees to realize quantum uploading with asymptotically constant error overhead. Our primary contribution in this work is to demonstrate the utility of fault-tolerant uploading for quantum learning theory; it is to this end that we present an explicit and improved surface-code gadget for the uploading stage in this section, and interface it with our quantum learning algorithms of interest. It is further valuable for quantum learning applications to design improved injection procedures using high-rate codes to optimize the resource overhead of quantum uploading.

\vspace{-1em}
\subsection{Uploading and noisy quantum learning theory}
\vspace{-2mm}

Applied to quantum data resulting from physical experiments, Theorem~\ref{thm:uploading_informal} compresses the experimental interface into an effective input-noise channel, after which the remaining computation can be carried out fault-tolerantly. For learning problems, one may therefore analyze an ideal logical protocol acting on an encoded state that has suffered a one-time effective input error.

This observation clarifies the role of noise in quantum learning theory. Ref.~\cite{cotler2026noisyquantumlearningtheory} introduces a complexity-theoretic model of noisy quantum learning, in which the class $\nbqp_\lambda$ captures quantum computation where the processor is fault-tolerant but remains exposed to constant-strength physical depolarizing noise of strength $\lambda$ below the threshold of known error-correcting codes. The class $\nbqp$ consists of all problems in $\nbqp_\lambda$ for any constant $\lambda >0$. Although $\nbqp = \bqp$ by virtue of quantum error correction, for learning tasks whose data is accessed through an unprotected oracle $O$, Ref.~\cite{cotler2026noisyquantumlearningtheory} proves the oracle separation $\nbqp^O \subsetneq \bqp^O$, establishing an exponential gap between idealized noiseless learning and physically feasible learning with fault-tolerant devices. Uploading sharpens the relationship between $\nbqp$ and $\bqp$ by converting the noisy experimental interface into a one-time logical input channel.
\vspace{-1mm}
\begin{corollary}[Uploaded oracle reduction, informal]
\label{cor:uploaded_oracle_reduction}
Let $O$ be any $n$-qubit state-preparation oracle, and let $\mathcal{D}_\lambda$ denote the single-qubit depolarizing channel for $\lambda > 0$. Define $\mathcal{N}_\lambda(O) := \mathcal{D}_\lambda^{\otimes n} \circ O$. There is a $\lambda^*$ such that
\begin{equation}
\nbqp_\lambda^{O} = \bqp^{\mathcal{N}_{\lambda^*}(O)}.
\end{equation}
\end{corollary}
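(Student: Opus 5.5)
We establish the equality by proving the two inclusions separately, each through a simulation argument with $\lambda^*=\Theta(\lambda)$ fixed by Theorem~\ref{thm:uploading_informal}.

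\textbf{($\subseteq$) direction.} Take an $\nbqp_\lambda^O$ algorithm. By the model of Ref.~\cite{cotler2026noisyquantumlearningtheory}, the processor is fault-tolerant. Every interaction with the oracle state $O$, however, happens on unprotected physical qubits, with depolarizing noise of strength $\lambda$ on each location. Our plan is to show that a $\bqp$ machine can reproduce this behavior when it only receives copies of $\mathcal{N}_{\lambda^*}(O)$.

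First, note that the earliest point at which the $\nbqp_\lambda$ algorithm can touch $O$ is after the oracle output has already passed through at least one layer of $\mathcal{D}_\lambda^{\otimes n}$. Pick $\lambda^*\le\lambda$. Since $\mathcal{D}_\lambda=\mathcal{D}_{\mu}\circ\mathcal{D}_{\lambda^*}$ for a suitable $\mu\ge 0$ (depolarizing channels compose multiplicatively in their shrinking factors), the $\bqp$ machine can apply the extra $\mathcal{D}_\mu$ itself. All later noisy unencoded operations in the $\nbqp$ protocol are ordinary noisy quantum channels with explicit descriptions. A noiseless $\bqp$ machine can therefore simulate them exactly, by sampling the Pauli errors classically and inserting them. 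So $\nbqp_\lambda^O\subseteq\bqp^{\mathcal{N}_{\lambda^*}(O)}$ for every $\lambda^*\le\lambda$. This direction is routine.

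\textbf{($\supseteq$) direction.} This is where uploading is used. Take a $\bqp$ algorithm with access to $\mathcal{N}_{\lambda^*}(O)$. The $\nbqp_\lambda$ machine queries $O$ and immediately applies $\textsf{Upload}_d$. By Theorem~\ref{thm:uploading_informal}, the result is
\[
V_d^{\otimes n}\mathcal{D}_{\lambda'}^{\otimes n}(\rho)(V_d^\dagger)^{\otimes n},
\]
with $\lambda'=\Theta(\lambda)$. The error from this replacement is $e^{-\Omega(d)}$ per qubit, plus benign deviations that do not cause logical faults.

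We want the effective noise to equal exactly $\lambda^*$, so we set $\lambda^*:=\lambda'$. If needed, we instead take $\lambda^*$ to be an upper bound on $\lambda'$ and add the missing logical depolarization fault-tolerantly, again using composition of depolarizing channels. After this, the $\bqp$ algorithm runs fault-tolerantly on the encoded data. This is permitted because $\nbqp=\bqp$ on oracle-free computation, and the theorem guarantees that no further unprotected noise accumulates.

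Choose $d=O(\log(nT))$, where $T$ is the polynomial number of queries and gates. A union bound then keeps the total logical failure below a small constant. The acceptance probabilities consequently shift by at most $1/6$, and the bounded-error gap is preserved after standard amplification.

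\textbf{Main obstacle.} The delicate point is pinning down a single $\lambda^*$ that works for both inclusions. The first direction needs $\lambda^*\le\lambda$, while the second needs $\lambda^*\ge\lambda'=\Theta(\lambda)$. Theorem~\ref{thm:uploading_informal} allows $\lambda'$ to be larger than $\lambda$, so the two constraints can conflict.

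We would resolve this in two steps. Step one: note that in the $\nbqp_\lambda$ model each oracle access already incurs physical noise from transduction, gates, and measurement, not a single $\mathcal{D}_\lambda$ layer. The minimal noise any $\nbqp_\lambda$ protocol imposes before its first useful operation is therefore itself $\Theta(\lambda)$, and it can be matched to $\lambda'$ by defining $\lambda^*$ as the effective input strength of the uploading gadget. Step two: for the first direction, show that every $\nbqp_\lambda$ strategy factors through $\mathcal{D}_{\lambda^*}^{\otimes n}$, using the same effective-channel analysis used to prove Theorem~\ref{thm:uploading_informal}.

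If this exact matching fails, the fallback is to state the corollary with $\lambda^*$ and $\lambda$ equivalent up to constant factors. The statement would then hold as an equality of classes taken over constant noise rates.
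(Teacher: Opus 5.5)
There is a genuine gap. As written, your two inclusions never hold for a common $\lambda^*$, so the corollary is not established. The direct-simulation argument gives $\nbqp_\lambda^O\subseteq\bqp^{\mathcal{N}_{\lambda^*}(O)}$ only for $\lambda^*\le\lambda$. The uploading argument gives the reverse inclusion only for $\lambda^*\ge\lambda'$. But $\lambda'\ge\lambda$ always, because the oracle output passes through a $\mathcal{D}_\lambda^{\otimes n}$ layer before the gadget acts, and the inequality is strict once the gadget's own $\Theta(\lambda)$ noise is counted.

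You flag this, but \emph{Step two} is only announced, and it is where the entire content of the $\subseteq$ inclusion now sits. In the form you state it (every $\nbqp_\lambda$ strategy factors through $\mathcal{D}_{\lambda^*}^{\otimes n}$ with $\lambda^*>\lambda$), it is false. Take a circuit whose last layer reads out one oracle qubit in the computational basis, right after the noise layer that follows the query. On oracle states differing only by $\ket{0}$ versus $\ket{1}$ on that qubit, its output distributions are at total variation $1-\lambda$. By data processing, anything computed from $\mathcal{D}_{\lambda^*}(\rho)$ is at total variation at most $1-\lambda^*<1-\lambda$. Matching ``$\Theta(\lambda)$'' on both sides (your \emph{Step one}) does not pin down a single $\lambda^*$, and your fallback changes the statement.

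The paper avoids the matching problem by never using naive noise simulation for $\subseteq$. It fixes $\lambda^*=O(\lambda)$ once, as the effective input rate of the uploading gadget, and makes that channel exactly depolarizing by logical Clifford twirling (Lemma~\ref{lemma:twirling}). You need this step too. The rigorous Theorem~\ref{thm:arb_input_threshold_theorem} only yields an input channel $(1-q)\mathrm{Id}+q\mathcal{R}$, and your padding with extra depolarization to reach $\lambda^*$ only makes sense after the twirl.

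For $\subseteq$, the paper treats the polynomially many oracle copies as the physical inputs of the $\nbqp$ computation and applies Theorem~\ref{thm:arb_input_threshold_theorem} to the whole computation. Up to $\epsilon$, the output then equals the ideal circuit applied to $\mathcal{D}_{\lambda^*}^{\otimes n}$-noised copies, which a $\bqp^{\mathcal{N}_{\lambda^*}(O)}$ machine runs directly. The $\supseteq$ direction is then your argument with the same $\lambda^*$.

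Note that this $\subseteq$ step rests on a framing: oracle outputs reach the fault-tolerant processor through the uploading gadget. It does not come from an exact factorization for arbitrary unencoded pre-processing, which the readout example rules out. To complete your proof along these lines, drop the direct-simulation inclusion and adopt that framing with one $\lambda^*$ on both sides, rather than trying to prove \emph{Step two} in general.
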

\noindent The gap between idealized quantum learning and learning in the real world thus collapses to a single, unavoidable layer of noise at the input. A learner that uploads once and then proceeds fault-tolerantly pays this noise cost only at the input stage, whereas a learner that never uploads must contend with physical noise throughout the protocol. While Corollary \ref{cor:uploaded_oracle_reduction} pertains to state-preparation oracles, the equivalence between relativized complexity classes can also be established for channel oracles by defining a ``quantum downloading'' gadget that, in tandem with quantum uploading, allows for coherent oracle interleaving. In the state oracle setting, we next show that uploading and direct physical processing can differ exponentially in sample complexity.

\vspace{-1em}
\vspace{-2mm}
\section{Speedups for Noisy Experiments}
\vspace{-2mm}
We now turn to our main results, which establish the utility of quantum uploading for quantum learning. We compare two modes of experimentation. In the first, the learner may use a fault-tolerant processor to act directly on the noisy physical states supplied by the experiment, with full knowledge of the noise channel to optimize their protocol; we denote the resulting optimal sample complexity by $N_{\rm raw}$ and lower bound it assuming that direct processing of the experimental state incurs a constant physical noise rate $\lambda \in (0,1)$. In the second, the learner first uploads the state via the surface-code procedure of the previous section and then carries out the learning protocol fault-tolerantly without optimizing for noise. We denote its sample complexity by $N_{\rm inj}$, and allow the uploading stage to suffer a larger constant noise rate $\eta > \lambda$, reflecting the fact that injection is not noiseless. This additional noise is absorbed into a one-time encoding cost, after which the relevant information-processing primitives proceed fault-tolerantly.  Our goal is to show that $N_{\rm raw}$ and $N_{\rm inj}$ can differ exponentially.

\vspace{-1mm}
\vspace{-1em}
\subsection{Classical shadow tomography}
\vspace{-2mm}

Classical shadows rely on randomized preprocessing before measurement~\cite{Huang_2020}. For states supplied by an experiment, the key distinction is whether this preprocessing acts on an unprotected physical state or on a state that has first been uploaded into encoded memory. Building on upper bounds for shadows implemented by noisy brickwork circuits \cite{Hu_2025}, we compare uploading-enabled shallow-circuit shadows against the corresponding bare protocols, in which brickwork preprocessing is applied directly to the physical state.

\begin{theorem}[Uploading speedup for noisy shallow shadows, informal]
Fix a $k$-local Pauli observable on an $n$-qubit state. Let $N_{\rm raw}$ denote the sample complexity of the best raw protocol that estimates this observable by applying brickwork-circuit randomized preprocessing directly to the physical state, and let $N_{\rm inj}$ denote the sample complexity of the corresponding uploading-enabled shallow-shadows protocol. For constant physical noise rate $\lambda$ and injection noise $\lambda < \eta < c\lambda$ for a constant $c > 1$,
\begin{equation}
\frac{N_{\rm raw}}{N_{\rm inj}} \geq \exp\!\big(\Omega(\lambda k)\big).
\end{equation}
\end{theorem}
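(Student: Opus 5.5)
The plan is to bound $N_{\rm inj}$ from above and $N_{\rm raw}$ from below separately, and then take the ratio.

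\emph{Upper bound on $N_{\rm inj}$.} By Theorem~\ref{thm:uploading_informal}, the uploaded protocol is equivalent, up to logical error exponentially small in $d$, to applying $\mathcal{D}_\eta^{\otimes n}$ once to $\rho$ and then running noiseless logical brickwork shallow shadows. The one-time depolarizing noise simply rescales the target: $\mathrm{tr}(P\,\mathcal{D}_\eta^{\otimes n}(\rho)) = (1-\eta)^k\,\mathrm{tr}(P\rho)$ for a weight-$k$ Pauli $P$. We therefore divide the shadow estimator by the known factor $(1-\eta)^k$. The variance of noiseless shallow shadows for a $k$-local Pauli is governed by its shadow norm. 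At the optimal depth $O(\log k)$ this norm is $\Theta(\beta^k)$ with $\beta<3$ close to the golden-ratio value \cite{Hu_2025}, and it is certainly $\le 3^k$. Hence
\[
N_{\rm inj} = O\!\bigl((1-\eta)^{-2k}\beta^k\varepsilon^{-2}\bigr).
\]
We choose $d$ polylogarithmic so that the logical failure contributes negligible bias.

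\emph{Lower bound on $N_{\rm raw}$.} In the raw model, each layer of the brickwork circuit acting on the physical state is followed by $\mathcal{D}_\lambda$, so noise accumulates with depth. Propagating in the Heisenberg picture, the measured operator's overlap with $P$ picks up an extra damping that grows with both the depth $D$ and the light-cone support. Concretely, any Pauli path contributing to $P$ has weight at least $k$ at every layer, so each layer damps its contribution by at least $(1-\lambda)^{k}$ or better. Following \cite{Hu_2025} and \cite{cotler2026noisyquantumlearningtheory}, I would show that the signal-to-variance ratio for depth $D$ is at most
\[
(1-\lambda)^{2k(D+1)}\,\beta_D^{-k}, \qquad \beta_D \ge \beta_\infty,
\]
where $\beta_D$ is the noiseless depth-$D$ shadow-norm base. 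Two cases then follow. Small $D$ gives a large $\beta_D$, approaching $3$ at $D=0$. Large $D$ gives exponential noise damping. Optimizing over $D$ (and over adaptive post-processing, which cannot undo the contraction by data processing) yields
\[
N_{\rm raw} \ge \exp(c_1\lambda k)\cdot(1-\lambda)^{-2k}\beta^k\varepsilon^{-2}
\]
for some constant $c_1>0$. To make this rigorous against \emph{any} raw protocol of this form, I would invoke the Heisenberg learning tree method: hypothesis testing between $\rho_\pm = (I\pm\epsilon P)/2^n$ has per-sample likelihood ratio controlled by $\|\hat\Delta\|$, where $\hat\Delta$ is the noisy Heisenberg-evolved $P$ averaged over random circuits. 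Bounding $\|\hat\Delta\|$ by the above quantity then gives the $N_{\rm raw}$ lower bound.

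\emph{Ratio and main obstacle.} Dividing the two bounds gives
\[
\frac{N_{\rm raw}}{N_{\rm inj}} \ge \exp(c_1\lambda k)\left(\frac{1-\eta}{1-\lambda}\right)^{2k}.
\]
Because $\eta < c\lambda$, the second factor is $\exp(-O((c-1)\lambda k))$. The constraint on $c$ is chosen so that the first factor wins, leaving $\exp(\Omega(\lambda k))$. The main obstacle is the raw lower bound: showing that \emph{every} depth, including the one optimal under noise, incurs strictly more than the one-layer factor $(1-\lambda)^{2k}$, by an amount $\exp(\Omega(\lambda k))$. This requires showing that intermediate depths cannot trade noise against shadow norm for free. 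I would first attempt this with a path-counting argument: at depth $D\ge1$, at least one intermediate layer must contribute weight $\ge k$ with an extra $(1-\lambda)^{k}$ damping, and at $D=0$ the $3^k$ versus $\beta^k$ gap itself supplies an exponential advantage. Taking the minimum of the two cases would give the constant $c_1$.
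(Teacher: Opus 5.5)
Your upper bound on $N_{\rm inj}$ is essentially the paper's: rescale by $(1-\eta)^{-k}$, then run noiseless logical brickwork shadows at depth $\Theta(\log k)$ with squared shadow norm $3^{3k/4}\mathrm{poly}(k)$ from \cite{Hu_2025}. The raw lower bound, however, rests on a false step. It is not true that every contributing Pauli path has weight at least $k$ at every layer. A two-qubit gate can lower Pauli weight (a CNOT maps $Z\otimes Z$ to $I\otimes Z$), so after the first brickwork layer a path can have weight about $k/2$, and later layers can shrink it further. Hence neither the per-layer damping $(1-\lambda)^{k}$ nor your bound $(1-\lambda)^{2k(D+1)}\beta_D^{-k}$ holds.

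What is true, and suffices, is that a gate acting on a pair with at least one non-identity factor outputs a non-identity Pauli on that pair. So the weight after the first layer satisfies $l_1\geq\lfloor (k-1)/2\rfloor$. Insert this into the noisy shadow weight $\omega_{\lambda,d}(P_k)=\mathbb{E}\bigl[3^{-l_d}e^{-\lambda\sum_{i=1}^d l_i}\bigr]$ and discard $l_2,\dots,l_d\geq 0$. This gives $\omega_{\lambda,d}\leq e^{-\lambda\lfloor (k-1)/2\rfloor}\,\omega_{0,d}$ for $d\geq 1$, while $\omega_{\lambda,0}=3^{-k}$. So your $c_1$ is about $1/2$, and the ratio is at least $\bigl(\tfrac{1-\eta}{1-\lambda}\bigr)^{2k}\min\{3^{k/4},e^{\lambda\lfloor (k-1)/2\rfloor}\}$ up to polynomial factors. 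The admissible injection overhead is therefore only $c=5/4$ for small $\lambda$ (precisely $\eta<1-(1-\lambda)e^{-\lambda/4}$ when $\lambda\leq\frac{1}{2}\log 3$), smaller than your per-layer damping would suggest.

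Your proposed rigorization through the Heisenberg learning-tree lemma would also fail. That lemma permits an arbitrary POVM on the post-processed state, and its bound $\mu^{-1}\mathrm{tr}\bigl((\Delta^u)^2\bigr)$ is blind to the computational-basis readout. That readout is exactly where the factor $3^{-l_d}$ (your $\beta_D^{-k}$) comes from. A learner with arbitrary POVMs can measure $P_k$ in its own eigenbasis, paying only $(1-\lambda)^{-O(k)}\epsilon^{-2}$ samples with no $\beta^k$ factor. For small $\lambda$ this is already below your upper bound on $N_{\rm inj}$, so no lower bound valid in that model can yield the separation. Averaging the evolved $P$ over the random gates, as you suggest, is worse still: for locally scrambled gates it makes $\hat\Delta$ vanish, because it discards the learner's knowledge of the circuit realization.

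The lower bound has to be proved for the shadow data itself. The paper conditions on the weight trajectory $Z=(l_1,\dots,l_d)$ for inputs $\rho_\theta=(\mathds{1}+\theta P_k)/2^n$. It notes that $Y=\mathrm{tr}(P_k\hat\sigma)\in\{-1,0,1\}$ with $\mathrm{Pr}[Y=\pm1\mid Z]=\tfrac{q(Z)}{2}\bigl(1\pm(1-\lambda)^k s(Z)\theta\bigr)$, where $q(Z)=3^{-l_d}$ and $s(Z)=e^{-\lambda\sum_i l_i}$. Combining the KL chain rule, Pinsker, and Le Cam then gives $N_{\rm raw}\geq\Omega\bigl(1/(\omega_{\lambda,d}(P_k)(1-\lambda)^{2k}\epsilon^2)\bigr)$ at every depth $d$. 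The first-layer weight bound above then finishes the argument.
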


More broadly, our result shows that randomized preprocessing recovers its role as a flexible learning primitive once the experimental state has been promoted to an encoded logical state, extending classical shadows to states produced by Nature rather than by fault-tolerant hardware. In the raw experimental setting, the same primitive is fragile, since the randomizing circuit must itself run on the noisy physical state. Uploading disentangles these two issues by localizing the effect of noise to the input stage, leaving the learning protocol itself to proceed fault-tolerantly.

\vspace{-4mm}
\subsection{Third-moment observables}
\vspace{-2mm}
Our second family of examples concerns observables that depend on three copies of the state, including $\tr(\rho^3)$ and more general quantities of the form $\tr(O\rho^3)$. Such observables probe structure inaccessible to single-copy measurements, and arise naturally in settings ranging from entropy estimation and thermalization in quantum many-body physics \cite{Pasquale_Calabrese_2004, Daley_2012, Kaufman_2016} to error-mitigation primitives such as virtual distillation~\cite{Huggins_2021} and variants of quantum principal component analysis~\cite{Lloyd_2014}. They therefore provide a natural testbed for whether uploading can recover multi-copy quantum advantages that are otherwise lost to physical noise.

\vspace{-1mm}
\begin{theorem}[Uploading speedup for third moments and cubic observables, informal]
Let $N_{\rm raw}$ denote the optimal sample complexity among adaptive raw protocols that estimate $\tr(\rho^3)$ from noisy copies of an $n$-qubit state $\rho$ by applying at least one layer of joint unitary processing to those copies before arbitrary further processing, and let $N_{\rm inj}$ denote the sample complexity of the corresponding uploading-enabled protocol. For constant physical noise rate $\lambda$ and injection noise $\lambda < \eta < c\lambda$ for a constant $c > 1$,
\begin{equation}
\frac{N_{\rm raw}}{N_{\rm inj}} \geq \exp\!\big(\Omega(\lambda n)\big).
\end{equation}
The same scaling holds for the estimation of fixed cubic observables $\tr(O\rho^3)$.
\end{theorem}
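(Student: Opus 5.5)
The proof splits into an upper bound on $N_{\rm inj}$ and a lower bound on $N_{\rm raw}$; we then compare them using $\eta<c\lambda$.

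\emph{Upper bound.} By Theorem~\ref{thm:uploading_informal}, uploading three copies yields encoded copies of $\tilde\rho=\mathcal{D}_\eta^{\otimes n}(\rho)$, up to logical error exponentially small in $d$. On these we run the fault-tolerant three-copy cyclic-shift test, a controlled cyclic permutation with an ancilla. Its ancilla bias is $\mathrm{Re}\,\tr(\tilde\rho^3)$, or $\tr(O\tilde\rho^3)$ once the controlled-$O$ is included.

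The noise must then be undone classically. Expanding in the Pauli basis, $\mathcal{D}_\eta^{\otimes n}$ damps each Pauli string $P$ by $(1-\eta)^{|P|}$. So $\tr(\rho^3)$ is a polynomial in the Pauli coefficients of $\tilde\rho$ with weights up to $(1-\eta)^{-3n}$. Two routes are available for recovering it:
\begin{itemize}
\item insert logical inverse-depolarizing quasi-probability sampling, which is available because everything after uploading is noiseless;
\item bound the target directly via $\tr(\rho^3)$ versus $\tr(\tilde\rho^3)$ on the hard instance family.
\end{itemize}
Either route gives $N_{\rm inj}\le \exp(O(\eta n))/\epsilon^2$. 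The exponent constant $C_{\rm inj}$ is what must be compared against the lower bound.

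\emph{Lower bound on $N_{\rm raw}$.} The plan is a two-hypothesis reduction. We construct two ensembles of states, $\rho$ and $\sigma$, that agree on all single-copy and two-copy statistics, and hence also after single-qubit depolarizing at rate $\lambda$. They differ in $\tr(\rho^3)$ by a constant. A natural choice is a mixture over random Pauli-stabilizer subspaces, with matched purities and mismatched cubic moments.

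We then apply the Heisenberg learning tree method. Every raw protocol applies at least one layer of noisy joint processing $\mathcal{D}_\lambda^{\otimes 3n}\circ U$, so the Heisenberg-evolved measurement operators at each tree node are contracted by the adjoint noise. Path likelihood ratios are therefore bounded by $1\pm\|\hat\Delta\|$, where
\[
\hat\Delta=\mathbb{E}\big[\mathcal{D}_\lambda^{\otimes 3n\dagger}(\cdot)\ \text{applied to } \rho^{\otimes 3}-\sigma^{\otimes 3}\big].
\]
The key computation is that the distinguishing signal lives only in Pauli components of weight $\Omega(n)$ across the three replicas. These are the components that are not determined by the one- and two-copy marginals, and depolarizing damps them by $(1-\lambda)^{\Omega(n)}$, giving $\|\hat\Delta\|\le e^{-c_1\lambda n}$. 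By the learning-tree argument, adaptivity does not help, since each sample contributes at most $\|\hat\Delta\|$ to the total variation. This yields $N_{\rm raw}\ge e^{C_{\rm raw}\lambda n}$.

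\emph{Comparing exponents.} The ratio is at least $\exp((C_{\rm raw}\lambda - C_{\rm inj}\eta)n)$. This is $\exp(\Omega(\lambda n))$ provided $\eta<c\lambda$ with $c=C_{\rm raw}/C_{\rm inj}>1$. Here $c>1$ reflects that the raw protocol suffers noise both on the input and across the processing layer, compounding on all $3n$ qubits, whereas the uploaded protocol suffers it only once.

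\emph{Main obstacle.} The crux is proving $\|\hat\Delta\|\le e^{-\Omega(\lambda n)}$ uniformly over the adaptive unitary layer $U$. An adversarial $U$ could try to rotate high-weight distinguishing components into low-weight ones before the noise acts. I would handle this by showing that the noise after $U$, combined with the Pauli twirl over the ensemble's random Clifford frame, forces the relevant operator to have small overlap with every low-weight Pauli.

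I would also track the exponent constants carefully so that $C_{\rm raw}>C_{\rm inj}$ genuinely holds, which is the source of the allowance $\eta>\lambda$. For the cubic observables $\tr(O\rho^3)$, the same ensemble dressed by $O$ and the same argument should suffice.
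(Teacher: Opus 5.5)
Your upper bound (a fault-tolerant cycle/Hadamard test on uploaded copies, followed by either deconvolution or the hard-instance gap in $\tr(\tilde\rho^3)$) matches the paper. The lower bound, however, has two genuine gaps.

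\textbf{The processing layer.} Your fix for the adversarial $U$ cannot work, because it never uses the property that makes the theorem true: the first joint layer has depth one, i.e.\ it is a product of disjoint one- and two-qubit gates.

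For an unrestricted $U$ the bound is false. The raw learner could let $U$ perform the ideal cycle test and write its outcome into an ancilla. The second noise layer then costs only a constant factor, so $N_{\rm raw}=O(N_{\rm inj})$ already at $\eta=\lambda$. A twirl over the ensemble's random frame cannot exclude this, since that symmetry acts on $\rho$ before the noise and is not respected by the learner's $U$.

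The paper's Lemma~\ref{lemma:phi_trace_bound} instead decomposes operators into orthogonal sectors labelled by the set $S$ of gate blocks a Pauli string touches. These sectors are preserved by $\mathcal{U}$ and by the noise, and $|S|\ge\lceil w(P)/2\rceil$. The second noise layer contracts each active block by a factor $1-\lambda$ in Hilbert--Schmidt norm. This gives $\tr(\Phi_u[X]^2)\le\tr(N[X]^2)$ with $N(P)=(1-\lambda)^{w(P)+\lceil w(P)/2\rceil}P$.

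Evaluating $\tr(N(\Delta_3)^2)$ exactly then gives the per-site factor $(1+9(1-\lambda)^6+6(1-\lambda)^{10})/16\approx 1-\tfrac{57}{8}\lambda$. Against an uploaded cost $\approx e^{\frac92\eta n}$, this yields the threshold $\eta<\tfrac{19}{12}\lambda$. With only the input layer one gets $1-\tfrac92\lambda$, which matches the upload exponent to first order. So $c>1$ comes entirely from the $\lceil w/2\rceil$ term, and deferring the constants defers the whole content of the theorem.

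Relatedly, $\Delta_3$ is not supported only on weight-$\Omega(n)$ Paulis: it has nonzero weight-4 components with tiny coefficients, and only its Hilbert--Schmidt mass sits at weight $\Theta(n)$. This is why the paper uses the $\chi^2$ bound $\mathbb{E}_q[(L-1)^2]\le\mu^{-1}\tr((\Delta^u)^2)$ with $\mu=2^{-3n}/8$ (Lemma~\ref{thm:Heisenberg}), rather than pointwise $1\pm\|\hat\Delta\|$ likelihood bounds.

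\textbf{Many-vs-many reduction.} ``Each sample contributes at most $\|\hat\Delta\|$'' presumes i.i.d.\ rounds. But $\rho$ is drawn once and reused, so the transcript law is governed by $\mathbb{E}[\rho^{\otimes 3T}]$, not $(\mathbb{E}[\rho^{\otimes 3}])^{\otimes T}$.

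The paper bridges this with Lemma~34 of Ye et al., $d_M(\Sigma_a,\sigma_a^{\otimes T})\le O(T^2/2^n)$ uniformly over adaptive three-copy POVMs. This is the origin of the $\min\{2^{n/2},\cdot\}$ in Theorem~\ref{thm:noisy_moment_lb}, and it relies on Haar-random components.

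Your stabilizer ensemble has the same single-block $\sigma_a$, since stabilizer states form a 3-design, but it loses this bridge. Stabilizer structure is learnable by Bell sampling, so noiseless three-copy POVMs separate $\mathbb{E}[\rho^{\otimes 3T}]$ from $\sigma_a^{\otimes T}$ within $\mathrm{poly}(n)$ rounds. You should use the Haar mixtures $\mathcal{E}_p,\mathcal{E}_q$ of Definition~\ref{def:moment_testing_prob} together with this reduction.
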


The phenomenon is not restricted to these specific observables. Multi-copy primitives are among the most powerful tools in idealized quantum learning theory and among the most fragile under repeated physical noise \cite{cotler2026noisyquantumlearningtheory}: in the raw setting, several noisy copies must be brought together coherently before any protection has been established, and the structure one hopes to exploit is degraded at precisely that stage. Uploading changes the order of operations: the copies are first encoded, and the higher-order observable is then implemented fault-tolerantly. 

This lower bound is proved using a new Heisenberg learning-tree method, which extends the learning-tree formalism~\cite{chen2021exponentialseparationslearningquantum, weiyuan_Paulis, cotler2026noisyquantumlearningtheory} to settings where the bottleneck is unprotected noise rather than the number of available copies. The method reduces noisy adaptive multi-replica lower bounds to bounding the norm of a task-dependent distinguishability operator; see the Supplementary Material. Moreover, this lower bound only requires the raw learner to apply a single unitary preprocessing step directly to the physical states before running any subsequent fault-tolerant learning circuit that can include uploading. As such, this is a stronger result establishing that immediately uploading quantum states is not only a useful primitive, but is the \textit{optimal} strategy for learning from noisy quantum replicas. We turn next to a concrete instantiation of this architecture in an astronomical imaging task.

\vspace{-1em}
\subsection{Applications to physical experiments}
\vspace{-2mm}
Optical sensing and imaging are a natural near-term setting for quantum uploading, since several candidate transduction procedures are already understood~\cite{Chen_2008, Khabiboulline_2019, Huang_2022}. The goal in such experiments is to estimate observables of incoming quantum light under weak-signal constraints; uploading allows the light to be stored, encoded, and then processed using randomized or multi-copy primitives that would be prohibitively noisy on the raw physical state. We illustrate this in an experimental task where the comparison between uploading-enabled and raw noisy protocols yields a concrete gap in the number of shots required.

\begin{figure}[t!]
    \centering
    \includegraphics[width=\linewidth]{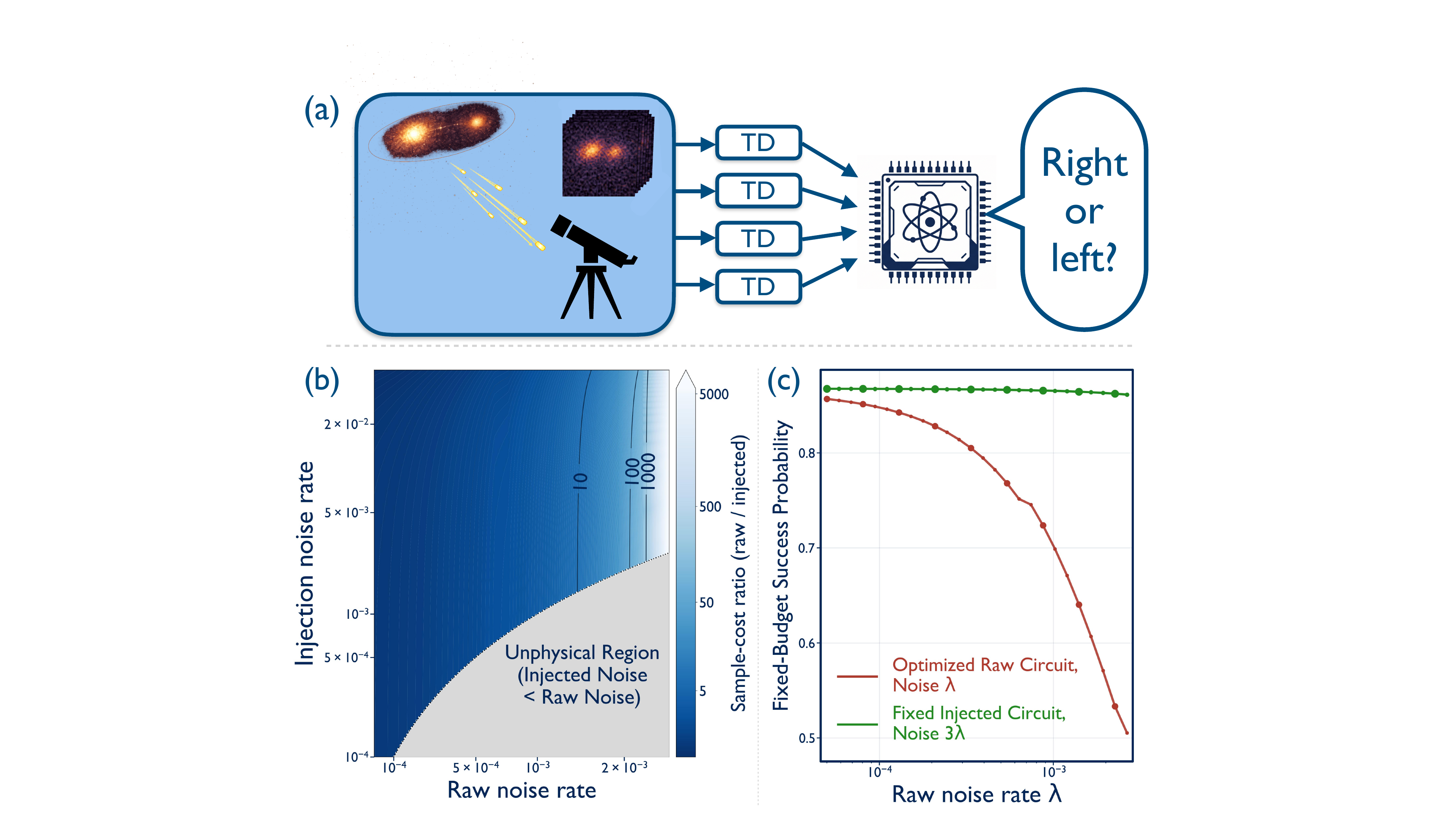}
    \caption{\textbf{Exponential speedup in exoplanet imaging.} (a) Hypothesis-testing formulation of exoplanet imaging: we determine whether a dim source lies spatially to the right or left of a bright star, using the optical imaging algorithm of~\cite{mokeev2025enhancingopticalimagingquantum}. (b) Ratio of sample complexities between raw and uploaded protocols to decide the task with $90\%$ success probability, across various noise rates. (c) Success probability as a function of raw noise rate for a fixed photon budget. The uploaded algorithm is assumed to suffer three times the noise rate of the raw protocol.}
    \label{fig:astronomy}
    \vspace{-6mm}
\end{figure}

The specific task is resolving a faint exoplanet next to a much brighter parent star, formulated as a hypothesis test for whether the dim source lies to the right or the left. Ref.~\cite{mokeev2025enhancingopticalimagingquantum} proposes a quantum algorithm in which a quantum memory stores single-photon density matrices and quantum PCA~\cite{Lloyd_2014} is used to prepare the state's eigenvectors. Combined with interleaved quantum signal processing~\cite{motlagh2024generalizedquantumsignalprocessing}, the procedure sorts quantum noise from the bright background into a branch that can be postselected out. Under a sufficiently weak noise model, \cite{mokeev2025enhancingopticalimagingquantum} shows that this images the exoplanet more efficiently than any classical telescope. However, the protocol requires coherent storage and processing of many photons, so more realistic control noise rapidly becomes prohibitive.

Figure~\ref{fig:astronomy} shows direct simulation of the pipeline under both uploading-enabled and raw noise models. At optimistic noise rates of order $10^{-3}$, locating the exoplanet with constant success probability requires thousands of times more shots under raw noise than with uploading, even when the uploading stage suffers three times more noise. Moreover, in practice the total photon budget is effectively fixed, since learning must complete before atmospheric noise shifts. Constrained to a budget, the raw strategy's success probability decays exponentially with physical noise rate while the uploaded strategy's does not.  In the raw simulation, we optimize the circuit to achieve the best hypothesis testing accuracy at each noise rate. This information is not available in practice; our simulations give the raw learner substantially more information than its uploaded counterpart, which runs a fixed circuit at every noise rate. This illustrates how non-adaptive, uploading-enhanced learning substantially outperforms adaptive and error-mitigated raw quantum learning, even when the uploading stage incurs much larger error than the bare interface. Details are provided in the Supplementary Material.

\vspace{-1em}
\section{Discussion}
\vspace{-2mm}

The ability to learn from rich datasets underpins much of modern computation, and quantum computers equipped with quantum data may be able to probe facets of Nature beyond our current reach. Unlike in the classical world, where loading diverse data sources into memory is comparatively cheap, fault-tolerant quantum computers cannot naively manipulate fragile quantum data without incurring severe resource overheads. Our work bridges that gap, showing that any dataset of noisy quantum states can be integrated into fault-tolerant quantum computation, opening the door to the full suite of modern quantum learning strategies.

In particular, our results draw a sharp distinction between two ways of learning from a noisy quantum experiment. In one, the unknown state remains in unprotected physical degrees of freedom throughout the protocol, and the learning strategy is optimized to minimize the effects of noise. In the other, the state is first transferred into an encoded form, paying a one-time noise penalty, and only then processed fault-tolerantly. The difference between these two architectures can be exponential, and the advantage comes not from removing noise, but from concentrating its effect into a one-time cost.

Earlier no-go results show that even a single noisy layer can destroy an ideal exponential separation~\cite{cotler2026noisyquantumlearningtheory}. A noisy uploading interface could therefore have erased any later benefit of fault-tolerant processing, but the separations proved here are robust: uploading remains exponentially preferable to direct processing even when the uploading stage itself suffers a noise rate larger than the raw experimental interface by a substantial constant factor. 

Many scientifically interesting experiments produce quantum states that are too fragile to be processed coherently in situ, yet too structured for classical postprocessing alone to extract everything of interest. Our results suggest that fault tolerance becomes useful precisely in this intermediate regime, since the state produced by the experiment can be brought into an encoded form before the learning task is carried out. The longer-term significance of quantum uploading may lie less in the particular examples treated here than in the architectural point they illustrate: once physical outputs can be converted into encoded logical inputs at constant cost, the asymptotic tools of quantum learning theory become directly relevant to scientific measurement.

\vspace{-1em}
\section*{Data availability}
\vspace{-2mm}
Code and data for Figure~\ref{fig:astronomy} can be found at \href{https://github.com/ishaan-kannan/QSP_QPCA_Optical_Imaging}{this link}.

\vspace{-1em}
\section*{Acknowledgements}
\vspace{-2mm}

We thank Johannes Borregaard, Mikhail Lukin, Rohan Mehta, and Quynh Nguyen for helpful conversations. We acknowledge the use of GPT 5.4 Codex to assist in writing the code used to generate Figure \ref{fig:astronomy}; the code was reviewed, tested, and validated by the authors.  IK is supported in part by the Nobile Research Initiative.  HP is supported by the Department of Defense through the National Defense Science and Engineering Graduate (NDSEG) Fellowship Program. JC is supported by an Alfred P.~Sloan Fellowship.

\putbib
\end{bibunit}
\clearpage
\onecolumngrid
\appendix

\renewcommand{\thesection}{S\arabic{section}}
\renewcommand{\thefigure}{S\arabic{figure}}
\renewcommand{\thetable}{S\arabic{table}}
\renewcommand{\theequation}{S\arabic{equation}}
\setcounter{equation}{0}
\setcounter{figure}{0}
\setcounter{table}{0}
\setcounter{tocdepth}{2}

\begin{bibunit}
\clearpage
\onecolumngrid
\startcontents[appendix]

\counterwithout*{equation}{section}
\setcounter{equation}{0}
\renewcommand{\theequation}{S\arabic{equation}}

\renewcommand{\thesection}{S\arabic{section}}
\renewcommand{\thefigure}{S\arabic{figure}}
\renewcommand{\thetable}{S\arabic{table}}

\allowdisplaybreaks

\vspace*{1em}
\begin{center}
  {\large\bfseries Supplemental Material for ``Fault-tolerant\\ quantum processing of physical experiments''\par}
\end{center}
\vspace{1em}

\begingroup
  \hypersetup{hidelinks}

  \begin{center}
    \textbf{Contents}
  \end{center}
  \vspace{0.5em}

  \printcontents[appendix]{}{1}{}
\endgroup

\section{Related Work}
\label{sec:related}

\noindent\textbf{State injection and fault-tolerance with arbitrary input.} Our introduction of uploading to the quantum learning literature is inspired by early work on universal fault-tolerant quantum computation (FTQC) with the surface codes. The surface code was introduced in Ref.~\cite{DennisKitaevLandahlPreskill2002TopologicalMemory} and the rotated surface code variant we use in Ref.~\cite{Bombin2007OptimalResources, Horsman_2012}. The concept of growing a small logical patch into a larger one was the standard practical proposal for increasing coherent circuit depth since early work on FTQC~\cite{Fowler_2009, Fowler_2012}. Several works, e.g.~\cite{Horsman_2012,  Lodyga2015SimpleSchemeTopologicalCodes, Li2015MagicStateFidelity}, then instantiated explicit circuits to implement this idea in specific versions of the surface code, enabling FTQC using magic state injection and distillation. These protocols generally involve lifting a single qubit into a constant-distance logical qubit with an explicit circuit, then using stabilizer measurements to grow to larger distance. 

More recently, rigorous treatments of FTQC by chaining fault-tolerant logical gadgets equipped heuristic injection protocols with formal guarantees, instantiating them as standalone gadgets that can couple into fault-tolerant circuits when non-Clifford resources are required. Ref.~\cite{Christandl2024QuantumIO} extended fault-tolerance to circuits with generic quantum input and output, showing that noisy implementations can realize the ideal circuit up to controlled noise channels acting on the boundary systems. As such, their results generalize quantum uploading in an abstract sense. Ref.~\cite{He2025ComposableFaultTolerance} makes this style of reasoning composable, giving a library of gadgets and a framework within which they can be chained together to obtain fault-tolerance guarantees for general circuits. Their results include a proof that injection of arbitrary states at bounded error cost is theoretically possible with any quantum code, and an instantiation of an injection gadget for the surface code that requires $O(d_2^2)$ rounds of measurements, where $d_2$ denotes the distance of the final code patch. This quadratic overhead is due to growing the distance of the code patch incrementally and performing $O(d_2)$ measurements during each growth step. In contrast, our gadget uses only $O(d_2)$ rounds of measurement in total by growing from an initial distance $d_1$ to $d_2$ immediately. Our gadget retains the relevant composable fault-tolerance guarantees due to the space-time dependent ``avoiding sets" we introduce in our argument; see Appendix \ref{app:uploading_proofs} for details. State injection has primarily been seen as a practical tool for FTQC, where the injected qubits are in known quantum states. In this work, we reinterpret injection as a tool for enabling fault-tolerant processing of quantum experiments when the experimental protocol is conducted outside of a quantum processor.

\vspace{2em}
\noindent\textbf{Quantum-enhanced experiments that interface with fault-tolerance schemes.} Existing proposals that coherently couple experimental signals to protected quantum hardware are primarily based in astronomy and metrology. On the astronomy side, Ref.~\cite{Huang_2022} proposes capturing starlight into atomic memories and then encoding it into a quantum error-correcting code so that subsequent imaging operations are noise-protected. Ref.~\cite{Khabiboulline_2019} proposes optical interferometry assisted by quantum networks, where the incoming photon state and an arrival-time index are stored in binary qubit codes and retrieved nonlocally through entanglement-assisted parity checks. More recently, Ref.~\cite{mokeev2025enhancingopticalimagingquantum} shows how weak optical signals can be coherently encoded into qubit registers and then processed with quantum principal component analysis, quantum signal processing, and block encoding, with exoplanet imaging as the motivating example. While their work does not discuss the use of quantum error-correction, we numerically implement their algorithm under both raw and uploaded noise models to demonstrate that quantum uploading can protect the efficacy of this quantum-enhanced experiment when noise would otherwise destroy it. 

On the metrology side, several recent works consider applying quantum information processing and error correction to sensors such as Nitrogen-Vacancy color centers in diamond~\cite{Unden_2016, wang2025noiseresilientquantummetrologyquantum}. Moreover, Ref.~\cite{marrero2026encodedquantumsignalprocessing} recently showed that for certain structured noise models, quantum error correction and signal processing can be unified to enable sensing precision beyond the standard quantum limit. While the idea of using error correction in metrology is well-explored, it remains unclear how to use error-correction to recover speedups under general noise. This is because in sensing, the signal is accumulated on the probe coherently rather than supplied as repeated state-preparation access to an unknown quantum state. As a result, errors can continuously accumulate, and it is often impossible to completely eliminate them without annihilating the signal. Moreover, because sensing occurs at the physical-qubit level, it is unclear how to utilize quantum error-correcting codes to achieve Heisenberg-limited sensitivity unless the signal Hamiltonian is known to act logically on a particular codespace. However, when sensing tasks can be reframed as state-learning problems, such as in precision interferometry \cite{cotler2026quantumadvantagesensingproperties} or astronomy, paying a one-time uploading cost can indeed recover large speedups under general noise. Our results therefore suggest that it is fruitful to reframe sensing tasks as learning from \textit{shots} of quantum data over time rather than amplifying coherent accumulation of signal. It would be valuable to define state-learning problems of relevance in other scientific disciplines that may one day benefit from FTQC.

\vspace{2em}
\noindent\textbf{Noise-aware quantum learning.} Ref.~\cite{cotler2026noisyquantumlearningtheory} instantiates a rigorous study of quantum experiments coupled to fault-tolerant quantum devices where coupling to an unknown system can be a noisy process. This work introduces the complexity class $\textsf{NBQP}$ and proves the relativized separation $\textsf{NBQP}^O \subsetneq \textsf{BQP}^O$, indicating that fault-tolerant learning from noisy, real-world experiments can be exponentially more challenging than idealized quantum learning models assume. In Corollary \ref{cor:uploaded_oracle_reduction} we refine this picture, showing that the separation becomes an equality once a single layer of noise is applied to each $\textsf{BQP}$ oracle query, because injection and FTQC allow all subsequent processing in the $\textsf{NBQP}$ model to be performed effectively noiselessly. Ref.~\cite{cotler2026noisyquantumlearningtheory} also suggests exploiting noise-robust structure intrinsic to some physical systems, such as quantum many-body systems with thermalizing dynamics \cite{venkatesh2026quantumthermalizationachievesoptimal} or systems that are well-modeled by tensor networks with hierarchical structure \cite{Pastawski_2015}; determining whether these physical sources of error correction can be interfaced with uploading or other engineered fault-tolerance schemes is a valuable direction for future work on quantum-enhanced experimentation. 

Several recent works in quantum learning and sensing also consider applications of quantum learning primitives to concrete experiments. Ref.~\cite{Oh_2024} proves an exponential entanglement-assisted separation for learning bosonic random displacement channels in the continuous-variable setting that uses bosonic squeezing to partially mitigate the effects of loss. Ref.~\cite{cotler2026quantumadvantagesensingproperties} considers a generalized version of this bosonic learning task for quantum sensing of classical fields. This work studies estimation of a wide variety of field properties beyond the characteristic function, using squeezing to similar efficacy and proving quantum speedups that operate in the presence of stochastic, noisy backgrounds. Complementarily, Ref.~\cite{romanov2026learningarbitrarylindbladiansquantum} studies ansatz-free learning of quantum Lindbladians by using active quantum error correction, which is a task where both coherent and dissipative signals are learned simultaneously in situ. Control imperfections can meaningfully hamper the performance of the quantum algorithms presented in these works. This is because noise can remove the ability to process experimental data coherently before the relevant information is lost. Our work suggests that future fault-tolerant implementations of such proposals may recover most of the gains of certain experimental protocols present in the noiseless setting.

\section{Preliminaries}

\subsection{Quantum information and quantum complexity theory}
\label{Appendix:basic_quantum_prelim}

Here we first collect standard definitions of quantum measurements used in quantum information theory. Then we recall recent complexity-theoretic definitions from Ref. \cite{cotler2026noisyquantumlearningtheory} relevant to quantum learning tasks that leverage fault-tolerant quantum computers in the presence of realistic noise. Throughout, we use $\mathds{1}_n$ to denote the $2^n \times 2^n$ identity matrix.

\begin{definition}[POVM]
An $n$-qubit Positive Operator-Valued Measure (POVM) is given by a set of matrices $\{F_s\}$ such that all $F_s$ are positive semi-definite and $\sum_s F_s = \mathds{1}_{n}$. Given a density matrix $\rho$, when we say we measure $\{F_s\}$ on $\rho$, we obtain the classical outcome $s$ sampled from the distribution $\textnormal{Pr}[s] = \tr(F_s\rho)$.
\end{definition}
\noindent POVMs are the most general class of quantum measurement relevant to this work, and in general may include measurements performed using deep circuits with a large number of ancillas. A well-known fact is that the classical outcome distribution of an arbitrary POVM can be simulated by a POVM consisting only of rank-1 matrices:
\begin{lemma}[Simulating arbitrary POVMs with rank-1 POVMs, e.g.~Lemma 4.8 in \cite{chen2021exponentialseparationslearningquantum}]
If we neglect the post-measurement quantum state, the outcome distribution of any arbitrary $k$-qubit POVM can be simulated (using classical postprocessing) by a POVM of the form $\{w_s2^n\ketbra{\psi}{\psi}\}$, where $\ket{\psi}$ is a pure quantum state and $\sum_s w_s = 1$. 
\end{lemma}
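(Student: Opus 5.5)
The plan is to prove the statement by spectral decomposition followed by a classical randomized relabeling. Let $\{F_s\}$ be an arbitrary POVM on the $k$-qubit system; for consistency with the displayed form, write the dimension as $2^n$. Each $F_s$ is positive semidefinite, so it has an eigendecomposition
\[
F_s=\sum_j \mu_{s,j}\ketbra{\psi_{s,j}}{\psi_{s,j}}, \qquad \mu_{s,j}\ge 0,
\]
with orthonormal eigenvectors $\ket{\psi_{s,j}}$.

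Define the refined POVM $\{G_{s,j}\}$ by $G_{s,j}=\mu_{s,j}\ketbra{\psi_{s,j}}{\psi_{s,j}}$. Each element is rank one and positive semidefinite, and
\[
\sum_{s,j}G_{s,j}=\sum_s F_s=\mathds{1},
\]
so $\{G_{s,j}\}$ is a valid POVM.

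Next, set $w_{s,j}=\mu_{s,j}/2^n$, so that $G_{s,j}=w_{s,j}2^n\ketbra{\psi_{s,j}}{\psi_{s,j}}$. Taking the trace of the completeness relation gives $\sum_{s,j}\mu_{s,j}=\tr\mathds{1}=2^n$. Hence $\sum_{s,j}w_{s,j}=1$, which is exactly the required normalization.

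For the simulation, measure $\{G_{s,j}\}$ on $\rho$, obtain the outcome $(s,j)$, and classically discard $j$ to output $s$. The probability of outputting $s$ is
\[
\sum_j\tr(G_{s,j}\rho)=\tr(F_s\rho),
\]
which is the original outcome distribution. This holds for every $\rho$ and requires only deterministic classical postprocessing. The post-measurement state differs between the two measurements, which is why the statement explicitly neglects it.

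There is no real obstacle. The only points to check are the trace identity that fixes the normalization of the $w$'s, and that zero eigenvalues may simply be dropped. If one also wanted to cover continuous-outcome POVMs, the same argument would go through with the sums replaced by integrals.
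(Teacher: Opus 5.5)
Your proposal is correct, and it is the standard argument: spectrally decompose each $F_s$ into rank-one pieces, use $\tr\mathds{1}=2^n$ to normalize the weights, and coarse-grain by discarding the refinement index. The paper gives no proof of its own and simply cites Lemma 4.8 of Chen et al., whose proof is this same refinement-and-relabeling argument.
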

\noindent This lemma tells us that in quantum learning tasks where we discard a state after measurement, we only need to consider rank-1 POVMs.

In this work, we model noise as a local depolarizing channel, given by the following definition.
\begin{definition}[Single-qubit depolarizing channel] The single-qubit depolarizing channel with strength $\lambda\in [0, 1]$ is
\begin{equation}
    \mathcal{D}_\lambda(\rho) = (1-\lambda)\rho + \lambda\, \frac{I}{2}
\end{equation}
for any single-qubit density matrix $\rho$. The depolarizing channel is self-adjoint with respect to the Hilbert-Schmidt inner product, and on a general 2-by-2 matrix $A$, (the adjoint of) $\mathcal{D}$ acts as
\begin{equation}
    \mathcal{D}_\lambda(A) = (1-\lambda)A + \lambda\,\frac{\tr(A)\,I}{2}\,.
\end{equation}
\end{definition}
\noindent We assume that the channel $\mathcal{D}_\lambda^{\otimes m}$ acts on our full register of $m = \textnormal{poly}(n)$ qubits after each depth-1 unitary operation. We work in the circuit model where depth-1 unitaries are comprised of arbitrary parallel $2$-qubit unitary gates. With this noise model established, we turn to definitions of complexity classes for noisy quantum learning. First we recall the most general of an $n$-qubit quantum oracle.

\begin{definition}[$n$-qubit quantum oracle]
An $n$-qubit quantum oracle $O$ is a completely positive trace preserving map from $\mathcal{L}((\mathbb{C}^2)^{\otimes m})\rightarrow \mathcal{L}((\mathbb{C}^2)^{\otimes n})$ with $m \leq O(
\textnormal{poly}(n))$.
\end{definition}
\noindent There are two special cases to note. 
\begin{definition}[State-preparation oracle]
An $n$-qubit state preparation oracle $O$ is a quantum oracle from input domain $1\rightarrow \mathcal{L}((\mathbb{C}^2)^{\otimes n})$. That is, it loads a fixed quantum state $\rho$ onto an $n$-qubit register.
\end{definition}
\noindent Another special case which will not be utilized in this work, but is useful for understanding the formal complexity-theoretic language of \cite{cotler2026noisyquantumlearningtheory}, is the following.
\begin{definition}[Quantum instantiation of classical oracle]
A classical oracle $O_C$ is a function from $\{0, 1\}^n\rightarrow \{0, 1\}^m$ for $n,m \in \mathbb{N}$. The quantum instantiation of $O_C$ is the unitary $U_{O_C}$ acting on computational basis states $\ket{x}\!, \ket{y}$ as $U_{O_C}\ket{x}\ket{y} = \ket{x}\ket{y\oplus O_C(x)}$.
\end{definition}
\noindent This definition allows for nontrivial comparisons between quantum and classical complexity classes, and further enables \cite{cotler2026noisyquantumlearningtheory} to prove relativized separations between quantum complexity classes using classical decision problems. Now the following definition instantiates the computational model used within the $\nbqp$ complexity class, which allows the algorithm to perform mid-circuit measurements and state preparation that is sufficient for fault-tolerant quantum computation.

\begin{definition}[Noisy quantum algorithm with oracle access]
\label{def:NBQP_circ}
Let $O$ be an $n$-qubit quantum oracle. A $\lambda$-noisy quantum algorithm $Q_\lambda^O$ with access to $O$ is a uniform family of quantum channels $\{C_n^{O}\}_{n\in\mathbb{N}}$, where for each input length $n$ the channel $C_n^{O}$ acts on $n' \leq \mathrm{poly}(n)$ qubits. We refer to $n'$ as the total number of qubits used by the algorithm on inputs of size $n$.  Each $C_n^{O}$, which we call a $\lambda$-noisy quantum circuit, has the form
\begin{equation}
C_n^{O}[\rho] = V_{k,n}\,\mathcal{D}_\lambda\bigl( V_{k-1,n}\,\mathcal{D}_\lambda\bigl(\cdots V_{2,n}\,\mathcal{D}_\lambda\bigl(V_{1,n}\,\rho\,V_{1,n}^\dagger\bigr)V_{2,n}^\dagger \cdots \bigr) V_{k-1,n}^\dagger \bigr) V_{k,n}^\dagger,
\end{equation}
for some integer $k \leq \mathrm{poly}(n)$. For each $t = 1,\dots,k$, the unitary $V_{t,n}$ is either \textnormal{(i)} a depth-1 unitary on the $n'$ qubits, or \textnormal{(ii)} an oracle layer of the form $O \otimes I_{2^{n'-m}}$, where $O$ acts on some chosen subset of $m$ of the $n'$ qubits and the identity acts on the remaining $n' - m$ qubits.  For fixed $n$, the output state of $Q_\lambda^O$ is $C_n^{O}\left[\bigl|0^{n'}\rangle\langle0^{n'}\bigr|\right]$, and the runtime of the algorithm on inputs of length $n$ is $\Theta(k)$.
\end{definition}
\noindent We can now state the formal definition of $\nbqp$ as a functional complexity class.
\begin{definition}[\textsf{NBQP}$^O$ complexity class]\label{def:NBQP}
Let $f : \{0,1\}^\star \rightarrow \{0,1\}^\star$ be a (total) function. We say that $f$ is in \textnormal{\textsf{NBQP}}$^O$ if there exist a constant $\lambda > 0$, a polynomial $p(\cdot)$, and a $\lambda$-noisy quantum algorithm $Q_\lambda^O$ with access to $O$ (as in Definition~\ref{def:NBQP_circ}) such that, for every input $x \in \{0,1\}^\star$ of length $n = |x|$:
\begin{enumerate}
    \item The algorithm $Q_\lambda^O$ on input $x$ acts on $n' \leq p(n)$ qubits, uses at most $p(n)$ layers (i.e.~$k \leq p(n)$ in Definition~\ref{def:NBQP_circ}), and hence has total running time at most $p(n)$;
    \item Measuring all qubits of the output state $C_n^{O}\!\left[\bigl|0^{n'}\rangle\langle0^{n'}\bigr|\right]$ in the computational basis yields a classical bit string of length at most $p(n)$;
    \item The resulting measurement outcome equals $f(x)$ with probability at least $2/3$.
\end{enumerate}
\end{definition}
\noindent One can also work with $\nbqp_\lambda^O$ with $\lambda > 0$ to select an explicit noise rate. If $\lambda$ were set to $0$, the previous definition would reduce to exactly the $\bqp^O$ complexity class. As done in \cite{cotler2026noisyquantumlearningtheory}, problems of learning properties of quantum states can be naturally fit into the functional notion of $\nbqp^O$.

\begin{remark}[Property testing with state-preparation oracles]
In Definition~\ref{def:NBQP_circ}, we can specifically choose $O$ to be a state-preparation oracle $1\rightarrow \rho_O$. Now, for a two-property testing problem, suppose that for each $n$ there are two disjoint classes of states $\mathcal{P}_0(n)$ and $\mathcal{P}_1(n)$, and we are promised that $\rho_O$ belongs to exactly one of them. For each such oracle $O$, define a function $f^O : \{0,1\}^\star \to \{0,1\}$ by
\begin{align}
f^O(1^n) =
\begin{cases}
0 & \textnormal{if } \rho_O \in \mathcal{P}_0(n)\\
1 & \textnormal{if } \rho_O \in \mathcal{P}_1(n)
\end{cases}\,,
\end{align}
where the input is simply the unary string $1^n$ encoding the system size. An \textnormal{\textsf{NBQP}}$^O$ algorithm for this property-testing task is then a $\lambda$-noisy quantum algorithm $Q_\lambda^O$ that, on input $1^n$ and with access to $O$, outputs $f^O(1^n)$ with probability at least $2/3$ using only polynomial resources. 
\end{remark}

\subsection{Minimax learning lower bounds}
The standard strategy for proving lower bounds on the sample complexity of a learning problem is via reduction to a distinguishing task. In short, given an algorithm that can learn a particular property of a quantum system to precision $\epsilon$, the same algorithm will be able to distinguish two systems which differ by $> 2\epsilon$ in the property. As a result, a lower bound on the sample complexity of the reduced distinguishing problem implies a lower bound on the learning task. In this work, we will use the following kinds of distinguishing problem in lower bounds.

\begin{definition}[One-vs-one distinguishing problems]
Consider two quantum states $\rho_0, \rho_1$, and select the ground truth state $\rho = \rho_0$ or $\rho_1$ with equal probability. A one-vs-one distinguishing task is simply to determine the ground truth given copies of $\rho$.

\end{definition}
\begin{definition}[Many-vs-many distinguishing problems]
Consider two ensembles of quantum states $\mathcal{E}_0, \mathcal{E}_1$, and distributions $D_0(\mathcal{E}_0), D_1(\mathcal{E}_1)$ over them respectively. We then choose a ground truth distribution $D$ equiprobably at random to be either $D_0(\mathcal{E}_0)$ or $D_1(\mathcal{E}_1)$. Then given copies of a fixed state $\rho \sim D$, a many-vs.-many distinguishing task is to decide whether $D = D_0$ or $D_1$. 
\end{definition}

Once an appropriate distinguishing problem is presented, the next step is to impose the desired constraints on the measurements we can perform. These constraints will control the classical probability densities that govern our solutions to the hypothesis testing problem, taking us from quantum state access to classical success probabilities. For this, one interprets an adaptive experiment as a rooted tree.

\begin{definition}[Tree representation for a quantum learning algorithm]\label{def:general_learning_tree}
A quantum learning algorithm with access to a fixed $n$-qubit state $\rho$ and $\textnormal{poly}(n)$ classical memory can be represented as a rooted tree $\mathcal{T}$ with the following properties:
\begin{itemize}
    \item Each node $u$ in $\mathcal{T}$ is associated with a POVM $\{M_s^u\}$ on $k \leq n$ qubits, which may be drawn from some subset of all possible $n$-qubit POVMs.
    \item Each node $u$, at depth $d$ in the tree, has an associated probability $p_\rho(u)$ denoting the probability that the state of the algorithm is represented by $u$ after $d$ measurements.
    \item Each non-leaf node $u$ is joined to its children by edges $e_{u, s}$, where $s$ corresponds to the classical outcome of the measurement performed at $u$. For a child node $v$, the transition rule is given by
    \begin{equation}
        p_\rho(v) = p_\rho(u) \,\tr(\rho(M_s^u\otimes \mathds{1}_{n-k}) )\,,
    \end{equation}
    where the identity acts on any qubits not measured under the POVM. 
    \item Every root-to-leaf path has $T$ edges. 
\end{itemize}
To specify a learning tree, we specify the size of the quantum register and the set of allowed POVMs at each node.
\end{definition}
This definition allows us to embed a wide variety of constraints. For instance, a protocol which measures $k$ copies of an $m$-qubit quantum state $\sigma$ at a time can set $n = mk$ in defining the tree, and set $\rho = \sigma^{\otimes k}$. If noise or other constraints are present, these can be incorporated directly into the set of allowable POVMs at each node. We will also use the following notation in our lower bound.

\begin{definition}[$\mathcal{M}_{k ,T}$ POVMs] \label{def:m_k,t_POVMs}
Let $\mathcal{H}_n$ denote the $n$-qubit Hilbert space, where $n$ is set by the problem at hand. The set of POVMs $\mathcal{M}_{k ,T}$ consists of all POVMs described by a depth-$T$ learning tree in which each non-leaf node is assigned an arbitrary POVM acting on $\mathcal{H}_n^{\otimes k}$ (the Hilbert space obtained from $k$ copies of the $n$-qubit Hilbert space), with the choice of POVM allowed to depend on all previous classical outcomes.
\end{definition}

With the tree definition encoding our problem constraints and providing us with a mapping from quantum states to potential classical outcome distributions, the key tool in proving an information-theoretic lower bound is Le Cam's two-point method.

\begin{lemma}[Le Cam's Two-Point Method]\label{lemma:le_cam}
Given a two-hypothesis distinguishing problem with hypothesis states $\rho \sim D_0, \sigma\sim D_1$ and learning tree $\mathcal{T}$ representing a quantum algorithm for this problem, the probability that the algorithm selects the correct hypothesis is upper bounded by
\begin{equation}
    \frac{1}{2} + \frac{1}{2}\sum_{\ell\in \textnormal{leaf}(\mathcal{T})} \Big|\mathbb{E}_{\rho \sim D_0}[p_{\rho}(\ell)] - \mathbb{E}_{\sigma \sim D_1}[p_{\sigma}(\ell)]\Big|
\end{equation}
The divergence $\frac{1}{2}\sum_{\ell\in \textnormal{leaf}(\mathcal{T})} \Big|\mathbb{E}_{\rho \sim D_0}[p_{\rho}(\ell)] - \mathbb{E}_{\sigma \sim D_1}[p_{\sigma}(\ell)]\Big|$ is the total variation distance $d_{\text{\rm TV}}(\mathbb{E}_\mathcal{D_0}[p_{\rho}(\ell)],\mathbb{E}_\mathcal{D_0}[p_{\sigma}(\ell)])$.
\end{lemma}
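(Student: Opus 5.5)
The proof will reduce the algorithm's success probability to the total variation distance between the two leaf distributions, and then bound that by the best single-test success probability. Given the tree $\mathcal{T}$ and the final decision rule, let $A\subseteq \textnormal{leaf}(\mathcal{T})$ be the set of leaves on which the algorithm outputs hypothesis $0$. Define the averaged leaf distributions $P_0(\ell) := \mathbb{E}_{\rho\sim D_0}[p_\rho(\ell)]$ and $P_1(\ell) := \mathbb{E}_{\sigma\sim D_1}[p_\sigma(\ell)]$. First I will check that both are genuine probability distributions on the leaves. This follows by induction on depth from the transition rule in Definition~\ref{def:general_learning_tree}. Since $\sum_s M^u_s=\mathds{1}$, we get $\sum_s p_\rho(v_s)=p_\rho(u)$ at every node, so the leaf probabilities sum to $p_\rho(\text{root})=1$; linearity of expectation then carries this over to the averages.

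Next I will write the success probability under the uniform prior. For the deterministic rule it is
\begin{equation}
P_{\rm succ} = \tfrac12 P_0(A) + \tfrac12\bigl(1-P_1(A)\bigr) = \tfrac12 + \tfrac12\bigl(P_0(A)-P_1(A)\bigr).
\end{equation}
If the decision rule is randomized, I will replace the indicator of $A$ by a function $f:\textnormal{leaf}(\mathcal{T})\to[0,1]$. The expression stays affine in $f$, so its maximum is attained at an indicator.

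The remaining step is the standard bound $P_0(A)-P_1(A)\le \max_B |P_0(B)-P_1(B)| = \tfrac12\sum_\ell |P_0(\ell)-P_1(\ell)|$. To prove it, take $B=\{\ell : P_0(\ell)>P_1(\ell)\}$ as the maximizer and note that $\sum_\ell (P_0(\ell)-P_1(\ell))=0$. Hence the positive part equals half the $\ell_1$ norm. Substituting this gives the stated bound $\tfrac12+\tfrac12\sum_\ell|P_0(\ell)-P_1(\ell)|$.

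There is no serious obstacle here. The one point needing care is that the ground truth is a state drawn from $D_b$ and then held fixed across all $T$ rounds, so the relevant object is the mixture $\mathbb{E}_{\rho\sim D_b}[p_\rho(\ell)]$ rather than a product of averaged single-round probabilities. The argument above only uses linearity of expectation on leaf probabilities, so it handles this correctly. Finally, I note that the statement's expression $\mathbb{E}_{\mathcal{D}_0}$ in the second argument of the TV distance is a typo for $\mathbb{E}_{\mathcal{D}_1}$.
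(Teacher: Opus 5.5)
Your argument is correct, and it is the standard proof of Le Cam's bound. The paper states this lemma without proof and invokes it as a known tool, so there is no separate route to compare against.

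One bookkeeping point concerns your final step. Substituting $P_0(A)-P_1(A)\le \tfrac12\sum_\ell|P_0(\ell)-P_1(\ell)|$ into $P_{\rm succ}=\tfrac12+\tfrac12\bigl(P_0(A)-P_1(A)\bigr)$ actually yields
\[
\tfrac12+\tfrac14\sum_\ell|P_0(\ell)-P_1(\ell)| \;=\; \tfrac12+\tfrac12\, d_{\rm TV}(P_0,P_1).
\]
This is a factor of two sharper in the TV term than the stated bound $\tfrac12 + d_{\rm TV}$, and it implies it. So your last sentence should say the substitution \emph{implies} the stated bound, not that it gives it.

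Your remark that the second $\mathbb{E}_{\mathcal{D}_0}$ in the TV expression should read $\mathbb{E}_{\mathcal{D}_1}$ is also right.
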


\noindent Because the learning tree permits adaptivity, Lemma \ref{lemma:le_cam} allows us to bound the success probability of any (possibly adaptive) protocol for the hypothesis testing problem in terms of the tree's depth. This imposes a lower bound on tree depth, and thereby sample complexity, when one requires constant success bias, because the tree must be sufficiently deep that the total variation remains constant rather than decaying with system size.

Our goal will be to argue that for the total variation bound in Lemma \ref{lemma:le_cam} to become appreciably large, the tree depth $T$ must grow exponentially in $n$. To do so, it suffices to bound the one-sided likelihood ratio.  Let us first define this ratio, and then subsequently explain why bounding it is useful for us.
\begin{definition}[One-sided likelihood ratio]
Assume a learning tree $\mathcal{T}$ and a one-vs.-one distinguishing problem with hypothesis states $\rho$ and $\sigma$. For every $\ell\in \textnormal{leaf}(\mathcal{T})$, the one-sided likelihood ratio is
\begin{equation}
    L(\ell) = \frac{p_{\rho}(\ell)}{p_\sigma(\ell)}\,.
\end{equation}
Moreover, we define the intermediate edge likelihood ratio:
\begin{equation}
    L(s|u) = \frac{p_{\rho}(s|u)}{p_\sigma(s|u)}
\end{equation}
given by the outcome distributions of a POVM at node $u$.
\end{definition}
Previous works have used ensemble-averaged likelihood ratios to contend with many-vs.-one hypothesis testing problems such as Pauli shadow tomography \cite{weiyuan_Paulis, cotler2026noisyquantumlearningtheory}. However, in the Heisenberg learning tree method, we will first reduce the learning tree for a many-vs.-many problem to a learning tree for a one-vs.-one problem, so that only the above likelihood ratio is needed. To then leverage the likelihood ratio to obtain a lower bound, we will use the following lemma, which is a simplified version of the martingale technique introduced in \cite{weiyuan_Paulis}. We provide a proof of this simpler case for completeness.

\begin{lemma}[Martingale lemma for one-vs.-one hypothesis testing.]\label{lemma:martingale}
    Suppose $\mathcal{T}$ is a learning tree for a one-vs.-one distinguishing problem. If the condition
    \begin{equation}
      \mathbb{E}_{s\sim p^{\sigma}(s|u)} \left[\left(L(s|u) - 1\right)^2\right]\leq \delta
    \end{equation}
    is satisfied for all $u$ in $\mathcal{T}$, then $\mathcal{T}$ must have depth $T >\Omega(1/\delta)$ for the algorithm to succeed with probability $>2/3$. 
\end{lemma}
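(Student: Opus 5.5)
The plan is to bound the total variation distance in Lemma~\ref{lemma:le_cam} by the chi-square divergence of the leaf distributions, and then to show that this chi-square grows at most linearly in $T\delta$. Write $P(\ell)=p_\rho(\ell)$ and $Q(\ell)=p_\sigma(\ell)$ for the leaf distributions under the two hypotheses. Along a root-to-leaf path $u_0,u_1,\dots,u_T$ with outcomes $s_1,\dots,s_T$, the likelihood ratio factorizes as
\begin{equation}
L(\ell)=\prod_{t=1}^{T}L(s_t|u_{t-1}).
\end{equation}

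Key step 1 is to show that $L$ evaluated along the path is a martingale under $Q$. By construction $\mathbb{E}_{s\sim p_\sigma(s|u)}[L(s|u)]=\sum_s p_\rho(s|u)=1$, so the partial products $M_t=\prod_{t'\le t}L(s_{t'}|u_{t'-1})$ form a $Q$-martingale with $M_0=1$. I will assume $p_\sigma(s|u)>0$ whenever $p_\rho(s|u)>0$; otherwise the hypothesis with its conditional $\chi^2$ being finite already forces this.

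Key step 2 is to compute the second moment by conditioning on the history:
\begin{equation}
\mathbb{E}_Q[M_t^2]=\mathbb{E}_Q\!\left[M_{t-1}^2\,\mathbb{E}_{s\sim p_\sigma(s|u_{t-1})}\big[L(s|u_{t-1})^2\big]\right].
\end{equation}
The inner expectation equals $1+\mathbb{E}[(L-1)^2]$, which is at most $1+\delta$ by hypothesis. Inducting gives $\mathbb{E}_Q[L^2]\le(1+\delta)^T$. Hence
\begin{equation}
\chi^2(P\|Q)=\mathbb{E}_Q[(L-1)^2]\le(1+\delta)^T-1\le e^{T\delta}-1.
\end{equation}

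Key step 3 is to apply Cauchy--Schwarz, which gives $d_{\rm TV}(P,Q)=\tfrac12\mathbb{E}_Q|L-1|\le\tfrac12\sqrt{e^{T\delta}-1}$. Success probability greater than $2/3$ requires $d_{\rm TV}>1/3$ by Lemma~\ref{lemma:le_cam}, so $e^{T\delta}-1>4/9$, that is, $T\delta>\ln(13/9)$ and $T=\Omega(1/\delta)$.

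The main obstacle is only bookkeeping. First, when $\delta$ is not small the bound $e^{T\delta}$ is loose, but the conclusion is then trivial, since $T\ge1$ is $\Omega(1/\delta)$ in that regime anyway. Second, one must handle nodes where $p_\sigma(u)=0$; these carry zero $Q$-mass and can be pruned without changing the argument.
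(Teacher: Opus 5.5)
Your proof is correct and follows the paper's argument: the path likelihood ratio is a $Q$-martingale, the conditional second moment gives $\mathbb{E}_Q[L^2]\le(1+\delta)^T$, and the resulting $\chi^2$ bound is converted to total variation. Your Cauchy--Schwarz step $d_{\rm TV}\le\frac12\sqrt{\chi^2}$ is more direct than the paper's appeal to Pinsker. One harmless slip: Lemma~\ref{lemma:le_cam} as stated gives success $\le\frac12+d_{\rm TV}$, so success above $2/3$ only forces $d_{\rm TV}>1/6$ rather than $1/3$; this changes your constant $\ln(13/9)$ to $\ln(10/9)$ but leaves the $\Omega(1/\delta)$ conclusion intact.
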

\begin{proof}
    Let $H_t = (s_1, s_2,...,s_t)$ denote a sequence of outcomes over $t$ measurements, i.e. a length-$t$ path starting at the root of $\mathcal{T}$. Then, let $P^{(t)}(H_t), Q^{(t)}(H_t)$ denote the classical outcome probability densities over values of $H_t$ at tree depth $t$, under hypothesis state $\rho$ and $\sigma$ respectively. Define the random variable
    \begin{equation}
        Z_t = \frac{P^{(t)}(H_t)}{Q^{(t)}(H_t)}  = \prod_{i=1}^t L(s_i|t_i).
    \end{equation}
    Letting $H_t = (\Vec{h}, s_t)$, it follows that  $P^{(t)}(h) = P^{(t-1)}(\Vec{h})P(s|u)$, where we define $P(s|u)$ to be the single-outcome distribution given by $\tr(F_s^u \rho)$. The same holds for $Q$. As such, we have
    \begin{equation}
        Z_t(H_t) = \frac{P^{(t-1)}(\Vec{h})P(s|u)}{Q^{(t-1)}(\Vec{h})Q(s|u)} = Z_{t-1}(H_{t-1}) L(s|u) \ .
    \end{equation}
    It follows that under the ground truth density $Q$, $\mathbb{E}_Q[Z_t | H_{t-1}] = Z_{t-1}\mathbb{E}_{Q_u}[L(s|u)] = Z_{t-1}$, because the expectation of the likelihood ratio is $1$ by definition. As such, $Z_t$ is a martingale under $Q$. By the same reasoning, \begin{equation}
        \mathbb{E}_Q[Z_t^2|H_{t-1}] = Z_{t-1}^2 \mathbb{E}_{Q_u}[L(s|u)^2] = Z_{t-1}^2 \, \mathbb{E}_{Q_u}[(L(s|u) - 1)^2 + 1]  \ ,
    \end{equation}
    where in the last step we expand the square and use that the expectation of the likelihood ratio is always $1$. Applying the assumed concentration condition and iterating for $T$ rounds, we obtain
    \begin{equation}
        \mathbb{E}_{Q}[Z_T^2] = (1+\delta)^T \ .
    \end{equation}
    Now since the $\chi^2$-divergence is exactly defined as $\mathbb{E}_Q[(Z_T-1)^2] = \mathbb{E}_Q[Z_T^2] - 1$, it follows by Pinsker's inequality that
    \begin{equation}
        d_{\rm TV}(P^{(T)}, Q^{(T)}) \leq \sqrt{\frac{1}{2}((1+\delta)^T - 1)}\leq \frac{1}{2}(e^{\delta T} - 1) \ .
    \end{equation}
    For this to be $\Theta(1)$ (specifically, it must be at least $1/6$ to achieve success probability at least $2/3$), we require $T \geq \Omega(1/\delta)$.
\end{proof}
\noindent Via Lemma \ref{lemma:martingale}, a nodewise likelihood ratio upper bound will translate to a sample complexity lower bound. Our proofs will obtain this lower bound by first translating the many-vs.-many task into a one-vs.-one task with different hypothesis states, then controlling the likelihood ratio for this new learning tree using a Heisenberg-picture operator. 

\subsection{Heisenberg learning tree method} \label{app:Heisenberg_learning_tree}
In this section, we detail the proof structure of the Heisenberg learning tree method. We only review the high-level idea behind existing learning tree proof strategies here; for a detailed description, see \cite{chen2021exponentialseparationslearningquantum}. 

Broadly speaking, the learning tree formalism has thus far consisted of three approaches of increasing complexity. The "tree-based" approach is essentially equivalent to direct application of Le Cam's two-point lemma. One uses the elementary fact that the one-sided likelihood ratio directly controls total variation, and obtains a tight likelihood ratio bound at every level of the tree in conjunction with a simple union bound. This requires a close match in the first moments of the measurement transcript distributions, which is often a stronger requirement than can be proven in interesting cases.

The first method that uses higher moments of the tree structure is the edge-based method. Here, one requires a uniform upper bound on the variance of the likelihood ratio at every intermediate node in the tree. Similarly to Lemma \ref{lemma:martingale}, this results in a tighter bound on the total variation between leaf distributions than is obtained from first moments alone. However, for some problems it is possible that in rare cases the likelihood ratio of a particular edge in the tree is far from $1$, even when the vast majority of edges remain close to $1$. This behavior is only captured by higher moments of the tree, causing the edge-based method to fail. One then relies on the path-based method, in which one focuses on a typical root-to-leaf path in the tree and probabilistically bounds its likelihood ratio. By controlling the number of "bad" paths along which the likelihood ratio strays from $1$ while bounding the "good" paths using higher moments of the transcript distribution, one again obtains a total variation bound. While this is the most powerful bound one can in principle obtain, as it uses all accessible moments of the tree, it is often very unwieldy to handle higher moments of tensor product distributions over states or unitaries. This makes it difficult to use in cases where the limited resource is not the number of copies per experiment, but another resource such as coherence. 

To circumvent some limitations of the edge-based and path-based approaches, we utilize a combined approach which we call the Heisenberg learning tree method for many-vs.-many (and many-vs.-one) hypothesis testing problems. The core component of the method is the following lemma.

\begin{lemma}[Heisenberg bound for one-vs.-one distinguishing] \label{thm:Heisenberg} Consider a one-vs.-one distinguishing problem with hypothesis states $\sigma_p, \sigma_q$. Further suppose that in each experiment, we consider POVMs acting on adaptively post-processed states $\sigma_p^u = \Phi_u[\sigma_p], \sigma_q^u = \Phi_u[\sigma_q]$, where $\Phi_u$ is a quantum channel that can include adaptive unitary controls and noise. Suppose $\mu \mathds{1} \preceq \sigma_q^u$ for some scalar $\mu > 0$. Let $\mathcal{T}$ denote the learning tree in which arbitrary POVMs are made on the post-processed states at each node. Finally, define $\Delta^u \coloneqq \sigma_p^u - \sigma_q^u$. Then it follows that
\begin{equation}
    \mathbb{E}_q[(L(s|u) - 1)^2] \leq \mu^{-1} \tr\big((\Delta^u)^2\big)
\end{equation}
for all nodes $u$ in $\mathcal{T}$. As such, for success probability at least $2/3$, $\mathcal{T}$ must have depth at least $\Omega\left(\mu/ \tr\big((\Delta^u)^2\big)\right)$.
\end{lemma}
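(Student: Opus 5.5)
Since both hypotheses are processed by the same channel $\Phi_u$ at node $u$, the POVM $\{F_s\}$ there produces
\[
p(s\mid u)=\tr(F_s\sigma_p^u),\qquad q(s\mid u)=\tr(F_s\sigma_q^u),\qquad L(s\mid u)=p(s\mid u)/q(s\mid u).
\]
The plan is to write the edge $\chi^2$ quantity directly in terms of $\Delta^u$:
\[
\mathbb{E}_q\big[(L(s\mid u)-1)^2\big]=\sum_s \frac{\big(p(s\mid u)-q(s\mid u)\big)^2}{q(s\mid u)}=\sum_s\frac{\tr(F_s\Delta^u)^2}{\tr(F_s\sigma_q^u)}.
\]
This is a classical $\chi^2$ divergence, expressed through a Heisenberg-picture operator.

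First, I lower bound each denominator. The assumption $\mu\mathds{1}\preceq\sigma_q^u$ together with $F_s\succeq 0$ gives $\tr(F_s\sigma_q^u)\ge\mu\tr(F_s)$. Outcomes with $\tr(F_s)=0$ have $F_s=0$ and contribute nothing. It then remains to show
\[
\sum_s\frac{\tr(F_s\Delta^u)^2}{\tr(F_s)}\le\tr\big((\Delta^u)^2\big).
\]

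For this I apply Cauchy--Schwarz in the Hilbert--Schmidt inner product, splitting $F_s=F_s^{1/2}F_s^{1/2}$:
\[
\tr(F_s\Delta)^2=\big|\tr\big(F_s^{1/2}\cdot F_s^{1/2}\Delta\big)\big|^2\le \tr(F_s)\,\tr\big(\Delta F_s\Delta\big).
\]
Dividing by $\tr(F_s)$ and summing over $s$, completeness $\sum_sF_s=\mathds{1}$ gives $\sum_s\tr(\Delta F_s\Delta)=\tr(\Delta^2)$. This yields the claimed nodewise bound $\mathbb{E}_q[(L(s\mid u)-1)^2]\le\mu^{-1}\tr((\Delta^u)^2)$. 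No reduction to rank-one POVMs is needed, although Cauchy--Schwarz would be immediate in that case too.

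For the depth statement, I set $\delta=\sup_u\mu^{-1}\tr((\Delta^u)^2)$, interpreting the statement's $\tr((\Delta^u)^2)$ as a uniform bound over nodes. Lemma~\ref{lemma:martingale} then gives $T\ge\Omega(\mu/\tr((\Delta^u)^2))$.

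The main subtlety is not the algebra but whether Lemma~\ref{lemma:martingale} applies to post-processed states. Its proof only uses that, conditioned on the history, the outcome at $u$ is distributed as $\tr(F_s\Phi_u[\cdot])$ under each hypothesis. This holds because $\Phi_u$ depends only on past classical outcomes, so each round's likelihood ratio factorizes and $Z_t$ remains a $Q$-martingale. I would verify this explicitly, and would also require $\mu$ to be uniform over all nodes.
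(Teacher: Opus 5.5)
Your proposal is correct and follows the paper's proof essentially step for step: you write the edge $\chi^2$ quantity as $\sum_s \tr(F_s\Delta^u)^2/\tr(F_s\sigma_q^u)$, use $\sigma_q^u\succeq\mu\mathds{1}$ to bound each denominator below by $\mu\tr(F_s)$, apply Hilbert--Schmidt Cauchy--Schwarz with $F_s=F_s^{1/2}F_s^{1/2}$, sum using completeness, and invoke Lemma~\ref{lemma:martingale}. Your added care about zero-trace outcomes, and about requiring $\mu$ and the bound on $\tr((\Delta^u)^2)$ to hold uniformly over all nodes, makes explicit what the paper leaves implicit.
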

\begin{proof}
    At any node $u$ with POVM $\{F_s\}_u$, we have
    \begin{align}
        \mathbb{E}_q[(L(s|u) - 1)^2] &= \sum_s \tr(F_s \sigma_q^u)\left(\frac{\tr(F_s\sigma^u_p) - \tr(F_s\sigma_q^u)}{\tr(F_s\sigma_q^u)}\right)^2 \\
        &=\sum_s \tr(F_s \sigma_q^u)\left(\frac{\tr(F_s\Delta)}{\tr(F_s\sigma_q^u)}\right)^2 
        \\
        &=\sum_s \frac{\tr(F_s\Delta^u)^2}{\tr(F_s\sigma_q^u)} \leq \mu^{-1} \sum_s \frac{\tr(F_s\Delta^u)^2}{\tr(F_s)}
    \end{align}
    where in the last step we use $\sigma_q^u \succeq \mu \mathds{1}$. Rewriting $\tr(F_s\Delta^u)^2 = \Tr(\sqrt{F_s}\sqrt{F_s}\Delta^u)^2$ and applying Cauchy-Schwarz in the Hilbert-Schmidt inner product (noting that $F_s$ and $\Delta$ are both Hermitian), we have
    \begin{equation}
        \tr(F_s\Delta^u)^2 = \tr(\sqrt{F_s}\sqrt{F_s}\Delta^u)^2 \leq \tr(F_s)\tr(F_s(\Delta^u)^2) \ .
    \end{equation}
    Substituting, we obtain 
    \begin{equation}
        \mathbb{E}_q[(L(s|u) - 1)^2] \leq \mu^{-1}\sum_s \tr(F_s(\Delta^u)^2) = \mu^{-1}\tr((\Delta^u)^2) \ .
    \end{equation}
    Applying Lemma \ref{lemma:martingale} completes the proof.
\end{proof}

\noindent Theorem \ref{thm:Heisenberg} can be used to incorporate constraints like noise into tight learning lower bounds. The proof strategy is composed of three steps. Given a many-vs.-many distinguishing problem between states $\rho$ and $\sigma$ with additional constraints such as noise: 
\begin{enumerate}
    \item  Find a representation of the learning problem as one-vs.-one hypothesis testing, absorbing adaptive and noisy control into channels $\Phi_u$ and defining $\sigma_p^u = \Phi_u[\rho], \sigma_q^u = \Phi_u[\sigma]$.
    \item Find a uniform upper bound on $\tr\big((\Delta^u)^2\big)$ and a operator lower bound on $\sigma_q^u$.
    \item Apply Lemma \ref{thm:Heisenberg} to obtain a sample complexity lower bound.
\end{enumerate}
The first stage of this method can vary from problem to problem. Ref. \cite{cotler2026noisyquantumlearningtheory} begins with the path-based approach for many-vs.-one distinguishing, then rewrites the full-path likelihood ratio as a likelihood ratio of distinguishing two \textit{specific} states, removing the ensemble-dependance at each node. In this work, we use results from \cite{ye2025exponentialadvantagereplicaestimating} that achieve the same effect using direct computations that leverage structure within the hypothesis ensembles. The second step is often simple, since the channels $\Phi_u$ are almost always unital so that it is easy to obtain expressions of the form $\mu \mathds{1} \preceq \sigma_q^u$, while the $\Delta^u$ operator can be cumbersome but is a concrete operator whose squared trace can be manually calculated. This is precisely the power of the Heisenberg method: by absorbing complex constraints into a redefinition of the learning tree task, the challenge of obtaining a tight lower bound reduces to computation of simple operator traces.

\section{Uploading Protocol}
\label{app:uploading_proofs}

In this section we introduce the ``one-step'' state injection procedure and prove how it can be used to couple physical states to a fault-tolerant computation.  We use the composable fault-tolerance framework introduced in \cite{He2025ComposableFaultTolerance}.
\subsection{Definitions and lemmas}

We briefly review the composable fault-tolerance framework of~\cite{He2025ComposableFaultTolerance}, keeping only the ingredients needed for the uploading proof. The main object is a \emph{fault family}: a collection of fault sets whose occurrence may cause a gadget to fail. A gadget is certified by showing that every fault set outside this family produces the intended logical action and a controlled, correctable deviation compatible with subsequent gates. We begin with this abstract language and then specialize it to surface-code decoding.

\begin{definition}[Fault locations and fault sets]
For a fixed circuit gadget, let $\Omega$ denote the finite set of elementary fault locations. A location may correspond, for example, to a data-qubit error, a measurement error, an idle error, or any other elementary event allowed by the chosen noise model. A fault set is a subset $X\subseteq \Omega$, and its weight is its cardinality,
\begin{equation}
|X| = \#\{\omega\in X\}.
\end{equation}
Under local-stochastic noise of rate $p$, the probability that a particular fault set $X$ occurs is bounded by $p^{|X|}$.
\end{definition}
\begin{definition}[Avoiding set, Definition 2.1 of Ref.~\cite{He2025ComposableFaultTolerance}]
\label{def:avoiding_set}
Given a set of fault locations $\Omega$ and a family $\mathcal{F}\subseteq 2^\Omega$, a fault set $X\subseteq \Omega$ is $\mathcal{F}$-avoiding if it contains no bad set:
\begin{equation}
F \not \subseteq X \quad \textnormal{for every } F \in \mathcal{F}.
\end{equation}
\end{definition}

\noindent Avoiding sets capture the combinations of errors which can lead to logical errors and problematic deviations in the output of our gadgets. We will argue that as  long as error sets within the avoiding set do not occur, our gadget will have output whose deviation from a logical codeword does not propagate through a fault-tolerant circuit. Next, we capture the distribution of errors in our gadgets using the concept of the weight enumerator.

\begin{definition}[Weight enumerator, Definition 2.3 of Ref.~\cite{He2025ComposableFaultTolerance}]
\label{def:weight_enumerator}
For a family of bad fault sets $\mathcal{F}\subseteq 2^\Omega$, the weight
enumerator is
\begin{equation}
\mathcal{W}(\mathcal{F},x) = \sum_{F\in\mathcal{F}} x^{|F|} = \sum_{w\geq 0} A_w x^w,
\end{equation}
where $A_w$ is the number of bad sets $F\in\mathcal{F}$ of weight $w$.
\end{definition}
\noindent We will use the operation $\boxplus$ to indicate taking the union of avoiding sets, including ones in different gadgets. It follows that for two avoiding sets, $\mathcal{W}(\mathcal{F}_1\boxplus \mathcal{F}_2, x) \leq \mathcal{W}(\mathcal{F}_1, x) + \mathcal{W}(\mathcal{F}_2, x)$. Next, we need a formal way to define how far the output of a logical gadget is from the input.
\begin{definition}[Deviated state, Definition 4.1 of Ref.~\cite{He2025ComposableFaultTolerance}] \label{def:deviated_state} Let $A$ be a set of physical qubits and let $B\subseteq 2^A$ be a family of bad supports. A state $\tilde{\sigma}$ on $A$ is $B$-deviated from $\sigma$ if there exists a $B$-avoiding support $S\subseteq A$ and a quantum channel $\mathcal{E}_S$ supported on $S$ such that $\tilde{\sigma} = \mathcal{E}_S(\sigma)$. \end{definition}
We will usually take $\sigma$ to be an ideal encoded state and interpret the support $S$ as the residual physical error left after decoding. With these tools, we can state the definition of a fault-tolerant gadget.

\begin{definition}[Fault-tolerant gadget informal, definition 4.8  of Ref.~\cite{He2025ComposableFaultTolerance}]
\label{def:circuit_gadget}
     A circuit $C$ with avoiding set $\mathcal{F}_G$ is a fault-tolerant gadget for the logical gate $G$ mapping from input code type ($U_{\mathrm{in}}$, $B_{\mathrm{in}}$) to output code type ($U_{\mathrm{out}}$, $B_{\mathrm{out}}$) if
    \begin{enumerate}
        \item The gadget accurately simulates the action of the gate: for every fault which is $\mathcal{F}_G$-avoiding $C$ maps $B_{\mathrm{in}}$-deviated encodings of $\rho$ to $B_{\mathrm{out}}$-deviated encodings of $G(\rho)$.
        \item The gadget is friendly:  for every fault which is $\mathcal{F}_G$-avoiding regardless of the input to the gadget the output is $B_{\mathrm{out}}$-deviated from some logical state.  
    \end{enumerate} 
    Recall that in the tuples for code type, $U$ is the ideal decoding unitary and $B$ the set capturing allowable deviation.  
\end{definition}
\noindent The crucial property of this definition is the ability to compose gadgets and directly obtain fault-tolerance guarantees for the composite circuit.
\begin{theorem}[Gadget composition, theorem 4.9  of Ref.~\cite{He2025ComposableFaultTolerance}]
\label{thm:gadget_composition}
      Consider fault tolerant gadgets $G_1$ and $G_2$ with avoiding sets $\mathcal{F}_1$ and $\mathcal{F}_2$.  Parallel composition gives a new fault tolerant gadget with avoiding set $\mathcal{F}_1 \boxplus \mathcal{F}_2$.  If one sequentially composes the gadgets then the resulting gadget $G_2 G_1$ is also fault-tolerant if the output type of the first gadget is compatible with the input type of the second gadget.  Again the avoiding set of the combined gadget is $\mathcal{F}_1 \boxplus \mathcal{F}_2$.  Friendliness is generically preserved for parallel composition and is preserved for sequential composition if the first gadget is friendly.  
\end{theorem}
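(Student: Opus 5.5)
We would prove the gadget composition statement by unwinding Definition~\ref{def:circuit_gadget} and tracking the avoiding sets. \textbf{Parallel composition.} Let $G_1$ act on registers $A_1$ and $G_2$ on disjoint registers $A_2$. Their fault locations are disjoint, $\Omega=\Omega_1\sqcup\Omega_2$, so any fault set splits as $X=X_1\sqcup X_2$. We take $\mathcal{F}_1\boxplus\mathcal{F}_2$ to be the union of the two bad families, viewed inside $2^{\Omega}$. A set $X$ is $(\mathcal{F}_1\boxplus\mathcal{F}_2)$-avoiding exactly when $X_1$ is $\mathcal{F}_1$-avoiding and $X_2$ is $\mathcal{F}_2$-avoiding. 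The input and output code types are taken as products: $U=U_1\otimes U_2$, with bad supports $B=\{S_1\cup S_2: S_i\in B_i\}$.

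A $B_{\rm in}$-deviated input has the form $\mathcal{E}_{S_1}\otimes\mathcal{E}_{S_2}$ applied to an encoding. Here we must handle one technical issue: an entangled logical input $\rho$ is not a product state. The fix is to use linearity of the channels and the fact that each gadget's correctness guarantee holds for arbitrary inputs on the other register, i.e.\ it is stable under tensoring with an identity reference system. With that, each gadget maps its share correctly, and the output is $(\mathcal{E}'_{S'_1}\otimes\mathcal{E}'_{S'_2})$ applied to $(U^\dagger)(G_1\otimes G_2)(\rho)$, with each $S'_i$ being $B_{i,\mathrm{out}}$-avoiding. Hence the union is $B_{\rm out}$-avoiding. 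Friendliness follows by the same factorization.

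\textbf{Sequential composition.} Fix $X=X_1\sqcup X_2$, where $X_1$ is $\mathcal{F}_1$-avoiding and $X_2$ is $\mathcal{F}_2$-avoiding. By property 1 of $G_1$, a $B_{1,\rm in}$-deviated encoding of $\rho$ is mapped to a $B_{1,\rm out}$-deviated encoding of $G_1(\rho)$. Compatibility means that the output type $(U_{1,\rm out},B_{1,\rm out})$ equals, or is contained in, the input type of $G_2$. Under that assumption, this state is a valid $B_{2,\rm in}$-deviated input, and property 1 of $G_2$ then yields a $B_{2,\rm out}$-deviated encoding of $G_2G_1(\rho)$.

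For friendliness of $G_2G_1$ on an arbitrary input, we use friendliness of $G_1$. Its output is $B_{1,\rm out}$-deviated from \emph{some} logical state $\tau$. Applying property 1 of $G_2$ to $\tau$ makes the composite output $B_{2,\rm out}$-deviated from $G_2(\tau)$. This is exactly why friendliness of the first gadget is required. For the weight enumerator we use $\mathcal{W}(\mathcal{F}_1\boxplus\mathcal{F}_2,x)\le\mathcal{W}(\mathcal{F}_1,x)+\mathcal{W}(\mathcal{F}_2,x)$, which is immediate since the family is a union.

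\textbf{Main obstacle.} We expect the hardest step to be the precise meaning of ``compatible''. A deviation channel $\mathcal{E}_S$ on the output of $G_1$ may be a general channel rather than a Pauli, and it could correlate with the faults inside $G_2$. To handle this, we would first try to absorb $\mathcal{E}_S$ into a modified input to $G_2$; this is legitimate because Definition~\ref{def:deviated_state} quantifies over all channels supported on avoiding supports. We also need to check that faults of $G_2$ only enlarge supports within $\Omega_2$, so that $X_2$ alone governs $G_2$'s guarantee. If a direct argument fails here, we would defer to the formal version in Ref.~\cite{He2025ComposableFaultTolerance}, Theorem 4.9.
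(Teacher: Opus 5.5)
The paper gives no proof of this statement. It imports it, informally paraphrased, from Theorem~4.9 of Ref.~\cite{He2025ComposableFaultTolerance}, so your fallback of deferring to that reference is what the paper actually does. As a self-contained sketch, your sequential-composition and friendliness arguments are sound, including the point that only $G_1$ must be friendly. When you make ``compatible'' precise, it must mean $U_{1,\rm out}=U_{2,\rm in}$ and that every $B_{1,\rm out}$-avoiding support is $B_{2,\rm in}$-avoiding, as when $SC(d)_{1/5}$ feeds $SC(d)_{1/2}$. A literal inclusion $B_{1,\rm out}\subseteq B_{2,\rm in}$ of bad families would go the wrong way.

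The parallel case, however, breaks at its first step. Your product family $B=\{S_1\cup S_2: S_i\in B_i\}$ is the wrong one. A support $S_1'\cup S_2'$ contains a member of it only if \emph{both} $S_1'$ and $S_2'$ contain bad sets, so it is $B$-avoiding as soon as \emph{one} block is. A $B_{\rm in}$-deviated composite input may then have one block arbitrarily corrupted, and property~1 of that block's gadget says nothing about it. You need the union $B_1\cup B_2$, i.e.\ the same $\boxplus$ you correctly used for the fault families, so that avoiding means avoiding on every block.

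Even with $B_1\cup B_2$, your assertion that a $B_{\rm in}$-deviated input has the form $(\mathcal{E}_{S_1}\otimes\mathcal{E}_{S_2})$ applied to an encoding does not follow from Definition~\ref{def:deviated_state}. That definition allows any channel supported on $S_1\cup S_2$, for instance an entangling gate between a qubit of $S_1$ and one of $S_2$. Such cross-block correlated deviations cannot be excluded, because upstream two-block gadgets produce them routinely; a transversal CNOT, for example, turns a deviation on $S_1$ into a correlated one on $S_1$ and the matching qubits of block~2.

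For product deviations your argument works. $C_1[f_1]$ commutes with $\mathcal{E}_{S_2}\circ\mathrm{Enc}_2$, so you can apply property~1 of $G_1$ with block~2 placed in the reference, and then do the same for $G_2$. For a correlated deviation that commutation is unavailable. Your proposed remedy (linearity plus stability under tensoring with a reference) only covers an entangled \emph{logical} input. Seen from $G_1$, a correlated deviation acts jointly on $S_1$ and on the reference, which is outside the hypothesis of property~1 as stated.

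Expanding the deviation into Kraus or Pauli cross terms does not rescue the argument by itself. The output support $S'$ furnished by property~1 may depend on the particular channel, so the branches cannot be reassembled into one channel on one avoiding support. What is missing is a single-gadget guarantee that is stable under deviations touching the reference, or that is linear at the operator level (e.g.\ through the decoding unitary $U$ of the code type). Supplying this is exactly the bookkeeping the formal framework of Ref.~\cite{He2025ComposableFaultTolerance} must provide, and it is absent from the informal Definition~\ref{def:circuit_gadget}.

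By contrast, the obstacle you flagged, correlation between the deviation and the faults inside $G_2$, is not a real issue. Both properties are asserted per fixed fault path, for which the circuit is a fixed channel.
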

\noindent We now specialize this abstract gadget language to CSS codes and surface-code decoding. A CSS code has check matrices $H_X$ and $H_Z$ which detect $Z$ and $X$ errors, respectively. Because the two Pauli sectors decouple, error correction can be analyzed independently in the $X$ and $Z$ sectors. In each sector, the decoding problem is represented on a graph whose edges are elementary error locations and whose vertices are detection events. The decoder then selects a minimum-weight set of edges consistent with the observed syndrome.  The graph corresponding to a check matrix $M$ is denoted by $G[M]$.  We assume a minimum weight decoder defined below.
\begin{definition}[Min weight decoder]
    \label{def:min_weight_decoder}
Given $a\in \{X,Z\}$, a set of detection events $s^a \in C_0^a$, and a mapping $\partial$ from error to syndrome, the min weight decoder returns a configuration of errors $\hat{e}^a\in C_1^a$ such that $\partial \hat{e}^a = s^a$ and $\hat{e}^a = \operatorname{argmin}_{e^a} \{|e^a|\ |\ \partial e^a = s^a\}$.  In other words it returns the minimum weight error configuration consistent with the syndrome.  
\end{definition}
\noindent To bound the probability of decoder failure, we count connected clusters of fault locations. We use these cluster-counting estimates to bound weight-enumerators for fault sets.

\begin{definition}
[Clustering sets: 6.4  of Ref.~\cite{He2025ComposableFaultTolerance}]
\label{def:clustering_sets}
For a graph $G = (V,E)$ the (d,t)-clustering sets, denoted $CG_{d,t}\subseteq 2^V$, are given by 
    \begin{align}
        CG_{d,t} = \{W \subseteq U \subseteq V \mid |U| = d, U\text{ is connected in }G, |W| = t\}
    \end{align}
\end{definition}
\noindent and cluster covers as 
\begin{definition}[Cluster cover: 6.5  of Ref.~\cite{He2025ComposableFaultTolerance}]
\label{def:cluster_cover}
     For a graph $G = (V,E)$ and a subset of the vertices $S \subseteq V$ a cluster cover $CO(m, S)$ is a set of vertices $C = \bigcup_i C_i \subseteq V$ where for every $C_i$ we have that $C_i \cap S \neq \emptyset$ and $|C|=m$.
\end{definition}
\noindent We need two lemmas which allow us to count the number of connected clusters and cluster covers for a graph.  
\begin{lemma}[Cluster cover count, 6.5  of Ref.~\cite{He2025ComposableFaultTolerance}]
\label{lemma:cluster_cover_count}
     The number of cluster covers $CO(m,S)$, denoted by $M(m,S)$ is bounded as
    \begin{align}
        M(m,S) \leq e^{|S|-1} (\Delta e)^{m-|S|}
    \end{align}
\end{lemma}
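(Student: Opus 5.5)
The plan is to reduce the count to a lattice-animal (connected-subgraph) count via an auxiliary root vertex. Throughout, $\Delta$ is the maximum degree of $G$. I read the cover as containing $S$, i.e.\ every vertex of $S$ lies in some $C_i$, so that $m\ge |S|$. This reading is what the form of the bound requires: if the cover only had to meet $S$, then for $m=1$ there would be $|S|$ covers, which exceeds $e^{|S|-1}(\Delta e)^{1-|S|}$ once $|S|$ is large.

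\emph{Step 1 (connectivity via a root).} Form $G^+$ by adding a new vertex $r$ adjacent to every vertex of $S$. Each connected piece $C_i$ contains a vertex of $S$, so $C\cup\{r\}$ is connected in $G^+$. Moreover, the neighbours of $r$ inside $C\cup\{r\}$ are exactly $S$, which is fixed in advance. Counting covers $C$ therefore amounts to counting connected sets $C\cup\{r\}$ of size $m+1$ in $G^+$ whose first exploration layer is forced to be $S$.

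\emph{Step 2 (exploration encoding).} Fix an ordering of $V$ and of each adjacency list. Run a breadth-first exploration of $C$ starting from the queue $S$. When a vertex is popped, scan its at most $\Delta$ neighbours in order and record one bit per neighbour not yet seen: $1$ if it lies in $C$ (it is then enqueued), $0$ otherwise. The resulting word determines $C$ uniquely, because the exploration is deterministic given the word. The word has at most $\Delta m$ letters, and exactly $m-|S|$ of them are ones, since the vertices of $S$ are placed in the queue without costing a bit. Hence
\begin{equation}
M(m,S)\le \binom{\Delta m}{m-|S|}\le \left(\frac{e\Delta m}{m-|S|}\right)^{m-|S|}.
\end{equation}

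\emph{Step 3 (simplify).} Write $k=m-|S|$. Then
\begin{equation}
\left(\frac{m}{k}\right)^{k}=\left(1+\frac{|S|}{k}\right)^{k}\le e^{|S|},
\end{equation}
which gives $M(m,S)\le e^{|S|}(e\Delta)^{m-|S|}$.

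\emph{Main obstacle: the last factor of $e$.} The sharp constant $e^{|S|-1}$ needs a slightly tighter slot count. I would first note that the first vertex of $S$ needs no encoding, and that every vertex other than the first has at least one edge pointing back into the explored tree, which consumes a slot. This cuts the alphabet to roughly $(\Delta-1)m+1$ letters. Alternatively, I would use the refined estimate $\binom{N}{k}\le N^k/k!$ together with $k!\ge (k/e)^k\sqrt{2\pi k}$, which leaves room to absorb the extra factor $e$. Should neither suffice, I would follow the generating-function proof of the standard bound $(e\Delta)^{k-1}$ for connected sets containing a fixed vertex, as in Ref.~\cite{He2025ComposableFaultTolerance}, treating $r$ as the fixed vertex with a forced neighbourhood. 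The degree of $r$ is $|S|$ rather than at most $\Delta$, but in that argument it contributes only through the forced layer, which accounts for the $e^{|S|-1}$ factor.

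Steps 1–2 are routine. Since the weight-enumerator estimates downstream tolerate constant-factor losses, the constant bookkeeping in the last step is the only delicate point.
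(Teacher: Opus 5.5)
The paper gives no proof of this lemma; it quotes it from Lemma~6.5 of Ref.~\cite{He2025ComposableFaultTolerance}. Your exploration argument is therefore a self-contained substitute rather than a reconstruction, and it is correct once the last step is pinned down.

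Your reading $S\subseteq C$ is the right one. You also need $S\neq\emptyset$: for $S=\emptyset$ the only cover is $C=\emptyset$, while the right-hand side is $e^{-1}$. Steps 1--2 give $M(m,S)\le\binom{\Delta m}{m-|S|}$. To make that count rigorous, pad each word with zeros to length exactly $\Delta m$. Injectivity survives because the decoder halts as soon as the queue empties.

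Of your three options for the missing factor $e$, commit to the second. Writing $k=m-|S|$, for $k\ge 2$
\begin{equation*}
\binom{\Delta m}{k}\le\frac{(\Delta m)^k}{k!}\le\frac{(e\Delta)^k\,(1+|S|/k)^k}{\sqrt{2\pi k}}\le\frac{e^{|S|}(e\Delta)^k}{\sqrt{2\pi k}}\le e^{|S|-1}(e\Delta)^k ,
\end{equation*}
because $\sqrt{4\pi}>e$. Stirling alone fails at $k=1$ (since $\sqrt{2\pi}<e$), so treat $k\le 1$ directly. For $k=1$ the count is at most $\Delta(|S|+1)\le\Delta e^{|S|}$, and for $k=0$ it is $1\le e^{|S|-1}$.

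Your first option is miscounted. The scan is over $G$-neighbours only, so the seeded vertices of $S$ get no back-edge, and the word length is $(\Delta-1)m+|S|$, not $(\Delta-1)m+1$. Combined with the crude $(eN/k)^k$ estimate, it does not remove the factor $e$; it already fails at $|S|=k=1$. The third option is too vague to count as an argument.

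Compared with the citation, your route is elementary and yields the sharper intermediate bound $\binom{\Delta m}{m-|S|}$. You are also right that the constant is immaterial for this paper. In the $\mathcal{F}_2$ estimate the factor $e^{|B|-1}$ is immediately absorbed into $(De)^{|B|}$, so even your crude $e^{|S|}(e\Delta)^{m-|S|}$ would suffice there.
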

\noindent Note that in the case where $|S|=1$ this counts the number of connected clusters with a bound given by $(\Delta e)^m$.  Next using this we can get a bound on the number of possible $CG_{d,t}$ for a given graph and hence the weight enumerator  
\begin{lemma}[Weight enumerator bound, 6.6  of Ref.~\cite{He2025ComposableFaultTolerance}]
\label{lemma:weight_enumerator}
 The weight enumerator of $CG_{d,t}$ for a graph $G= (V,E)$ with maximum degree $\Delta$ is bounded by
    \begin{align}
        \mathcal{W}(CG_{d,t}, x) \leq |V| \binom{d}{t-1} (\Delta e)^{d-1}x^t.
    \end{align}
\end{lemma}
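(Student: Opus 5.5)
The bound follows by combining a union bound over connected clusters with a count of the ways to choose $t$ marked vertices inside each cluster. By Definition~\ref{def:clustering_sets}, every element of $CG_{d,t}$ is a $t$-subset $W$ of some connected $d$-vertex set $U$. Since the weight enumerator only counts distinct bad sets $F=W$, it suffices to overcount pairs $(U,W)$. Each such pair contributes $x^{|W|}=x^t$. This gives
\begin{equation}
\mathcal{W}(CG_{d,t},x)\;\le\;\#\{(U,W)\}\,x^t .
\end{equation}
The task therefore reduces to bounding the number of pairs.

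First, count connected clusters. Fix a root vertex $v\in V$ and count connected sets $U$ with $|U|=d$ and $v\in U$. Lemma~\ref{lemma:cluster_cover_count} with $S=\{v\}$ and $m=d$ bounds this by $e^{0}(\Delta e)^{d-1}$, which is exactly the remark after that lemma. Summing over the root $v$ gives at most $|V|(\Delta e)^{d-1}$ connected sets. Each connected $U$ is counted at least once, since it contains at least one root, so this overcount is harmless.

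The step needing the most care is recovering $\binom{d}{t-1}$ rather than $\binom{d}{t}$ for the choice of $W$. To get it, tie the root to $W$. Since $W\neq\emptyset$ (for $t\ge1$), fix a canonical element $w_0\in W$, say the least in some ordering of $V$, and use $w_0$ as the root of $U$. The triple is then determined by three choices:
\begin{itemize}
\item a vertex $w_0\in V$, giving $|V|$ options;
\item a connected $d$-set $U\ni w_0$, giving at most $(\Delta e)^{d-1}$ options;
\item the remaining $t-1$ elements of $W$ inside $U$, giving at most $\binom{d}{t-1}$ options (one may exclude $w_0$, which gives $\binom{d-1}{t-1}$ and is even smaller).
\end{itemize}
Every pair $(U,W)$ arises in this way. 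Multiplying the three counts yields
\begin{equation}
\mathcal{W}(CG_{d,t},x)\le |V|\binom{d}{t-1}(\Delta e)^{d-1}x^t .
\end{equation}
The case $t=0$ is degenerate and is handled by the convention $\binom{d}{-1}=0$, or excluded. The only real content lies in the correctness of the connected-set count, which we take from Lemma~\ref{lemma:cluster_cover_count}.
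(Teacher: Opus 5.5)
Your proof is correct. The paper does not prove this bound itself; it imports it from the composable fault-tolerance reference (their Lemma 6.6). Your counting is the standard argument that the factor $\binom{d}{t-1}$ reflects: you root the connected $d$-set at a canonical element $w_0\in W$, bound the rooted connected sets by $(\Delta e)^{d-1}$ via Lemma~\ref{lemma:cluster_cover_count} with $S=\{w_0\}$, and then choose the remaining $t-1$ elements of $W$ inside $U$.

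One correction to your closing remark: the convention $\binom{d}{-1}=0$ does not handle $t=0$. Whenever $G$ contains a connected $d$-set, $CG_{d,0}=\{\emptyset\}$, so $\mathcal{W}(CG_{d,0},x)=1$ while the right-hand side would be $0$. The lemma must therefore be restricted to $t\ge 1$, which your argument already needs in order to pick $w_0$. This is also the only case the paper uses, since there $t=\lceil cm/2\rceil\ge 1$ or $t=\lceil m/4\rceil\ge 1$.
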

\noindent Lastly we need to define the space-time version of the code that we use.  Here we want to capture the fact that we perform repeated stabilizer measurements and form detectors comparing the value of measurements in subsequent rounds.  We provide a brief informal version and the full version should be found in \cite{He2025ComposableFaultTolerance}
\begin{definition}[Space-time code, 6.14  of Ref.~\cite{He2025ComposableFaultTolerance}]
\label{def:space_time_code}
     Recall that the spatial code $H_Z\in \mathbb{F}_2^{r_Z \times n}$ has $r_Z$ parity checks on n qubits.  The space-time code $H\in \mathbb{F}_2^{Tr_Z \times (T n+T r_Z)}$ acts on a vector $x$ of length $T n + T r_z$ where the $T n$ corresponds to the data qubit errors in each round and $T r_Z$ corresponds to the measurement errors in each of the $T$ rounds.  The space-time code captures that the syndrome flips in two subsequent rounds for a measurement error and once for a data qubit error in a given round.  The matrix $H$ should be understood as mapping data qubit and measurement errors to detectors which are consistency checks between rounds of stabilizer measurement.  
\end{definition}
\noindent We next introduce notation for restrictions of a spacetime error. A spacetime error is a vector
\begin{equation}
x\in \mathbb{F}_2^{Tn+Tr_Z},
\end{equation}
whose coordinates are partitioned into data-error coordinates $J_d^1,\ldots,J_d^T$ and measurement-error coordinates $J_s^1,\ldots,J_s^T$. We write $x[J_d^t]\in\mathbb{F}_2^n$ for the restriction of $x$ to data-qubit errors in round $t$, and $x[J_s^t]\in\mathbb{F}_2^{r_Z}$ for the restriction to measurement errors in round $t$. All sums below are over $\mathbb{F}_2$. The rows of $H$ are denoted $I_1,\ldots,I_T$ and correspond to detectors comparing adjacent syndrome-measurement rounds.
\begin{definition}[Flattening map]
\label{def:flattening_map}
The flattening map sends a spacetime error to the net spatial data error at the
output of the gadget:
\begin{equation}
\phi(x) = \sum_{t=1}^T x[J_d^t]\in \mathbb{F}_2^n.
\end{equation}
\end{definition}
\noindent We also use the extended spacetime complex with an additional output data-error slice $J_d^{T+1}$ and a corresponding final detector layer $I_{T+1}$.

We will need several lemmas that allow us to reason about the space-time complex.  Firstly we need the following lemma which allows us to reason about the space-time support of a fault pattern corresponding to some residual error 
\begin{lemma}[6.19  of Ref.~\cite{He2025ComposableFaultTolerance})]
\label{lemma:residual_error}
     Consider a bit-string $x\in \mathbb{F}^{Tn + T r_z}$ with space-time syndrome $H x = 0$ (this includes all the checks that are included in the bulk).  Let $y\subseteq J_{T+1}^{d}$ be the minimum weight operator that agrees with the projected error, formally $y = \operatorname{argmin}_{y: H_Z \phi(x) = H_Z y} |y|$.  If $y$ contains some set $S\subseteq J_{T+1}^d$, then there must exist a cluster cover of $R$ of $S$, $R\subseteq x\sqcup y$, such that for $|R| \geq 2|S|$ and for every cluster $C$ of $R$, we have $|x \cap C|\geq |C|/2$.
\end{lemma}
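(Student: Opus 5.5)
The plan is to glue $x$ and $y$ into a single closed chain on the extended spacetime complex, split that chain into connected pieces, and use minimality of $y$ on each piece to get the weight inequality. First I would place $y$ on the extra output slice $J_d^{T+1}$ of the extended spacetime complex. I then want to show that the combined fault pattern $z := x \sqcup y$ has trivial syndrome on the extended complex, i.e. it is also annihilated by the final detector layer $I_{T+1}$.

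The bulk detectors $I_1,\dots,I_T$ are already satisfied, since $Hx=0$. The final layer $I_{T+1}$ compares the last measured syndrome with the syndrome of the residual data error. Telescoping the bulk detectors (each data error flips one detector and each measurement error flips two consecutive ones) shows that the last recorded syndrome equals $H_Z\phi(x)$. So $I_{T+1}$ evaluates to $H_Z\phi(x)+H_Z y$, which vanishes by the defining constraint on $y$.

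Next I would define the decoding graph on fault locations of the extended complex, with two locations adjacent when they flip a common detector. I decompose $z$ into connected components $C_1,C_2,\dots$ with respect to this graph. Distinct components share no detector, so the syndrome of $z$ splits as a sum over components with disjoint supports. Hence each component $z\cap C_i$ is separately closed.

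Applying the telescoping argument to a single component then gives
\begin{equation}
H_Z\,\phi(x\cap C_i) = H_Z\,(y\cap C_i).
\end{equation}
I take $R$ to be the union of the components that meet $S$. By construction every cluster intersects $S$, so $R$ is a cluster cover of $S$ in the sense of Definition~\ref{def:cluster_cover}, and $R\subseteq x\sqcup y$.

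The key weight inequality comes from minimality of $y$. For a component $C$, set
\begin{equation}
y' = y + (y\cap C) + \phi(x\cap C).
\end{equation}
By the displayed identity, $y'$ has the same spatial syndrome as $y$, so the minimality of $y$ forces $|y|\le|y'|$. Since $y'$ is obtained from $y$ by removing $y\cap C$ and adding at most $|\phi(x\cap C)|$ elements, this gives $|y\cap C|\le|\phi(x\cap C)|$. Flattening can only cancel errors, so $|\phi(x\cap C)|\le|x\cap C|$. Therefore $|x\cap C|\ge|y\cap C|$, equivalently $|x\cap C|\ge |C|/2$ because $C = (x\cap C)\sqcup (y\cap C)$. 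Finally, $S\subseteq y$ gives $|C|\ge 2|y\cap C|\ge 2|S\cap C|$, and summing over clusters yields $|R|\ge 2|S|$.

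I expect the main obstacle to be the bookkeeping in the first two steps. The telescoping identity has to hold componentwise, not just globally, and it must correctly handle the time boundaries: the first round, where there is no earlier syndrome to compare against, and the final layer $I_{T+1}$. The componentwise version relies on the graph adjacency being defined through shared detectors, so that each detector's parity is determined within a single component. I would check this first on the repetition-code case before writing the general argument.
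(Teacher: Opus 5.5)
The paper does not prove this lemma. It imports it from He et al.\ (their Lemma 6.19), so there is no in-paper argument to compare against. Your proof is correct and self-contained, and it is the natural componentwise-minimality argument. You close $x\sqcup y$ on the extended complex and split it into mutually non-adjacent connected components, each of which is then separately closed. You then apply minimality of $y$ to each component $C$ via $y'=(y\setminus C)+\phi(x\cap C)$, which gives $|y\cap C|\le|\phi(x\cap C)|\le|x\cap C|$.

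There are two small bookkeeping fixes.

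First, the sentence ``the last recorded syndrome equals $H_Z\phi(x)$'' is imprecise. Under $Hx=0$, the formal round-$T$ syndrome of $x$, namely $H_Z\phi(x)+x[J_s^T]$, telescopes to $0$. What actually equals $H_Z\phi(x)$ is the last-round measurement-error pattern $x[J_s^T]$. That pattern is exactly $x$'s contribution to $I_{T+1}$, so your conclusion $I_{T+1}(x\sqcup y)=H_Z(\phi(x)+y)=0$ is unaffected, and the same holds component by component.

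The first-round boundary you were worried about causes no trouble inside the lemma. The spacetime code has a detector $I_1$ for every check, comparing against a fixed reference, so the telescoping starts cleanly at $t=1$. Non-deterministic first-round checks only arise in the paper's growth-gadget application, where the lemma is invoked only for components with no support on the growth step.

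Second, state explicitly that $S\subseteq R$. Every element of $S\subseteq y$ lies in some component of $x\sqcup y$, and that component then meets $S$. This is what makes $R$ a cover of $S$, and it justifies $\sum_{C}|S\cap C|=|S|$ in your final summation.
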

\noindent Additionally we need
\begin{lemma}[6.20  of Ref.~\cite{He2025ComposableFaultTolerance}]
\label{lemma:error_to_large_connected_cluster}
     Consider a bit-string $x\in \mathbb{F}^{Tn + T r_z}$ with space-time syndrome $H x = 0$ (this includes all the checks that are included in the bulk).  Let $y\subseteq J_{T+1}^{d}$ be the minimum weight operator that agrees with the projected error, formally $y = \operatorname{argmin}_{y: H_Z \phi(x) = H_Z y} |y|$.   If $\phi(x) -y \notin \operatorname{rowsp}(H_X)$ then there is a connected cluster of vertices $S$ of size at least $d$ such that $|x\cap S|\geq |S|/2$.
\end{lemma}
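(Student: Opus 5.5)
The plan is to treat the fault pattern $x$ together with the final-round correction $y$ as a single closed object in the extended spacetime complex. That object then splits into connected pieces, and we localise the logical failure to one piece. Concretely, set $z = x \sqcup y$, viewed as a subset of the error locations of the extended complex: data slices $J_d^1,\dots,J_d^{T+1}$ and measurement slices $J_s^1,\dots,J_s^T$. Since $Hx = 0$ and $H_Z y = H_Z \phi(x)$, every detector in the bulk layers $I_1,\dots,I_T$ and in the final layer $I_{T+1}$ is satisfied by $z$. So $z$ has zero syndrome in the extended complex.

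Next I would build a graph on error locations, with two locations adjacent when they share a detector; this is the graph in which clusters are defined in Definition~\ref{def:clustering_sets}. I would then decompose $z$ into connected components $C_1,\dots,C_r$ with respect to this adjacency. Each detector touches only one component, so each $z \cap C_i$ separately has zero extended syndrome. The flattening map $\phi$ extends to $z$ by adding the final slice, and it is linear. Hence $\phi(x) - y = \sum_i \big(\phi(x\cap C_i) - y\cap C_i\big)$, and each summand lies in $\ker H_Z$. If every summand were in $\operatorname{rowsp}(H_X)$, so would the total be. Therefore some component $S := C_i$ has $\phi(x\cap S) - y\cap S$ equal to a nontrivial logical operator.

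I would then show $|S| \ge d$. The spatial projection of $S$ contains the support of $\phi(x\cap S) + y\cap S$, which is a nontrivial logical and so has weight at least $d$. Projection onto space does not increase cardinality, so $|S| \ge d$. Connectivity of $S$ holds by construction.

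The density bound $|x\cap S|\ge |S|/2$ is where minimality of $y$ enters. Define $y' = (y \setminus S) + \phi(x\cap S)$, where $\phi(x \cap S)$ is placed in slice $J_d^{T+1}$. Because $z\cap S$ is closed, $H_Z\phi(x\cap S) = H_Z(y\cap S)$, so $H_Z y' = H_Z y = H_Z\phi(x)$ and $y'$ is a valid candidate. Minimality of $y$ then gives $|y\cap S| \le |\phi(x\cap S)| \le |x\cap S|$, where the last step uses that flattening can cancel but never add weight. Since $S = (x\cap S)\sqcup(y\cap S)$, this yields $|x\cap S| \ge |S|/2$.

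I expect the main obstacle to be bookkeeping rather than the core idea. Measurement-error coordinates have no spatial projection, so I must check that adjacency through shared detectors in the extended complex really separates the syndrome across components. I must also check that the spatial projection of a component covers a full logical support. In particular, a component whose logical action is trivial in space but which spans in time must not be miscounted, which is handled by the linear decomposition above. If the adjacency convention in Definition~\ref{def:clustering_sets} differs, for example with vertices as detectors instead, I would adapt by counting incident error locations; the bounds should change only by a degree factor.
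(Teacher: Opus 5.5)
Your argument is correct. The paper gives no proof of this lemma and simply imports it from Ref.~\cite{He2025ComposableFaultTolerance}. Your route is the standard component-and-exchange argument behind that lemma, and it uses the same minimum-weight exchange trick as Lemma~\ref{lemma:logical_strings}: close $x$ off with $y$ in the extended complex, split $x\sqcup y$ into connected components of the adjacency graph, localise the nontrivial logical to one closed component $S$, bound $|S|$ by the weight of its flattened support, and use $y\mapsto (y\setminus S)+\phi(x\cap S)$.

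The one step you should write out rather than assert is the telescoping identity $H_Z\phi(x)=x[J_s^T]$, obtained by summing the bulk detector layers $I_1,\dots,I_T$. This identity is what makes the final layer $I_{T+1}$, which involves only $J_s^T$ and $J_d^{T+1}$, vanish on $x\sqcup y$. Applied to each component, it also gives $H_Z\phi(x\cap C_i)=H_Z(y\cap C_i)$.
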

\noindent The surface code family is defined as follows.
\begin{definition}[Surface-code type]
\label{def:surface_code}
We use $SC(d)$ to denote the rotated surface code~\cite{Bombin2007OptimalResources, Horsman_2012} with
parameters $[[d^2,1,d]]$. We write
\begin{equation}
SC(d)_c = (U,B_c)
\end{equation}
for the corresponding code type. Here $U$ is the ideal purified decoding unitary, and $B_c$ is the family of residual supports regarded as too large to be safely passed to later gadgets. Explicitly,
\begin{equation}
B_c = B_c^{(X)}\cup B_c^{(Z)},
\end{equation}
where
\begin{equation}
B_c^{(X)}=\bigcup_{m=d}^{d^2} CG_{m,\lceil cm/2\rceil}^{(X)}, \qquad B_c^{(Z)}=\bigcup_{m=d}^{d^2} CG_{m,\lceil cm/2\rceil}^{(Z)}.
\end{equation}
\end{definition}
\noindent Intuitively, $SC(d)_c$ allows small, spatially clustered residual errors but excludes connected clusters of size at least the code distance that contain a sufficiently large fraction of faults. Such clusters are precisely the objects that can become dangerous when gadgets are composed.

Going forward, for our surface code gadgets we choose the residual family to be $B^{(R)}=B_{1/5}$ for both the $X$ and $Z$ sectors.  A surface code error correction gadget can be shown to have the following properties
\begin{lemma}[Surface code error correction gadget, 7.3  of Ref.~\cite{He2025ComposableFaultTolerance}]
\label{lemma:surface_code_ec_gadget}
     There is a fault-tolerant error correction gadget $EC^{SC(d)}$ with avoiding set $\mathcal{F}_{EC}^d$ which implements the identity and has the input code type $SC(d)_{1/2}$ and output code type $SC(d)_{1/5}$.  The weight enumerator of the gadget can be bounded as 
    \begin{align}
        \mathcal{W}(\mathcal{F}_{EC}^d, x) \leq \operatorname{poly}(d) e^{-\beta d}
    \end{align}
    for $x \leq \epsilon_*$.  
\end{lemma}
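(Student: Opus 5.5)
The gadget is the standard one: measure all $X$- and $Z$-type checks of $SC(d)$ for $T = \Theta(d)$ rounds, then decode each Pauli sector with the min-weight decoder (Definition~\ref{def:min_weight_decoder}) on the space-time code of Definition~\ref{def:space_time_code}. Because the code is CSS, I would treat the $X$ and $Z$ sectors separately. The final avoiding set will be $\mathcal{F}_{EC}^d=\mathcal{F}^{(X)}\boxplus\mathcal{F}^{(Z)}$, so it suffices to certify one sector and then add the two weight enumerators.

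In each sector I would take the bad family to be the union, over $m\ge d$, of the clustering sets $CG_{m,\lceil m/4\rceil}$ on the space-time decoding graph $G$, which has bounded degree $\Delta$. This choice is shaded so that the input deviation can be absorbed. Now fix an input that is $SC(d)_{1/2}$-deviated, with residual support $S_{\rm in}$, and a fault set $X$ that avoids this family. I would fold $S_{\rm in}$ into the first data-error slice $J_d^1$. Let $x$ be the combined error plus the decoder's correction, so that $Hx=0$, and let $y$ be the min-weight output operator.

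The logical statement comes from Lemma~\ref{lemma:error_to_large_connected_cluster}. If $\phi(x)-y\notin\operatorname{rowsp}(H_X)$, there is a connected cluster $S$ with $|S|\ge d$ and $|x\cap S|\ge|S|/2$. Since the decoder's correction has weight at most that of the actual errors, the actual faults together with $S_{\rm in}$ fill at least a quarter of $S$. The input contributes fewer than half of any connected set of size $\ge d$ on the boundary slice, by $B_{1/2}$-deviation. Hence the faults alone form a bad cluster, contradicting avoidance. The constants will need juggling here, and I would tune the fraction in the bad family until the arithmetic closes.

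The output code type comes from Lemma~\ref{lemma:residual_error}. Suppose the residual $y$ contained a set in $B_{1/5}$, i.e.\ a connected $S$ with $|S|=m\ge d$ of which at least $m/10$ is residual. The lemma gives a cluster cover $R$ of that set in which every cluster is at least half $x$. Then some cluster of size $\ge d$ is heavily faulty, again contradicting avoidance. Friendliness follows from the same argument applied to an arbitrary input: after $\Theta(d)$ rounds the state is projected onto a code state up to the residual $y$.

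The weight enumerator is bounded by Lemma~\ref{lemma:weight_enumerator}:
\begin{equation}
\mathcal{W}\le\sum_{m\ge d}|V|\binom{m}{\lceil m/4\rceil}(\Delta e)^{m-1}x^{m/4}\le \operatorname{poly}(d)\sum_{m\ge d}\big(c\,x^{1/4}\big)^m,
\end{equation}
where $|V|=O(d^3)$. For $x\le\epsilon_*$ with $c\,\epsilon_*^{1/4}<1$ this is at most $\operatorname{poly}(d)e^{-\beta d}$.

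I expect the main obstacle to be the interface between the input deviation and the space-time faults. A residual cluster can be partly inherited from the $B_{1/2}$ input and only partly from new faults. The proof has to show that the $1/2$ input slack plus an avoiding fault set can never assemble a cluster of density $\ge 1/2$ with size $\ge d$, while the output is still forced down to density $1/5$. I would handle this with a careful counting of fault density on the time-boundary slice.
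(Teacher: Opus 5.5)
The paper does not prove this lemma. It imports it from He et al., and the closest argument it actually writes out is the proof of Theorem~\ref{thm:single_sector_decoding}. Measured against that template, your output-type step is wrong as stated. Lemma~\ref{lemma:residual_error} gives a cluster cover $R$ of the bad set $B\subseteq y$ with $|R|\ge 2|B|$ and every cluster at least half $x$. Nothing forces any cluster to have size $\ge d$; they can all be $O(1)$. For example, take a last-round data fault on $q$ together with a last-round measurement fault on one of the two checks containing $q$. This leaves a single live detector, which the min-weight decoder explains by one measurement error, leaving residual $\{q\}$. Repeating such two-fault pairs every ten sites along a path of length $\ge d$ produces a residual containing a member of $B_{1/5}$. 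Yet every connected set of size $\ge d$ has fault density at most about $2/11<1/4$, so this fault set avoids your family. That is why He et al.\ and the paper include a second family $\mathcal{F}_2$, indexed by cluster covers of members of $B^{(R)}$ carrying $\lceil |R|/4\rceil$ faults. Its enumerator is bounded via Lemma~\ref{lemma:cluster_cover_count} by $\mathcal{W}(B_{1/5},(2\Delta e)^2x^{1/2})/(1-2\Delta e\,x^{1/4})$, which is exponentially small because every member of $B_{1/5}$ has at least $d/10$ elements. Alternatively, take $R\cup U$, with $U$ the connected support of $B$ on the output slice. It is a connected set of size $\ge d$ whose fault density is bounded below by a constant, so a single connected-cluster family with a lower threshold would also work. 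Either way, some such step is needed.

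Second, the logical step does not close, and no tuning of the fraction in the bad family repairs it.
\begin{itemize}
\item By Definition~\ref{def:surface_code} the threshold in $B_c$ is $\lceil cm/2\rceil$, so $B_{1/2}$-deviation means input density below $1/4$, not $1/2$. You use the correct convention for $B_{1/5}$ but not here.
\item Even with $1/4$, your chain ($|x\cap S|\ge|S|/2$, hence faults plus input $\ge |S|/4$, with input $<|S|/4$) only shows that $S$ contains a fault.
\item The inequality $|\hat e|\le|e|$ is global and says nothing about a particular $S$. Its local form is the swap argument of Lemma~\ref{lemma:logical_strings}, which is valid on a zero-syndrome connected component of $e+\hat e$.
\item The $B_{1/2}$ bound concerns connected spatial sets of size $\ge d$. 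Your $S\cap J_d^1$ is the trace of a space-time cluster and need be neither, so you must first project $S$ onto a connected spatial set, padded to size $d$ if needed.
\end{itemize}
The missing structure is the component decomposition with output padding used in Theorem~\ref{thm:single_sector_decoding}. Fold the input into $J_d^1$ as you propose. A component $K$ of $e+\hat e$ that avoids the output slice and causes a logical error is connected with $|K|\ge d$. The swap argument gives input plus faults $\ge|K|/2$, and the projected $B_{1/2}$ bound keeps the input below $\lceil |K|/4\rceil$, leaving at least $\lceil|K|/4\rceil$ faults. Components touching the output slice are kept away from the input slice by padding: $T=\Theta(d)$ rounds together with the bulk family, which forbids long fault-dense components. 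Lemmas~\ref{lemma:error_to_large_connected_cluster} and~\ref{lemma:residual_error} then apply with no input contribution. This padding argument, established without any $B_{1/2}$ bound on the input, is also what yields friendliness. Your appeal to ``the same argument'' for arbitrary inputs does not work, since that argument relied on the $B_{1/2}$ bound.
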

\noindent Furthermore, \cite{He2025ComposableFaultTolerance} presents a universal set of operations which can be used to realize Clifford + T computation. We now present the details relevant to this work.
\begin{lemma}[Surface code universal gateset gadgets, 7.3, 7.4, 7.10, 7.11  of Ref.~\cite{He2025ComposableFaultTolerance}]
\label{lemma:surface_code_universal_gadgets}
     There exists $\epsilon_* > 0$ such that every operation in a universal Clifford + T gateset can be implemented by a friendly surface code gadget with input/output code types $SC(d)_{1/5}$ and avoiding set $\mathcal{F}_g^d$ which satisfies 
    \begin{align}
        \mathcal{W}(\mathcal{F}_g^d, x) \leq \operatorname{poly}(d)e^{-\beta d}
    \end{align}
    for all $x\in [0,\epsilon_*]$.  The overhead of the gadgets is polylogarithmic in the circuit size and target error rate once $d$ is chosen so that $\operatorname{poly}(d)e^{-\beta d}\leq \epsilon$ for a target error probability $\epsilon$.  
\end{lemma}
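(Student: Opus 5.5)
The plan is to assemble the lemma from the surface-code error-correction gadget of Lemma~\ref{lemma:surface_code_ec_gadget} and the composition rule of Theorem~\ref{thm:gadget_composition}. Each Clifford generator will be written as a constant-depth physical Clifford action, followed by $O(d)$ rounds of syndrome extraction. The Hadamard and Pauli gates are essentially transversal, up to a patch rotation or boundary deformation. CNOT is implemented by lattice surgery, or transversally between two patches. The non-Clifford $T$ gate is realized by gate teleportation, consuming a magic state prepared by injection and distillation.

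First I check type compatibility. Since $B_{1/5}\subseteq B_{1/2}$ (a cluster with a $1/5$ fault fraction dominates one with $1/2$ in the definition of $SC(d)_c$), any $SC(d)_{1/5}$-deviated input is $SC(d)_{1/2}$-deviated. So every gadget may accept type $SC(d)_{1/5}$, apply its action, and end with $EC^{SC(d)}$, which outputs type $SC(d)_{1/5}$.

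Correctness and the residual type follow from the flattening-map lemmas applied to the spacetime complex of Definition~\ref{def:space_time_code}, extended by the constant-depth gate layer.
\begin{itemize}
\item By Lemma~\ref{lemma:error_to_large_connected_cluster}, a logical failure under min-weight decoding requires a connected cluster of size $m\ge d$ that contains at least $m/2$ faults.
\item By Lemma~\ref{lemma:residual_error}, a residual support lying in $B_{1/5}$ requires a cluster cover in which every cluster is at least half faulty.
\end{itemize}
I therefore take $\mathcal{F}_g^d$ to be the union over $m\ge d$ of the clustering sets $CG_{m,\lceil m/5\rceil}$ (or $\lceil m/2\rceil$ where appropriate) on the spacetime decoding graph. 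This graph has bounded degree $\Delta$ independent of $d$, because the gate layer has constant depth.

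Lemma~\ref{lemma:weight_enumerator} then gives
\begin{equation}
\mathcal{W}(\mathcal{F}_g^d,x)\le \sum_{m\ge d}|V|\binom{m}{\lceil m/5\rceil-1}(\Delta e)^{m-1}x^{\lceil m/5\rceil}\le \operatorname{poly}(d)\sum_{m\ge d}\bigl(2\Delta e\,x^{1/5}\bigr)^m .
\end{equation}
Here $|V|=\operatorname{poly}(d)$. Choosing $\epsilon_*$ so that $2\Delta e\,\epsilon_*^{1/5}<1$, the geometric sum gives $\operatorname{poly}(d)e^{-\beta d}$ for all $x\le\epsilon_*$. The constant $\beta$ is determined by the ratio $2\Delta e\,\epsilon_*^{1/5}$.

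Friendliness follows because every gadget terminates in $EC^{SC(d)}$. On $\mathcal{F}$-avoiding faults, the residual after min-weight decoding is $B_{1/5}$-deviated from some codeword regardless of the input, again by Lemma~\ref{lemma:residual_error}. Theorem~\ref{thm:gadget_composition} then certifies that parallel and sequential compositions, such as lattice surgery built from merge, split and EC steps, remain friendly. It also shows that their weight enumerators add, which only changes the polynomial prefactor.

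The main obstacle I expect is the $T$ gadget. The injected magic state carries a constant, non-decaying error, so its failure cannot be placed in an exponentially suppressed avoiding set. My first attempt is to treat the injection error as a separate logical input channel. Distillation, for example by 15-to-1 rounds each made of fault-tolerant Clifford gadgets, then suppresses this error as $\epsilon\mapsto O(\epsilon^3)$. After $O(\log\log(1/\epsilon))$ levels the logical error falls below the target, and the Clifford gadgets inside the distillation tree contribute only $\operatorname{poly}(d)e^{-\beta d}$ each through the composition theorem.

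The overhead claim follows by choosing $d=O(\log(\text{size}/\epsilon))$, so that a union bound over all gadgets gives total failure at most $\epsilon$. Combining this choice of $d$ with the doubly logarithmic depth of distillation gives polylogarithmic overhead. The delicate point is verifying that the teleportation step, with its measurement-dependent $S$ correction, preserves the $SC(d)_{1/5}$ type. I would handle this by absorbing the classically controlled Clifford correction into the final EC round.
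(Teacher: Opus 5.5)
\paragraph{Main gap: the $T$ gadget.}
The paper does not prove this lemma; it quotes it from the composable fault-tolerance reference, so your reconstruction has to stand on its own. Your Clifford part follows the right template: cluster counting on a bounded-degree spacetime graph, with friendliness coming from ending each gadget in $EC^{SC(d)}$. The $T$ gadget, however, is the only nontrivial item in the statement, and you do not establish it in the form the lemma asserts.

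The lemma requires a gadget for $T$ itself (Definition~\ref{def:circuit_gadget}) whose avoiding set satisfies $\mathcal{W}(\mathcal{F}_T^d,x)\le\operatorname{poly}(d)e^{-\beta d}$ for every $x\in[0,\epsilon_*]$. Your argument runs entirely at the level of logical error probabilities: the injection error is treated as a logical input channel, and $\epsilon\mapsto 35\epsilon^3$ is iterated until a target is met. That does not produce such a family.
\begin{itemize}
\item If you place injection failures in $\mathcal{F}_T^d$, the enumerator is of order $x$, not exponentially small in $d$.
\item If you leave them out, then on avoiding fault paths the gadget implements $T$ composed with a residual logical channel. This violates condition 1 of Definition~\ref{def:circuit_gadget}, and Theorem~\ref{thm:gadget_composition} has no slot for such a channel.
\end{itemize}
This is exactly the move made for uploading in Theorem~\ref{thm:arb_input_threshold_theorem}, where $\mathcal{F}_{\rm grow}$ is expelled from $\mathcal{F}_{\rm bulk}$. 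It is acceptable there only because the input channel is part of that theorem's conclusion.

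There are three further problems with the distillation step.
\begin{itemize}
\item The cubic map describes independent errors conditioned on acceptance. It is not a statement about fault sets under local-stochastic noise.
\item It relies on postselection, which a fixed-circuit gadget cannot perform unless you specify what it outputs on rejection.
\item You tie the number of levels to the target $\epsilon$ rather than to $d$, whereas the lemma needs a $d$-dependent bound valid for every $x\le\epsilon_*$.
\end{itemize}

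The missing idea is a recursive combinatorial fault family for the factory:
\begin{itemize}
\item Level-$0$ bad sets are single faults in each injection gadget.
\item A level-$(j+1)$ unit is bad if any of its Clifford gadgets hits its avoiding set, or if at least two of its $15$ disjoint level-$j$ sub-units are bad.
\end{itemize}
This uses decoding of the $15$-qubit code rather than postselection. With detection you would instead need three bad inputs, plus parallel redundancy so that ``all attempts rejected'' itself requires many faults. Because the sub-units are disjoint, enumerators multiply, giving $W_{j+1}\le\binom{15}{2}W_j^2+\operatorname{poly}(d)e^{-\beta d}$ with $W_0\le cx$. Taking $k=\Theta(\log d)$ levels then yields $W_k\le\operatorname{poly}(d)e^{-\beta d}$ with a factory of size $15^k=\operatorname{poly}(d)$. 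This is also what delivers the polylogarithmic overhead.

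\paragraph{Second gap: the $S$ gate.}
Paulis, $H$ and CNOT generate, up to phase, only real matrices, so they do not give the Clifford group. $S$ is also needed as the classically controlled correction in $T$-teleportation. It cannot be ``absorbed into the final EC round'': $EC^{SC(d)}$ implements the logical identity, and Pauli-frame tracking only absorbs Paulis. Writing $S=T^2$ creates an unbounded regress, since each teleported $T$ again needs an $S$ correction. You need an explicit $S$ gadget. One option is teleportation through a $|Y\rangle$ resource, which, unlike $|T\rangle$, is a stabilizer state and can be prepared fault-tolerantly, or distilled, with an exponentially suppressed avoiding set.

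\paragraph{Lesser points.}
\begin{itemize}
\item Lemma~\ref{lemma:residual_error} yields a generally disconnected cluster cover of the bad residual. An avoiding set made only of connected clustering sets $CG_{m,\lceil m/5\rceil}$ therefore does not certify the $SC(d)_{1/5}$ output type. You also need the cover family (the analogue of $\mathcal{F}_2$ in Theorem~\ref{thm:single_sector_decoding}), bounded via Lemma~\ref{lemma:cluster_cover_count}.
\item Both flattening lemmas assume $Hx=0$, so input components coming from a deviated input must be handled. The easiest route is to invoke the $SC(d)_{1/2}$ input type of Lemma~\ref{lemma:surface_code_ec_gadget}.
\item The containment $B_{1/5}\subseteq B_{1/2}$ is false as a relation between families. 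What holds is that every member of $B_{1/2}$ contains a member of $B_{1/5}$, and that gives the implication you want.
\end{itemize}
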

\noindent Using this universal set of gadgets, Ref.~\cite{He2025ComposableFaultTolerance} proves the following theorem concerning the fault-tolerant implementation of a Clifford + T circuit.  
\begin{theorem}[Surface code threshold theorem 7.11 from \cite{He2025ComposableFaultTolerance}]
\label{thm:basic_surface_code_threshold}
    . There exists a constant $\epsilon_*\in (0,1)$ such that for any Clifford + T circuit $C$ of width $W$ and depth $D$ and any $\epsilon \in (0,1)$, there exists a circuit $\bar{C}$ that is a fault-tolerant gadget for $C$ with weight enumerator $\mathcal{W}$.  The width $\bar{W}$ and depth $\bar{D}$ of $\bar{C}$ are at most $polylog(V)$ larger than the original circuit $C$ where $V = W D/\epsilon$.  For $x \in [0,\epsilon_*]$ the weight enumerator has the bound $\mathcal{W}(\mathcal{F}, x) \leq \epsilon$.  
\end{theorem}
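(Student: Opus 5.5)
The plan is to build $\bar{C}$ gate by gate out of the friendly surface-code gadgets of Lemma~\ref{lemma:surface_code_universal_gadgets}, all at a common distance $d$. Theorem~\ref{thm:gadget_composition} then stitches them into a single fault-tolerant gadget. Finally, $d$ is chosen logarithmically in $V = WD/\epsilon$ so that the summed weight enumerator falls below $\epsilon$. Since this is Theorem 7.11 of Ref.~\cite{He2025ComposableFaultTolerance}, I would follow their architecture and only check that the ingredients recalled above suffice.

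\textbf{Step 1 (construction).} Write $C$ as $D$ layers of Clifford+T gates, together with qubit preparations and final measurements.
\begin{itemize}
\item Replace each logical qubit by an $SC(d)$ patch.
\item Replace each gate location by its gadget from Lemma~\ref{lemma:surface_code_universal_gadgets}. The $T$ gate is realized through its distillation/injection gadget, whose polylogarithmic overhead is already absorbed into that lemma.
\item Insert an error-correction gadget $EC^{SC(d)}$ from Lemma~\ref{lemma:surface_code_ec_gadget} after every logical layer. Idle locations receive EC gadgets as well.
\end{itemize}

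\textbf{Step 2 (type compatibility).} Every gate gadget has input and output type $SC(d)_{1/5}$. The EC gadget takes $SC(d)_{1/2}$ to $SC(d)_{1/5}$, so I must check that a $B_{1/5}$-deviated state is also $B_{1/2}$-deviated. This holds because for each $m$, $CG_{m,\lceil m/4\rceil}$ and $CG_{m,\lceil m/10\rceil}$ are related by inclusion of "bad" supports. Concretely, a support containing no size-$m$ cluster with a $\ge 1/10$ fault fraction also contains none with a $\ge 1/4$ fraction. Hence $B_{1/5}$-avoiding implies $B_{1/2}$-avoiding.

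With types compatible, Theorem~\ref{thm:gadget_composition} applies layer by layer: parallel composition within a layer, then sequential composition across layers. Friendliness propagates because every constituent gadget is friendly. The result is a fault-tolerant gadget for $C$ whose avoiding set is the $\boxplus$ of all constituent avoiding sets.

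\textbf{Step 3 (enumerator bound and parameter choice).} Subadditivity of $\mathcal{W}$ under $\boxplus$, together with the per-gadget bounds, gives
\begin{equation}
\mathcal{W}(\mathcal{F},x) \le O(WD)\,\mathrm{poly}(d)\,e^{-\beta d} \qquad \text{for } x\le \epsilon_*,
\end{equation}
where $\epsilon_*$ is the minimum of the gadget thresholds. Choosing $d = \Theta\!\left(\beta^{-1}\log(WD/\epsilon)\right)$, with a large enough constant to absorb the $\mathrm{poly}(d)$ factor, makes this at most $\epsilon$. Each gadget uses $O(d^2)$ qubits and $\mathrm{poly}(d)$ depth, and the distillation overhead is polylogarithmic. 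Therefore $\bar W$ and $\bar D$ exceed $W$ and $D$ by a factor of $\mathrm{polylog}(V)$.

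\textbf{Main obstacle.} I expect the real difficulty to be the step hidden inside Lemma~\ref{lemma:surface_code_ec_gadget}: the EC gadget's exponential enumerator bound and the fact that it shrinks deviation from type $1/2$ to type $1/5$. Establishing this requires Lemmas~\ref{lemma:residual_error} and~\ref{lemma:error_to_large_connected_cluster}. Those lemmas turn any residual error in $B_{1/5}$, or any logical failure, into a connected spacetime cluster in which at least half the sites are faults. The resulting clusters are then counted with Lemmas~\ref{lemma:cluster_cover_count} and~\ref{lemma:weight_enumerator}, which yields $\sum_{m\ge d}|V|\binom{m}{m/2}(\Delta e)^m x^{m/2}\le \mathrm{poly}(d)e^{-\beta d}$ for $x$ small. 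Since that lemma is given, the remaining care is purely in the bookkeeping of types and of the union of avoiding sets.
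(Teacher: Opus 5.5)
Your proposal is correct and matches the intended argument. The paper imports this result from Ref.~\cite{He2025ComposableFaultTolerance} without proof, and your assembly is exactly the structure the paper itself uses to prove Theorem~\ref{thm:arb_input_threshold_theorem}: gadget-by-gadget replacement, the type check that $B_{1/5}$-avoiding implies $B_{1/2}$-avoiding, composition via Theorem~\ref{thm:gadget_composition}, subadditivity of $\mathcal{W}$ under $\boxplus$, and $d=\Theta(\log V)$.

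One harmless slip is in your closing remark. Lemmas~\ref{lemma:residual_error} and~\ref{lemma:error_to_large_connected_cluster} give clusters at least half filled by $e+\hat e$, so minimum-weight decoding only guarantees a quarter physical faults, giving $x^{m/4}$ rather than $x^{m/2}$; this changes constants only and does not touch your proof, since you take that lemma as given.
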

\noindent This theorem can be applied to circuits where the input types are $SC(d)_{1/5}$. 

\subsection{Surface code quantum uploading}
In this section we establish the growth procedure for quantum uploading via the surface code and establish that the procedure can be used interface physical qubits with  universal fault-tolerant quantum circuits. Surface code growth has mostly been studied in the context of magic state injection, enabling  fault-tolerant syndrome measurement of an outer error correcting code \cite{Li2015MagicStateFidelity}.  This is treated more formally in Ref.~\cite{He2025ComposableFaultTolerance}, where a growth procedure that increases the code distance iteratively is introduced and coupled to other fault-tolerant gadgets. Ref. ~\cite{Christandl2024QuantumIO} establishes more general and abstract conclusions for connecting these input-output gadgets to a fault tolerant computation. In the present work, our goal is to provide a fault-tolerance proof for the more practical style of growth gadget presented in \cite{Li2015MagicStateFidelity}. Our guarantees will produce a quadratically more depth-efficient gadget than the one presented in Ref.~\cite{He2025ComposableFaultTolerance}while retaining the relevant fault-tolerance guarantees.

We will now provide a detailed technical description of how the growth gadget is realized in practice, including how stabilizer measurement data obtained during the growth process is used for decoding. Our discussion necessarily utilizes technical terminology specialized to syndrome measurement and decoding of the surface code, and readers unfamiliar with this language or primarily interested in the implications that our growth gadget has for downstream implementation of fault-tolerant learning circuits can skip to Theorem \ref{thm:arb_input_threshold_theorem}.

We begin with an informal sketch of our technical goal. To rigorously establish fault-tolerance guarantees, we begin by establishing that the logical error of surface code growth decomposes into two terms. The first is a noise floor that depends only on $d_1$ and is bounded by a constant multiple of the physical error rate. The second is a $\operatorname{poly}(d_1, d_2) e^{-\alpha d_2}$ contribution which decays exponentially with the outer code distance. As such, the logical error rate actually decreases as one grows the unknown state to larger code patches, down to an asymptotic floor that is linear in the physical error rate.

Next, we provide a pedagogical description of the code growth gadget, which will lead to a formal definition that we treat in the subsequent proofs. Our growth gadget relies on the rotated surface code from Definition  \ref{def:surface_code}, which is a $[[d^2,1,d]]$ CSS quantum error-correcting code depicted in Fig. \ref{fig:surface_code_growth_schematic}. Data qubits reside on the vertices of a two-dimensional square lattice. The codespace is the common $+1$ eigenspace of the stabilizer subgroup $\langle S_1, \cdots S_{d^2-1}\rangle$. This stabilizer subgroup is generated by the white and grey plaquettes in Fig. \ref{fig:surface_code_growth_schematic}, which correspond to $Z$ and $X$ stabilizers respectively.

The logical operators of this code are Pauli operators which commute with the stabilizers but are not in the stabilizer group ($N(S)\setminus S$). These are $X$ and $Z$ logical strings which connect the horizontal and vertical boundaries respectively. For a rotated surface code of a given distance $d$, we choose a representative $X$ logical operator denoted by $X_L^{(d)}$, which is a Pauli string connecting the X boundaries. Similarly, we select a representative $Z$-logical operator $Z_L^{(d)}$ which connects the $Z$ boundaries.  These are indicated by the colored strings in Fig.~\ref{fig:surface_code_growth_schematic}(a).  We will henceforth refer to the topmost boundary with two-body $X$ stabilizers as the $X$ boundary and the leftmost boundary with two-body $Z$ stabilizers as the $Z$ boundary.  
\begin{figure}
    \centering
    \includegraphics[width=\linewidth]{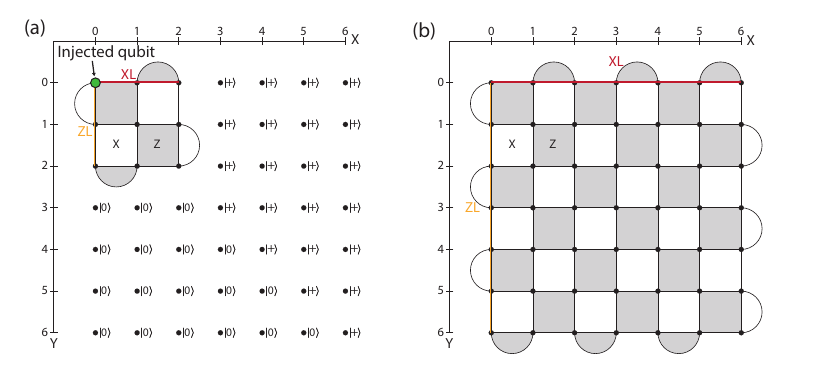}
    \caption{\textbf{Surface code growth schematic: }Qubits are located on the vertices and stabilizers are the plaquettes.  X stabilizers are indicated in gray and Z stabilizers in white.  In (a) we show the distance-3 surface code before the growth along with the initial states of the new qubits.  The states are chosen to ensure that no logical information is learned in the process of growing the surface code.  This figure focuses on growth of a distance-3 code to a distance-7 code but in practice one would use a constant size initial encoding procedure to map a single physical qubit (shown in green) into the logical state of the distance-$d_1$ code before growing to the higher distance code. In (b) we show the distance-7 stabilizers after the growth.  A representative is indicated in red for the $X$ logical and yellow for the $Z$ logical.  The $X$, $Z$ logicals are formed by physical $X$ and $Z$ on the support indicated in the figure.}
    \label{fig:surface_code_growth_schematic}
\end{figure}

With the goal of growing from a single physical qubit to a distance $d_2$ surface code in mind, our first step is use a known, constant-size circuit to prepare a distance $d_1$ surface code, where $d_1$ is a fixed constant. In practice, this can be achieved with a unitary encoding circuit or stabilizer measurements and gauge-fixing, and related examples are provided in Refs.~\cite{DennisKitaevLandahlPreskill2002TopologicalMemory, Li2015MagicStateFidelity}.

After this initial circuit, the surface code has $d_1^2$ qubits with positions labeled by $x,y$ coordinates with $0 \leq x,y < d_1$ (Fig. \ref{fig:surface_code_growth_schematic} depicts the case of $d_1=3$).  The initial code further has $d_1^2-1$ stabilizer generators.  To begin the growth procedure we are thus given a distance-$d_1$ surface code in an arbitrary encoded logical state $\rho_L$.  Next, the remaining qubits with coordinates $d_1 \leq x < d_2$ and/or $d_1 \leq y < d_2$ are added into the picture. All qubits closer to the $Z$ boundary are initialized in the state $|0\rangle$, while qubits nearer to the $X$ boundary are initialized in the state $|+\rangle$. 

At timestep $t = 0$ we begin measuring the stabilizers of the larger code yielding measurement values $m_{s,t}$ for stabilizer $S$ at timestep $t$. The boundary between growth qubits in $|0\rangle$ and $|+\rangle$ states lies along the planar diagonal as shown in Fig. \ref{fig:surface_code_growth_schematic}.  Crucially, the first measurements outcomes of new stabilizers bordering the original code are random.  This ensures that no logical information is learned during the growth procedure. These initial random measurement outcomes are treated as gauge data and are used to update the Pauli frame on the grown code after growth has completed, as discussed shortly. 

We perform the measurements of the grown code up to timestep $T \geq d_2$. At this stage, we have obtained a quantum state on the larger code patch. We must next discuss how intermediate stabilizer measurement outcomes are decoded in order to determine the decoder frame update and corresponding logical action. Moreover, we wish to understand how this decoding procedure affects an arbitrary input state. To accomplish this, we formalize the spacetime description of our growth procedure and define detectors. 

A detector, associated with a vertex in the spacetime, is a comparison between deterministic stabilizer measurements in subsequent measurement rounds. Concretely, the detector is $m_{s,t-1} \oplus m_{s,t}$. In the growth step at $t=0$, measurements of new stabilizers can be deterministic or random depending on the initial state of the data qubit relative to the stabilizer. Only deterministic measurements form detectors, while other measurements serve as gauge data. This definition of the space-time is closely related to the spacetime check matrix $H$ defined in  \ref{def:space_time_code}. In an error-free run of the circuit, all stabilizer measurements would be consistent and all of the detection events would return $0$. 

In practice, the circuit will be noisy, resulting in nontrivial stabilizer outcomes. In all stages of the circuit we consider errors on both measurements and data qubits (so-called phenomenological noise). To treat these, we partition the $X$ and $Z$ Pauli errors into two corresponding sectors $a\in \{X,Z\}$ and define $\Omega^a$ as the set of possible error locations in sector $a$. We define $C_1^a\coloneq \mathbb{F}_2^{\Omega^a}$. Then, any error pattern is specified by a vector $e \in C_1^a$. 

Similarly, defining $D^a$ to be the set of all possible detectors in sector $a$, we define $C_0^a \coloneq \mathbb{F}_2^{D^a}$ such that any syndrome configuration is described by a vector $s \in C_0^a $. Using our definition of a detection event, we define the mapping from errors to detection events by $\partial: C_1^a \rightarrow C_0^a$. As such, $\partial$ outputs the detection events induced by any choice of error.  Moreover, we note that any measurement error or data-qubit error will cause either a single or pair of detection events to be non-zero. To depict these, we construct a graph where the vertices correspond to one of the detection locations.  For every possible error, we add an edge connecting the two detection locations which that error causes to be non-zero. If the error only causes one non-zero detector, we add an edge to a boundary vertex. An example of the graph is shown in Fig.~\ref{fig:error_examples}(b). Sampling physical errors can be thought of as sampling edges from the error correction graph; in our case we assume all edges have equal probability as errors are independently and identically distributed at all locations and times.

The decoder is provided the detector information $s^a \in C_0^a$ and returns the minimum-weight error configuration $\hat{e}^a \in C_1^a$ which can generate such a syndrome. More formally, errors in the circuit $e^a$ each correspond to a syndrome $s^a = \partial e_a$. The decoder returns $\hat{e}^a$ with a guarantee that $\partial (e^a + \hat{e}^a) = s^a \oplus s^a = 0$. In the surface code the union of errors and decoder corrections can only form cycles or strings that connect boundaries of the surface code to avoid a non-trivial syndrome (all the detectors have value 0). A decoded logical failure in sector $a\in \{X,Z\}$ is then described by the event of a string connecting two boundaries of the spatial growth.  The properties of the minimum weight decoder we use are formalized by Definition \ref{def:min_weight_decoder}. 

After the growth, we must perform gauge corrections to ensure that the new stabilizers of the grown surface code have $+1$ eigenvalues. This is due to the randomness in the initial growth measurements. We determine the corrected value of the measurement by using the value of the gauge measurements $m_{x,y,0}$ in addition to decoder corrections. If the gauge-corrected measurement,  $\hat{m}_{x,y,0}$, is $1$, we apply a string of $X$ operators to the boundary in order to gauge-fix.  

We summarize the growth gadget with the following definition.

\begin{definition}[Injection gadget]
\label{def:growth_gadget_procedure}
     The procedure $\mathrm{INJECT}_{d}$ defined above takes a physical qubit to an encoded distance $d_2$ surface code, and consists of the following steps.
    \begin{enumerate}
        \item A constant size circuit $\mathrm{ENC}_{d_1}$ is used to prepare a distance $d_1\geq 4$ surface code where $d_1$ is fixed constant. 
        \item Qubits are added according to the pattern shown in Fig.~\ref{fig:surface_code_growth_schematic}.  
        \item At timestep $t=0$, stabilizers of the distance-$d$ surface code ($d$=$d_2$) begin to be measured.  
        \item $T\geq d$ total rounds of measurement and syndrome extraction are performed, and the spacetime is decoded according to the maximum-likelihood decoder.  
        \item Gauge corrections are applied based on the inferred value of growth syndrome measurements. 
    \end{enumerate}
    Overall, we have that 
    \begin{align}
        \mathrm{INJECT}_{d} = \mathrm{GROW}_{d_1 \rightarrow d} \circ \mathrm{ENC}_{d_1}
    \end{align}
    where $\mathrm{GROW}_{d_1 \rightarrow d}$ captures all steps after the initial constant-size encoding circuit.
\end{definition}

Having established the growth procedure, we turn to the proofs of fault-tolerance guarantees. We first prove that the injection circuit acts as an exact surface code encoding circuit in the absence of noise.
\begin{lemma}
\label{lemma:surface_code_growth_noiseless_correctness}
    In the absence of errors the surface code growth procedure \ref{def:growth_gadget_procedure} acts as the logical identity.
\end{lemma}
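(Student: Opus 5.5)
Starting from an arbitrary logical state of the distance-$d_1$ code, I will show that the noiseless growth circuit outputs the same logical state encoded in the distance-$d_2$ code. The argument tracks stabilizers and logical operators through the growth in the stabilizer formalism. Since the map is linear, it suffices to show three things. First, the logical Pauli algebra is carried to itself: the representatives $X_L^{(d_1)}$ and $Z_L^{(d_1)}$ map to operators equivalent to $X_L^{(d_2)}$ and $Z_L^{(d_2)}$. Second, no measurement during growth reveals logical information, so the channel is not a measure-and-prepare map. Third, the gauge correction in step 5 returns the state to the $+1$ codespace without applying a logical operator.

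\textbf{Step 1: joint stabilizer before measurement.} Just before $t=0$, the state is $\rho_L$ on the small patch tensored with $|0\rangle$ on the growth qubits near the $Z$ boundary and $|+\rangle$ on those near the $X$ boundary. Its stabilizer group contains the old generators, single-qubit $Z$ on the $|0\rangle$ qubits, and single-qubit $X$ on the $|+\rangle$ qubits. I will choose the logical representatives $X_L^{(d_1)}$ along the topmost row and $Z_L^{(d_1)}$ along the leftmost column, as in Fig.~\ref{fig:surface_code_growth_schematic}. The growth qubits these lines extend into are initialized in eigenstates of the matching Pauli. The extension along the top row enters the $|+\rangle$ region, and the extension along the left column enters the $|0\rangle$ region; this is the geometric content of the diagonal split. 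Hence $X_L^{(d_2)} = X_L^{(d_1)}\prod X_j$ and $Z_L^{(d_2)} = Z_L^{(d_1)}\prod Z_j$ act on the initial product state exactly as $X_L^{(d_1)}$ and $Z_L^{(d_1)}$ do.

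\textbf{Step 2: measuring the new stabilizers.} Every distance-$d_2$ stabilizer commutes with $X_L^{(d_2)}$ and $Z_L^{(d_2)}$, so measuring them preserves the expectation values of these logical operators. Some new stabilizers are products of old stabilizers and initialization Paulis; these have deterministic outcome $+1$ and form the detectors. The remaining ones, which anticommute with some initialization Pauli, give uniformly random outcomes independent of $\rho_L$, because they anticommute with an element of the joint stabilizer group. I expect this to be the main obstacle. It requires a careful case check along the old-patch boundary and along the $|0\rangle/|+\rangle$ diagonal, verifying that each new plaquette is either in the group generated by the initial stabilizers or anticommutes with one of them, and never equals a logical operator times a stabilizer. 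The latter is ruled out by the commutation with both chosen logical representatives established in Step 1.

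\textbf{Step 3: gauge fixing.} After the $t=0$ round, the state lies in a common eigenspace of all $d_2^2-1$ stabilizers, with signs $\pm1$ on the random ones. In the noiseless case the decoder returns the trivial correction, so $\hat m_{x,y,0}=m_{x,y,0}$. The gauge correction is a Pauli string that flips exactly the $-1$ gauge stabilizers. I will verify that it commutes with both $X_L^{(d_2)}$ and $Z_L^{(d_2)}$, or that it can be chosen to by multiplying by stabilizers, so that it acts trivially on the logical subspace. Subsequent noiseless rounds $t=1,\dots,T$ are then deterministic and leave the state unchanged. Combining the three steps, the logical Paulis and hence all logical density-matrix components are preserved, so $\mathrm{GROW}_{d_1\to d}$ acts as the logical identity.
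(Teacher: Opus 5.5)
Your argument is essentially the paper's. Both extend the old logical strings through growth qubits prepared in the matching eigenbasis, so that $\bar P^{c}\Pi_{\rm in}=P^{(d_1)}\Pi_{\rm in}$. Both use that every final stabilizer commutes with these canonical representatives, and both require the gauge correction $C_g$ to commute with them as well. The paper phrases this as a branchwise intertwining $\bar P K_g=K_gP$ for $K_g=C_gM_gE$, with an explicit reference system; your expectation-value-plus-linearity version reproduces it. That intertwining also gives $PK_g^\dagger K_g=K_g^\dagger K_gP$ for every single-qubit Pauli $P$. Hence $K_g^\dagger K_g$ is a multiple of the identity, and the gauge statistics are automatically input-independent. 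So the plaquette-by-plaquette check you call the main obstacle in Step 2 is unnecessary. The trichotomy (in the initial stabilizer group / anticommuting with an element of it / logical times stabilizer) holds for any Pauli, and the last case is excluded by exactly the commutation you already cite.

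Two details need repair.

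First, if the top boundary carries the two-body $X$ checks, as the paper specifies, an $X$ string along the top row is not a nontrivial logical operator. The $X$ logical runs between the two $X$-type boundaries and the $Z$ logical between the two $Z$-type ones. So what you must verify is not a particular row or column reading of the figure. It is that the growth region the $X$ representative is extended through (across the old patch's far $X$-type boundary) is prepared in $|+\rangle$. Likewise, the region the $Z$ representative is extended through (across its far $Z$-type boundary) must be prepared in $|0\rangle$.

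Second, in Step 3, multiplying the correction by stabilizers cannot change its commutation with logical operators. Instead, realize the $Z$-check gauge flips by $X$'s on $|+\rangle$ qubits and the $X$-check flips by $Z$'s on $|0\rangle$ qubits. Such corrections exist because any product of random $Z$-checks with no support on the $|+\rangle$ qubits lies in the initial stabilizer group. It commutes with every initial generator, and, having no support on the extension of $\bar X^c$, it also commutes with the old logical $X$. Its outcome is therefore $+1$, so the observed sign pattern lies in the image of the syndrome map; the same holds symmetrically for $X$-checks. These corrections commute with the canonical representatives by disjointness of support. This supplies the justification that the paper only asserts.
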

\begin{proof}
    Consider a qubit $a$ coupled to an $n-1$-qubit reference system $R$, with the joint state described by $\rho_{Ra}$. We will consider injection of qubit $a$ into a distance-$d$ surface code; by changing the label $a$ and treating the other $n-1$ qubits as $R$, this argument shows that applying the injection gadget qubit-by-qubit to an arbitrary $n$-qubit state performs the intended encoding.
    
    Let $E$ denote the isometry consisting of the initial distance-$d_1$ encoding circuit together with the preparation of the growth ancillas in the states $|0\rangle^{A_0}$ and $|+\rangle^{A_+}$. Let $\Pi_{d_1}$ be the projector onto the distance-$d_1$ codespace and define the initialized input projector
    \begin{equation}
        \Pi_{\rm in}
        =
        \Pi_{d_1}
        \left(\prod_{j\in A_0}\frac{I+Z_j}{2}\right)
        \left(\prod_{j\in A_+}\frac{I+X_j}{2}\right),
    \end{equation}
    where $\Pi_{d_1}$ acts trivially on the added ancillas. Then $\Pi_{\rm in}E=E$. Moreover, if $P_a^{(d_1)}$ denotes the logical Pauli on the initial distance-$d_1$ patch, then by definition of the initial encoding circuit, $P_a^{(d_1)}E = E P_a$ for $P_a\in\{X,Z\}$ (suppressing ancillas). Let $M_g$ denote the projector onto a branch $g$ of the stabilizer measurements during growth, before gauge correction. Then
    \begin{equation}
        M_g=\prod_{S\in \mathcal{S}_{\rm grow}}\frac{I+g_S S}{2},
    \end{equation}
    where the $S$ are stabilizers of the final distance-$d$ code and $g_S\in\{\pm1\}$ are the measured signs on that branch. We choose final canonical logical representatives $\bar X_a^c$ and $\bar Z_a^c$ so that they extend the initial logical strings through qubits prepared in the appropriate eigenbasis (for example, the orange and red canonical logicals in Fig.~\ref{fig:surface_code_growth_schematic}). Thus
    \begin{equation}
        \bar X_a^c = X_a^{(d_1)}\prod_{j\in B_X}X_j,
        \qquad
        B_X\subseteq A_+,
    \end{equation}
    and
    \begin{equation}
        \bar Z_a^c = Z_a^{(d_1)}\prod_{j\in B_Z}Z_j,
        \qquad
        B_Z\subseteq A_0.
    \end{equation}
    It follows from the definition of $\Pi_{\rm in}$ that
    \begin{equation}
        \bar P_a\Pi_{\rm in}=P_a^{(d_1)}\Pi_{\rm in}
    \end{equation}
    for $P\in\{X,Z\}$. Since $\bar P_a^c$ is a logical operator of the final code, it commutes with every stabilizer appearing in $M_g$, and hence
    \begin{align}
        \bar P_a^c M_g\Pi_{\rm in}
        &=
        M_g\bar P_a^c\Pi_{\rm in} \nonumber \\
        &=
        M_g P_a^{(d_1)}\Pi_{\rm in}.
    \end{align}

    At this point we have shown that for one canonical choice of logicals, the entertwining relation applies. In order to show that we have mapped the physical state into a logical state we need to show that the same relation holds for all logical representatives, or equivalently, that we can work in a Pauli frame in which all stabilizers have $+1$ eigenvalues. The gauge correction step accomplishes this purpose. We let $\bar{P}_a$ be any logical representative given by $\bar{P}_a = \bar{P}_a \prod_{i\in C} S_i$ for some subset of stabilizers $C \in S$.  We would like to show that $\bar P_a C_g M_g\Pi_{\rm in}= C_g M_g P_a^{(d_1)}\Pi_{\rm in}$.  The gauge correction $C_g$ is a string of Pauli operators which commutes with $\bar P_a^c$ (the canonical logicals) 
    \begin{equation}
        [C_g,\bar X_a^c]=[C_g,\bar Z_a^c]=0.
    \end{equation}
    and has the effect of mapping all of the stabilizers to have $+1$ eigenvalues:
    \begin{align}
        S_i C_g M_g \Pi_{in} = C_g M_g \Pi_{in}.
    \end{align}
    Hence in total we get that
    \begin{align}
        \bar P_a C_g M_g\Pi_{\rm in}
        &=
        \bar P_a^c \prod_{i} S_i C_g M_g\Pi_{\rm in} \nonumber \\
        &=
        \bar P_a^c C_g M_g\Pi_{\rm in} \nonumber \\
        &=
        C_g\bar P_a^c M_g\Pi_{\rm in} \nonumber \\
        &=
        C_g M_g P_a^{(d_1)}\Pi_{\rm in}.
    \end{align}
    as desired. Composing on the right with $E$ and using $\Pi_{\rm in}E=E$ and $P_a^{(d_1)}E=EP_a$, we obtain
    \begin{equation}
        \bar P_a C_gM_gE=C_gM_gEP_a.
    \end{equation}
    The same identity for $P_a = Y$ follows from $Y=iXZ$. Thus, for every branch Kraus operator $K_g=C_gM_gE$, we have
    \begin{equation}
        \bar P_a K_g=K_gP_a
    \end{equation}
    for every single-qubit Pauli $P_a$ and all logical representatives $\bar{P}_a$. To conclude, let the output state of the injection gadget be
    \begin{equation}
        \bar{\rho}_{RA}
        =
        \sum_g (I_R\otimes K_g)\rho_{Ra}(I_R\otimes K_g^\dagger).
    \end{equation}
    For any Pauli $P_RP_a$ on $Ra$, the branchwise intertwining relation implies
    \begin{align}
        \operatorname{tr}\left(P_R\bar P_a\bar{\rho}_{RA}\right)
        &=
        \sum_g
        \operatorname{tr}\left((P_R\otimes \bar P_a)(I_R\otimes K_g)\rho_{Ra}(I_R\otimes K_g^\dagger)\right) \nonumber \\
        &=
        \sum_g
        \operatorname{tr}\left((I_R\otimes K_g)(P_R\otimes P_a)\rho_{Ra}(I_R\otimes K_g^\dagger)\right) \nonumber \\
        &=
        \operatorname{tr}\left((P_R\otimes P_a)\rho_{Ra}\left(I_R\otimes \sum_g K_g^\dagger K_g\right)\right) \nonumber \\
        &=
        \operatorname{tr}\left(P_RP_a\rho_{Ra}\right),
    \end{align}
    where the final equality uses completeness of the measurement branches. Since Pauli operators form a basis, the output is exactly the distance-$d$ encoding of the input state, and the noiseless growth procedure acts as the logical identity.
\end{proof}

\noindent We next argue that when noise is present, logical errors and uncontrolled residuals can be controlled by a careful definition of a spacetime location-dependent avoiding set. Our next building block result addresses one sector of the decoding problem.
\begin{theorem}[Single sector decoding]
\label{thm:single_sector_decoding}
For the growth procedure outlined above and a collection $B^{(R)}$ of bad residual supports, there exists a family of bad fault paths $\mathcal{F}$ such that for $x = e + \hat{e}$ corresponding to the combination of physical and decoded errors and $y$ corresponding to the minimum weight operator satisfying $H_Z y = H_Z \phi(x)$ we have
    \begin{enumerate}
        \item If the bulk error $e$ is $\mathcal{F}$-avoiding, then $y$ is $B^{(R)}$-avoiding.
        \item The output has no logical errors, namely $\phi(x)+y\in \operatorname{rowsp}(H_X)$.  
    \end{enumerate}
    The weight enumerator polynomial for the avoiding set $\mathcal{F}$ is given by
    \begin{align}
        \mathcal{W}(\mathcal{F}, x) &\leq \mathcal{W}(\mathcal{F}_{\rm in}, x) + \mathcal{W}(\mathcal{F}_1, x) + \mathcal{W}(\mathcal{F}_2, x)\nonumber \\
        &\leq |V_{\rm in}| x 
        + \frac{T d_2^2 \eta_x^{d_2}}{1-\eta_x}
        + \frac{\eta_x^{d_1}}{(1-\eta_x)^2}\Bigg[d_1^2 + \frac{(2 d_1-1) \eta_x}{1-\eta_x} + \frac{2\eta_x}{(1-\eta_x)^2}\Bigg] \nonumber \\
        &\quad+ \frac{1}{1- 2D e x^{1/4}}\mathcal{W}\!\left(B^{(R)}, (2De)^2 x^{1/2}\right)
    \end{align}
    which has a $d_1$ dependent floor and all terms depending on $d_2$ are exponentially decaying in $d_2$.  Moreover we have let $\eta_x = 2De\,x^{1/\chi}$ and we take $\chi=4$, as will be explained shortly.
\end{theorem}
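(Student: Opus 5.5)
We define the bad family $\mathcal{F}$ as a union $\mathcal{F}_{\rm in}\boxplus\mathcal{F}_1\boxplus\mathcal{F}_2\boxplus\mathcal{F}_R$, with one sub-family per failure mechanism, and bound each weight enumerator separately. The subadditivity $\mathcal{W}(\mathcal{F}_1\boxplus\mathcal{F}_2,x)\le\mathcal{W}(\mathcal{F}_1,x)+\mathcal{W}(\mathcal{F}_2,x)$ then gives the stated sum.

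The key device is a \emph{spacetime-dependent distance} on the decoding graph. A logical failure requires the error-plus-correction $x=e+\hat e$ to contain a string between opposite boundaries. A string lying in the region of the final distance-$d_2$ patch after growth must have length at least $d_2$. A string passing through the initial $d_1\times d_1$ region can close off with length only about $d_1$: before growth that region is the small code, and at $t=0$ the new gauge boundaries sit adjacent to it. The four pieces of $\mathcal{F}$ are as follows.
\begin{itemize}
\item $\mathcal{F}_{\rm in}$ consists of single faults in the constant-size encoder $\mathrm{ENC}_{d_1}$. These are absorbed into the one-time input error and give the term $|V_{\rm in}|x$.
\item $\mathcal{F}_1$ consists of connected clusters $S$ of size $m\ge d_2$ with $|e\cap S|\ge|S|/2$, anchored anywhere in the $T d_2^2$ spacetime volume.
\item $\mathcal{F}_2$ consists of clusters anchored near the original patch or at the growth seam. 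Their size threshold is only $d_1$ plus the distance of the anchor from the original patch.
\item $\mathcal{F}_R$ consists of fault sets that could produce a residual $y$ containing a bad support in $B^{(R)}$.
\end{itemize}

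\textbf{Correctness (item 2).} Suppose $e$ avoids $\mathcal{F}$ but $\phi(x)+y\notin\operatorname{rowsp}(H_X)$. We adapt Lemma~\ref{lemma:error_to_large_connected_cluster}: a logical failure produces a connected cluster $S$ with $|x\cap S|\ge|S|/2$, and $|S|$ is at least the local spacetime distance at its location. Since the decoder is minimum weight, $|\hat e\cap S|\le|e\cap S|$, so at least a quarter of $S$ consists of physical faults. Passing from the ``half of $x$'' condition to a ``quarter of $e$'' condition is what produces the exponent $1/\chi$ with $\chi=4$. In either case $S$ lies in $\mathcal{F}_1$ or $\mathcal{F}_2$, which is a contradiction.

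\textbf{Residual control (item 1).} If $y$ contains a bad support, Lemma~\ref{lemma:residual_error} supplies a cluster cover $R$ with $|R|\ge 2|S|$ in which every cluster is at least half faulty. We put such covers into $\mathcal{F}_R$ and count them with Lemma~\ref{lemma:cluster_cover_count}. Summing over cover size $m$, and again trading half-occupancy by $x$ for quarter-occupancy by $e$, gives the geometric series $(1-2Dex^{1/4})^{-1}\,\mathcal{W}(B^{(R)},(2De)^2x^{1/2})$.

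\textbf{Counting $\mathcal{F}_1$ and $\mathcal{F}_2$.} We use Lemma~\ref{lemma:weight_enumerator} together with $\binom{m}{\lceil m/4\rceil}\le 2^m$. A cluster of size $m$ contributes at most $(2De)^m x^{m/4}=\eta_x^m$. For $\mathcal{F}_1$ we sum over at most $Td_2^2$ anchors and over $m\ge d_2$, which gives $Td_2^2\eta_x^{d_2}/(1-\eta_x)$. For $\mathcal{F}_2$ we split anchors into three groups:
\begin{itemize}
\item the $d_1^2$ interior sites of the original patch;
\item the $(2d_1-1)$ seam sites, where each step of distance away from the patch adds one to the threshold, contributing one extra factor $\eta_x/(1-\eta_x)$;
\item the two corner directions, which contribute a double geometric sum and hence a factor $2\eta_x/(1-\eta_x)^2$.
\end{itemize}
Summing these gives the bracketed $d_1$-dependent floor.

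\textbf{Main obstacle.} The hardest step is making the spacetime-dependent distance rigorous. We must show that every logical-failure cluster touching the growth region has size at least $d_1$ plus its distance from the original patch. This must hold even though the first-round growth measurements are gauge data rather than detectors, which creates open boundaries at $t=0$. We would prove it by a direct homological argument on the decoding graph: any $X$ string must run from one $X$ boundary to the other. Before $t=0$ the only such boundaries are those of the $d_1$ patch. After $t=0$ a string can terminate on the $t=0$ gauge boundary only inside the strip where growth qubits were initialized in the eigenbasis that anticommutes with it. That strip has spatial extent at least $d_1$, so strings through it are long enough.
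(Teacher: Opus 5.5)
Your argument follows the paper's proof. The ingredients match: a spacetime-dependent distance $d_v$, families of quarter-faulty connected clusters of size $m\ge d_v$ anchored at $v$ (each counted as $\eta_x^m$), a cluster-cover family for residuals built from Lemma~\ref{lemma:residual_error}, and the minimum-weight trade from ``half of $e+\hat e$'' to ``a quarter of $e$'' that sets $\chi=4$. Your $\mathcal{F}_1\boxplus\mathcal{F}_2$ is the paper's single family $\mathcal{F}_1$, with anchors split into those where $d_v$ has saturated at $d_2$ (giving $Td_2^2\eta_x^{d_2}/(1-\eta_x)$) and those where $d_v<d_2$ (giving the $d_1$-dependent bracket). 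Your $\mathcal{F}_R$ is the paper's $\mathcal{F}_2$.

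Three points need repair before the argument goes through.

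\textbf{Fault fraction.} You define $\mathcal{F}_1$ with $|e\cap S|\ge|S|/2$, but the correctness step only produces $|e\cap S|\ge|S|/4$. So the offending cluster is not in the family as written. Both cluster families must be defined with $\lceil m/4\rceil$ faults, which is what your counting already uses.

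\textbf{Time dependence of the threshold.} The threshold must grow with elapsed time after growth, not only with spatial distance from the original patch; the paper uses $d_{x,y,t}=\min(\max(x,y,d_1)+t,\,d_2)$. Your bracket $\eta_x^{d_1}(1-\eta_x)^{-2}[d_1^2+\cdots]$ silently assumes this, since one factor of $(1-\eta_x)^{-1}$ is the sum over $t$. With a purely spatial threshold, the $d_1^2$ interior anchors in each of $T\ge d_2$ rounds would contribute $Td_1^2\eta_x^{d_1}$. That is a floor growing with $d_2$, and it would break the later bound $\mathcal{W}\le c\,x$. Your homological argument must therefore also show that a short failure string through $(x,y,t)$ has to descend to the $t=0$ gauge boundary. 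For $X$-type strings that boundary lies over the qubits prepared in $|+\rangle$, where the $Z$ checks are gauge and $X$ acts trivially on the initial state. It is not the anticommuting basis, as you wrote. The relevant length is the distance from that region to the opposite $X$ boundary, which is at least $d_1$ through the original patch.

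\textbf{Applicability of the memory lemmas.} Lemmas~\ref{lemma:residual_error} and~\ref{lemma:error_to_large_connected_cluster} are statements about the memory spacetime complex with $Hx=0$. You cannot apply them to clusters touching the $t=0$ gauge boundary, yet you do so directly for residual control. The paper avoids adapting them by splitting the components of $e+\hat e$ into two kinds:
\begin{itemize}
\item \emph{Internal components} leave no residual syndrome. They are ruled out as logical failures directly by Lemma~\ref{lemma:logical_strings} together with $d_v$.
\item \emph{Output components} cannot reach back below $t=T/2$, because of the output padding; otherwise they would be large quarter-faulty clusters already excluded by $\mathcal{F}_1$.
\end{itemize}
Only the output components are passed to the two lemmas. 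With these three changes your proof coincides with the paper's.
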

\noindent Before proceeding with the proof, we note the two crucial parts of this theorem at an intuitive level.
\begin{enumerate}
    \item If the gadget does not hit the avoiding set, the output state is not ``too" deviated.
    \item If the gadget does not hit the avoiding set, the output state is the correct logical state.
\end{enumerate}
Importantly, to establish this theorem, we we do not place any constraints on the gadget's input state. Rather, we treat the input as perfect, and group any errors that acted on the input state before the growth gadget into the weight enumerator function. The proof follows the cluster-counting strategy of Ref.~\cite{He2025ComposableFaultTolerance}, but we instead use two closely related graphs. 

The detector graph has vertices corresponding to detection events and edges corresponding to elementary fault locations, and is useful for reasoning about the length of logical error chains. The adjacency graph $G_{\rm adj}[H]$, in contrast, has vertices corresponding to fault locations, with two vertices adjacent when the corresponding faults participate in a common spacetime check. This is useful for applying cluster-counting estimates, since the sets $CG_{m,t}$ are vertex sets. Equivalently, $G_{\rm adj}[H]$ is the line graph of the detector graph. Below, when we count clusters of fault locations, we are referring to clusters in $G_{\rm adj}[H]$. Furthermore, we write $D$ for the maximum degree of this adjacency graph, which is constant for the surface-code spacetime complex.

\begin{proof}
We focus on X errors (detected by Z checks); the problem is symmetric between the $X$ and $Z$ cases.  We will refer to physical errors using the character $e$ and decoder errors interchangeably as $\hat{e}$ and $c$. To begin, we define the avoiding sets for our growth gadget.  The errors $x = e + \hat{e}$ which can cause the growth procedure to have a logical error are 
\begin{enumerate}
    \item A string between the spatial boundaries of the code whose length is at least half comprised by errors.
    \item A string between the growth time boundary and the opposing spatial boundary whose length is at least half comprised by errors.
    \item A string between the growth boundary and the temporal boundary at time $T$ whose length is at least half comprised by errors. This can lead to gauge-fixing errors.  
\end{enumerate}

\begin{figure}
    \centering
    \includegraphics[width=\linewidth]{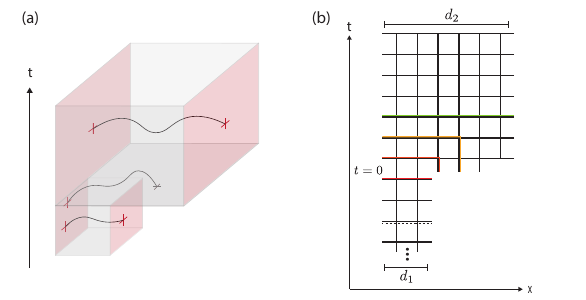}
    \caption{\textbf{Examples of logical errors: } (a) We show schematically the 3D space-time for the growth of an initial surface code to a larger surface code.  Time runs in the vertical direction.  The red boundaries correspond to $X$ boundaries and the strings to possible logical $X$ errors.  We show different ways one can have logical errors including between boundaries of the initial code, boundaries of the grown code, and going between the initial and grown code. (b) Schematic x-t projection of the error correction graph for the surface code growth.  Equivalently this can be viewed as the error correction graph for the growth of a repetition code.  Vertices correspond to detection events and edges to possible errors.  Edges with only one vertex correspond to connections to the boundary.  The growth happens at timestep $t=0$ and the grown stabilizers are measured for $T \geq d_2$.  We assume that all errors that happen before the growth are logical errors. In color are possible logical error strings. }
    \label{fig:error_examples}
\end{figure}
\noindent Examples of strings which cause logical errors are presented in \ref{fig:error_examples} (a).  This decomposition into faults relies on the following lemma which necessitates that a logical error for the surface code is a string whose length is at least half comprised by errors.
\begin{lemma}
    \label{lemma:logical_strings}
    (Logical error strings must comprise at least half the physical faults) Let $e, \hat{e}\in C_1$ satisfying $\partial e = \partial \hat{e}$ and let $z\subseteq e + \hat{e}$ be a nontrivial cycle, i.e. a logical string connecting the boundaries.  Then at least half of the edges in the string corresponding to physical errors $e$ (the other part corresponds to corrections applied by the decoder $\hat{e}$). Formally, 
    \begin{align}
        |z \cap e| \geq |z|/2.
    \end{align}
\end{lemma}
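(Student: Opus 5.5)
The plan is to argue directly from the minimality of the decoder output $\hat e$ (Definition~\ref{def:min_weight_decoder}), using an exchange argument along the string $z$. Since $\partial e = \partial \hat e$, the sum $x = e+\hat e$ has trivial boundary in the detector graph, so $x$ decomposes into edge-disjoint cycles and boundary-to-boundary strings. Let $z \subseteq x$ be the nontrivial string in question. Its edges split into two disjoint parts. One part, $z\cap e$, consists of edges in $e$ but not in $\hat e$. The other, $z\cap\hat e$, consists of edges in $\hat e$ but not in $e$, since edges in both cancel in the $\mathbb{F}_2$ sum. Hence $|z| = |z\cap e| + |z\cap \hat e|$.

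Next I build the competing correction $\hat e' = \hat e + z$. Because $z$ is a relative cycle (its only endpoints lie on boundary vertices, which carry no detectors), we have $\partial z = 0$ on detector vertices. Therefore $\partial \hat e' = \partial \hat e = s$, so $\hat e'$ is a valid candidate for the minimum-weight decoder. Adding $z$ removes the edges $z\cap\hat e$ from $\hat e$ and inserts the edges $z\cap e$. This gives
\begin{equation}
|\hat e'| = |\hat e| - |z\cap \hat e| + |z\cap e|.
\end{equation}
Minimality of $\hat e$ gives $|\hat e|\le|\hat e'|$, hence $|z\cap\hat e|\le |z\cap e|$. Combined with $|z| = |z\cap e|+|z\cap\hat e|$, this yields $|z\cap e|\ge |z|/2$.

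The main obstacle is to make sure $\partial z=0$ holds in the growth spacetime complex, whose boundaries are unusual. Besides the spatial $X$ and $Z$ boundaries, there is the growth time boundary at $t=0$: there, new stabilizer outcomes are gauge data, not detectors, so edges ending there also attach to boundary vertices. There is also the final time boundary at $T$. I would check that each of the three string types listed before the lemma terminates only on vertices that the construction of Fig.~\ref{fig:error_examples}(b) designates as boundary vertices. Edges with a single detector are attached to such a vertex, which is excluded from the syndrome. Then the exchange argument goes through unchanged in each case. If the true decoder is instead maximum-likelihood, as in Definition~\ref{def:growth_gadget_procedure}, I would note that with i.i.d.\ equal edge weights it coincides with minimum weight for this purpose, or else state the lemma for the minimum-weight decoder, as the proof of Theorem~\ref{thm:single_sector_decoding} assumes.
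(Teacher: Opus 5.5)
Your proposal is correct and follows the paper's argument: both replace $\hat e$ by $\hat e + z$, use $\partial z = 0$ to keep it consistent with the syndrome, and invoke minimum-weight minimality. You state it directly through $|\hat e'| = |\hat e| - |z\cap\hat e| + |z\cap e|$ rather than by contradiction, and your checks on the growth-boundary vertices and on the maximum-likelihood versus minimum-weight decoder make explicit assumptions that the paper leaves implicit.
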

\begin{proof}
    Assume for the sake of contradiction there is an error configuration and nontrivial cycle $z\subseteq e + \hat{e}$ where $|z \cap e| < |z|/2$.  Consider a new choice for the decoder $\hat{e}' = \hat{e} + z$.  We still have that $\partial e = \partial \hat{e}'$ since $\partial z=0$.  But $|\hat{e}'| < |\hat{e}|$ contradicting the fact that $\hat{e}$ was a minimum weight decoding.  
\end{proof}
\noindent  Previous work uses cluster-based avoiding sets for LDPC codes which follow \cite{Kovalev2013BadQLDPC}. In order to be able to couple to other fault-tolerance gadgets which use these definition, we will consider a class of avoiding sets consisting of errors more general than the string-like errors we have discussed. For every spacetime location  $x,y,t$, this avoiding set includes all clusters of size $m\geq d_{x,y,t}$ that touch the coordinate $x,y,t$ with at least $m/\chi$ errors.  Here, $d_{x,y,t}$ is the length of the minimum-length error string which intersects $x,y,t$ and is capable of causing a logical error.  Based on the geometry of the growth procedure. we can see that this spatially-varying distance is given by 
\begin{align}
    d_{x,y,t} = 
    \begin{cases}
        \min(d_1+t, d_2), & \text{if } y \leq d_1, x \leq d_1, t \geq 0,\\[6pt]
        \min(x+t, d_2), & \text{if } y \leq d_1, x > d_1,  t \geq 0,\\[6pt]
        \min(x+t, d_2), & \text{if } y \geq d_1, x> y,  t \geq 0,\\[6pt]
        \min(y+t, d_2), & \text{if } y \geq d_1, x\leq y,  t \geq 0,\\
    \end{cases}
\end{align}
\noindent These cases capture that as time increases and distance grows along the diagonal (where the initialization basis changes), the minimum-length logical operator also grows. Based on this spatially-varying distance, we can define the bulk avoiding sets to be
\begin{equation}
\mathcal{F}_1 = \bigboxplus_{v=(x,y,t)} \bigboxplus_{m\geq d_v} \left\{ E\subseteq C_1: \exists C\in \mathrm{Conn}(v,m)  \textnormal{ such that } |E\cap C|= \lceil m/\chi \rceil  \right\},
\end{equation}
where $C_1$ denotes the set of all edges in the decoding graph and $\mathrm{Conn}(v,m)$ denotes all connected components in the graph of size $m$ which touch $v$. We will take $\chi=4$ below.

Informally, this says that a configuration is ``bad" if there is a cluster touching $v$ of size at least $d_v$ with at least a $1/\chi$ fraction of faults. By construction, this avoiding set also contains the aforementioned string-like events whenever $\chi > 2$. Indeed, if a string $c$ has at least half of its edges in error, then for any point $v$ on the string, the definition of $d_v$ implies that the string has length at least $d_v$. Since $1/2 \geq 1/\chi$ for our choice $\chi=4$, this string is included in $\mathcal{F}_1$. Note the timelike error strings are also captured since they have length $\geq d_2$ and at all spacetime points, $d_{x,y,t}\leq d_2$.

We require two more components in our avoiding set.  The following set $\mathcal{F}_2$, defined in \cite{He2025ComposableFaultTolerance}, allows us to exclude components touching the final time boundary which can cause high residual error:
\begin{align}
    \mathcal{F}_2 = \bigboxplus\limits_{B\subseteq J_{T+1}^D, B\in B^{(R)}}\,\,\bigboxplus\limits_{r\geq 2 |B|}\,\,\bigboxplus\limits_{R\in CO(r,B)}\left\{K\subseteq R: |K| = \lceil|R|/4\rceil, K \subseteq J_{\mathrm{bulk}}\right\}
\end{align}
where $CO(r,B)$ denotes cluster covers of $B$ of size $r$.  Note that the equality $|K| = |R|/4$ suffices  since terms which would have $|K| > |R|/4$ are still part of the avoiding set. Lastly, we define $\mathcal{F}_{\rm in}$. which prohibits any errors in the constant-sized input part of the circuit (which we refer to as $V_{\rm in}$).
\begin{align}
    \mathcal{F}_{\rm in} = \{\{l\}: l \in V_{\rm in}\} .
\end{align}
In total, the avoiding set for the growth gadget is defined to be 
\begin{align}
    \mathcal{F} = \mathcal{F}_{\rm in} \boxplus \mathcal{F}_{1} \boxplus \mathcal{F}_{2}.
\end{align}
This excludes all errors which have any support on the input region.  

With this definition, we can now show that if our error configuration is $\mathcal{F}$-avoiding, then the output of our gadget after decoding satisfies our two criteria. We consider the connected components of $e + c$ which can be decomposed into
\begin{enumerate}
    \item \textbf{Internal components} that do not contain vertices in $J_{1}^d$ or $J_{T}^S$
    \item \textbf{Output components} that contain vertices in $J_{T}^S$ and do not contain vertices in $J_{<T/2}^{S,d}$.
\end{enumerate}
Notably, unlike in \cite{He2025ComposableFaultTolerance}, we have no input components since we include $\mathcal{F}_{\rm in}$ in $\mathcal{F}$.  Furthermore, because we use output padding, there can be no output components that extend from $J_{T/2}^d$ to $J_{T}^s$.  Otherwise, the existence of such components would imply that $e+c$ has a connected component $C$ extending from $J_{T/2}^d$ to $J_{T}^s$ of size $\geq d$ with $|e\cap C|\geq |C|/2 \geq d/2$; this is prohibited by $\mathcal{F}_{1}$.  

We first note that the internal components can have no effect on either criterion of our theorem.  Since these components are fully in the bulk, they can not induce a residual syndrome. Furthermore, by the definition of $\mathcal{F}_1$, these components can not cause a logical error. Thus, we only need to consider residual components and show that they do not compromise either condition.  

Next, we show that the residual error $y$ is sufficiently low, namely that $y$ is $B^{(R)}$-avoiding. Consider one set $B\in B^{(R)}$ and assume that $y$ contains $B$. We can invoke \ref{lemma:residual_error}, noting that $H x = 0$ because the residual component has no support on the initial growth step as discussed previously. This lemma allows us to conclude there must be a cluster cover $V$ of $B$ satisfying
\begin{enumerate}
    \item $V\subseteq (e+c)\sqcup y$ and $|V| \geq 2|B|$
    \item For every cluster $C$ in $V$ we have $|(e+c)\cap C| \geq |C|/2$ (instantiating lemma with $x\rightarrow e+c$).  Necessarily, each cluster $C$ must contain a vertex of $B$.
\end{enumerate}
We have that $|(e+c)\cap C| \geq |C|/2$, and using that the clusters are disjoint, we obtain that 
\begin{equation}
    |(e+c)\cap V| = |(e+c)\cap (c_1 \cup \cdots \cup c_t)| = \sum_{C} |(e+c)\cap C| \geq \sum_{C\in V}|C|/2 = |V|/2 \ .
\end{equation} 
Since $c$ is by definition the minimum-weight correction from the decoder, we have that $|e \cap V|\geq |V|/4$.  Now, denoting $CO(r,B)$ as the set of all cluster covers of $B$ with size at least $r$, we can see that the aforementioned 
\begin{align}
    \mathcal{F}_2 = \bigboxplus\limits_{B\subseteq J_{T+1}^D, B\in B^{(R)}}\bigboxplus\limits_{r\geq 2 |B|}\bigboxplus\limits_{R\in CO(r,B)}\left\{K\subseteq R: |K| = |R|/4, K \subseteq J_{\mathrm{bulk}}\right\}
\end{align}
is exactly the set which allows us to restrict these cluster covers.  Specifically, we consider every set $B\in B^{(R)}$ that we want to restrict. For each such set $B$, the cluster cover should have size $r = |V| \geq 2|B|$. For all cluster covers of this size, we consider the sets of errors within the cluster cover with at least $|R|/4$ errors.  

Having shown that the output components cannot cause large codespace deviations, we show that the output components also cannot cause logical errors. Formally, we must rule out that $\phi(e + c) +y \notin \mathrm{rowsp}(H_X)$.  We again make use of the fact that the components in the output configuration have no support on the growth boundary condition and hence $Hx=0$ in the bulk.  Invoking \ref{lemma:error_to_large_connected_cluster} tells us that there is a connected cluster of vertices $S\subset (c+e)\sqcup y$ of size $d$ such that $|(e + c)\cap S|\geq |S|/2$.  Then we have that 
\begin{align}
    |e \cap S|\geq |S|/4 \ ,
\end{align}
because the errors need to have a larger intersection with $S$ than the corrections by the minimum-weight property of the decoder.  Since $|S|\geq d$ we then have that 
\begin{align}
    |e \cap S|\geq d/4.
\end{align}
In words, this tells us that a logical error would correspond to a cluster with more than a quarter errors. This is directly prohibited by the avoiding set $\mathcal{F}_1$, since it is a connected cluster with at least $1/4$ errors. Hence, $\chi=4$ suffices for the remainder of the proof.

Now that we've shown that our choice of avoiding set prevents large residuals and logical errors, we turn to bounding the weight enumerator for the gadget to obtain our fault-tolerance guarantees. The weight enumerator polynomial $\mathcal{W}(F, x)$ for a family $F$ of bad supports is given by $\mathcal{W}(F, x) = \sum_{w\geq 0} A_w x^w$, where $A_w$ is the number of terms with support $w$ in $F$.  We now independently bound the contribution of each part of $\mathcal{F}$ to the weight enumerator.  
\begin{align}
    \mathcal{W}(\mathcal{F}, x) \leq \mathcal{W}(\mathcal{F}_{\rm in}, x) + \mathcal{W}(\mathcal{F}_1, x) + \mathcal{W}(\mathcal{F}_2, x)
\end{align}
First, let us consider $\mathcal{F}_{\rm in}$, which captures all possible errors which happen in the circuit growth.  We simply consider an error in any one of the locations before the growth.  Then, we have that for the input region with fault volume $|V_{\rm in}|$,
\begin{align}
    \mathcal{W}(\mathcal{F}_{\rm in}, x) \leq |V_{\rm in}| x
\end{align}
if $x$ is between $0$ and $1$.  This is the expected form, as we have linear scaling in $x$ from the input component.  

Next, we look at $\mathcal{F}_1$ which is the most non-trivial in this setting of surface code growth because our distance is dependent on the space-time coordinate. We make use of two facts.  First, the number of connected sets of size $m$ on a graph with degree $D$ is less than $(D e)^{m-1} \leq (D e)^m$.  Second, the weight enumerator for connected clusters in a graph with $V$ vertices is given by
\begin{align}
    \mathcal{W}(CG(d,t), x)\leq |V| \binom{d}{t-1} (D e)^{m-1} x^t .
\end{align}
Using the definition of $\mathcal{F}_1$, we can write
\begin{equation}
\mathcal{F}_1 = \bigcup_{v=(x,y,t)}  \bigcup_{m\geq d_v}  CG_v(m,\lceil m/\chi \rceil) 
 ,
\end{equation}
where $CG_v(m,\lceil m/\chi \rceil)$ denotes the family of connected clusters of size $m$ touching $v$ with $\lceil m/\chi \rceil$ faults. Therefore,
\begin{equation}
\mathcal{W}(\mathcal{F}_1,x) \leq \sum_{v=(x,y,t)}\sum_{m\geq d_v} \mathcal{W}(CG_v(m,\lceil m/\chi \rceil),x).
\end{equation}
For each term,
\begin{equation}
\mathcal{W}(CG_v(m,\lceil m/\chi \rceil),x) \leq (De)^{m-1}\sum_{s\geq m/\chi}\binom{m}{s}x^s \leq (2De)^m x^{m/\chi}
\end{equation}
where for the first bound we sum over all fault supports with at least $m/\chi$ errors inside each connected cluster.  Defining $\eta_x = 2De\,x^{1/\chi}$, we obtain
\begin{equation}
\mathcal{W}(\mathcal{F}_1,x) \leq \frac{1}{1-\eta_x} \sum_{v=(x,y,t)}\eta_x^{d_v}
\end{equation}
As such, the bound reduces reduces to evaluating this sum and making use of the spatially-dependent distance. In the following manipulations, we use $\Delta = d_2-d_1$. Then we have
\begin{align}
    \mathcal{W}(\mathcal{F}_1, x) &\leq \frac{1}{1-\eta_x}\sum_{v=(x,y,t)} \eta_x^{d_v} \nonumber , \\
    &\leq \frac{1}{1-\eta_x}\Bigg[ 
        d_1^2\sum_{\tau=0}^{T-1} \eta_x^{d_1 + \min(\tau, \Delta)} + d_1\sum_{u = 1}^{\Delta} \sum_{\tau=0}^{T-1} \eta_x^{d_1 + \min(u + \tau, \Delta)} \nonumber  \\
        &\quad \quad \quad \quad \quad + \sum_{x=1}^\Delta (d_1+x) \sum_{\tau=0}^{T-1} \eta_x^{d_1 + \min(x+\tau, \Delta)} + \sum_{x=1}^\Delta \sum_{y=x+1}^\Delta \sum_{\tau=0}^{T-1} \eta_x^{d_1+\min(y+\tau, \Delta)}
    \Bigg]\nonumber , \\
    &= \frac{1}{1-\eta_x}\Bigg[ 
        d_1^2\sum_{\tau=0}^{T-1} \eta_x^{d_1 + \min(\tau, \Delta)} + d_1\sum_{u = 1}^{\Delta} \sum_{\tau=0}^{T-1} \eta_x^{d_1 + \min(u + \tau, \Delta)} \nonumber \\
        &\quad \quad \quad \quad \quad + \sum_{x=1}^\Delta (d_1+x) \sum_{\tau=0}^{T-1} \eta_x^{d_1 + \min(x+\tau, \Delta)} + \sum_{y=1}^\Delta (y-1) \sum_{\tau=0}^{T-1} \eta_x^{d_1+min(y+\tau, \Delta)}
    \Bigg]\nonumber , \\
    &= \frac{1}{1-\eta_x}\Bigg[ 
        d_1^2\sum_{\tau=0}^{T-1} \eta_x^{d_1 + \min(\tau, \Delta)} + \sum_{u = 1}^{\Delta} (2d_1+2u-1) \sum_{\tau=0}^{T-1}  \eta_x^{d_1 + \min(u + \tau, \Delta)}
    \Bigg]\nonumber ,\\
    &\leq \frac{1}{1-\eta_x}\Bigg[ 
        d_1^2 \left((T-\Delta) \eta_x ^{d_2} + \sum_{\tau=0}^{\Delta-1}\eta_x^{d_1 + \tau} \right)+ \sum_{u = 1}^{\Delta} (2d_1+2u-1) \left((T+u-\Delta)\eta_x^{d_2}  + \sum_{\tau=0}^{\Delta-u-1}\eta_x^{d_1 + u + \tau}\right)
    \Bigg]\nonumber , \\
    &\leq \frac{1}{1-\eta_x}\Bigg[ 
        d_1^2 \left((T-\Delta) \eta_x ^{d_2} + \frac{\eta_x^{d_1}}{1-\eta_x} \right)+ \sum_{u = 1}^{\Delta} (2d_1+2u-1) \left((T+u-\Delta)\eta_x^{d_2}  + \frac{\eta_x^{d_1 + u}}{1-\eta_x}\right)
    \Bigg]\nonumber , \\
    &\leq \frac{\eta_x^{d_2}}{1-\eta_x}\Bigg[ d_1^2 (T-\Delta) + \sum_{u     = 1}^{\Delta} (2d_1+2u-1) (T+u-\Delta)\Bigg]
        + \frac{\eta_x^{d_1}}{(1-\eta_x)^2}\Bigg[d_1^2 + \sum_{u = 1}^{\Delta} (2d_1+2u-1) \eta_x^{u}
    \Bigg]\nonumber , \\
    &\leq \frac{T\eta_x^{d_2}}{1-\eta_x}\Bigg[ d_1^2 + \sum_{u     = 1}^{\Delta} (2d_1+2u-1)\Bigg]
        + \frac{\eta_x^{d_1}}{(1-\eta_x)^2}\Bigg[d_1^2 + \sum_{u = 1}^{\Delta} (2d_1+2u-1) \eta_x^{u}
    \Bigg]\nonumber , \\
    &\leq \frac{T d_2^2 \eta_x^{d_2}}{1-\eta_x}
        + \frac{\eta_x^{d_1}}{(1-\eta_x)^2}\Bigg[d_1^2 + \frac{(2 d_1-1) \eta_x}{1-\eta_x} + \frac{2\eta_x}{(1-\eta_x)^2}
    \Bigg] .
\end{align}

As desired, this weight enumerator bound includes $d_1$ dependent contributions and $d_2$ dependent terms which are are all exponentially decaying in $d_2$.  Lastly, we bound $\mathcal{F}_2$.  As before, we have
\begin{align}
    \mathcal{W}(\mathcal{F}_2, x) &\leq \sum_{B \in B^{(R)}} \sum_{r \geq 2 |B|} \sum_{R \in CO(r,B)} \binom{|R|}{|R|/4}x^{|R|/4} \nonumber \\
    &\leq \sum_{B \in B^{(R)}} \sum_{r \geq 2 |B|}  e^{|B|-1}(D e)^{r-|B|} 2^{r}x^{r/4} \nonumber \\
    &\leq \sum_{B \in B^{(R)}} \sum_{r \geq 2 |B|}  (D e)^{r} 2^{r}x^{r/4} \nonumber \\
    &\leq \sum_{B \in B^{(R)}} \frac{(2 D e x^{1/4})^{2|B|}}{1- 2D e x^{1/4}}\nonumber \\
    &\leq \frac{1}{1- 2D e x^{1/4}}\mathcal{W}\left(B^{(R)}, (2 D e x^{1/4})^{2}\right)\nonumber \\
\end{align}
The total weight enumerator of the gadget is now given by
\begin{align}
    \mathcal{W}(\mathcal{F}, x) &\leq \mathcal{W}(\mathcal{F}_{\rm in}, x) + \mathcal{W}(\mathcal{F}_1, x) + \mathcal{W}(\mathcal{F}_2, x)\nonumber \\
    &\leq |V_{\rm in}| x
            + \frac{T d_2^2 \eta_x^{d_2}}{1-\eta_x}
        + \frac{\eta_x^{d_1}}{(1-\eta_x)^2}\Bigg[d_1^2 + \frac{(2 d_1-1) \eta_x}{1-\eta_x} + \frac{2\eta_x}{(1-\eta_x)^2}\Bigg] \nonumber \\
        &\quad+ \frac{1}{1- 2D e x^{1/4}} \mathcal{W}\left(B^{(R)}, (2De)^2 x^{1/2}\right)
\end{align}
As desired, this weight enumerator polynomial has terms which are dependent only on the distance $d_1$, and all other terms are exponentially suppressed in $d_2$.  
\end{proof}

Thus far we have handled decoding of this gadget for one logical sector.  Now we extend to correction of errors in both sectors to obtain the main result.  
\begin{theorem}[One step surface code injection gadget]
\label{thm:unregularized_input_gadget}
    There exists a constant $\epsilon_{*,\rm inject}\in (0,1)$ and a gadget $\mathrm{INJECT}^d$ such that given a reduced state $\rho_{Ri}$ of a single qubit $i$, possibly entangled with or purified by a reference system $R$, 
    the operation $\left(I_R \otimes \mathrm{INJECT}^d\right)(\rho_{Ri})$ outputs a state that is $SC(d)_{1/5}$-deviated from $\left(I_R \otimes \mathrm{Enc}_{SC(d)}\right)(\rho_{Ri})$ for every fault $f$ that avoids $\mathcal{F}_{\rm inject}^d$.  The weight enumerator is bounded on $x\in [0, \epsilon_{*,\rm inject}]$ as 
    \begin{align}
        \mathcal{W}(\mathcal{F}_{\rm inject}^d, x) \leq c \cdot x
    \end{align}
    for a constant $c>0$.  The gadget has space overhead $O(d^2)$ and time overhead $O(d)$. 
\end{theorem}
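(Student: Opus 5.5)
We take $\mathrm{INJECT}^d = \mathrm{GROW}_{d_1\to d}\circ \mathrm{ENC}_{d_1}$ from Definition~\ref{def:growth_gadget_procedure}, with $d_1\geq 4$ a fixed constant and $T=\Theta(d)$ measurement rounds. The avoiding set is the parallel union of the single-sector sets given by Theorem~\ref{thm:single_sector_decoding}, one for each Pauli sector:
\begin{equation}
\mathcal{F}_{\rm inject}^d=\mathcal{F}^{(X)}\boxplus\mathcal{F}^{(Z)}.
\end{equation}
In both sectors we take the residual family $B^{(R)}=B_{1/5}$.

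\emph{Correctness.} For any fault $f$ avoiding $\mathcal{F}_{\rm inject}^d$, the restriction of $f$ to each sector avoids that sector's $\mathcal{F}^{(a)}$. Theorem~\ref{thm:single_sector_decoding} then gives two facts in each sector. First, the decoded residual $y^{(a)}$ is $B_{1/5}^{(a)}$-avoiding. Second, $\phi(e+\hat e)+y^{(a)}\in\operatorname{rowsp}$, so no logical error or gauge-fixing error occurs. Since the code is CSS, $X$- and $Z$-type faults are decoded independently, and the Pauli frame (including the gauge corrections) is determined sector by sector.

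Let $S$ be the union of the supports of $y^{(X)}$ and $y^{(Z)}$. This $S$ avoids $B_{1/5}=B_{1/5}^{(X)}\cup B_{1/5}^{(Z)}$. The actual output therefore equals some Pauli channel $\mathcal{E}_S$ applied to the ideal post-growth state. Faults in $V_{\rm in}$ are excluded by $\mathcal{F}_{\rm in}$, so the ideal post-growth state is the noiseless growth output. By Lemma~\ref{lemma:surface_code_growth_noiseless_correctness}, that output is exactly $(I_R\otimes\mathrm{Enc}_{SC(d)})(\rho_{Ri})$, and the lemma holds branchwise with a reference system $R$ present. Consequently the output is $SC(d)_{1/5}$-deviated in the sense of Definition~\ref{def:deviated_state}.

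\emph{Weight enumerator.} We use subadditivity of $\mathcal{W}$ under $\boxplus$ and sum the two single-sector bounds. Each bound has four kinds of term.
\begin{itemize}
\item The input term $|V_{\rm in}|x$ is linear in $x$.
\item The $d_1$-floor term is $O(\eta_x^{d_1})=O(x^{d_1/4})$. Because $d_1\geq 4$, this is at most a constant times $x$ for $x\leq 1$.
\item The bulk term $Td^2\eta_x^{d}/(1-\eta_x)$ is at most $\operatorname{poly}(d)\,x^{d/4}$. This is also $O(x)$ uniformly in $d\geq d_1$: for $\eta_x<1$ we have $\operatorname{poly}(d)\eta_x^{d}\leq \operatorname{poly}(d)\eta_x^{d-4}\cdot\eta_x^4$, and $\sup_d \operatorname{poly}(d)\eta_x^{d-4}$ is bounded once $\eta_x$ is below a fixed constant.
\item The residual term $\mathcal{W}(B_{1/5},(2De)^2x^{1/2})$ is handled with Lemma~\ref{lemma:weight_enumerator}. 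Summing over $m\geq d$ with $t=\lceil m/10\rceil$ gives a geometric series of size $\operatorname{poly}(d)\,(c'x^{1/2})^{d/10}$. For $d\geq d_1$ this is $O(x)$ only if $d/20\geq 1$, and with $d_1=4$ it is not.
\end{itemize}

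The main obstacle is this last exponent mismatch: we need every term to be $O(x)$ uniformly in $d$. The first fix to try is to choose the constant $d_1$ large enough, say $d_1\geq 20$, so that all exponents are at least $1$. Since $d_1$ is a fixed constant, this is legitimate and only changes the constant $c$. Alternatively, for $d$ below a fixed constant, the small-$d$ cases can be absorbed into the input set $\mathcal{F}_{\rm in}$ by forbidding any single fault anywhere in the gadget. A constant-size circuit contributes only $O(1)\cdot x$ this way.

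We then choose $\epsilon_{*,\rm inject}$ small enough that $\eta_x<1/2$ and $2Dex^{1/4}<1/2$. On $[0,\epsilon_{*,\rm inject}]$ every geometric series is bounded by twice its leading term, which gives $\mathcal{W}(\mathcal{F}_{\rm inject}^d,x)\leq c\,x$.

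\emph{Overhead.} The patch has $d^2$ data qubits plus $O(d^2)$ syndrome ancillas, so space is $O(d^2)$. Time is a constant for $\mathrm{ENC}_{d_1}$ plus $T=O(d)$ measurement rounds, so time is $O(d)$.
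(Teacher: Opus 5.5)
Your proposal follows the paper's proof essentially step for step: the same gadget with the sector-wise avoiding sets of Theorem~\ref{thm:single_sector_decoding}, and the same term-by-term reduction of each enumerator term to $O(x)$, including splitting off $\eta_x^{4}\propto x$ from the bulk term with $\eta_x\le 1/2$. You also use the same fix for the residual term's $x^{d/20}$ exponent, enlarging the constant $d_1$, which is the paper's requirement $d_1\geq\max(2c_2,\chi)$ up to how the factor of $x$ is split off. The one step where you say more than the paper, that the union of the $X$- and $Z$-residual supports avoids $B_{1/5}^{(X)}\cup B_{1/5}^{(Z)}$, does not follow from the per-sector guarantees alone, because the two sectors' residuals can jointly push a cluster past the $\lceil m/10\rceil$ threshold; read that step sector-wise, as the paper implicitly does when it simply doubles the single-sector enumerator.
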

\begin{proof}
    We use the one step injection gadget we have introduced.  If each sector has no logical error then the overall channel has no logical error as shown in Theorem \ref{thm:single_sector_decoding}.  We found that the weight enumerator in one sector is given by
    \begin{align}
        \mathcal{W}(\mathcal{F}_{\rm inject}^{d,X}, x) &\leq |V_{\rm in}| x 
        + \frac{\eta_x^{d_1}}{(1-\eta_x)^2}\Bigg[d_1^2 + \frac{(2 d_1-1) \eta_x}{1-\eta_x} + \frac{2\eta_x}{(1-\eta_x)^2}\Bigg] + \frac{T d_2^2 \eta_x^{d_2}}{1-\eta_x}\nonumber \\
        &\quad+ \frac{1}{1- 2D e x^{1/4}}\mathcal{W}\left(B^{(R)}, (2De)^2 x^{1/2}\right)
    \end{align}
    We would like to show that the weight enumerator can be bounded by $c \cdot x$.  The first term $|V_{\rm in}| x$ already scales linearly in $x$.  The next terms 
    \begin{align}
        \frac{\eta_x^{d_1}}{(1-\eta_x)^2}\Bigg[d_1^2 + \frac{(2 d_1-1) \eta_x}{1-\eta_x} + \frac{2\eta_x}{(1-\eta_x)^2}\Bigg]
    \end{align}
    can be bounded taking $x< \epsilon_*$ and $\epsilon_*$ sufficiently small so that $\eta_x \leq 1/2$, and moreover by taking $d_1\geq \chi=4$ so that $\eta_x^{d_1}$ scales at least linearly in $x$.  This gives the desired bound
    \begin{align}
    	\frac{\eta_x^{d_1}}{(1-\eta_x)^2}\Bigg[d_1^2 + \frac{(2 d_1-1) \eta_x}{1-\eta_x} + \frac{2\eta_x}{(1-\eta_x)^2}\Bigg] \leq C_1 (2De\,x^{1/\chi})^{d_1}
        \leq c_1 \cdot x.
    \end{align}  
    For the term
    \begin{align}
\label{eq:bound_on_d2_scaling_terms}
         \frac{T d_2^2 \eta_x^{d_2}}{1-\eta_x} ,
    \end{align}
    The coefficient $P(d_2, T, \Delta)=T d_2^2 $ can be bounded by $C d_2^3$ for some constant $C$ noting that $T$ is $O(d_2$).  We take $\epsilon_*$ sufficient small so $\eta_x \leq 1/2$, and since $d_2 \geq d_1$ this can be bounded by 
    \begin{align}
         \frac{T d_2^2 \eta_x^{d_2}}{1-\eta_x} \leq C d_2^3 (2De\,x^{1/\chi})^{d_2}
         \leq C d_2^3 (2De\,x^{1/\chi})^\chi \,(2De\,x^{1/\chi})^{d_2-\chi}
         \leq C d_2^3 (2De)^{\chi}2^{-(d_2-\chi)}x
         \leq c_2 \cdot x
    \end{align}
    once we note that $\sup_{d_2 \geq \chi} d_2^3 2^{-(d_2-\chi)}$ is a constant. For the last term
    \begin{align}
        \frac{1}{1- 2D e x^{1/4}}\,\mathcal{W}\!\left(B^{(R)}, (2 D e)^2( x)^{1/2}\right) \ ,
    \end{align}
    we note that we have $B^{(R)} = B_{1/5}$ and that after evaluation this will take the form $\sum_{m\geq d_2} \mathrm{poly}(m) (c_1 x^{1/c_2})^{m}$.  We can expand this as
    \begin{align}
    	\sum_{m\geq d_2} \mathrm{poly}(m) (c_1 x^{1/c_2})^{m} 
        &= \sum_{m\geq d_2}  \mathrm{poly}(m) (c_1 x^{1/(2c_2)})^{m} (x^{1/(2c_2)})^{m} \nonumber \\
        &\leq \sum_{m\geq d_2}  \mathrm{poly}(m) 2^{-2 m} x \leq c_3 \cdot x
    \end{align}
    Where we have taken $\epsilon_*$ sufficiently small so $c_1 x^{1/(2c_2)} \leq \frac{1}{4}$ and $d_2 > d_1 \geq 2 c_2$.  Hence overall we need $d_1 \geq \max(2 c_2, \chi)$.  Since $\sum_{s\geq s_0}  \mathrm{poly}(s) 2^{-2 s}$ is convergent this can be bounded by $c_3 \cdot x$.  Thus overall we have that the weight enumerator is bounded by 
    \begin{align}
        \mathcal{W}(\mathcal{F}_{\rm inject}^d, x)\leq |V_{\rm in}|x + c_1 \cdot x + c_2 \cdot x + c_3 \cdot x.
    \end{align}
    Accounting for both $X,Z$ sectors will contribute a factor of 2 so that $c = 2(|V_{\rm in}| + c_1 + c_2 + c_3)$.  
\end{proof}
In order to couple this gadget to the rest of our circuit we need the guarantee that the output will not have too much residual error.  While our theorem already provides this guarantee, it will be useful in our proof to independently condition on the residual being small or the logical being correct. For simplicity, we will define a new gadget with a memory gadget appended.

\begin{theorem}[Regularized one-step surface-code injection gadget]
\label{thm:regularized_input_gadget}
Define the regularized input gadget by appending one surface-code error-correction
gadget to the injection gadget:
\begin{equation}
\mathrm{IN}_i^d = \mathrm{EC}_i^d\circ \mathrm{INJECT}_i^d .
\end{equation}
There exists a constant $\epsilon_{*,\mathrm{inject}}\in(0,1)$ such that the following holds. For every state $\rho_{Ri}$ of an input qubit $i$ possibly entangled with a reference system $R$, the gadget has a fault family
\begin{equation}
\mathcal{F}_{\mathrm{in}}^d = \mathcal{F}_{\mathrm{grow}}^d \boxplus \mathcal{F}_{\mathrm{res}}^d ,
\end{equation}
where $\mathcal{F}_{\mathrm{grow}}^d$ controls logical faults during the growth stage and $\mathcal{F}_{\mathrm{res}}^d$ controls excessive residual deviation after the appended memory gadget.
If a fault path $f$ is $\mathcal{F}_{\mathrm{res}}^d$-avoiding, then there is a
single-qubit CPTP map $\mathcal{L}_f$ such that
\begin{equation}
(I_R\otimes \mathrm{IN}_i^d)(\rho_{Ri})
\end{equation}
is $SC(d)_{1/5}$-deviated from
\begin{equation}
(I_R\otimes \mathrm{Enc}_{SC(d)}) (I_R\otimes \mathcal{L}_f)(\rho_{Ri}).
\end{equation}
If, in addition, $f$ is $\mathcal{F}_{\mathrm{grow}}^d$-avoiding, then
$\mathcal{L}_f=\mathrm{Id}$.
For all $x\in[0,\epsilon_{*,\mathrm{inject}}]$,
\begin{equation}
    \mathcal{W}(\mathcal{F}_{\mathrm{grow}}^d,x)\leq c x,
    \qquad
    \mathcal{W}(\mathcal{F}_{\mathrm{res}}^d,x)
    \leq \operatorname{poly}(d)e^{-\alpha d},
\end{equation}
for constants $c,\alpha>0$. The gadget has width $O(d^2)$ physical qubits and
time overhead $O(d)$ syndrome-extraction rounds.
\end{theorem}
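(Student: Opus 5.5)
The plan is to build the fault family for $\mathrm{IN}_i^d$ from the two ingredients already proven. For the growth stage I take $\mathcal{F}_{\mathrm{grow}}^d := \mathcal{F}_{\rm inject}^d$ from Theorem~\ref{thm:unregularized_input_gadget}, so that $\mathcal{W}(\mathcal{F}_{\mathrm{grow}}^d,x)\le cx$ holds immediately on $[0,\epsilon_{*,\rm inject}]$. For the residual part I take $\mathcal{F}_{\mathrm{res}}^d := \mathcal{F}_{EC}^d$, the avoiding set of the appended error-correction gadget from Lemma~\ref{lemma:surface_code_ec_gadget}. That lemma gives $\mathcal{W}(\mathcal{F}_{EC}^d,x)\le \mathrm{poly}(d)e^{-\beta d}$ for $x\le\epsilon_*$. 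I then set $\alpha=\beta$ and shrink $\epsilon_{*,\rm inject}$ to the minimum of the two thresholds. The width and depth claims follow by adding $O(d^2)$ qubits and $O(d)$ rounds from each gadget.

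The key structural step is that $\mathcal{F}_{\mathrm{res}}^d$-avoidance alone must yield a state that is $SC(d)_{1/5}$-deviated from \emph{some} encoded state, with the logical content given by a single-qubit channel applied to the input. I would use friendliness of $\mathrm{EC}^{SC(d)}$ from Definition~\ref{def:circuit_gadget}: for any $\mathcal{F}_{EC}^d$-avoiding fault on the EC gadget, regardless of its input, the output is $B_{1/5}$-deviated from a logical state. Because the fault path $f$ is fixed, the composite map from qubit $i$ to that ideal logical state factors through the ideal decoder $U$ of the code type. Concretely, I define
\[
\mathcal{L}_f := \mathrm{Dec}\circ(\text{ideal recovery})\circ \mathrm{EC}\circ\mathrm{INJECT}_f,
\]
traced over the syndrome and purification registers, which is a CPTP map on one qubit. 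The map acts only on qubit $i$, so $I_R\otimes\mathcal{L}_f$ captures the correlations with $R$. Re-encoding then gives the claimed deviation statement.

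When $f$ also avoids $\mathcal{F}_{\mathrm{grow}}^d$, Theorem~\ref{thm:unregularized_input_gadget} says the output of $\mathrm{INJECT}$ is $SC(d)_{1/5}$-deviated from $\mathrm{Enc}(\rho_{Ri})$. This is a valid input type for EC, since the $1/5$ family is contained in the $1/2$ family. The correctness clause of Lemma~\ref{lemma:surface_code_ec_gadget} then forces the EC output to be $1/5$-deviated from the same logical state, so $\mathcal{L}_f=\mathrm{Id}$.

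The main obstacle is making the friendliness/decoder argument rigorous when the input to EC is badly deviated, since $\mathrm{INJECT}$ failed on the growth set. The worry is whether the deviation from a logical state might be entangled with $R$ in a way that is not a channel on qubit $i$. To handle this, I would argue via the purified decoding unitary $U$. The bad-support syndrome register is traced out, and everything acts locally on $i$'s patch. This gives a Stinespring form for $\mathcal{L}_f$, and the $B$-deviation channel $\mathcal{E}_S$ is then inherited verbatim from friendliness.
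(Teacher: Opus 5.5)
Your proposal is correct and follows essentially the same route as the paper. You take $\mathcal{F}_{\mathrm{grow}}^d$ to be the injection avoiding set from Theorem~\ref{thm:unregularized_input_gadget} and $\mathcal{F}_{\mathrm{res}}^d$ to be the error-correction avoiding set from Lemma~\ref{lemma:surface_code_ec_gadget}. Friendliness of the error-correction gadget then gives deviation from some logical state, with $\mathcal{L}_f=\mathrm{dec}_{SC(d)}\circ\mathrm{IN}_i^d[f]$; a $SC(d)_{1/5}$-deviated injection output is a valid $SC(d)_{1/2}$ input, which yields $\mathcal{L}_f=\mathrm{Id}$, and the weight enumerators are read off from the two cited results. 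Your extra remark on why $\mathcal{L}_f$ is a genuine single-qubit channel acting trivially on $R$, via the purified ideal decoder, fills in a point the paper's proof only asserts.
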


\begin{proof}
    The surface code memory gadget has input type $SC(d)_{1/2}$ and output code type $SC(d)_{1/5}$ (Corollary 7.3  of Ref.~\cite{He2025ComposableFaultTolerance}).  Consider a generic fault path which we decompose across the growth and error correction gadget as $f = (f_{\rm grow}, f_{EC})$.
    
    By Definition \ref{def:circuit_gadget} and Lemma \ref{lemma:surface_code_ec_gadget}, the appended $\mathrm{EC}_{SC(d)}$ is a friendly gadget so if $f_{EC}$ is $\mathcal{F}_{\mathrm{res}}$-avoiding then for an arbitrary input the output will be $SC(d)_{1/5}$-deviated from a logical state.  In terms of the channel this tells us that if $f_{EC}$ is $\mathcal{F}_{\mathrm{res}}$ avoiding then $(I_R \otimes \mathrm{IN}_i^d )(\rho_{Ri})$ is $SC(d)_{1/5}$-deviated from $(I_R \otimes \mathrm{Enc}_{SC(d)} )(I_R \otimes \mathcal{L}_f)(\rho_{Ri})$ for some single qubit logical channel $\mathcal{L}_f$.  The channel $\mathcal{L}_f$ is given by $\mathcal{L}_f = \mathrm{dec}_{SC(d)} \circ \mathrm{IN}_i^d[f]$.  

    From our input gadget lemma \ref{thm:unregularized_input_gadget} we know that $\mathrm{INJECT}_i^d$ will not have any logical errors and its output is $SC(d)_{1/5}$-deviated if $f_{\rm grow}$ avoids $\mathcal{F}_{\mathrm{grow}}$.  If $f_{EC}$ is $\mathcal{F}_{\mathrm{res}}$-avoiding then $\mathrm{EC}_i^d$ will also not induce a logical error since its a friendly gadget acting on a state in the valid input class. Hence $\mathcal{L}_f= \mathrm{Id}$ if $f$ avoids $\mathcal{F}_{\mathrm{grow}}$ and $\mathcal{F}_{\mathrm{res}}$.  Reading off the weight enumerator bounds of $\mathcal{F}_{\mathrm{grow}}$ and $\mathcal{F}_{\mathrm{res}}$ completes the proof. The bounds for growth are from \ref{thm:unregularized_input_gadget} and the bounds for the padding from corollary 7.3.
\end{proof}

Next we use Theorem~\ref{thm:regularized_input_gadget} to show that any circuit which takes arbitrary physical input qubits can be simulated fault-tolerantly, at the cost of a one-time effective input-noise channel.
\begin{theorem}[Fault-tolerant simulation with arbitrary physical inputs]
\label{thm:arb_input_threshold_theorem}
There exists a constant $\epsilon_*\in(0,1)$ such that the following holds. Let $C$ be a Clifford+$T$ circuit of width $W$ and depth $D$ whose input includes $l$ physical qubits in an arbitrary state $\rho$. Assume local-stochastic circuit noise of rate $p<\epsilon_*$. For every target accuracy $\epsilon\in(0,1)$, there is a noisy fault-tolerant implementation $\tilde{\overline C}$ such that
\begin{equation}
\frac{1}{2}  \left\| \mathrm{dec}\circ \widetilde{\overline C}(\rho)  - C\!\left(\mathcal{N}_{\mathrm{inj}}^l(\rho)\right) \right\|_1 \leq \epsilon .
\end{equation}
Here $\mathcal{N}_{\mathrm{inj}}^l$ is an effective input channel induced by
the $l$ uploading gadgets. If the noise on distinct input gadgets is independent,
then
\begin{equation}
\mathcal{N}_{\mathrm{inj}}^l = \bigotimes_{i=1}^l \mathcal{N}_{\mathrm{inj},i}.
\end{equation}
If the input-gadget noise is independent and identically distributed, then
\begin{equation}
\mathcal{N}_{\mathrm{inj}}^l  = \mathcal{N}_{\mathrm{inj}}^{\otimes l}.
\end{equation}
Moreover, each marginal input channel can be written as
\begin{equation}
\mathcal{N}_{\mathrm{inj},i} = (1-q_i)\mathrm{Id}+q_i\mathcal{R}_i
\end{equation}
with $q_i\leq c_{\mathrm{in}}p$ for a constant $c_{\mathrm{in}}$. The fault-tolerant simulation has only polylogarithmic overhead in $(WD+l)/\epsilon$.
\end{theorem}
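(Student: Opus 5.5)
The plan is to build $\widetilde{\overline C}$ by composing the regularized input gadgets of Theorem~\ref{thm:regularized_input_gadget} with the fault-tolerant Clifford+$T$ simulation of Theorem~\ref{thm:basic_surface_code_threshold}. Then we split the fault-path distribution according to which avoiding sets are hit.

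\textbf{Construction.} For each of the $l$ physical inputs, apply $\mathrm{IN}_i^d=\mathrm{EC}_i^d\circ\mathrm{INJECT}_i^d$ in parallel. Its output type is $SC(d)_{1/5}$, which matches the input type of the gadgets in Lemma~\ref{lemma:surface_code_universal_gadgets}. Then run the fault-tolerant compilation $\bar C$ of $C$ from Theorem~\ref{thm:basic_surface_code_threshold}, with target error $\epsilon/2$. Choose $d$ such that $l\,\mathrm{poly}(d)e^{-\alpha d}\le \epsilon/2$, i.e.\ $d=O(\log((WD+l)/\epsilon))$, which gives the claimed polylog overhead.

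By Theorem~\ref{thm:gadget_composition}, parallel composition of the input gadgets and sequential composition with $\bar C$ yields a gadget whose ``residual-type'' avoiding set is
\begin{equation}
\mathcal{F}_{\rm bad}=\bigboxplus_i \mathcal{F}_{\rm res,i}^d\boxplus\mathcal{F}_{\bar C}.
\end{equation}
Its weight enumerator is at most $l\,\mathrm{poly}(d)e^{-\alpha d}+\epsilon/2\le\epsilon$ for $p<\epsilon_*$, where $\epsilon_*$ is the minimum of the gadget thresholds. We deliberately exclude the sets $\mathcal{F}^d_{\rm grow,i}$ from this bad family, since their enumerator is only $O(p)$.

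\textbf{Decomposition of the channel.} Write the noisy implementation as a mixture over fault paths $f$ with weights $\Pr[f]$. By the union bound using the weight enumerator, $\Pr[f\text{ hits }\mathcal{F}_{\rm bad}]\le\epsilon$; these paths contribute at most $\epsilon$ to the trace distance. For every $f$ that avoids $\mathcal{F}_{\rm bad}$, Theorem~\ref{thm:regularized_input_gadget} says the $i$-th gadget output is $SC(d)_{1/5}$-deviated from $\mathrm{Enc}(\mathcal{L}_{f_i}(\cdot))$ on input $i$. The reference system $R$ absorbs the other inputs, so this holds jointly with the remaining qubits. Correctness and friendliness of $\bar C$ then give
\begin{equation}
\mathrm{dec}\circ\widetilde{\overline C}[f](\rho)=C\bigl((\textstyle\bigotimes_i\mathcal{L}_{f_i})(\rho)\bigr).
\end{equation}
Averaging over good paths gives $\mathcal{N}^l_{\rm inj}=\mathbb{E}_{f\,\mathrm{good}}\bigotimes_i\mathcal{L}_{f_i}$ after renormalization, which is a CPTP map; the renormalization changes things by $O(\epsilon)$, which we absorb by rescaling $\epsilon$.

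If faults on distinct input gadgets are independent, the expectation factorizes into $\bigotimes_i\mathcal{N}_{{\rm inj},i}$, and it is identical across $i$ under i.i.d.\ noise. For the marginal, $\mathcal{L}_{f_i}=\mathrm{Id}$ whenever $f_i$ avoids $\mathcal{F}^d_{\rm grow,i}$. Hence
\begin{equation}
\mathcal{N}_{{\rm inj},i}=(1-q_i)\mathrm{Id}+q_i\mathcal{R}_i,
\end{equation}
where $q_i\le \Pr[f_i\text{ hits }\mathcal{F}^d_{\rm grow,i}]\le\mathcal{W}(\mathcal{F}^d_{\rm grow},p)\le cp$.

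\textbf{Main obstacle.} The delicate step is the conditioning. Conditioning on avoiding $\mathcal{F}_{\rm bad}$ slightly correlates the gadgets and biases the distribution of $f_i$. To handle this, I would first define $\mathcal{N}_{\rm inj}$ by averaging $\mathcal{L}_{f}$ over the unconditioned distribution, setting $\mathcal{L}_f$ to an arbitrary channel on bad paths. This distribution factorizes exactly under independence. I would then bound the discrepancy by the total probability of bad paths, at most $\epsilon$.

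A secondary check is that friendliness of $\mathrm{EC}$ makes $\mathcal{L}_f$ a well-defined logical CPTP map even when growth fails. Theorem~\ref{thm:regularized_input_gadget} already provides this. A further check is that the deviation arguments apply with an entangled reference, which the same theorem also states.
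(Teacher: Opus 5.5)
Your proposal is correct and follows the paper's proof essentially step for step. You compose the regularized input gadgets $\mathrm{IN}_i^d$ with the surface-code Clifford+$T$ gadgets, put only the $\mathcal{F}^d_{\mathrm{res},i}$ and bulk families into the bad set (so the trace-distance error is at most $\mathcal{W}(\mathcal{F}_{\rm bulk},p)\le\epsilon$), treat $\mathcal{F}^d_{\mathrm{grow},i}$ failures as the logical input channel $\mathcal{N}^l_{\rm inj}=\mathbb{E}_f[\mathcal{L}_f]$, and factorize under independence. The only differences are minor: you black-box Theorem~\ref{thm:basic_surface_code_threshold} for the bulk instead of composing individual gadgets, and by extending $\mathcal{L}_{f_i}$ per gadget (as a function of $f_i$ alone, set to $\mathrm{Id}$ whenever $f_i$ avoids $\mathcal{F}^d_{\mathrm{grow},i}$) you get $q_i\le cp$ directly, whereas the paper bounds $q_i\le cp+\mathrm{poly}(d)e^{-\alpha d}$ and then absorbs the second term into $c'p$.
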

\begin{proof}
    For each qubit in $[l]$ apply the injection channel used in \ref{thm:regularized_input_gadget} given by
    \begin{align}
    \mathrm{IN}_i^d = \mathrm{EC}_i^d \circ \mathrm{INJECT}_i^d.
    \end{align}
    We will denote these $l$ gadgets by $V_1, ..., V_l$.  Each input gadget has an associated avoiding set $\mathcal{F}_{\mathrm{grow}, i}^d$ for the growth part and $\mathcal{F}_{\mathrm{res}, i}^d$ which controls the residual.  For all other elements $w$ in the circuit replace the gates by their corresponding fault-tolerant gadget $\mathcal{G}_{w}^d$ with type $SC(d)_{1/5}$.  Fault-tolerant gadgets for all operations required for Clifford + T are presented in \ref{lemma:surface_code_universal_gadgets}.  Collectively we will refer to these non-input gadgets as $w \in V_{\mathrm{bulk}}$.  We will use $\bar{C}$ to denote the $l$ input gadgets along with the fault-tolerant implementation of the remaining gates in $C$.  Define the set of failures for the bulk of the circuit by
    \begin{align}
        \mathcal{F}_{\rm bulk} = \left(\bigboxplus\limits_{i=1}^{l}  \mathcal{F}_{\mathrm{res}, i}^d \right) \boxplus  \left(\bigboxplus\limits_{w\in V_{\rm bulk}} \mathcal{F}_{w}^d\right).
    \end{align}
    The set $\mathcal{F}_{\rm bulk}$ deliberately excludes $\mathcal{F}_{\mathrm{grow}, i}^d$ because we will interpret those errors as input noise.  Now let us consider a fault path $f$ which is $\mathcal{F}_{\rm bulk}$-avoiding.  Since $\mathcal{F}_{\rm bulk}$ includes the $\mathcal{F}_{\mathrm{res}, i}^d$ for each input gadget we know that the outputs of each injection gadget are $SC(d)_{1/5}$-deviated from some encoded logical state as we showed in \ref{thm:regularized_input_gadget}.  This is true regardless of whether $f$ is $\mathcal{F}_{\mathrm{grow}, i}$-avoiding.  The result of the injection is the state 
    \begin{align}
        \operatorname{Enc}_{SC(d)}^{\otimes l}\left(\mathcal{L}_f(\rho)\right)
    \end{align}
    up to $SC(d)_{1/5}$ deviation on each encoded qubit.  For a fixed $\mathcal{F}_{\rm bulk}$ avoiding path $f$, $\mathcal{L}_f$ acts trivially on every input $i$ for which $f$ is $\mathcal{F}_{\mathrm{grow}, i}$-avoiding.  For the set of $i$ where $f$ includes a set from $\mathcal{F}_{\mathrm{grow}, i}$ the action of $\mathcal{L}_f$ can be an arbitrary (possibly correlated channel) on the set.  If the $f$ are sampled from a probability distribution then the average channel is given by $\mathcal{N}_{\mathrm{inj}}^l = \mathbb{E}_f \left[\mathcal{L}_f\right]$.  In the setting where the physical noise is local-stochastic which is independent across the gadgets then the channels factorize across the qubits as 
    \begin{align}
        \mathcal{N}_{\mathrm{inj}}^l = \bigotimes_{i=1}^l \mathcal{N}_{\mathrm{inj}, i}.
    \end{align}
    Furthermore if the noise is identical on each input then
    \begin{align}
        \mathcal{N}_{\mathrm{inj}}^l = \mathcal{N}_{\mathrm{inj}}^{\otimes l}.
    \end{align}
    As mentioned if the fault paths are $\mathcal{F}_{\mathrm{res}, i}^d$-avoiding the outputs of the injection gadgets are $SC(d)_{1/5}$-deviated.   Hence for subsequent gadgets $w \in V_{\rm bulk}$ if $f$ is $\mathcal{F}_{w}^d$-avoiding, we can iteratively apply \ref{thm:gadget_composition} to combine all the gadgets corresponding to $w\in V_{\mathrm{bulk}}$ to get a single $(C, \mathcal{F}_{\rm bulk})$-fault tolerant gadget with combined fault family $\mathcal{F}_{\rm bulk}$.  Overall then the bulk simulates the action of $C$ on the noisy logical input $\mathcal{L}_f(\rho)$.  We can bound the weight enumerator for the bulk, using the behavior of weight enumerators on unions, to be
    \begin{align}
        \mathcal{W}(\mathcal{F}_{\rm bulk}, p) \leq \sum_{i=1}^l \mathcal{W}(\mathcal{F}_{\mathrm{res}, i}^d, p)+\sum_{w\in V_{\rm bulk}} \mathcal{W}(\mathcal{F}_{w}^d, p) \leq (l+|\bar{V}_{\rm bulk}|) \operatorname{poly}(d) e^{-\beta d}.
    \end{align}
    Here we make use of the fact that the individual circuit gadgets, including the padding added to the injection gadgets, have when $p < \epsilon_*$ the bound $\mathcal{W}(\mathcal{F}_{w}^d, p) \leq \operatorname{poly}(d) e^{-\beta d}$ \ref{lemma:surface_code_universal_gadgets}.  For $p<\epsilon_*$ we choose $d=\operatorname{polylog}\left((l+W D)/\epsilon\right)$ so that $\mathcal{W}(\mathcal{F}_{\rm bulk}, p) \leq \epsilon$.  The weight enumerator for the input gadget growth step from \ref{thm:regularized_input_gadget} is given by
    \begin{align}
        \mathcal{W}(\mathcal{F}_{\mathrm{grow}, i}, x) \leq c \cdot x 
    \end{align}
    For independent local stochastic noise we have that the error probability of an input gadget $q_i$ failing can be conservatively bounded as
    \begin{align}
        q_i\leq \mathcal{W}(\mathcal{F}_{\mathrm{grow}, i}, p) + \mathcal{W}(\mathcal{F}_{\rm{res}, i}, p)\leq c \cdot p + \operatorname{poly}(d) e^{-\alpha d}.
    \end{align}
    Note that the $\mathcal{W}(\mathcal{F}_{\mathrm{res}}, x)$ term is controlled via a similar means as our bounds in \ref{thm:single_sector_decoding}. As such, following the ideas of \ref{eq:bound_on_d2_scaling_terms}, $q_i$ can equivalently be bounded by $\leq c’ \cdot p$ for a constant $c’$.  Hence the noise channel for the initial injection on one qubit is given by 
    \begin{align}
        \mathcal{N}_{\mathrm{inj}, i}^d = (1-q) \mathrm{Id} + q R
    \end{align}
    for some CPTP map $R$.  Since the physical noise is independent across the different injection gadgets then we have that the overall input noise channel is given by
    \begin{align}
        \mathcal{N}_{\mathrm{inj}}^d = \bigotimes_{i=1}^l \left((1-q) \mathrm{Id} + q R\right).  
    \end{align}
    Here we are assuming that the noise is uniform and hence the same $q$ and $R$ apply on each injection gadget. As an aside note that this final statement relies on the controlled residuals from each of the injection gadgets.  If the residuals were not controlled an individual input gadget could corrupt more of the circuit. For all $f$ that avoid the $\mathcal{F}_{\rm bulk}$ we have that 
    \begin{align}
        \mathrm{dec} \circ \bar{C}[f](\rho) = C(\mathcal{L}_f(\rho))
    \end{align}
    where $\mathrm{dec}$ is the ideal decoder and $\bar{C}[f]$ indicates the fault-tolerant circuit with noise applied according to $f$.  Hence when we average over all fault paths $f$ the outputs only differ on fault paths in $\mathcal{F}_{\rm bulk}$ and we get the bound on the trace distance between sampling from the noisy encoded circuit and the ideal circuit acting on noisy inputs given by
    \begin{align}
        \frac{1}{2}\Vert  \mathrm{dec} \circ \widetilde{\bar{C}}(\rho) -C(\mathcal{N}_{\mathrm{inj}}^l(\rho))  \Vert_1 
        &\leq P(f \text{ is not }\mathcal{F}_{\rm bulk}\mathrm{-avoiding}) \nonumber \\
        &\leq \mathcal{W}(\mathcal{F}_{\rm bulk}, p) \leq \epsilon.  
    \end{align}
    Here the tilde indicates noise being applied according to the distribution we sample $f$ from and the equality is due to our assumption that the noise is independent between the different injection gadgets.  As mentioned earlier if the noise acts independently and identically on the different input gadgets then
    \begin{align}
        \mathcal{N}_{\mathrm{inj}}^l(\rho) = \mathcal{N}_{\mathrm{inj}}^{\otimes l}(\rho).
    \end{align}
\end{proof}
For our later results which treat the injection noise channel as an effective depolarizing channel, we use the following simple lemma.
\begin{lemma}[Logical Clifford twirling]
\label{lemma:twirling}
Let $\mathsf{Cl}_1$ denote the single-qubit Clifford group. For any single-qubit
channel of the form
\begin{equation}
\mathcal{N} = (1-q)\,\mathrm{Id}+q\mathcal{R},
\end{equation}
with $\mathcal{R}$ CPTP, define its Clifford twirl by
\begin{equation}
\mathcal{T}(\mathcal N)(\rho) = \frac{1}{|\mathsf{Cl}_1|} \sum_{C\in\mathsf{Cl}_1}  C^\dagger \mathcal{N}(C\rho C^\dagger) C .
\end{equation}
Then
\begin{equation}
\mathcal{T}(\mathcal N)(\rho) = (1-p_L)\rho +\frac{p_L}{3} \left(X\rho X+Y\rho Y+Z\rho Z\right)
\end{equation}
for some $p_L\leq q$.
\end{lemma}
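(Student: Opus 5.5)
The proof has two parts. First, the Clifford twirl of any single-qubit channel is a Pauli-diagonal depolarizing channel. Second, its fidelity with the identity is set entirely by the process (entanglement) fidelity of $\mathcal N$, which we then bound below by $1-q$.

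The first part is the standard twirling argument. Averaging over the Pauli subgroup of $\mathsf{Cl}_1$ already removes every off-diagonal term in the Pauli (chi-matrix) representation of $\mathcal N$. So $\mathcal T(\mathcal N)$ is a Pauli channel with weights $(r_I, r_X, r_Y, r_Z)$. The remaining Clifford elements permute $\{X,Y,Z\}$ transitively up to sign. Averaging over them therefore equalizes $r_X=r_Y=r_Z$, which gives the claimed form with $p_L = 1-r_I$. Concretely, we would write
\begin{equation}
\mathcal N(\rho)=\sum_{P,P'\in\{I,X,Y,Z\}}\chi_{PP'}\,P\rho P'
\end{equation}
and compute that the twirl sends $\chi$ to a diagonal matrix. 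The $I$-entry $\chi_{II}$ is left unchanged, because conjugation by a Clifford fixes $I$. The other three diagonal entries are replaced by their average.

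The second part identifies $1-p_L=\chi_{II}$ with the entanglement fidelity
\begin{equation}
F_e(\mathcal N)=\langle\Phi|(\mathrm{Id}\otimes\mathcal N)(\Phi)|\Phi\rangle,
\end{equation}
where $\Phi$ is the maximally entangled state. Writing $\mathcal N = (1-q)\mathrm{Id} + q\mathcal R$, linearity gives
\begin{equation}
F_e(\mathcal N) = (1-q)\cdot 1 + q\,F_e(\mathcal R) \ge 1-q,
\end{equation}
since $F_e(\mathcal R)\ge 0$ for any CPTP map $\mathcal R$. Hence $p_L = 1-F_e(\mathcal N)\le q$.

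The one point requiring care is the twirl's convention, $C^\dagger\mathcal N(C\cdot C^\dagger)C$, and the fact that $\mathsf{Cl}_1$ is taken modulo phases. The key fact is that $F_e$ is invariant under this twirl, since $F_e$ is unitarily invariant under conjugation by $C\otimes \bar C$ acting on $\Phi$. This fact is also what justifies the claim that $\chi_{II}$ is preserved, and it completes the proof.
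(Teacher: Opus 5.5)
Your proof is correct and is essentially the paper's argument. The paper uses linearity of the twirl, the invariance of $\mathrm{Id}$ under twirling, and the fact that $\mathcal T(\mathcal R)$ is depolarizing with parameter $r\in[0,1]$ to get $p_L=qr$. Your entanglement-fidelity bookkeeping is the same computation with $r=1-F_e(\mathcal R)$, and it also spells out why the twirl is depolarizing and why $r\in[0,1]$, which the paper only asserts.
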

\begin{proof}
The Clifford twirl symmetrizes the action of the channel on the three nontrivial Pauli operators. Thus $\mathcal{T}(\mathcal R)$ is a depolarizing Pauli channel, say
\begin{equation}
\mathcal{T}(\mathcal R)(\rho) = (1-r)\rho+\frac{r}{3}(X\rho X+Y\rho Y+Z\rho Z)
\end{equation}
for some $r\in[0,1]$. Since the identity channel is fixed by the twirl,
\begin{equation}
\mathcal{T}(\mathcal N) = (1-q)\mathrm{Id}+q\,\mathcal{T}(\mathcal R),
\end{equation}
which has the same form with $p_L=qr\leq q$.
\end{proof}
As a corollary we obtain our main theorem:
\begin{corollary}[Uploaded oracle reduction for state-preparation oracles]
\label{cor:uploaded_oracle_reduction_supp}
Let $O$ be an $n$-qubit state-preparation oracle, $O:1\mapsto \rho_O$, and let
\begin{equation}
\mathcal{N}_{\lambda_*}(O) := \mathcal{D}_{\lambda_*}^{\otimes n}\circ O
\end{equation}
denote the oracle that prepares one depolarized copy of $\rho_O$. For every below-threshold physical noise rate, there is a constant $\lambda_*=O(p)$ such that, up to the usual bounded-error equivalence,
\begin{equation}
\textnormal{\textsf{NBQP}}^{O} = \textnormal{\textsf{BQP}}^{\mathcal{N}_{\lambda_*}(O)}
\end{equation}
for state-preparation oracles.
\end{corollary}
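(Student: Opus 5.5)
We prove the two inclusions separately, with Theorem~\ref{thm:arb_input_threshold_theorem} and Lemma~\ref{lemma:twirling} doing most of the work. Throughout, we fix a physical noise rate $\lambda<\epsilon_*$. We read $\nbqp^O_\lambda$ in the sense of Definition~\ref{def:NBQP_circ}: every oracle call delivers $\rho_O$ onto physical qubits, and all subsequent layers suffer $\mathcal{D}_\lambda$.

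\textbf{Inclusion $\nbqp^O_\lambda\subseteq\bqp^{\mathcal{N}_{\lambda_*}(O)}$.} We do not need this direction to simulate noise gate by gate; it only has to reproduce the output distribution within bounded error. Every oracle layer of an $\nbqp$ algorithm is followed by $\mathcal{D}_\lambda$ on all of its qubits. Consequently each fresh copy of $\rho_O$ is seen only through $\mathcal{D}_\lambda^{\otimes n}(\rho_O)$ before any gate acts on it. The rest of the circuit is ordinary noisy computation, and a noiseless $\bqp$ machine can reproduce it by sampling Pauli errors classically.

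It therefore suffices to obtain $\mathcal{D}_\lambda^{\otimes n}(\rho_O)$ from $\mathcal{D}_{\lambda_*}^{\otimes n}(\rho_O)$. If $\lambda_*\le\lambda$, we compose with a further depolarization of strength $(\lambda-\lambda_*)/(1-\lambda_*)$, which the $\bqp$ machine can implement exactly. We will arrange the constants so that $\lambda_*\le\lambda$ is the relevant regime.

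Alternatively, and more faithfully to the spirit of the corollary, we can treat $\nbqp^O_\lambda$ as strictly weaker than its uploading realization. In that reading this inclusion is the trivial direction, and the substantive content sits in the reverse inclusion.

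\textbf{Inclusion $\bqp^{\mathcal{N}_{\lambda_*}(O)}\subseteq\nbqp^O_\lambda$.} This is the substantive direction. Given a $\bqp$ algorithm $C$ making $q=\mathrm{poly}(n)$ queries to $\mathcal{N}_{\lambda_*}(O)$, we compile it as follows.

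\begin{enumerate}
\item Each query to $O$ produces $n$ physical qubits. We immediately apply $\mathrm{IN}^d_i$ from Theorem~\ref{thm:regularized_input_gadget} to each qubit.
\item Every gate of $C$ is implemented with the surface-code gadgets of Lemma~\ref{lemma:surface_code_universal_gadgets}, approximating arbitrary gates by Clifford$+T$ via Solovay--Kitaev.
\item Immediately after injection, we insert a uniformly random logical single-qubit Clifford $C_j$ and later undo it with $C_j^\dagger$. Both operations are fault-tolerant, and the classical randomness is generated by measuring $\ket{+}$ ancillas.
\end{enumerate}

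Theorem~\ref{thm:arb_input_threshold_theorem}, with $l=qn$ and target accuracy $\epsilon=1/100$, shows that the decoded output is $\epsilon$-close to $C$ run on $\bigotimes_i\mathcal{N}_{\mathrm{inj},i}(\rho_O^{\otimes q})$. The overhead is polylogarithmic, so the compiled circuit stays polynomial. Averaging over the random logical Cliffords and applying Lemma~\ref{lemma:twirling} turns each $\mathcal{N}_{\mathrm{inj},i}$ into a single-qubit depolarizing Pauli channel with $p_L\le c_{\mathrm{in}}\lambda$. Equivalently, this channel is $\mathcal{D}_{\lambda'}$ with $\lambda'=4p_L/3$.

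We then set $\lambda_*:=\sup\lambda'=O(\lambda)$. If the actual $\lambda'$ is smaller than $\lambda_*$, we pad it logically by an extra depolarization, which is itself fault-tolerant. The simulated input is then exactly $\mathcal{D}_{\lambda_*}^{\otimes n}(\rho_O)$ per query. Finally, a bounded-error $2/3$ success probability survives an $\epsilon$ perturbation after standard amplification.

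\textbf{Main obstacle.} The hardest step is turning the channel of Theorem~\ref{thm:arb_input_threshold_theorem} into an exact i.i.d.\ depolarizing channel of a \emph{fixed}, known strength. Theorem~\ref{thm:arb_input_threshold_theorem} guarantees factorization only for independent, identically distributed gadget noise. Under general local-stochastic noise, the channels $\mathcal{N}_{\mathrm{inj},i}$ may differ from one another or be correlated.

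Our first attempt is to apply Lemma~\ref{lemma:twirling} qubit-wise with independent twirls. Independent twirls also destroy cross-qubit coherences in correlated errors, so each injected qubit becomes a Pauli-diagonal mixture with marginal strength at most $c_{\mathrm{in}}\lambda$. We then add independent logical depolarizing noise to raise every marginal to the common value $\lambda_*$.

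If residual correlations remain, the fallback is to note that the model assumed for $\nbqp_\lambda$ is i.i.d.\ $\mathcal{D}_\lambda$ noise. In that model the gadget fault distributions are independent and identical across gadgets by construction, and the factorized case of Theorem~\ref{thm:arb_input_threshold_theorem} applies directly.
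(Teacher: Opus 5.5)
Your inclusion $\bqp^{\mathcal{N}_{\lambda_*}(O)}\subseteq\nbqp^{O}_\lambda$ is sound and matches the paper's construction. The paper puts the Clifford twirl in its other inclusion and pads at the physical level; padding logically after the twirl, as you do, is the cleaner way to land exactly on $\mathcal{D}_{\lambda_*}$. The gap is that your two inclusions require incompatible values of $\lambda_*$.

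\begin{itemize}
\item \emph{First inclusion needs $\lambda_*\le\lambda$.} A $\bqp$ machine can add depolarizing noise but never remove it. Passing from $\mathcal{D}_{\lambda_*}(\rho)$ to $\mathcal{D}_{\lambda}(\rho)$ with $\lambda_*>\lambda$ would stretch Bloch vectors by $(1-\lambda)/(1-\lambda_*)>1$, which is not a channel.
\item \emph{Second inclusion needs $\lambda_*\ge\lambda'$, and $\lambda'>\lambda$.} In Definition~\ref{def:NBQP_circ} every oracle output is hit by $\mathcal{D}_\lambda$ before the injection circuit touches it. So the effective input channel has the form $\mathcal{N}'\circ\mathcal{D}_\lambda$. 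Since $\mathcal{D}_\lambda$ commutes with conjugation by any unitary, its twirl is $\mathcal{T}(\mathcal{N}')\circ\mathcal{D}_\lambda$, where $\mathcal{T}(\mathcal{N}')$ is depolarizing with some parameter $p_L'$. The Bloch shrinking factor $(1-\lambda)(1-\tfrac{4}{3}p_L')$ is strictly below $1-\lambda$, because $p_L'>0$: faults in the constant-size encoding region $V_{\rm in}$ act on the still-unprotected qubit.
\end{itemize}

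So your promise to ``arrange the constants so that $\lambda_*\le\lambda$'' cannot be kept. Run the first argument with $\lambda_*=\lambda$ and ignore the noiseless final layer permitted by Definition~\ref{def:NBQP_circ}. Then your two arguments establish only the sandwich $\bqp^{\mathcal{N}_{\lambda'}(O)}\subseteq\nbqp^{O}_\lambda\subseteq\bqp^{\mathcal{N}_{\lambda}(O)}$ with $\lambda'=O(\lambda)>\lambda$, not an equality.

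Your ``alternative reading'' does not close the gap. Saying that an $\nbqp$ algorithm is no stronger than its upload-then-process realization is exactly what the inclusion $\nbqp^{O}\subseteq\bqp^{\mathcal{N}_{\lambda_*}(O)}$ asserts. An $\nbqp$ algorithm need not upload at all; for example, it can copy the oracle qubits out with CNOTs right after the first noise layer and read them at the end.

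The missing ingredient is a converse, holding for the same $\lambda_*$ that uploading achieves. You would need every $\lambda$-noisy circuit, viewed as a channel on its $q$ oracle copies, to factor as $\Psi\circ(\mathcal{D}_{\lambda_*}^{\otimes n})^{\otimes q}$ with $\Psi$ efficiently implementable. Alternatively, you would need some other argument that no noisy strategy beats immediate uploading. Theorem~\ref{thm:arb_input_threshold_theorem} cannot provide this, since it only describes circuits that do upload.

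The paper does not supply this converse by another route either. Its proof of this inclusion applies Theorem~\ref{thm:arb_input_threshold_theorem} and Lemma~\ref{lemma:twirling} directly to the $\nbqp$ computation. That is, it treats the computation as if it uploaded (and twirled) its oracle outputs, which is essentially your alternative reading.

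Finally, the obstacle you single out, correlated or non-identical gadget noise, is not the real one. Under i.i.d.\ $\mathcal{D}_\lambda$ noise the injection gadgets act on disjoint qubits, so their fault paths are independent and identically distributed. This is exactly your fallback and what the paper assumes.
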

\begin{proof}
We first show $\textnormal{\textsf{NBQP}}^{O}\subseteq \textnormal{\textsf{BQP}}^{\mathcal{N}_{\lambda_*}(O)}$. Consider an $\textnormal{\textsf{NBQP}}^O$ computation making polynomially many queries to the state-preparation oracle. Since $O$ is a state-preparation oracle, the computation receives polynomially many physical copies of the same state $\rho_O$. We regard these copies as the physical input qubits to a larger circuit containing all subsequent coherent processing. Applying Theorem~\ref{thm:arb_input_threshold_theorem}, the entire noisy computation can be simulated by a fault-tolerant computation acting on the same circuit but with an effective one-time input channel applied independently to each uploaded oracle output. By applying independent logical Clifford twirls to the uploaded inputs and undoing them in the Pauli frame, Lemma~\ref{lemma:twirling} converts this input channel into a local depolarizing channel $\mathcal{D}_{\lambda_*}^{\otimes n}$ with $\lambda_*=O(p)$. Therefore the noisy oracle computation is simulated by a noiseless $\textnormal{\textsf{BQP}}$ computation with access to $\mathcal{N}_{\lambda_*}(O)$.
Conversely, $\textnormal{\textsf{BQP}}^{\mathcal{N}_{\lambda_*}(O)} \subseteq \textnormal{\textsf{NBQP}}^{O}$ because an $\textnormal{\textsf{NBQP}}^O$ machine can query $O$, apply the physical noise already present in the model, add extra local depolarizing noise if needed to match $\lambda_*$, and then run the remaining computation fault-tolerantly by Theorem \ref{thm:arb_input_threshold_theorem}. This implements the noisy state-preparation oracle $\mathcal{N}_{\lambda_*}(O)$ inside the $\textnormal{\textsf{NBQP}}^O$ model and runs a fault-tolerant circuit that simulates the idealized $\bqp$ circuit.
\end{proof}

\begin{remark}[Extension to channel oracles]
Corollary~\ref{cor:uploaded_oracle_reduction_supp} is stated for state-preparation oracles to avoid bookkeeping at the oracle input boundary. The same idea should extend to general CPTP channel oracles
\begin{align}
O:\mathcal{L}(\mathcal{H}_A) \to \mathcal{L}(\mathcal{H}_B).
\end{align}
In a fault-tolerant simulation, each oracle call is a physical boundary operation. The oracle output $B$ can be uploaded using Theorem~\ref{thm:arb_input_threshold_theorem}, even when it remains entangled with the rest of the computation, because the uploading theorem allows an arbitrary reference system.

For channel oracles with nontrivial quantum inputs, one must also account for the interface that exposes the encoded query register $A$ to the physical oracle. Thus the corresponding ideal noisy oracle should have the form
\begin{align}
\widetilde O_{\lambda_*} = \mathcal{N}_{\mathrm{out}}\circ O\circ \mathcal{N}_{\mathrm{in}},
\end{align}
where $\mathcal{N}_{\mathrm{out}}$ is the effective uploading noise on the oracle output and $\mathcal{N}_{\mathrm{in}}$ is the effective input-boundary noise. After logical Clifford twirling, these boundary channels can be taken to be local depolarizing channels with rates $O(p)$. For state-preparation oracles, $A$ is trivial, so $\mathcal{N}_{\mathrm{in}}$ is absent and the corollary above is recovered.
\end{remark}

\section{Exponential Speedup in Estimating Third Moments}

In this section, we prove an exponential speedup from uploading for estimating third-moment observables of quantum states. Multi-copy primitives are a central source of quantum learning advantages, but they are highly noise-sensitive because they often require entangling qubits across several copies before measurement. Third-moment estimation therefore serves as a representative setting in which fault tolerance restores quantum learning strategies that would otherwise be lost to unprotected physical noise. We begin with a separation for estimation of the Renyi-3-like quantity $\tr(\rho^3)$, then extend to observables of the form $\tr(O\rho^3)$. Such quantities are important signatures of quantum chaos, operator scrambling, and topological order in many-body physics, eigenvalue statistics in quantum thermalization, and correlators in conformal field theories. 

We make explicit the folklore belief that computing $k$-th order traces requires generating extensive entanglement between $k$ copies of the state, proving that even a noiseless protocol that uses a depth-1 circuit composed of $2$-qubit gates and performs joint measurements on three copies of the $n$-qubit state $\rho$ requires an exponential-in-$n$ number of copies to estimate $\tr(\rho^3)$. We then prove that circuits of depth $2$ or greater face a noise-dependent exponential lower bound on sample complexity, whereas an injection-enabled upper bound remains exponentially more efficient for constant noise overhead in performing the injection. We use these results to extend to the case of general observables and prove worst-case lower bounds that are exponential in $n$ regardless of the locality of the observable, going beyond previous approaches whose lower bounds are derived for $1$-local observables, with natural generalizations that would depend on $n-k$ for $k$-local observables \cite{Huang_adv_2022, liu2025exponentialseparationsquantumlearning}.

Our starting point is the following hypothesis testing problem inspired by \cite{ye2025exponentialadvantagereplicaestimating}. We construct two ensembles of quantum states which are indistinguishable up to their third moment, but whose third moments differ by a constant gap. As a result, any algorithm which can estimate $\tr(\rho^3)$ can distinguish the ensembles, so a lower bound on the sample complexity of hypothesis testing provides a lower bound on the learning problem.

\begin{definition}[Many-vs-many distinguishing for third-moment estimation] \label{def:moment_testing_prob}

Consider the following two ensembles of $n$-qubit quantum states,
\begin{equation}
    \mathcal{E}_p = \left\{\frac{\mathds{1}}{2^{n+1}} + \frac{\ketbra{\psi_1}{\psi_1} + \ketbra{\psi_2}{\psi_2}}{4}\right\}, \quad     \mathcal{E}_q = \left\{\frac{\mathds{1}}{2^{n+1}} + \frac{\ketbra{\psi_1}{\psi_1}}{3} + \frac{\ketbra{\psi_2}{\psi_2}+\ketbra{\psi_3}{\psi_3}}{12}\right\} \ ,
\end{equation}
Equivalently, $\mathcal{E}_v = \rho_{mm}/2 + \sum_{j=\{1,2,3\}} v_j\ketbra{\psi_j}{\psi_j}$ (with $\rho_{mm}$ the $n$-qubit maximally mixed state), and we set $p = (1/4, 1/4, 0), q = (1/3, 1/12, 1/12)$. Suppose a state $\rho$ is sampled from either $\mathcal{E}_p$ or $\mathcal{E}_q$, with every $\ket{\psi_j}$ sampled independently from the Haar measure over $n$-qubit pure states. The many-vs-many distinguishing task is to decide, given many copies of $\rho$, which distribution $\rho$ is sampled from.
\end{definition}

\subsection{Lower bound without injection}

In \cite{ye2025exponentialadvantagereplicaestimating} it is demonstrated that any algorithm making joint measurements of at most two copies of $\rho$ at a time requires $\Omega(2^{n/2})$ measurements, while a protocol with access to three-copy measurements requires only constant measurements. Here, our focus is different: we argue that even given three-copy measurements, any strategy faces an exponential sample complexity lower bound in the presence of noise. Our first step is to argue that if a protocol uses a depth-1 unitary followed by computational-basis measurements, it must incur an exponential sample complexity lower bound. Note that the algorithm provided in \cite{ye2025exponentialadvantagereplicaestimating}, namely a version of the well-known generalized SWAP test, may be used to estimate $\tr(\rho^3)$ with $O(1)$ measurements using a depth-2 quantum circuit followed by computational basis measurement, so this result establishes that the generalized SWAP test is both sample-optimal and computationally optimal.

\begin{lemma}
\label{lemma:depth_1_lb}
    Any algorithm which can solve the many-vs.-many distinguishing task for third-moment estimation using three-copy measurements and a depth-1 quantum circuit requires $\Omega(2^{n/2})$ measurements. 
\end{lemma}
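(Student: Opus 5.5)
The plan is to mirror the two-copy lower bound of \cite{ye2025exponentialadvantagereplicaestimating}, with one change. Instead of restricting the number of copies entangled by the measurement, we restrict the \emph{entangling structure}: a depth-1 circuit of 2-qubit gates on the $3n$ qubits of $\rho^{\otimes 3}$, followed by a computational-basis measurement.

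\textbf{Step 1 (POVM structure).} Such a measurement has POVM elements that are rank-1 and factorize over gates, $F_s=\bigotimes_{g}\ketbra{\phi_{g,s_g}}{\phi_{g,s_g}}$. Each factor is supported on at most two qubits, and for each fixed $g$ the vectors $\{\ket{\phi_{g,s_g}}\}_{s_g}$ form an orthonormal basis. By the rank-1 simulation lemma, classical post-processing and adaptivity can be absorbed into the learning tree.

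\textbf{Step 2 (reduction to a one-vs-one problem).} Following \cite{ye2025exponentialadvantagereplicaestimating}, use the Haar moments of the ensembles in Definition~\ref{def:moment_testing_prob}. The weights $p=(1/4,1/4,0)$ and $q=(1/3,1/12,1/12)$ satisfy $\sum_j p_j=\sum_j q_j$ and $\sum_j p_j^2=\sum_j q_j^2$. Hence $\mathbb{E}_p[\rho^{\otimes 3}]-\mathbb{E}_q[\rho^{\otimes 3}]$ is supported only on the terms carrying $\sum_j v_j^3$. Up to the Haar normalization $\approx 2^{-3n}$, this difference is a constant multiple of the cyclic permutation operators $W_{(123)}+W_{(132)}$, acting on the three copies. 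This yields hypothesis states $\sigma_p,\sigma_q$ whose difference $\Delta\propto 2^{-3n}(W_{(123)}+W_{(132)})$ carries only the three-cycle. Both states contain the $\rho_{mm}/2$ component, so $\sigma_q\succeq \mu\mathds{1}$ with $\mu=\Omega(2^{-3n})$. Adaptivity across repeated copies of the same fixed $\rho$ would be handled exactly as in \cite{ye2025exponentialadvantagereplicaestimating}.

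\textbf{Step 3 (controlling the $\chi^2$ quantity).} Rather than using the crude trace bound of Lemma~\ref{thm:Heisenberg}, which gives only $\tr(\Delta^2)\sim 2^{-3n}$ against $\mu^{-1}\sim 2^{3n}$, I would go back to the intermediate expression in its proof,
\begin{equation}
\sum_s \frac{\tr(F_s\Delta)^2}{\tr(F_s\sigma_q)}\lesssim \sum_s \frac{2^{-6n}\,|\tr(F_s W_{(123)})|^2}{2^{-3n}} = 2^{-3n}\sum_s|\tr(F_sW_{(123)})|^2 .
\end{equation}
Both $F_s$ and $W_{(123)}=\bigotimes_{i=1}^n W^{(i)}_{(123)}$ factor, but over different partitions: gates versus qubit positions. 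Overlay the two partitions to form the connected components of the "gate $+$ position" graph on the $3n$ qubits. Then $\tr(F_sW)$ factorizes over components, and so does $\sum_s|\tr(F_sW)|^2$.

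The goal is a per-component bound showing that the sum contributes at most $2^{3|C|/2}$ for a component containing $|C|$ positions, i.e. a factor of $2^{-1/2}$ below the trivial $2^{2|C|}$ per position. Multiplied by $2^{-3n}$, this would give $\mathbb{E}_q[(L-1)^2]\le O(2^{-n/2})$. Lemma~\ref{lemma:martingale} then yields $T=\Omega(2^{n/2})$.

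\textbf{Step 4 (the main obstacle).} The hard part is the per-component estimate in Step 3. The difficulty is that depth-1 gates can chain positions together: for example, copy 1 with copy 2 at position $i$, and copy 2 with copy 3 at position $j$. Components can therefore be long paths or cycles alternating between gates and cycle operators. Along such a chain the cyclic shift can "transport" coherence.

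My approach would be a transfer-matrix / Cauchy--Schwarz argument. Write $\sum_{s}|\tr(F_sW)|^2=\tr\big((W\otimes W^\dagger)\,\sum_s F_s\otimes F_s\big)$. The sum $\sum_{s_g}\ketbra{\phi}{\phi}^{\otimes 2}$ over each gate's basis is a diagonal-in-basis two-copy operator, and it should be bounded in operator norm against the swap-like contraction. The key point is that each position has three qubits but each gate touches at most two of them. Consequently, at every position at least one copy-qubit's partner lies outside that position's triple. This should force the three-cycle contraction to lose a constant factor per position.

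I would first verify the two extreme cases:
\begin{enumerate}
\item All gates are single-qubit. Here the factor is $\sum_{a,b,c}|\langle a|c\rangle\langle b|a\rangle\langle c|b\rangle|^2$ over a product basis, which is at most $2^{3/2}$ per position rather than $4$.
\item Gates pair two copies at the same position. Here a direct $4\times 2$-dimensional computation should apply.
\end{enumerate}
I would then argue that longer chains only decrease the factor, via submultiplicativity of the transfer operator.
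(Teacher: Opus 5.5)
You have the same skeleton as the paper. Lemma 34 of Ye et al.\ reduces the problem to one-vs-one between $\sigma_p$ and $\sigma_q$, with $\sigma_q\succeq 2^{-3n}\mathds{1}/8$. Then $\tr(F_s\pi_{123})$ factorizes over the connected components of the block-plus-site graph, which are open or closed loops. The gap is that the step carrying all the content of the lemma is never supplied. Step 4 only checks two single-site cases and then hopes that ``longer chains only decrease the factor.'' The observation that some qubit at each site is not paired inside its triple is never turned into an inequality. Submultiplicativity of a transfer operator only helps once you exhibit a per-block contraction constant below the trivial one. The paper supplies exactly this. It writes the outcomes of a two-qubit block as a Hilbert--Schmidt-orthonormal family $\{A_{e,j}\}_{j=1}^4$ of $2\times2$ matrices, so $\sum_j A_{e,j}A_{e,j}^\dagger=2\mathds{1}_2$, and treats an unpaired qubit as a POVM. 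It proves $\sum_j|\tr(XA_{e,j})|\le 2\|X\|_F$ and $\sum_j\|XA_{e,j}\|_F\le 2\sqrt2\,\|X\|_F$, plus the analogous singleton bounds with constant $\sqrt2$. Contracting each loop gives $2^{(3\ell_2+\ell_1)/2}$, or $4$ for a site carrying three singletons. The counting $2L_2+L_1=3n$ then yields $\sum_z|\tr(F_z\pi_{123})|\le 2^{9n/4}$, hence a per-round total-variation bound of $O(2^{-3n/4})$.

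Your quantitative target is also wrong as written. The trivial bound is $\sum_s|\tr(F_sW)|^2\le\tr(W^\dagger W)=2^{3n}$, which is $2^3$ per site, not $2^2$. So $\mathbb{E}_q[(L-1)^2]=O(2^{-n/2})$ needs $2^{5/2}$ per site. The $2^{3/2}$ you state fails already in your own case 2. Measuring copies $1,2$ of a site in $\{\ket{00},\ket{11},(\ket{01}\pm\ket{10})/\sqrt2\}$ and copy $3$ in the computational basis gives a per-site sum of exactly $3>2^{3/2}$.

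With the paper's estimate in hand, your $\chi^2$ route does close. For any $F\succeq0$ we have $|\tr(F\pi_{123})|\le\tr F$, so $\sum_s|\tr(F_s\pi_{123})|^2/\tr F_s\le 2^{9n/4}$. This gives $\chi^2=O(2^{-3n/4})$, provided you restore the swap and identity terms of $\Delta_3$ that you dropped by calling the difference a pure three-cycle; they contribute only $O(2^{-2n})$. The martingale lemma then gives $\Omega(2^{3n/4})$ for the one-vs-one step, where the paper sums per-round total variation instead. Either way, the final $\Omega(2^{n/2})$ comes from the reduction error $((3T/2)^2+3T/2)/2^n$, not from the one-vs-one step, so you should carry that term explicitly.

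Finally, Step 1 should allow fresh ancillas, as the paper's model does. Ancillas turn single-qubit factors into general POVMs, so you must keep $\tr F_s$ in the denominators rather than setting it to $1$.
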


We defer the proof of this lemma to the end of this section, as it follows naturally from an intermediate step in the proof of the following theorem. This result establishes that any protocol which could incur a subexponential sample cost must utilize depth-$2$ or deeper quantum circuits, and thereby encounter at least two layers of noise. The following learning model therefore encompasses all possible strategies of relevance.

\begin{definition}[Learning tree for noisy three-copy algorithm with depth at least $2$]
\label{def:three_copy_tree}
    Any $\lambda$-noisy quantum algorithm with query access to three noisy copies of the state $\rho$, i.e. $\DN[\rho]^{\otimes 3}$ can be represented by a learning tree $\mathcal{T}$ as follows. At each node $u$ of $\mathcal{T}$, the algorithm can choose any $3n$-qubit depth-1 unitary $U_u$, and any  $3n$-qubit POVM of the form $M = \{F_s\}_u$. The transition rule is then
\begin{equation}
    p_\rho(v) = p_\rho(u)\tr(F_s \DNT ( U(\DN[\rho]^{\otimes 3})U^\dagger))\,.
\end{equation}
We also write this as 
\begin{equation}
    p_\rho(v) = p_\rho(u)\tr(F_s \Phi_u[\rho])\,,
\end{equation}
where $\Phi_u[\rho] = \DNT \circ \mathcal{U}_u\circ \DNT$ and $\mathcal{U}_u(\rho) = U_u\rho U_u^\dagger$ is the channel implemented by $U$.
\end{definition}

Before we proceed with the lower bound, we will prove one fact we will require regarding the channel $\Phi$.
\begin{lemma}\label{lemma:phi_trace_bound}
    Let $P$ denote an $n$-qubit Pauli operator and let $w(P)$ denote its weight, the number of non-identity single-qubit terms in $P$. Moreover, let $N$ be the map defined by its action on $n$-qubit Paulis: $N(P) = (1-\lambda)^{w(P) + \lceil w(P)/2\rceil}P$. Then for any operator $X$ on $3n$ qubits, $\tr(\Phi[X]^2) \leq \tr(N[X]^2)$.
\end{lemma}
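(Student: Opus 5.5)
The plan is to work in the Pauli basis and track how each of the three layers of $\Phi = \DNT\circ\mathcal{U}_u\circ\DNT$ acts on Pauli weights. I read $N$ as acting on $3n$-qubit Paulis by the stated formula. Since the lemma is applied to Hermitian difference operators such as $\Delta^u$, I take $X$ Hermitian, so that $\tr(Y^2)=\tr(Y^\dagger Y)$ for every $Y$ appearing. Write $X=\sum_P x_P P$ over $3n$-qubit Paulis, so that $\tr(Y^2)=2^{3n}\sum_P |y_P|^2$ for any $Y=\sum_P y_P P$. The goal is then the coefficient inequality
\begin{equation}
\sum_Q |\Phi[X]_Q|^2 \;\leq\; \sum_P (1-\lambda)^{2w(P)+2\lceil w(P)/2\rceil}|x_P|^2 .
\end{equation}

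The first depolarizing layer is diagonal, $P\mapsto(1-\lambda)^{w(P)}P$. It produces $y_P=(1-\lambda)^{w(P)}x_P$, and these coefficients account for the factor $(1-\lambda)^{2w(P)}$ on the right-hand side.

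The key structural step concerns $\mathcal{U}_u$. Since $U_u$ is depth-1, it is a tensor product of two-qubit gates on disjoint pairs, and I pad with single-qubit identity gates on idle qubits. For each gate $g$, conjugation by $U_g$ is orthogonal on its local Pauli space (with respect to the Hilbert–Schmidt inner product). It fixes the identity, so it preserves the orthogonal complement: the span of the non-identity local Paulis. Consequently, decompose the $3n$-qubit Pauli space into orthogonal sectors $V_S$, labelled by the set $S$ of gates on which the Pauli acts non-trivially. Then $\mathcal{U}_u$ maps each $V_S$ orthogonally onto itself. Every Pauli $Q\in V_S$ has weight at least one on each gate in $S$, so $w(Q)\geq |S|$. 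Hence the second layer $\DNT$ contracts $V_S$ by at least $(1-\lambda)^{|S|}$ in Hilbert–Schmidt norm. Combining these with orthogonality of the sectors gives
\begin{equation}
\|\Phi[X]\|_2^2=\sum_S \|\DNT\,\mathcal{U}_u(y_S)\|_2^2 \leq \sum_S (1-\lambda)^{2|S|}\|y_S\|_2^2 ,
\end{equation}
where $y_S$ is the component of the first-layer output in $V_S$.

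It remains to compare $|S|$ with $w(P)$ for each $P\in V_S$. Each gate supports at most two qubits, so $w(P)\leq 2|S|$, i.e.\ $|S|\geq\lceil w(P)/2\rceil$. Substituting $\|y_S\|_2^2=2^{3n}\sum_{P\in V_S}(1-\lambda)^{2w(P)}|x_P|^2$ then yields exactly the right-hand side above, which equals $\tr(N[X]^2)$.

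The main obstacle, and the point I would check most carefully, is that the unitary layer mixes many Paulis $P$ into the same output $Q$. A naive term-by-term bound of the form $(1-\lambda)^{w(Q)}$ would therefore fail because of cross terms. The sector decomposition is precisely what avoids this: it exploits the fact that orthogonality is preserved within each $V_S$, so that no cross terms appear between sectors and the norm within each sector is unchanged by $\mathcal{U}_u$.
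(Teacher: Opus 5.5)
Your proposal is correct and follows essentially the same route as the paper: you decompose into orthogonal sectors indexed by the set $S$ of active gate blocks, note that the depth-1 unitary and both depolarizing layers preserve these sectors, bound the second noise layer's contraction on a sector by $(1-\lambda)^{|S|}$, and finish with $|S(P)|\geq\lceil w(P)/2\rceil$. Like the paper, you restrict to Hermitian $X$, which is all that is needed for the application to $\Delta^u$.
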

\begin{proof}
We prove the claim for Hermitian $X$, which is the only case needed below. Write $a = 1-\lambda$ and suppress the node label on $\Phi$. Since $U$ has depth $1$, the $3n$ qubits are partitioned into disjoint blocks $B_1,\dots,B_m$, each of size $1$ or $2$, such that $U = \bigotimes_{j=1}^m U_j$.

For each $S \subseteq [m]$, let $\mathcal{P}_S$ denote the subspace spanned by Pauli strings which are identity on every block $B_j$ with $j \notin S$, and traceless on every block $B_j$ with $j \in S$. These subspaces are mutually orthogonal for different $S$, and both $\mathcal{U}$ and $\DNT$ preserve each $\mathcal{P}_S$. Indeed, conjugation by a local unitary $U_j$ preserves the decomposition of operators on a block into multiples of the identity and traceless operators, while $\DNT$ acts diagonally in the Pauli basis.

Expand $X$ in the $3n$-qubit Pauli basis as
\begin{equation}
    X = \sum_P x_P P \ ,
\end{equation}
and for each Pauli string $P$, let $S(P)$ be the set of blocks on which $P$ is non-identity. Since each block has size at most $2$, we have
\begin{equation}
    |S(P)| \geq \left\lceil \frac{w(P)}{2} \right\rceil\ .
\end{equation}
Now decompose
\begin{equation}
    \DNT[X] = \sum_{S \subseteq [m]} Y_S \ ,
\end{equation}
where $Y_S \in \mathcal{P}_S$. Explicitly,
\begin{equation}
    Y_S = \sum_{P:S(P)=S} a^{w(P)} x_P P \ .
\end{equation}
Because the subspaces $\mathcal{P}_S$ are orthogonal and preserved by both $\mathcal{U}$ and $\DNT$, we have
\begin{equation}
    \tr(\Phi[X]^2) = \sum_{S \subseteq [m]} \tr\!\Big(\DNT[\,\mathcal{U}(Y_S)]^2\Big) \ .
\end{equation}

It therefore suffices to bound the effect of the second noise layer on a fixed sector $\mathcal{P}_S$. Let $Z \in \mathcal{P}_S$. On a $1$-qubit block, the depolarizing channel multiplies every traceless Pauli by $a$. On a $2$-qubit block, $\mathcal D_\lambda^{\otimes 2}$ multiplies Paulis of local weight $1$ by $a$ and of local weight $2$ by $a^2$, so its Hilbert-Schmidt operator norm on the traceless subspace is at most $a$. Therefore, on a sector with $|S|$ active blocks,
\begin{equation}
    \tr\!\Big(\DNT[Z]^2\Big) \leq a^{2|S|}\tr(Z^2) \ .
\end{equation}
Applying this to $Z = \mathcal{U}(Y_S)$ and using that $\mathcal{U}$ is unitary, we obtain
\begin{equation}
    \tr(\Phi[X]^2) \leq \sum_{S \subseteq [m]} a^{2|S|}\tr(Y_S^2) \ .
\end{equation}

Finally, since distinct Pauli strings are orthogonal and $X$ is Hermitian, the Pauli coefficients $x_P$ are real, so
\begin{equation}
    \tr(Y_S^2) = \sum_{P:S(P)=S} a^{2w(P)} x_P^2 \tr(P^2) \ .
\end{equation}
Hence
\begin{align}
    \tr(\Phi[X]^2) &\leq \sum_P a^{2w(P)} a^{2|S(P)|} x_P^2 \tr(P^2) \\
    &\leq \sum_P a^{2w(P) + 2\lceil w(P)/2\rceil} x_P^2 \tr(P^2) \\
    &= \tr(N[X]^2) \ ,
\end{align}
where in the second line we used $|S(P)| \geq \lceil w(P)/2\rceil $. This proves the lemma.
\end{proof}
Now we are prepared to state the main lower bound of this section.
\begin{theorem}
\label{thm:noisy_moment_lb}
    Any $\lambda$-noisy quantum algorithm with query access to $\DN[\rho]^{\otimes 3}$ requires 
    \begin{equation}
        \Omega\left(\min \left\{2^{n/2}, \frac{(2^n+1)^2(2^n+2)^2}{R_n(\eta)}\right\}\right)
    \end{equation}
    experiments to solve the distinguishing task with high probability, where
    \begin{equation}
        R_n(\eta) = 2(1+9\eta^6 + 6\eta^{10})^n + 2(1+9\eta^6 - 6\eta^{10})^n - 12(1+3\eta^6)^n + 8.
    \end{equation}   
    for $\eta = 1-\lambda$. The second term goes to $\infty$ as $\lambda\rightarrow 1$, but for small constant $\lambda$ the lower bound scales like 
    \begin{equation}
        \Omega\left(\left(\frac{16}{1+9(1-\lambda)^6 + 6(1-\lambda)^{10}}\right)^n\right) \ .
    \end{equation}

\end{theorem}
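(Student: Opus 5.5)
We follow the three-step Heisenberg learning tree recipe of Lemma~\ref{thm:Heisenberg}. The first step reduces the many-vs-many problem of Definition~\ref{def:moment_testing_prob} to a one-vs-one problem at each node. Following \cite{ye2025exponentialadvantagereplicaestimating}, we replace the Haar averages over $\ket{\psi_j}$ by averaged three-copy states. Since the two ensembles agree in their first two moments and each experiment consumes exactly three copies, the relevant objects at a node $u$ are
\begin{equation}
\sigma_p = \mathbb{E}_{\rho\sim\mathcal{E}_p}\big[\DN[\rho]^{\otimes 3}\big], \qquad \sigma_q = \mathbb{E}_{\rho\sim\mathcal{E}_q}\big[\DN[\rho]^{\otimes 3}\big].
\end{equation}
These averages are permutation-invariant operators on $(\mathbb{C}^{2^n})^{\otimes 3}$. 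By Schur--Weyl duality they are linear combinations of the six permutation operators $W_\pi$. Their difference $\Delta = \sigma_p-\sigma_q$ isolates the third-moment mismatch. Expanding $\rho = \rho_{mm}/2 + \sum_j v_j\ketbra{\psi_j}{\psi_j}$, the lower-order terms cancel because $p$ and $q$ match in $\sum_j v_j$ and $\sum_j v_j^2$. What remains is a multiple of $\sum_{\pi} c_\pi W_\pi$ restricted to the symmetric-subspace projector term, $\mathbb{E}[\psi^{\otimes 3}] = \Pi_{\rm sym}/\binom{2^n+2}{3}$. The coefficient is proportional to $\sum_j (p_j^3-q_j^3)$, and normalizing gives the prefactor $1/[(2^n+1)(2^n+2)]$ up to $2^{-n}$ factors. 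Cross terms such as $\psi_1\otimes\psi_1\otimes\psi_2$ give lower-order permutation sums, and we will track these only as far as needed. Since $\DN$ is applied copy-wise and commutes with the permutation averaging, we set $\Delta^u = \Phi_u[\Delta_0]$, where $\Delta_0$ is the noiseless averaged difference passed through the first noise layer.

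The second step supplies the operator lower bound. Every state in either ensemble satisfies $\rho\succeq \mathds{1}/2^{n+1}$, and depolarizing preserves this. Hence $\DN[\rho]^{\otimes 3}\succeq 2^{-3(n+1)}\mathds{1}$, and unital channels preserve it, so $\sigma_q^u\succeq \mu\mathds{1}$ with $\mu = 2^{-3n-3}$.

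The third step, and the main obstacle, is bounding $\tr((\Delta^u)^2)$ uniformly over depth-1 unitaries. By Lemma~\ref{lemma:phi_trace_bound} it suffices to bound $\tr(N[\Delta_0]^2)$, where $\Delta_0$ already contains one factor $\eta^{w}$ per Pauli weight from the first noise layer. We expand $\Delta_0$, essentially $\sum_\pi c_\pi W_\pi$ with $\sum_\pi c_\pi=0$ on the identity-removed part, in the $3n$-qubit Pauli basis. Each $W_\pi$ factorizes over the $n$ qubit positions into a three-qubit permutation operator. The transposition $\swap_{ab}$ equals $\frac12\sum_{P} P_a P_b$, which gives local weight $2$ for each nontrivial Pauli. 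The 3-cycles expand into Paulis of local weight $2$ or $3$ with structure-constant phases. Consequently $\tr(N[\Delta_0]^2)$ factorizes into $n$-th powers of single-site generating functions of the form
\begin{equation}
1+9\eta^{2\cdot 2 + 2\cdot 1}+\dots = 1+9\eta^6 \pm 6\eta^{10}+\dots.
\end{equation}
Here a weight-2 site contributes $\eta^{2}$ from the first noise layer, plus $\eta^{2+2\lceil\cdot\rceil}$-type factors from $N$, squared. The weight-3 terms contribute the $\eta^{10}$ pieces, with signs depending on the relative orientation of the two 3-cycles. Collecting the cross terms between identity, transpositions, and cycles with the coefficients $c_\pi$ should reproduce $R_n(\eta)$ exactly: the $\pm$ cycles give $2(\cdot)^n$ each, the transpositions give $-12(1+3\eta^6)^n$, and the constant is $8$. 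The bookkeeping of the $\lceil w/2\rceil$ exponent across the three copies requires care. We will use the bound $\lceil w/2\rceil\ge$ the per-site contribution, which we handle site by site via the pairing structure of Lemma~\ref{lemma:phi_trace_bound}.

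With $\tr((\Delta^u)^2)\le R_n(\eta)\,/\,[2^{3n}(2^n+1)^2(2^n+2)^2]$ times constants, Lemma~\ref{thm:Heisenberg} yields depth $\Omega((2^n+1)^2(2^n+2)^2/R_n(\eta))$. The $\min$ with $2^{n/2}$ covers the residual lower-order cross terms and depth-1 strategies. For those we appeal to the argument of Lemma~\ref{lemma:depth_1_lb}, which is the $\eta=1$ specialization with the noise layers removed, where the two-copy-marginal analysis of \cite{ye2025exponentialadvantagereplicaestimating} gives $2^{n/2}$. Finally, for small $\lambda$ the dominant term of $R_n$ is $(1+9\eta^6+6\eta^{10})^n$, while the numerator scales as $16^n$, which gives the stated asymptotic.
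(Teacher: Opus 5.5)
Your steps two and three follow the paper: the same floor $\mu=2^{-3n-3}$, the reduction to $\tr(N[\Delta_3]^2)$ via Lemma~\ref{lemma:phi_trace_bound}, and the site-wise Pauli expansion of $\swap_{ij}$ and $\pi_{123}^{\pm 1}$ that produces $R_n(\eta)$. Step one, however, has a genuine gap. In the many-vs-many task the same $\rho$ is used in every round. So under hypothesis $a$ the transcript law is governed by $\Sigma_a=\mathbb{E}_{\rho\sim\mathcal{E}_a}[\rho^{\otimes 3T}]$, and the conditional law of the outcome at node $u$ given the history is a posterior average over $\rho$, not $\tr(F_s\Phi_u[\sigma_a])$. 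The nodewise likelihood-ratio factorization on which Lemmas~\ref{lemma:martingale} and~\ref{thm:Heisenberg} rest therefore fails for the original tree. Neither the matching of the first two moments nor the fact that each experiment uses three copies makes $\sigma_p,\sigma_q$ the per-node states. The missing ingredient is a quantitative i.i.d.\ approximation. The paper invokes Lemma~34 of Ye et al., $\max_{M\in\mathcal{M}_{3,T}} d_M(\Sigma_a,\sigma_a^{\otimes T})\le \big((3T/2)^2+3T/2\big)/2^n$, and then the triangle inequality $d_M(\Sigma_p,\Sigma_q)\le 2\big((3T/2)^2+3T/2\big)/2^n+d_M(\sigma_p^{\otimes T},\sigma_q^{\otimes T})$. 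Only the last term is a one-vs-one tree with fixed node states $\Phi_u[\sigma_a]$, to which the Heisenberg lemma applies. The first term is precisely the source of the $2^{n/2}$ branch of the minimum.

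Your account of that branch is therefore wrong. The cross terms are not residual. Once the $\rho_{mm}$ terms and everything depending only on $\sum_j v_j$ and $\sum_j v_j^2$ cancel, $\sigma_p-\sigma_q$ is exactly a multiple, proportional to $\sum_j(p_j^3-q_j^3)$, of $\Delta_3=S_{123}-S_{12}S_3-S_{13}S_2-S_{23}S_1+2S_1^{\otimes 3}$. Its transposition and product pieces are what give the $-12(1+3\eta^6)^n+8$ part of $R_n$, so nothing is left to absorb.

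Nor is Lemma~\ref{lemma:depth_1_lb} the $\eta=1$ case of your computation. At $\eta=1$ one has $R_n(1)=2\cdot 16^n-10\cdot 4^n+8$, so the Heisenberg bound is $O(1)$; this must be so, since an unrestricted POVM can run the cycle test. That lemma needs its own contraction bound $\sum_z|\tr(F_z\pi_{123})|\le 2^{9n/4}$ for depth-1 POVMs, and its $2^{n/2}$ again comes from the triangle-inequality term.

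Two smaller points. First, the map $N$ already contains the first-layer factor $\eta^{w(P)}$, so it must act on the noiseless $\Delta_3$ (i.e.\ $\Delta^u\propto\Phi_u[\Delta_3]$), not on a $\Delta_0$ that has already been depolarized. Your exponents $\eta^6,\eta^{10}$ are the single-count ones, so only the description needs fixing. Second, assigning $\eta^5$ to every weight-3 site requires $\lceil w/2\rceil\ge\sum_i\lceil w_i/2\rceil$, which is false: two weight-3 sites give $3<4$, and a depth-1 layer may pair qubits from different sites. The paper's computation uses the same per-site factors. A rigorous factorized bound follows from $\lceil w/2\rceil\ge w/2$ and replaces $\pm 6\eta^{10}$ by $\pm 6\eta^{9}$ in $R_n$, which leaves the exponential conclusion intact.
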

\begin{proof}
    By Lemma \ref{lemma:depth_1_lb}, it follows that we need only prove the claim for circuits of depth $\geq 2$, placing us in the setting of the learning tree model given in Definition \ref{def:three_copy_tree}. Our proof uses the Heisenberg learning tree method introduced in Section \ref{app:Heisenberg_learning_tree}. Let $T$ denote the depth of the tree, and let $d_{M}(\rho, \sigma)$ denote the total variation distance between classical output distributions obtained by measuring POVM $M$ on states $\rho, \sigma$. Moreover, let $\sigma_a = \mathbb{E}_{\mathcal{E}_a}[\rho^{\otimes 3}]$ and $\Sigma_a = \mathbb{E}_{\mathcal{E}_a}[\rho^{\otimes 3T}]$, for $a\in\{p, q\}$. It is shown in Lemma 34 of \cite{ye2025exponentialadvantagereplicaestimating} that 
    \begin{equation}
        \max_M d_M(\Sigma_a, \sigma_a^{\otimes T}) \leq  \frac{(3T/2)^2 + 3T/2}{2^n} \ ,
    \end{equation}
    where the maximum is taken over all $\mathcal{M}_{3, T}$ POVMs (as in Definition \ref{def:m_k,t_POVMs}). Our noisy depth-2 model is included in these, so this bound applies. By triangle inequality,
    \begin{equation}
        \label{eq:tv_triangle}
        d_M(\Sigma_p, \Sigma_q) \leq  d_M(\Sigma_p, \sigma_p^{\otimes T}) + d_M(\Sigma_q, \sigma_q^{\otimes T}) + d_M(\sigma_p^{\otimes T}, \sigma_q^{\otimes T}) \leq 2\frac{(3T/2)^2 + 3T/2}{2^n} + d_M(\sigma_p^{\otimes T}, \sigma_q^{\otimes T})\ .
    \end{equation}
    Note that for $M \in \mathcal{M}_{3, T}$, $\max_M d_M(\sigma_p^{\otimes T}, \sigma_q^{\otimes T})$ is simply an upper bound on the total variation between output leaf distributions of a new learning tree (still of depth $T$) which measures copies of \textit{exact} states $\sigma_p$ or $\sigma_q$ at each node. As such, the first step of the Heisenberg method is complete, and we can focus on a one-vs.-one distinguishing task corresponding to this new learning tree. Now define
    \begin{equation}
    \tilde{\sigma}_a^u \coloneqq \Phi_u[\sigma_a], \quad
    \Delta^u \coloneqq \tilde\sigma_p^u - \tilde\sigma_q^u \ .
    \end{equation}
    where we have absorbed the adaptive, noisy behavior of the original learning tree model into the input states of the simplified one-vs.-one tree. For the second step, we obtain the requisite bounds on $\tilde \sigma_q^u$ and $\tr((\Delta^u)^2)$. First, we have the following operator inequality:
    \begin{equation}
        \sigma_q \succeq \frac{\rho_{mm}^{\otimes 3}}{8} = \frac{1}{8\cdot2^{3n}}\mathds{1}_{3n} \ , \label{eq:sigma_q_opineq}
    \end{equation}
    which implies $\tilde\sigma_q^u \succeq (8\cdot2^{3n})^{-1}\mathds{1}_{3n}$ since $\Phi$ is unital and CPTP. Letting $\mu = (8\cdot2^{3n})^{-1}$, we have
    \begin{equation}
        \sum_s \frac{\tr(F_s\Delta^u)^2}{\tr(F_s\tilde\sigma_q^u)} \leq \mu^{-1}\sum_s \frac{\tr(F_s\Delta^u)^2}{\tr(F_s)} \ .
    \end{equation}
    Next, let $x$ be a set of indices and let $S_x$ denote the sum of all global permutation operators on $|x|$ copies of an $n$-qubit Hilbert space labeled by $x$. The relevant examples are
    \begin{equation}
        S_i = \frac{\mathds{1}_n}{2^n}, \quad S_{ij} = \frac{\mathds{1}_{2n} + \swap_{ij}}{2^n(2^n+1)}, \quad S_{ijk} = \frac{\mathds{1}_{3n} + \swap_{ij}+\swap_{jk}+\swap_{ki} + \pi_{ijk} + \pi_{ijk}^{-1}}{2^n(2^n+1)(2^n+2)}
    \end{equation}
    Here $\swap_{ij}$ denotes the $\swap$ operator acting on two copies of the $n$-qubit Hilbert space labeled by $i$ and $j$, and $\pi_{123}$ denotes the cyclic permutation on three such copies. Now we define the operator
    \begin{equation}
        \Delta_3 = S_{123} - S_{12}S_3 - S_{13}S_2 - S_{23}S_1 + 2S_1^{\otimes 3} \ .
    \end{equation}
    By Lemma 33 in \cite{ye2025exponentialadvantagereplicaestimating}, we have the exact relationship $\Delta^u = \Phi_u[\Delta_3]/18$. For the second step of the Heisenberg method, it remains only to bound $\tr((\Delta^u)^2)$, and by Lemma \ref{lemma:phi_trace_bound}, we can instead consider the simpler object $\tr\!\big(N(\Delta_3)^2\big)$. Using the explicit formulas for $S_i$, $S_{ij}$, and $S_{123}$, we first rewrite $\Delta_3$ in a more convenient form:
    \begin{equation}
        \label{eq:delta_3_decomp}
        \Delta_3 = \frac{2^{2n}(\pi_{123}+\pi_{123}^{-1}) - 2^{n+1}(\swap_{12}+\swap_{13}+\swap_{23}) + 4\mathds{1}_{3n}}{2^{3n}(2^n+1)(2^n+2)} \ .
    \end{equation}
    Now work site-by-site on the three-copy Hilbert space of a single physical qubit, and temporarily let $\mathds{1}$ denote the identity on this $8$-dimensional space. Define
    \begin{equation}
        A_{12} = XXI + YYI + ZZI, \quad A_{13} = XIX + YIY + ZIZ, \quad A_{23} = IXX + IYY + IZZ
    \end{equation}
    and
    \begin{equation}
        B = XYZ - XZY + YZX - YXZ + ZXY - ZYX \ .
    \end{equation}
    Then on one site,
    \begin{equation}
        \swap_{12} = \frac{\mathds{1}+A_{12}}{2}, \quad \swap_{13} = \frac{\mathds{1}+A_{13}}{2}, \quad \swap_{23} = \frac{\mathds{1}+A_{23}}{2},
    \end{equation}
    while
    \begin{equation}
        \pi_{123} = \frac{\mathds{1}+A_{12}+A_{13}+A_{23}+iB}{4}, \quad \pi_{123}^{-1} = \frac{\mathds{1}+A_{12}+A_{13}+A_{23}-iB}{4} \ .
    \end{equation}
    Since the global permutation operators factor over the $n$ sites, and since $N$ multiplies Pauli strings of weight $2$ by $(1-\lambda)^3$ and of weight $3$ by $(1-\lambda)^5$, we obtain, letting $\eta = 1-\lambda$ for brevity,
    \begin{align}
        &N(\swap_{12}) = \frac{(\mathds{1}+\eta^3 A_{12})^{\otimes n}}{2^n}, \quad N(\swap_{13}) = \frac{(\mathds{1}+\eta^3 A_{13})^{\otimes n}}{2^n}, \quad N(\swap_{23}) = \frac{(\mathds{1}+\eta^3 A_{23})^{\otimes n}}{2^n}, \\
        &N(\pi_{123}) = \frac{(\mathds{1}+\eta^3(A_{12}+A_{13}+A_{23})+i\eta^5 B)^{\otimes n}}{4^n}, 
        N(\pi_{123}^{-1}) = \frac{(\mathds{1}+\eta^3(A_{12}+A_{13}+A_{23})-i\eta^5 B)^{\otimes n}}{4^n} \ .
    \end{align}
    For notational convenience, define
    \begin{equation}
        X_\pm = \mathds{1}+\eta^3(A_{12}+A_{13}+A_{23}) \pm i\eta^5 B, \quad Y_{ij} = \mathds{1}+\eta^3 A_{ij} \ .
    \end{equation}
    Substituting the above formulas into $\Delta_3$ gives
    \begin{equation}
        N(\Delta_3) = \frac{X_+^{\otimes n} + X_-^{\otimes n} - 2Y_{12}^{\otimes n} - 2Y_{13}^{\otimes n} - 2Y_{23}^{\otimes n} + 4\mathds{1}_{3n}}{2^{3n}(2^n+1)(2^n+2)} \ .
    \end{equation}
    We want to control the square of this object. To do so, we explicitly compute traces of products of the previously defined operators. The Pauli strings appearing in $A_{12},A_{13},A_{23},B$ are all mutually orthogonal, so with the inner product $2^{-3}\tr(\cdot\,\cdot)$ (where the $2^{-3}$ factor is normalizing by dimension) we have
    \begin{equation}
        2^{-3}\tr(A_{ij}^2) = 3, \quad 2^{-3}\tr(B^2) = 6, \quad 2^{-3}\tr(A_{ij}A_{k\ell}) = 0 \ \text{for } (ij)\neq(k\ell), \quad 2^{-3}\tr(BA_{ij}) = 0 \ .
    \end{equation}
    It follows that
    \begin{align}
        2^{-3}\tr(X_+X_-) &= 1 + 9\eta^6 + 6\eta^{10}, \\
        2^{-3}\tr(X_\pm^2) &= 1 + 9\eta^6 - 6\eta^{10}, \\
        2^{-3}\tr(X_\pm Y_{ij}) &= 1 + 3\eta^6, \\
        2^{-3}\tr(Y_{ij}^2) &= 1 + 3\eta^6, \\
        2^{-3}\tr(Y_{ij}Y_{k\ell}) &= 1 \  \text{for } (ij)\neq(k\ell) \ .
    \end{align}
    Since traces factor over tensor products, $2^{-3n}\tr(A^{\otimes n}B^{\otimes n}) = \big(2^{-3}\tr(AB)\big)^n$. Expanding the square of the numerator and substituting the derived expressions yields
    \begin{align}
        2^{3n}\tr\!\big(N(\Delta_3)^2\big) &= \frac{1}{(2^n+1)^2(2^n+2)^2} \Big(2(1+9\eta^6+6\eta^{10})^n + 2(1+9\eta^6-6\eta^{10})^n \\
        &\qquad\qquad\qquad - 24(1+3\eta^6)^n + 12(1+3\eta^6)^n + 16 + 24 - 48 + 16\Big) \\
        &= \frac{1}{(2^n+1)^2(2^n+2)^2} \Big(2(1+9\eta^6+6\eta^{10})^n + 2(1+9\eta^6-6\eta^{10})^n - 12(1+3\eta^6)^n + 8\Big) \\
        &= \frac{R_n(\eta)}{(2^n+1)^2(2^n+2)^2} \ .
    \end{align}
    Equivalently,
    \begin{equation}
        \tr\!\big(N(\Delta_3)^2\big) = \frac{R_n(\eta)}{2^{3n}(2^n+1)^2(2^n+2)^2} \ , 
        \label{eq:tr_N_bound}
    \end{equation}
    which is the desired bound on the trace term. Finally leveraging the Heisenberg Lemma \ref{thm:Heisenberg}, this brings us to the bound
    \begin{equation}
    \label{eq:likelihood_nbound_thm1}
        \mathbb{E}_s[(L(s|u) - 1)^2]\leq \frac{
        2^{3n}\cdot 8}{18^2}\tr\!\big(N(\Delta_3)^2\big) \leq \frac{
        8}{18^2}\frac{R_n(\eta)}{(2^n+1)^2(2^n+2)^2} \ ,
    \end{equation} 
    where
  \begin{align}
        R_n(\eta) &= 2(1+9\eta^6 + 6\eta^{10})^n + 2(1+9\eta^6 - 6\eta^{10})^n - 12(1+3\eta^6)^n + 8 \\
        &= O((1+9(1-\lambda)^6 + 6(1-\lambda)^{10})^n \ .
    \end{align}
    Thus we conclude that 
    \begin{equation}
        T \geq \Omega\left(\left(\frac{(2^n+1)^2(2^n+2)^2}{R_n(\eta)}\right)\right) = \Omega\left(\left(\frac{16}{1+9(1-\lambda)^6 + 6(1-\lambda)^{10}}\right)^n\right)
    \end{equation}
    is necessary for a success probability of $2/3$. Recalling our original total variation bound,
    \begin{equation}
        d_M(\Sigma_p, \Sigma_q) \leq \frac{(3T/2)^2 + 3T/2}{2^n} + d_M(\sigma_p^{\otimes T}, \sigma_q^{\otimes T}) \ ,
    \end{equation}
    the first term imposes the independent condition $T\geq \Omega(2^{n/2})$. This concludes the proof.
\end{proof}
We now complete the argument by proving Lemma \ref{lemma:depth_1_lb}.
\begin{proof}[Proof of Lemma \ref{lemma:depth_1_lb}]
In the depth-1 setting, we assume all POVM elements can be constructed by noiseless depth-1 quantum circuits with arbitrary fresh ancillas initialized in a product state. Now we return to the proof of Theorem \ref{thm:noisy_moment_lb} and note that Equation \eqref{eq:tv_triangle} applies directly for the depth-1 case, simply restricting the allowable POVMs to this depth-1 subset. That is, let $M$ be a partial matching on the $3n$ system sites, and define
\begin{equation}
    \mathcal{B}_M = \left\{F_z = \bigotimes_{e\in M} \ket{\phi_{e, z_e}}\bra{\phi_{e, z_e}}\otimes \bigotimes_{u\notin M} E^{(u)}_{z_u}\right\}
\end{equation}
where $e$ is an edge within the matching, $u\notin M$ denotes a system site not incident to any edge in $M$, and for each $e$ we have an orthonormal basis spanning the local $2$-qubit Hilbert space $\{\ket{\phi_{e, j}}\}_{j=1}^4$. For each unmatched system site $u$, $\{E^{(u)}_{z_u}\}_{z_u}$ is an arbitrary single-qubit POVM. Note that $z$ restricted to the two indices corresponding to edge $e$ defines $z_e$, while $z_u$ denotes the local outcome of the singleton POVM at $u$.

This form describes every refined depth-1 POVM in the circuit model. Indeed, after adding ancillas, a depth-1 circuit partitions the system and ancilla qubits into disjoint blocks of size at most two. If a block contains two system qubits, then the local unitary followed by computational-basis measurement induces a projective measurement in some orthonormal two-qubit basis on those two system qubits. If a block contains one system qubit and one ancilla, or just one system qubit, then after tracing over the fixed ancilla input it induces a single-qubit POVM on that system qubit. Blocks containing only ancillas have outcome distributions independent of the input state and can be absorbed into classical postprocessing. Finally, any coarse-graining of the refined outcomes can only decrease total variation distance since any coarse-grained POVM can be simulated by a POVM of the above refined form (see e.g. Lemma 4.8 in \cite{chen2021exponentialseparationslearningquantum}). Thus $\mathcal{B} = \bigcup_{M\in \mathcal{M}}\mathcal{B}_M$, where $\mathcal{M}$ denotes the set of all partial matchings on $3n$ system sites together with all choices of local bases and singleton POVMs, describes any valid depth-1 POVM.

With this formalism and the definitions $\sigma_a = \mathbb{E}_{\mathcal{E}_a}[\rho^{\otimes 3}]$ for $a\in\{p, q\}$ and $\sigma_p - \sigma_q = \Delta_3/18$, we have
\begin{equation}
\label{eq:delta_3_tv_d1}
    d_{\mathcal{B}_M}(\sigma_p, \sigma_q) = \frac{1}{36}\sum_z |\tr(F_z\Delta_3)|
\end{equation}
for any particular refined depth-1 POVM $\mathcal{B}_M$. By equation \eqref{eq:delta_3_decomp}, we see that $\Delta_3$ can be decomposed into a constant number of identity, $\swap$, and $\pi_{123}$ terms on the $3n$ qubits. For the identity term, positivity and completeness of the POVM give
\begin{equation}
    \sum_z |\tr(F_z\mathds{1})| = \sum_z \tr(F_z) = 2^{3n}.
\end{equation}
Similarly, because $\swap$ is a Hermitian unitary, the variational characterization of the trace norm gives
\begin{equation}
    \sum_z |\tr(F_z\swap)| \leq \|\swap\|_1 = 2^{3n}.
\end{equation}
It remains to bound $\sum_z|\tr(F_z\pi_{123})|$ (its inverse will follow the same bound).
\begin{lemma}
    For any refined depth-1 POVM $\mathcal{B}_M$ of the form above,
    \begin{equation}
        \sum_z|\tr(F_z\pi_{123})| \leq 2^{9n/4} \ .
    \end{equation}
\end{lemma}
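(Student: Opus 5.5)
The plan is to reduce the sum to a Cauchy--Schwarz estimate that factorizes over small connected pieces of the measurement geometry. After that, the remaining work is to bound a ``second-moment'' quantity piece by piece.

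\textbf{Refinement and Cauchy--Schwarz.} First, refine every singleton POVM $\{E^{(u)}_{z_u}\}$ into rank-one elements $w_{u}\ketbra{a}{a}$, using Lemma 4.8 of \cite{chen2021exponentialseparationslearningquantum} as in the depth-1 discussion; refining can only increase $\sum_z|\tr(F_z\pi_{123})|$. Each $F_z$ is then $w_z\ketbra{\psi_z}{\psi_z}$, where $\ket{\psi_z}$ is a product over matching edges and unmatched sites, and $\sum_z w_z=2^{3n}$. Applying Cauchy--Schwarz in the form used in Lemma~\ref{thm:Heisenberg} gives
\begin{equation}
\Big(\sum_z|\tr(F_z\pi_{123})|\Big)^2\leq \Big(\sum_z w_z\Big)\, Q, \qquad Q:=\sum_z w_z\,|\langle\psi_z|\pi_{123}|\psi_z\rangle|^2 .
\end{equation}
It therefore suffices to show $Q\leq 2^{3n/2}$, i.e.\ a factor $2^{1/2}$ per system qubit.

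\textbf{Factorization over components.} Next, use $\pi_{123}=\bigotimes_{k=1}^n\pi^{(k)}$, where $\pi^{(k)}$ cyclically permutes the three qubits $(k,1),(k,2),(k,3)$ of physical site $k$. Form the graph on the $3n$ system qubits whose edges are the site triangles together with the matching edges of $M$. Both $\ket{\psi_z}$ and $\pi_{123}$ respect this decomposition, so $\langle\psi_z|\pi_{123}|\psi_z\rangle$ is a product over connected components $C$, and $Q=\prod_C Q_C$. The goal becomes the per-component bound $Q_C\leq 2^{|C|/2}$, where $|C|$ counts qubits.

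\textbf{The unmatched triangle.} For a triangle with no matching edges, the quantity is
\begin{equation}
\sum w_aw_bw_c|\langle a|b\rangle\langle b|c\rangle\langle c|a\rangle|^2 .
\end{equation}
Drop the factor $|\langle c|a\rangle|^2\leq 1$ and use $\sum_a w_a\ketbra{a}{a}=I$ twice. This gives $Q_C\leq 2\leq 2^{3/2}$. The same edge-dropping heuristic should handle general components: write $|\langle\psi_z|\pi|\psi_z\rangle|^2$ as a closed tensor-network loop. Discarding enough contractions (each bounded by $1$ in absolute value) opens it into a tree, which then collapses by completeness $\sum_j\ketbra{\phi_{e,j}}{\phi_{e,j}}=I_4$ and $\sum w\ketbra{a}{a}=I_2$. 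Each completeness collapse releases a dimension factor, and the count should give $2^{|C|/2}$ or better.

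\textbf{Main obstacle.} The hard step is components containing matching edges, especially an entangled two-qubit basis placed on two qubits of the same triangle, or long chains of triangles linked by cross-site edges. For these the naive edge-dropping may lose too much, and the per-qubit accounting must be done carefully. My first attempt would be to bound, for a single edge with orthonormal basis $\{\phi_j\}$, the operator $\sum_j\ketbra{\phi_j}{\phi_j}\otimes\ketbra{\bar\phi_j}{\bar\phi_j}$ by its partial trace relations, i.e.\ to use that it is a projector of trace $4$ with both marginals $I$. This would let an induction over triangles in a component show that each added triangle contributes at most $2^{3/2}$ to $Q_C$. If a component is small enough (one triangle with an internal edge), I would simply compute the optimum over two-qubit bases directly. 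Combining the component bounds gives $Q\leq 2^{3n/2}$, hence $\sum_z|\tr(F_z\pi_{123})|\leq 2^{3n/2}\cdot 2^{3n/4}=2^{9n/4}$.
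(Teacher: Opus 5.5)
\textbf{The intermediate target is false, so the route cannot work.} Your refinement, the Cauchy--Schwarz inequality, the factorization $Q=\prod_C Q_C$, and the bare-triangle bound $Q_C\le 2$ are all correct. The problem is the reduction itself. It needs $Q\le 2^{3n/2}$, i.e.\ $Q_C\le 2^{|C|/2}$, and this already fails for the one-triangle-with-internal-edge case you planned to compute directly.

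Here is a counterexample. On every site $k$, match copies $(k,1),(k,2)$ and measure them in the orthonormal basis $\{\ket{00},\ket{11},\ket{\Psi^+},\ket{\Psi^-}\}$, where $\ket{\Psi^\pm}=(\ket{01}\pm\ket{10})/\sqrt{2}$. Measure copy $(k,3)$ in the computational basis; this is already rank one, so $w=1$. Each site is its own component, and a direct computation gives:
\begin{itemize}
\item $|\langle\psi|\pi_{123}|\psi\rangle|=\delta_{c0}$ for $\ket{00}\ket{c}$;
\item $\delta_{c1}$ for $\ket{11}\ket{c}$;
\item $1/2$ for $\ket{\Psi^\pm}\ket{c}$ with either $c$.
\end{itemize}
Hence $Q_C=1+1+4\cdot\tfrac14=3>2^{3/2}$ and $Q=3^n$.

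The loss is in the Cauchy--Schwarz step itself, not in how you bound $Q$. For this POVM your inequality gives only $\sum_z|\tr(F_z\pi_{123})|\le(8\cdot 3)^{n/2}=2^{(n/2)\log_2 24}$. Since $\log_2 24>9/2$, this is weaker than $2^{9n/4}$, even though the actual sum is $(1+1+4\cdot\tfrac12)^n=4^n$. No edge-dropping heuristic or induction over triangles can repair a global Cauchy--Schwarz against $\sum_z w_z=2^{3n}$.

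\textbf{What is needed instead} is a direct bound on the first moment $\sum_{z_C}|f_C(z_C)|$, with a budget of $2^{3/4}$ per qubit. For the triangle above, write $\ket{\phi_j}=\sum_{ab}(A_j)_{ab}\ket{a}\ket{b}$. Then $|\tr(F_{j,c}\pi_{123})|=|\tr(E_c(\bar A_jA_j)^T)|$, so the component sum is at most $\sum_j\|\bar A_jA_j\|_1\le\sum_j\|A_j\|_F^2=4\le 2^{9/4}$.

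The paper's proof takes this direct route throughout:
\begin{itemize}
\item it expands $\tr(F_z\pi_{123})$ as a tensor network over the blocks and splits it into loops;
\item it collapses each loop using $\sum_j|\tr(XA_{e,j})|\le 2\|X\|_F$, $\sum_j\|XA_{e,j}\|_F\le 2\sqrt2\,\|X\|_F$, and their singleton analogues;
\item this yields $2^{(3\ell_2+\ell_1)/2}$ for loops containing a two-qubit block and $4$ for a bare triangle;
\item it then counts using $2L_2+L_1=3n$ and $L_2\le 3n/2$.
\end{itemize}
No squaring appears anywhere. If you adopt this route, be careful: each projective block enters the contraction twice, as $A_e$ and $\bar A_e$ with the same outcome label. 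Applying the one-block estimates to the two occurrences as if their labels were independent charges $2^{3/2}$ twice per block and overshoots. Charging it once per block, as the final count requires, must use the shared label; the triangle above, where both occurrences lie in one loop, is the basic case to check.
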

\begin{proof}
We begin with a high-level idea of the proof. Since each $F_z$ decomposes into blocks of two-qubit projectors and single-qubit POVM elements, we can visualize the refined measurement corresponding to $\mathcal{B}_M$ as a graph on $3n$ vertices, with matched edges carrying two-qubit projective measurements and unmatched vertices carrying singleton POVM measurements. Each matched edge $e$ is accompanied by a family of $2$ by $2$ matrices $\{A_{e,j}\}_{j=1}^4$, one for each projective outcome of that two-qubit block, while each singleton site $u$ is accompanied by a family of positive $2$ by $2$ matrices $\{E^{(u)}_{\alpha}\}_{\alpha\in\Omega_u}$ describing the local one-qubit POVM. For a fixed global outcome $z$, we write $A_e=A_{e,z_e}$ and $E^{(u)}_{z_u}$ for the particular local objects selected by that outcome.. We then notice that each term in the expression we wish to bound is a contraction of the cyclic permutation operator $\pi_{123}$ on this graph, and that any such term can be decomposed into a product of contractions along open or closed loops on the graph. In this way, we collapse a tensor network over the refined matching graph into a scalar, obtaining the final bound.

To begin, for every matched edge $e$ we write
\begin{equation}
    \ket{\phi_{e, j}} = \sum_{a, b \in \{0, 1\}} (A_{e, j})_{ab}\ket{a}\ket{b}
\end{equation}
in the computational basis, where the four matrices $A_{e, j}, j \in \{1,2,3,4\}$ are orthonormal with respect to the Hilbert-Schmidt inner product because $\{\ket{\phi_{e, j}}\}_{j=1}^4$ is defined to be an orthonormal basis. This allows us to identify the projector onto each $2$-qubit block with a 2 by 2 matrix. Moreover, as for any orthonormal basis of 2 by 2 matrices, we have
\begin{equation}
    \sum_{j=1}^4 A_{e, j}^\dagger A_{e, j} = \sum_{j=1}^4 A_{e, j} A_{e, j}^\dagger = 2\mathds{1}_2 \ .
\end{equation}
We will also use the defining property of the singleton POVMs,
\begin{equation}
    E^{(u)}_{z_u}\succeq 0, \quad \sum_{z_u} E^{(u)}_{z_u} = \mathds{1}_2 \ .
\end{equation}
We will use two operator contraction bounds that will be used to collapse the tensor network. For a two-qubit projective block and any $2$ by $2$ matrix $X$, we have
\begin{equation}
    \sum_{j=1}^4 |\tr(XA_{e,j})| \leq 2\|X\|_F
\end{equation}
and
\begin{equation}
    \sum_{j=1}^4 \|XA_{e,j}\|_F \leq 2\sqrt{2}\|X\|_F, \quad \sum_{j=1}^4 \|A_{e,j}X\|_F \leq 2\sqrt{2}\|X\|_F .
\end{equation}
The first estimate follows from Cauchy-Schwarz and orthonormality of the matrices $A_{e,j}$. For example,
\begin{equation}
    \left(\sum_{j=1}^4 |\tr(XA_{e,j})|\right)^2 \leq 4\sum_{j=1}^4 |\tr(XA_{e,j})|^2 \leq 4\|X\|_F^2 .
\end{equation}
For the second estimate, we use Cauchy-Schwarz and $\sum_j A_{e,j}A_{e,j}^\dagger = 2\mathds{1}_2$:
\begin{equation}
    \left(\sum_{j=1}^4 \|XA_{e,j}\|_F\right)^2 \leq 4\sum_{j=1}^4 \tr(XA_{e,j}A_{e,j}^\dagger X^\dagger) = 8\|X\|_F^2 .
\end{equation}
The bound with $A_{e,j}$ on the left follows identically. For a singleton POVM block $\{E_j\}_j$ and any $2$ by $2$ matrix $X$, we have
\begin{equation}
    \sum_j |\tr(XE_j)| \leq \sqrt{2}\|X\|_F
\end{equation}
and
\begin{equation}
    \sum_j \|XE_j\|_F \leq \sqrt{2}\|X\|_F, \quad \sum_j \|E_jX\|_F \leq \sqrt{2}\|X\|_F .
\end{equation}
Indeed,
\begin{equation}
    |\tr(XE_j)| = |\tr(XE_j^{1/2}E_j^{1/2})| \leq \|XE_j^{1/2}\|_F\|E_j^{1/2}\|_F ,
\end{equation}
so Cauchy-Schwarz gives
\begin{equation}
    \left(\sum_j |\tr(XE_j)|\right)^2 \leq \left(\sum_j \tr(XE_jX^\dagger)\right)\left(\sum_j \tr(E_j)\right) = 2\|X\|_F^2 .
\end{equation}
For the norm estimate, ignoring terms with $\tr(E_j)=0$, Cauchy-Schwarz gives
\begin{equation}
    \left(\sum_j \|XE_j\|_F\right)^2 \leq \left(\sum_j \tr(E_j)\right)\left(\sum_j \frac{\|XE_j\|_F^2}{\tr(E_j)}\right).
\end{equation}
Since $E_j\succeq 0$, we have $E_j^2\preceq \tr(E_j)E_j$, and therefore
\begin{equation}
    \sum_j \frac{\|XE_j\|_F^2}{\tr(E_j)} = \sum_j \frac{\tr(XE_j^2X^\dagger)}{\tr(E_j)} \leq \sum_j \tr(XE_jX^\dagger) = \|X\|_F^2 .
\end{equation}
Using $\sum_j \tr(E_j)=2$ gives the desired result. The bound with $E_j$ on the left follows identically. All stated operator bounds are unchanged if the matrices are transposed, conjugated, or adjointed; for singleton POVM elements, transposition preserves positivity and the condition that the effects sum to $\mathds{1}_2$.

Now consider the single term $\tr(F_z\pi_{123})$ and insert the identity in computational basis strings:
\begin{equation}
    \tr(F_z\pi_{123}) = \sum_{x\in\{0, 1\}^{3n}}\bra{x}F_z\ket{\pi_{123}^{-1}x}.
\end{equation}
Using the local form of $F_z$, we have
\begin{equation}
    \tr(F_z\pi_{123}) = \sum_{x\in\{0, 1\}^{3n}}\prod_{e = (u,v)\in M} ((A_e)_{x_u, x_v})^*(A_e)_{\pi(x_u), \pi(x_v)}\prod_{u\notin M}\bra{x_u}E^{(u)}_{z_u}\ket{\pi(x_u)}
\end{equation}
where we write $A_e$ to denote the specific choice of $j$ corresponding to the $\ket{\phi_{e, j}}$ that appears in the outcome $z$, and we suppress the indices on the permutation.

Now we can imagine a contraction graph for the sum, where every two-qubit block matrix $A_e$ or singleton POVM element $E^{(u)}_{z_u}$, along with its related conjugates, transposes, and adjoints, is a node and an edge is drawn whenever a given index $x_q$ is repeated in the product. Suppose this contraction graph has connected components $C_1, ... , C_m$. Because we take the product over all local blocks, note that each index $x_q$ appears at most twice. For this reason, all connected components are either closed or open loops. Now we can write
\begin{equation}
     \tr(F_z\pi_{123}) = \prod_{j =1}^m f_{C_j}(z_{C_j})
\end{equation}
where $z_{C_j}$ is the set of local outcome indices that appear in component $C_j$, and
\begin{equation}
    f_{C_j}(z_1, ..., z_l) = \tr(B_{1, z_1}...B_{l, z_l})
\end{equation}
if $C_j$ is a closed loop, or
\begin{equation}
    f_{C_j}(z_1, ..., z_l) =\bra{\alpha_C}B_{1, z_1}...B_{l, z_l}\ket{\beta_C}
\end{equation}
for an open loop, where $\ket{\alpha_C}$ and $\ket{\beta_C}$ are fixed computational-basis endpoint vectors. Each $B_{i,z_i}$ is either one of the matrices associated with a two-qubit projective block or one of the matrices associated with a singleton POVM block. We are left with
\begin{equation}
\label{eq:conn_comps}
    \sum_z|\tr(F_z\pi_{123})| = \sum_z \left|\prod_C f_C(z_C)\right| = \sum_{z_{C_1}}\sum_{z_{C_2}}...\sum_{z_{C_m}}\left|\prod_{j=1}^m f_{C_j}(z_{C_j})\right| = \prod_{C}\sum_{z_C} |f_C(z_C)|
\end{equation}
where we use the fact that all values $f_C(z_C)$ are completely independent across clusters, so we can iteratively apply e.g.
\begin{equation}
    \sum_{z_{C_1}, z_{C_2}} |f_{C_1}(z_{C_1})||f_{C_2}(z_{C_2})| = \left(\sum_{z_{C_1}} |f_{C_1}(z_{C_1})|\right)\left(\sum_{z_{C_2}}|f_{C_2}(z_{C_2})|\right) \ .
\end{equation}

Now consider some connected component $C$. Let $\ell_2(C)$ denote the number of two-qubit projective blocks appearing in $C$, and let $\ell_1(C)$ denote the number of singleton POVM blocks appearing in $C$. First suppose $\ell_2(C)\geq 1$. If $C$ is an open loop, then repeated use of the two local norm estimates gives
\begin{align}
    \sum_{z_C}|f_C(z_C)| &\leq \sum_{z_1,...,z_l}\|B_{1,z_1}...B_{l,z_l}\ket{\beta_C}\|_2\\
    &\leq (2\sqrt{2})^{\ell_2(C)}(\sqrt{2})^{\ell_1(C)}\\
    &= 2^{(3\ell_2(C)+\ell_1(C))/2}.
\end{align}
If $C$ is a closed loop, cyclically order the product so that the final matrix comes from a two-qubit projective block. Then
\begin{align}
    \sum_{z_C}|f_C(z_C)| &= \sum_{z_C}|\tr(B_{1,z_1}...B_{l,z_l})|\\
    &\leq 2\sum_{z_1,...,z_{l-1}}\|B_{1,z_1}...B_{l-1,z_{l-1}}\|_F\\
    &\leq 2\sqrt{2}(2\sqrt{2})^{\ell_2(C)-1}(\sqrt{2})^{\ell_1(C)}\\
    &= 2^{(3\ell_2(C)+\ell_1(C))/2}.
\end{align}
In the second line we used the trace estimate for the final two-qubit projective block, and in the third line we collapsed the remaining matrices using the norm estimates above. Hence any component containing at least one two-qubit projective block satisfies
\begin{equation}
    \sum_{z_C}|f_C(z_C)| \leq 2^{(3\ell_2(C)+\ell_1(C))/2}.
\end{equation}

It remains to consider components with $\ell_2(C)=0$. Such a component can not connect different physical sites, because $\pi_{123}$ only cycles the three copies of the same physical site and no two-qubit matching edge is present to connect this site to any other site. Thus a pure singleton component consists of the three singleton POVM blocks associated with one physical site. Writing their local effects as $\{E_a\}_a$, $\{F_b\}_b$, and $\{G_c\}_c$, we obtain
\begin{align}
    \sum_{a,b,c}|\tr(E_aF_bG_c)| &\leq \sqrt{2}\sum_{a,b}\|E_aF_b\|_F\\
    &\leq 2\sum_a\|E_a\|_F\\
    &\leq 4.
\end{align}
The first line uses the singleton trace estimate to sum over $c$, the second line uses the singleton norm estimate to sum over $b$, and the final line uses the singleton norm estimate with $X=\mathds{1}_2$ to sum over $a$.

Returning to Equation \eqref{eq:conn_comps}, let $L_2$ be the total number of two-qubit projective blocks, let $L_1^{\mathrm{mix}}$ be the number of singleton POVM blocks appearing in components that contain at least one two-qubit projective block, and let $L_1^{\mathrm{sing}}$ be the number of singleton POVM blocks appearing in pure singleton components. Since every pure singleton component contains exactly three singleton blocks, the product of all pure singleton contributions is bounded by $4^{L_1^{\mathrm{sing}}/3}$. Combining the component bounds gives
\begin{equation}
    \sum_z|\tr(F_z\pi_{123})| \leq 4^{L_1^{\mathrm{sing}}/3}2^{(3L_2+L_1^{\mathrm{mix}})/2}.
\end{equation}
Because the refined measurement acts on all $3n$ system qubits, we have
\begin{equation}
    2L_2 + L_1^{\mathrm{sing}} + L_1^{\mathrm{mix}} = 3n.
\end{equation}
Therefore
\begin{align}
    \frac{2L_1^{\mathrm{sing}}}{3}+\frac{3L_2}{2}+\frac{L_1^{\mathrm{mix}}}{2} &= 2n+\frac{L_2}{6}-\frac{L_1^{\mathrm{mix}}}{6}\\
    &\leq 2n+\frac{L_2}{6}\\
    &\leq \frac{9n}{4},
\end{align}
where in the final line we used $L_2\leq 3n/2$. We conclude that
\begin{equation}
    \sum_z|\tr(F_z\pi_{123})| \leq 2^{9n/4}
\end{equation}
as claimed.
\end{proof}
\noindent Returning to Equation \eqref{eq:delta_3_tv_d1} and substituting our upper bounds, applying the triangle inequality to the decomposition of $\Delta_3$, we are left with
\begin{align}
    d_{\mathcal{B}_M}(\sigma_p, \sigma_q) &\leq \frac{2^{2n}(2\cdot2^{9n/4}) + 2^{n+1}(3\cdot 2^{3n})+ 4\cdot 2^{3n}}{36\cdot 2^{3n}(2^n+1)(2^n+2)}\\
    &\leq \frac{2^{4.25n+1} + 3\cdot2^{4n+1}+2^{3n+2}}{2^{5n}} \\
    &\leq 2\cdot 2^{-3n/4} + 6\cdot 2^{-n} + 4\cdot 2^{-2n} \\
    &= O(2^{-3n/4})\ .
\end{align}
Since this bound holds for every refined depth-1 POVM and coarse-graining can only decrease total variation distance, it holds for every depth-1 POVM implementable in the ancilla-assisted circuit model. Recalling Equation \eqref{eq:tv_triangle}, we note that the first term imposes a sample lower bound of $2^{n/2} < 2^{3n/4}$; as a result, we obtain the final lower bound of $\Omega(2^{n/2})$.
\end{proof}

\subsection{Extension of lower bound to cubic observables}
The previous discussion establishes a noise-dependent lower bound for the particular task of estimating $\tr(\rho^3)$. Here we demonstrate that such a lower bound can also be established for more general observable-estimation tasks of the form $\tr(O \rho^3)$. While this task itself may be useful in certain experimental settings where higher-order correlations are informative, we note that this also includes problems where one hopes to learn from an ideal pure state, but the presence of noise may provide a mixed state. In many settings, the noise may transform the ideal pure state approximately as
\begin{equation}
    \ketbra{\psi}{\psi} \longrightarrow \rho \approx (1-p)\ketbra{\psi}{\psi} + p\sum_{j=1} c_j \ketbra{\phi_j}{\phi_j}, \quad \sum_{j} c_j = 1 \ .
\end{equation}
Moreover, in the presence of a dominant noise source, it may be the case that the coefficients $c_j$ decay quickly, so a rank-2 or rank-3 mixed state is a good approximation for the noisy state. This is the usual regime in which quantum principal component analysis (qPCA) \cite{Lloyd_2014} and virtual distillation \cite{Huggins_2021} operate, and here the ability to access polynomials of the density matrix is used as an error-mitigation primitive. One can also take our result to be a lower bound on the ability to execute these strategies for error-mitigation of physical states in the presence of processing noise, where the given density matrix is well approximated as at most rank 3. 

We begin with the following distinguishing task, which is a simple extension of Definition \ref{def:moment_testing_prob}.
\begin{definition}[Many-vs-many distinguishing for cubic observable estimation]
Let $\mathcal{E}_p, \mathcal{E}_q$, and more generally $\mathcal{E}_v$ on $n-1$ qubits be as in Definition \ref{def:moment_testing_prob}. Define the single-qubit state
\begin{equation}
    \tau = \frac{1}{2}\left(\mathds{1}+\frac{Z}{2}\right)
\end{equation}
Then define the family of ensembles over $n$-qubit states
\begin{equation}
    \tilde{\mathcal{E}}_v = \{\tilde \rho = \tau \otimes \rho: \rho \sim \mathcal{E}_v\} \ .
\end{equation}
The hypothesis testing task is now to decide whether a given $\tilde \rho$ was sampled from either $\tilde{\mathcal{E}}_p$ or $\tilde{\mathcal{E}}_q$. 
\end{definition}

Any algorithm that can measure $\tr(O \rho^3)$ for a fixed choice of $O$ to constant accuracy with high probability can solve this distinguishing task. This is because choosing $O = Z_1$, we observe
\begin{equation}
    \tr(Z_1\tilde{\rho}^3) = \tr(Z_1\tau^3)\tr({\rho}^3)
\end{equation}
and we can directly compute $\tr(Z_1\tau^3) = 13/32$,  so using the fact that the value of $\tr(\rho^3)$ differs by a constant when $\rho$ is sampled from $\mathcal{E}_p$ vs. $\mathcal{E}_q$, we find that $\tr(Z_1\tilde{\rho}^3)$ also differs by a constant gap with $\tilde{\rho}$ sampled from $\tilde{\mathcal{E}}_p$ vs. $\tilde{\mathcal{E}}_q$.

Then, the following corollary of Theorem \ref{thm:noisy_moment_lb} implies a lower bound on estimating $\tr(O\rho^3)$ for fixed $O$.

\begin{corollary}
\label{cor:z_1_lb}
    Any $\lambda$-noisy quantum algorithm with query access to $\DN[\rho]^{\otimes 3}$ using circuits of depth at least $2$ requires 
    \begin{equation}
        \Omega\left(\min \left\{2^{n/2}, \frac{(2^{n-1}+1)^2(2^{n-1}+2)^2}{R_{n-1}(\eta)}\right\}\right)
    \end{equation}
    experiments to solve the distinguishing task with high probability, where $R(\eta)$ is defined as in Theorem \ref{thm:noisy_moment_lb}.
\end{corollary}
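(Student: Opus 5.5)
We reduce the $\tilde{\mathcal{E}}$ task to the task of Theorem~\ref{thm:noisy_moment_lb} on $n-1$ qubits. The ancilla qubit carrying $\tau$ is a fixed, known, product factor that is independent of the hypothesis, so it can be treated as classically simulable side information. Concretely, we repeat the Heisenberg learning-tree argument verbatim, now with hypothesis states $\tilde\sigma_a = \mathbb{E}_{\tilde{\mathcal{E}}_a}[\tilde\rho^{\otimes 3}] = \tau^{\otimes 3}\otimes \sigma_a$ for $a\in\{p,q\}$, after reordering registers. The noise acts as $\mathcal{D}_\lambda^{\otimes n}$ on $\tau\otimes\rho$, which factorizes as $\mathcal{D}_\lambda(\tau)\otimes \mathcal{D}_\lambda^{\otimes(n-1)}(\rho)$.

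\textbf{Step 1: reduction to one-vs-one.} Apply Lemma 34 of \cite{ye2025exponentialadvantagereplicaestimating} to the $(n-1)$-qubit factor. Tensoring with the fixed state $\tau^{\otimes 3T}$ can be absorbed into the POVM, since a POVM on the joint system followed by partial evaluation on a fixed state is again a POVM on the $(n-1)$-qubit copies within $\mathcal{M}_{3,T}$. This gives
\begin{equation}
d_M(\tilde\Sigma_p,\tilde\Sigma_q)\le 2\frac{(3T/2)^2+3T/2}{2^{n-1}} + d_M(\tilde\sigma_p^{\otimes T},\tilde\sigma_q^{\otimes T}).
\end{equation}
The first term yields the $\Omega(2^{(n-1)/2})=\Omega(2^{n/2})$ branch.

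\textbf{Step 2: nodewise bounds.} For the operator lower bound, $\tau\succeq \mathds{1}/4$ and $\sigma_q\succeq \mathds{1}/(8\cdot 2^{3(n-1)})$ give $\tilde\sigma_q\succeq \mathds{1}_{3n}/(8\cdot 2^{3(n-1)}\cdot 64)$. This is preserved by the unital channel $\Phi_u$, so $\mu^{-1}=O(2^{3n})$ with the same exponential scaling. For the distinguishability operator, $\Delta^u=\Phi_u[\tau^{\otimes 3}\otimes \Delta_3^{(n-1)}]/18$. We then apply Lemma~\ref{lemma:phi_trace_bound} on all $3n$ qubits, which gives $\tr((\Delta^u)^2)\le \tr(N[\tau^{\otimes3}]^2)\,\tr(N[\Delta_3^{(n-1)}]^2)/18^2$, because $N$ acts diagonally in the Pauli basis. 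The weight function $w+\lceil w/2\rceil$ is not additive across the split, but it is superadditive up to rounding, so $N$ on the product is dominated by the product of $N$'s. We also have $\tr(N[\tau^{\otimes 3}]^2)\le \tr(\tau^2)^3=O(1)$. The factor $\tr(N[\Delta_3^{(n-1)}]^2)$ is exactly Eq.~\eqref{eq:tr_N_bound} with $n\to n-1$, namely $R_{n-1}(\eta)/(2^{3(n-1)}(2^{n-1}+1)^2(2^{n-1}+2)^2)$.

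\textbf{Step 3: combine.} Lemma~\ref{thm:Heisenberg} then gives $\mathbb{E}[(L-1)^2]=O(R_{n-1}(\eta)/((2^{n-1}+1)^2(2^{n-1}+2)^2))$, and Lemma~\ref{lemma:martingale} yields the stated second branch of the lower bound. Depth-$1$ circuits are excluded by hypothesis, or alternatively handled as in Lemma~\ref{lemma:depth_1_lb}.

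The main obstacle is making the dominance $\tr(\Phi[\tau^{\otimes3}\otimes X]^2)\le O(1)\cdot\tr(N[X]^2)$ rigorous. Depth-1 blocks may pair an ancilla-copy qubit with a data qubit, which breaks the tensor factorization. I would avoid factorizing entirely: expand $\tau^{\otimes3}\otimes\Delta_3$ in Pauli strings and use that $w(P_\tau\otimes P_\Delta)\ge w(P_\Delta)$ and that $w+\lceil w/2\rceil$ is monotone. This gives $\tr(N[\tau^{\otimes 3}\otimes X]^2)\le \tr(\tau^2)^3\cdot 2^{3}\cdot\tr(N[X]^2)$ up to normalization, because the Pauli coefficients of $\tau$ appear with factors at most $1$. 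That is enough, since only the exponential rate matters.
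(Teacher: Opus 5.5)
Your proposal is correct and follows the paper's proof. It has the same reduction to a one-vs-one tree, the same $\mu = 2^{-(3n+6)}$, and the same Pauli expansion of $\tau^{\otimes 3}$ passed through Lemma~\ref{lemma:phi_trace_bound}. Your monotonicity argument simply discards the extra damping that the paper tracks in $\kappa_\lambda$, giving $\tr\!\big(N[\tau^{\otimes 3}\otimes X]^2\big)\le \tr(\tau^2)^3\,\tr\!\big(N[X]^2\big)=\tfrac{125}{512}\,\tr\!\big(N[X]^2\big)$, with no extra factor $2^3$ needed.

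Drop the factorization claim in your Step~2, however. Because the ceiling is subadditive, $w+\lceil w/2\rceil$ is subadditive rather than superadditive, so $N$ applied to the product dominates the product of the $N$'s (by up to $(1-\lambda)^{-2}$); only your final monotonicity argument is valid.
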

\begin{proof}
    As before, let $\sigma_a = \mathbb{E}_{\tilde{\mathcal{E}}_a}[\tilde\rho^{\otimes 3}]$ and $\Sigma_a = \mathbb{E}_{\mathcal{\tilde E}_a}[\tilde \rho^{\otimes 3T}]$, for $a\in\{p, q\}$. By triangle inequality, equation \eqref{eq:delta_3_tv_d1} holds again with these redefined states:
    \begin{equation}
        d_M(\Sigma_p, \Sigma_q) \leq \frac{(3T/2)^2 + 3T/2}{2^n} + d_M(\sigma_p^{\otimes T}, \sigma_q^{\otimes T})\ .
    \end{equation}
    Once again taking 
    \begin{equation}
        \tilde{\sigma}_a^u \coloneqq \Phi_u[\sigma_a], \quad \tilde \Delta^u \coloneqq \tilde\sigma_p - \tilde\sigma_q \ ,
    \end{equation}
    we can use $\tau \succeq \mathds{1}/4$ and equation \eqref{eq:sigma_q_opineq} to find 
    \begin{equation}
        \sigma_q \succeq \frac{1}{2^{3n+6}}\mathds{1}_{3n} \ .
    \end{equation}
    Setting $\mu = 2^{-(3n+6)}$ and following the proof of Theorem \ref{thm:noisy_moment_lb}, we arrive at the analogous form of equation \eqref{eq:likelihood_nbound_thm1}:
    \begin{equation}
        \mathbb{E}_s[(L(s|u)-1)^2] \leq \frac{2^{3n+6}}{18^2} \tr\!\big(N(\tilde \Delta_3)^2\big)
    \end{equation}
    with our redefined $\tilde \Delta_3$ instead of the original $\Delta_3$. Observe that 
    \begin{equation}
        \tau^{\otimes 3} = 2^{-3}\sum_{R\in\mathcal{Z}_3}2^{-w(R)}R , \label{eq:tau_decomp} 
    \end{equation}
    where $\mathcal{Z}_3$ is the set of $8$ Pauli strings on $3$ qubits composed of only identity and $Z$ terms. By the definition of the channel $N$, we have for a single $R$ and $n-1$-qubit Pauli $P$, and $a = 1-\lambda$, that
    \begin{equation}
        N(R\otimes P) = a^{w(R) + w(P) + \lceil (w(R) + w(P))/2\rceil}R\otimes P
    \end{equation}
    With $w(R)$ taking on values $\in \{0,1,2,3\}$ and leveraging the product structure of $\tilde \Delta_3$ with equation \eqref{eq:tau_decomp}, we obtain
    after elementary algebra
    \begin{equation}
        \tr\!\big(N(\tilde \Delta_3)^2\big) \leq \frac{1}{64}\left(1+\frac{3}{4}a^2 + \frac{3}{16}a^6 + \frac{1}{64}a^8 \right)\tr\!\big(N( \Delta_3)^2\big) \coloneqq \frac{\kappa_\lambda}{64}\tr\!\big(N( \Delta_3)^2\big)
    \end{equation}
    where the coefficients of the $a$ terms come from the factors of $2^{-w(R)}$, and $\Delta_3$ is defined on $n-1$ qubits. Using equation \eqref{eq:tr_N_bound}, we conclude that
    \begin{equation}
        \mathbb{E}_s[(L(s|u)-1)^2] \leq \frac{
        8\kappa_\lambda}{18^2}\frac{R_{n-1}(\eta)}{(2^{n-1}+1)^2(2^{n-1}+2)^2} . 
    \end{equation}
    Since $\kappa_\lambda$ is a $\Theta(1)$ constant for all values of $\lambda$, we conclude that 
    \begin{equation}
    T\geq \Omega\left(\min \left\{2^{n/2}, \frac{(2^{n-1}+1)^2(2^{n-1}+2)^2}{R_{n-1}(\eta)}\right\}\right)
    \end{equation}
    is necessary for any algorithm to succeed in the hypothesis distinguishing task with probability at least $2/3$.
\end{proof}
While this lower bound provides a worst-case bound on the broader task of estimating $\tr(O \rho^3)$, it explicitly uses the fact that $Z_1$ is a $1$-local observable. While it is possible to generalize Corollary \ref{cor:z_1_lb} to $k$-local observables with an appropriately chosen $k$-qubit $\tau$ and $n-k$-qubit $\rho$ sampled from the hard ensembles, the resulting exponents change from $n-1$ to $n-k$, weakening the bound for highly nonlocal observables which should not, \textit{a priori}, be easier to estimate than $Z_1$. To our knowledge, this feature is present in all prior results that lower bound the sample complexity of tasks which involve estimating higher-moment observables, such as quantum PCA \cite{Huang_adv_2022, liu2025exponentialseparationsquantumlearning}. Hence, it is also useful to establish that a fundamental exponential-in-$n$ lower bound, independent of locality, holds for a broad class of observables. For this, we begin with the following concentration bound.
\begin{lemma}
\label{lemma:third_moment_levy}
    Let $c,\epsilon > 0$ and let $P\neq \mathds{1}_n$ be any non-identity $n$-qubit Pauli. Define $O = \mathds{1}_n + cP$ and let $\rho$ be an $n$-qubit state sampled uniformly from either ensemble $\mathcal{E}_p$ or $\mathcal{E}_q$ as before. Then there exists a universal constant $C$ such that 
    \begin{equation}
    \textnormal{Pr}\left(|\tr(O\rho^3) - \tr(\rho^3)| > \epsilon \right) \leq 2\exp\left(-2^n C \epsilon^2/c^2\right) .
    \end{equation}
\end{lemma}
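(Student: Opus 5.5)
We need $|\tr(O\rho^3)-\tr(\rho^3)| = c\,|\tr(P\rho^3)|$, so the statement is equivalent to the concentration bound
\begin{equation}
\Pr\big(|\tr(P\rho^3)| > \epsilon/c\big)\leq 2\exp(-2^n C\epsilon^2/c^2).
\end{equation}
The plan is to view $f(\psi_1,\psi_2,\psi_3)=\tr(P\rho^3)$ as a function of the three independent Haar-random pure states. We then (i) show $\mathbb{E} f$ is $O(2^{-n})$, (ii) show $f$ is Lipschitz with an $O(1)$ constant on the product of spheres, and (iii) apply Lévy's lemma on $(S^{2^{n+1}-1})^{\times 3}$.

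For the mean, expand $\rho = \rho_{mm}/2 + \sum_j v_j \Pi_j$ with $\Pi_j=\ketbra{\psi_j}{\psi_j}$. Since $\tr P=0$, the cube produces terms of three kinds.

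The pure-$\rho_{mm}$ term is proportional to $\tr(P)=0$.

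Terms with one or two $\Pi$'s have expectations of the form $\mathbb{E}\tr(P\Pi_j)$, $\mathbb{E}\tr(P\Pi_j\Pi_k)$ or $\mathbb{E}\tr(P\Pi_j^2)$. For $j\neq k$, independence gives $\mathbb{E}\tr(P\Pi_j\Pi_k)=2^{-2n}\tr P=0$; for $j=k$, Haar invariance gives $\mathbb{E}\Pi_j=\mathds{1}/2^n$, so these vanish too.

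The three-$\Pi$ terms are the only ones left. Here $\mathbb{E}\tr(P\Pi_j\Pi_k\Pi_l)$ vanishes unless at least two indices coincide. The all-equal case gives $\mathbb{E}\langle\psi|P|\psi\rangle=0$. The case with exactly two equal reduces to $\mathbb{E}\,|\langle\psi_j|\psi_k\rangle|^2\langle\psi_j|P|\psi_j\rangle$ type quantities. After averaging $\psi_k$ this is $2^{-n}\mathbb{E}\langle\psi_j|P|\psi_j\rangle=0$.

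So I expect $\mathbb{E} f=0$ exactly. At worst it is $O(2^{-n})$, which is harmless: it can be absorbed into $\epsilon$ when $\epsilon\gtrsim 2^{-n/2}$, and the bound is trivial otherwise after adjusting $C$.

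For the Lipschitz step, write $\rho=\rho_{mm}/2+\sum_j v_j\ketbra{\psi_j}{\psi_j}$ and perturb a single $\psi_j$. Then $|\delta\tr(P\rho^3)|\leq 3\|P\|_\infty\|\rho\|_\infty^2\|\delta\rho\|_1$. We have $\|\rho\|_\infty\leq 1$ and $\|\delta\ketbra{\psi}{\psi}\|_1\leq 2\|\delta\psi\|_2$. This gives a Lipschitz constant $L\leq 6\max_j v_j\cdot\sqrt3 = O(1)$ with respect to the Euclidean metric on the product of spheres. The global phase of each $\psi_j$ is irrelevant, so we can work directly with unit vectors in $\mathbb{C}^{2^n}\cong\mathbb{R}^{2^{n+1}}$.

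Lévy's lemma on a product of three spheres of real dimension $2^{n+1}-1$ then gives
\begin{equation}
\Pr(|f-\mathbb{E}f|>t)\leq 2\exp(-C' 2^{n+1}t^2/L^2).
\end{equation}
For the product we can use the standard concentration of measure for products of spheres, or equivalently Gaussian concentration after normalization, or a martingale/tensorization argument. Setting $t=\epsilon/c$ and collecting constants into $C$ completes the proof.

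The main obstacle is making step (i) and the product-space Lévy step clean. I would handle the product space by citing concentration on $SU(2^n)^{\times 3}$ or on $(S^{d})^{\times 3}$ with the $\ell_2$-sum metric, whose log-Sobolev constant is $\Theta(1/d)$ and tensorizes.

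If $\mathbb{E}f$ turns out not to vanish exactly, I would bound it by $O(2^{-n})$ using the Weingarten/symmetric-projector formulas $S_{ij}$ already introduced above. I would then state the bound for $\epsilon$ above that scale, noting the claim is vacuous below it for a suitable $C$.
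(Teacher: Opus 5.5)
Your proposal is correct and follows essentially the same route as the paper: show $\mathbb{E}\,\tr(P\rho^3)=0$, bound the Lipschitz constant of $(\psi_1,\psi_2,\psi_3)\mapsto\tr(P\rho^3)$ by an $O(1)$ constant using $\|\rho_\psi^3-\rho_\phi^3\|_1\le 3\|\rho_\psi-\rho_\phi\|_1$ and $\|\ketbra{\psi}{\psi}-\ketbra{\phi}{\phi}\|_1\le 2\|\psi-\phi\|_2$, and then apply L\'evy's lemma on the product of spheres. The differences are cosmetic. The paper gets the vanishing mean in one line from unitary invariance of $\mathbb{E}[\rho^3]$ (Schur's lemma plus $\tr P=0$) rather than your term-by-term expansion, so your $O(2^{-n})$ fallback is unnecessary. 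Your remark on justifying concentration on the product space via tensorization is a point the paper leaves implicit.
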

\begin{proof}
    Let $d=2^n$ and define the function
    \begin{equation}
        F(\psi_1, \psi_2, \psi_3) = \tr(P\rho^3)
    \end{equation}
    where $\rho$ is implicitly a function of the three states $\psi_1, \psi_2, \psi_3$ with coefficients defined by the ensemble from which $\rho$ is drawn. The domain of this function is $(S^{2d -1})^3$ with the product Euclidean metric
    \begin{equation}
        \textnormal{dist}(\Vec{\psi}, \Vec{\phi})^2 = \sum^3_{j=1}\|\psi_j -\phi_j\|_2^2 
    \end{equation}
    By definition, 
    \begin{equation}
        \tr(O\rho^3) - \tr(\rho^3) = cF .
    \end{equation}
    Clearly $\mathbb{E}[\rho^3]$ is unitarily invariant, so by Schur's lemma it is a multiple of the identity. This gives us $\mathbb{E}[F] = \tr(P\mathbb{E}[\rho^3]) = 0$, so $F$ is mean-0. To establish concentration of $F$, we need a bound on its Lipschitz constant. Since $P$ has operator norm at most $1$, we have
    \begin{align}
        |F(\psi)-F(\phi)| &\leq \|\rho_\psi^3 - \rho_\phi^3\|_1 \\ 
        &\leq 3\|\rho_\psi - \rho_\phi\|_1 \\ 
        &\leq 6\sum_{j=1}^3 a_j\|\psi_j - \phi_j\|_2 \\ 
        &\leq 6\left(\sum_j a_j^{2}\right)^{1/2}\left(\sum_j \|\psi_j -\phi_j \|_2^2\right)^{1/2} \\
        &= \frac{3}{\sqrt{2}} \textnormal{dist}(\Vec{\psi}, \Vec{\phi}) \ .
    \end{align}
    In the second line we use the fact that the $\rho$'s are density matrices, in the third we introduce the coefficients $a_j$ corresponding to either ensemble $p$ or $q$ and use the fact that $\|\ketbra{\psi_j}{\psi_j} - \ketbra{\phi_j}{\phi_j}\|_1 \leq 2 \|\psi_j - \phi_j\|_2$ for unit vectors, in the fourth step we invoke Cauchy Schwarz, and in the final we use that $\sum_j a_j^2 = 1/8$ for both $p$ and $q$. Once we have a bound on the Lipschitz constant $L$, we can invoke Levy's lemma to obtain
    \begin{equation}
    \textnormal{Pr}\left(|F - \mathbb{E}[F]| > \epsilon \right) \leq 2\exp\left(-C d \epsilon^2/L^2\right) 
    \end{equation}
    for some universal constant $C$. Absorbing $L$ into $C$ gives us the desired result.
\end{proof}
Because of this lemma, any algorithm that can estimate $\tr(O \rho^3)$ to within precision $\epsilon$ can also solve the original distinguishing problem in Definition \ref{def:moment_testing_prob}. As a result, the lower bound in Theorem \ref{thm:noisy_moment_lb} also applies to the task of cubic observable estimation — however, unlike in Corollary \ref{cor:z_1_lb}, this applies to highly nonlocal observables as well. Formally, the following Corollary is a directly result of Theorem \ref{thm:noisy_moment_lb} and Lemma \ref{lemma:third_moment_levy}.
\begin{corollary}
    Let $\rho$ be an $n$-qubit state. Then for any $k$ with $1\leq k \leq n$, there is a $k$-local observable $O$ which is $\Omega(1)$-far in operator norm from a multiple of the identity (that is, a nontrivial observable) such that any $\lambda$-noisy quantum algorithm with query access to $\DN[\rho]^{\otimes 3}$ using circuits of depth at least $2$ requires 
    \begin{equation}
        \Omega\left(\min \left\{2^{n/2}, \frac{(2^n+1)^2(2^n+2)^2}{R_n(\eta)}\right\}\right)
    \end{equation}
    to estimate $\tr(O\rho^3)$ to constant accuracy.
\end{corollary}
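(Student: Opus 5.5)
Given $k$, I choose the observable $O = \mathds{1}_n + cP$, where $P$ is any weight-$k$ Pauli string and $c$ is a fixed constant (say $c=1$). Then $O$ is $k$-local. It is also $\Omega(1)$-far in operator norm from every multiple of the identity, because $\min_\alpha \|O-\alpha\mathds{1}\| = c\min_\alpha\|P-\alpha' \mathds{1}\| = c$; this follows since $P$ has eigenvalues $\pm1$. The plan is a reduction: any algorithm estimating $\tr(O\rho^3)$ to constant accuracy $\epsilon_0$ also solves the many-vs-many distinguishing task of Definition~\ref{def:moment_testing_prob}. Theorem~\ref{thm:noisy_moment_lb} then supplies the lower bound.

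The first step is to record the constant gap in $\tr(\rho^3)$ between the ensembles. Up to the maximally mixed cross terms, which are $O(2^{-n})$, $\tr(\rho^3)\approx\sum_j v_j^3$. The Haar overlaps $|\langle\psi_i|\psi_j\rangle|^2$ concentrate at $O(2^{-n})$, so the corrections vanish with high probability. For $p$ this gives $2/64=1/32$, and for $q$ it gives $1/27+2/1728$. These differ by an absolute constant $g$. Concentration of $\tr(\rho^3)$ about these values holds with probability $1-e^{-\Omega(2^n)}$, for example via the same Lipschitz/Levy argument as Lemma~\ref{lemma:third_moment_levy}.

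The second step applies Lemma~\ref{lemma:third_moment_levy} with $\epsilon=g/8$. Except with probability $2\exp(-\Omega(2^n))$ over the ensemble draw, $|\tr(O\rho^3)-\tr(\rho^3)|\le g/8$. Running the estimator at accuracy $g/8$ and thresholding at the midpoint of the two values of $\sum_j v_j^3$ then decides the hypothesis correctly, except on an event of probability $e^{-\Omega(2^n)}$ plus the estimator's own failure probability. The overall success probability therefore stays at least $2/3-o(1)$, and a constant-factor amplification restores $2/3$ if necessary. The reduction is sample-preserving, because the estimator is given exactly the same noisy oracle access $\DN[\rho]^{\otimes 3}$ with depth $\ge 2$ circuits that appears in Theorem~\ref{thm:noisy_moment_lb}. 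Applying that theorem yields the stated $\Omega(\min\{2^{n/2},(2^n+1)^2(2^n+2)^2/R_n(\eta)\})$ bound.

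The main care point is making the failure events compatible with Le Cam's bound. The two hypotheses are distributions over $\rho$, so the bad sets (concentration failure) have exponentially small mass under each prior. Their contribution to the success probability is additive at most $e^{-\Omega(2^n)}$, which is negligible next to the $1/6$ bias needed. Choosing $c$ constant keeps the Levy exponent $2^nC\epsilon^2/c^2$ exponential, and this is the only place the $\Omega(1)$-nontriviality trade-off enters.
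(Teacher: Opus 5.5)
Your proposal is correct and is essentially the paper's argument: take $O=\mathds{1}_n+cP$ with $P$ a weight-$k$ Pauli and $c=\Theta(1)$, use Lemma~\ref{lemma:third_moment_levy} to show that estimating $\tr(O\rho^3)$ to a small constant accuracy solves the distinguishing task of Definition~\ref{def:moment_testing_prob}, and then invoke Theorem~\ref{thm:noisy_moment_lb}. The paper only asserts this reduction, whereas you also check that the operator-norm distance from the identity line is $c$, compute the constant gap $1/27+2/1728-1/32$ in $\tr(\rho^3)$, and confirm that the $e^{-\Omega(2^n)}$ concentration-failure events and $O(1)$-fold median amplification leave the bound unchanged.
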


\subsection{Third moment estimation upper bound}

\begin{theorem} \label{theorem:moment_ub}
    Let $\lambda' \in [0,1)$, let $a = 1-\lambda'$, and suppose we are given query access to three noisy copies $\DN[\rho]^{\otimes 3}$ of an unknown $n$-qubit state $\rho$, with no further noise in the circuit. Then there is an algorithm which, using
    \begin{equation}
        O\left(\frac{1}{\epsilon^2}\left(\frac{1 - 6a^{-2} + 9a^{-4} + 12a^{-6}}{16}\right)^n \log\frac{1}{\delta}\right)
    \end{equation}
    experiments, outputs an estimate $\widehat m_3$ satisfying
    \begin{equation}
        \textnormal{Pr}\left[\left|\widehat m_3 - \tr(\rho^3)\right| > \epsilon\right] \leq \delta \ .
    \end{equation}
    If one only wishes to solve the distinguishing task in Definition \ref{def:moment_testing_prob}, there is a simpler algorithm, namely the ordinary three-copy cycle test applied directly to $\DN[\rho]$, whose sample complexity is
    \begin{equation}
        O\left(\left(\frac{16}{1+9a^2+6a^3}\right)^{2n}\log\frac{1}{\delta}\right) \ .
    \end{equation}
    for constant $\lambda' < 1$. The exact sample complexity for the $\lambda'\rightarrow 1$ regime is given in the proof, and diverges to $\infty$ as expected.
\end{theorem}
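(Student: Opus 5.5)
The approach is to run the fault-tolerant (noiseless after uploading) generalized SWAP/cycle test on the three noisy copies, and to undo the depolarization in the \emph{observable} rather than in the state. Since $\mathcal{D}_{\lambda'}$ is self-adjoint with respect to the Hilbert--Schmidt inner product,
\begin{equation}
\tr(\rho^3)=\tr\big(\pi_{123}\,\rho^{\otimes 3}\big)=\tr\Big(\big(\mathcal{D}_{\lambda'}^{-1}\big)^{\otimes 3n}[\pi_{123}]\;\big(\DN[\rho]\big)^{\otimes 3}\Big),
\end{equation}
and $\mathcal{D}_{\lambda'}^{-1}$ multiplies a weight-$w$ single-site Pauli string by $a^{-w}$.

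\textbf{Step 1: site-local form of the estimator.} As in the proof of Theorem~\ref{thm:noisy_moment_lb}, $\pi_{123}$ factorizes over the $n$ physical sites as
\begin{equation}
\pi_{123}=\Big(\tfrac{1}{4}\big(\mathds{1}+A_{12}+A_{13}+A_{23}+iB\big)\Big)^{\otimes n}.
\end{equation}
The $A_{ij}$ have weight $2$ and $B$ has weight $3$. Hence the inverse-noise observable is $W^{\otimes n}$ with
\begin{equation}
W=\tfrac{1}{4}\big(\mathds{1}+a^{-2}(A_{12}+A_{13}+A_{23})+i a^{-3}B\big).
\end{equation}
Because $\tr(\rho^3)$ is real, it suffices to estimate $\tr\big(\mathrm{Re}(W)^{\otimes n}\,\DN[\rho]^{\otimes 3}\big)$.

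\textbf{Step 2: implement the estimator.} On each three-qubit site we measure in a fixed basis. The simplest choice is the eigenbasis of the single-site operator, implemented by a constant-depth circuit; this is possible because the uploaded processing is noiseless. We output the product over sites of the measured eigenvalues. This random variable $X$ is an unbiased estimator of $\tr(\rho^3)$, and its second moment factorizes as $\mathbb{E}[X^2]\le c_{\mathrm{site}}^{\,n}$, where $c_{\mathrm{site}}$ is the largest squared single-site eigenvalue.

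The real part $\tfrac14(\mathds{1}+a^{-2}A)$ has eigenvalues $\tfrac14(1\pm 3a^{-2})$, using $A=+3$ on the symmetric subspace and $A=-3$ on the mixed-symmetry subspace of three qubits. Their squares give the $\tfrac{1}{16}(1\mp 6a^{-2}+9a^{-4})$ structure. The $12a^{-6}$ contribution comes from the weight-$3$ $B$ part, whose six Pauli strings each carry $a^{-3}$ after inversion.

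Median-of-means (or Chebyshev plus a median boost) then gives $O(\epsilon^{-2}\,c_{\mathrm{site}}^{\,n}\log(1/\delta))$ experiments, with $c_{\mathrm{site}}=(1-6a^{-2}+9a^{-4}+12a^{-6})/16$.

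\textbf{Main obstacle.} I expect the hardest step to be arranging the per-site measurement and variance bound so that exactly this constant emerges. This means deciding whether to measure $W$ directly or to sample its Pauli decomposition with importance weights. It also means bounding $\tr\big((W^\dagger W)^{\otimes n}\,\DN[\rho]^{\otimes 3}\big)$ uniformly over $\rho$ using $\DN[\rho]\preceq$ the appropriate operator bounds.

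\textbf{Distinguishing task.} Here I would not invert the noise. Instead, apply the plain cycle test to $\DN[\rho]^{\otimes 3}$ and compute the gap
\begin{equation}
\mathbb{E}_p\big[\tr(\pi_{123}\DN[\rho]^{\otimes 3})\big]-\mathbb{E}_q\big[\tr(\pi_{123}\DN[\rho]^{\otimes 3})\big]=\tfrac{1}{18}\tr\big(\pi_{123}\,\mathcal{D}^{\otimes 3n}_{\lambda'}[\Delta_3]\big),
\end{equation}
using Lemma 33 of \cite{ye2025exponentialadvantagereplicaestimating} as in Theorem~\ref{thm:noisy_moment_lb}.

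Expand $\Delta_3$ via Eq.~\eqref{eq:delta_3_decomp}, and evaluate traces site-by-site with the normalized Pauli inner products already computed. The single-noise-layer weights $a^{2}$ and $a^{3}$ replace the $\eta^{6}$ and $\eta^{10}$ of the lower bound. The dominant term of the gap scales like $\big((1+9a^2+6a^3)/16\big)^n$ up to constants. The outcome is $\pm1$-valued, so Hoeffding's inequality gives sample complexity equal to the inverse squared gap times $\log(1/\delta)$, as claimed. Keeping the exact expression also exhibits the divergence as $\lambda'\to1$.
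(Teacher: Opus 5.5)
Your inverse-noise observable $W$ is exactly the paper's $h_+$. Your treatment of the distinguishing task also matches the paper: the plain cycle test on $\mathcal{D}_{\lambda'}^{\otimes n}[\rho]^{\otimes 3}$, the gap $\frac{1}{18}\tr(\pi_{123}\mathcal{D}_{\lambda'}^{\otimes 3n}[\Delta_3])=G_n(a)/(18\cdot 4^n(2^n+1)(2^n+2))$ evaluated site by site, then Hoeffding. The paper measures $H=(C+C^{-1})/2$ with outcomes in $\{1,-\tfrac12\}$ rather than using a $\pm1$ ancilla, which only changes constants.

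The gap is in Step 1. Reality of $\tr(\rho^3)$ lets you replace $W^{\otimes n}$ by its Hermitian part $\frac12(W^{\otimes n}+(W^\dagger)^{\otimes n})$, not by $(\operatorname{Re}W)^{\otimes n}$. Write $\tilde\rho=\mathcal{D}_{\lambda'}^{\otimes n}[\rho]$ and undo the noise: $\tr((\operatorname{Re}W)^{\otimes n}\tilde\rho^{\otimes 3})=2^{-n}\sum_{\epsilon\in\{\pm1\}^n}\tr((c^{\epsilon_1}\otimes\cdots\otimes c^{\epsilon_n})\rho^{\otimes 3})$. The mixed-sign terms are third moments of partial transposes, not $\tr(\rho^3)$. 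For a two-qubit Bell state this expression equals $\frac14(1+1+\frac14+\frac14)=\frac58\neq 1$.

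So measuring in the eigenbasis of $\operatorname{Re}W$ and multiplying its eigenvalues $\frac14(1\pm 3a^{-2})$ gives a biased estimator. Its range constant would be $(1+3a^{-2})^2/16$, with no $a^{-6}$ term at all. The $12a^{-6}$ you then add comes from the anti-Hermitian part $\frac{i}{4}a^{-3}B$, which is exactly what Step 1 threw away. Steps 1 and 2 are therefore inconsistent, and the measurement you list as the main obstacle is never actually fixed.

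The missing observation, which the paper uses, is that $W$ is normal: $W=\frac{1-a^{-2}}{4}\mathds{1}+\frac{a^{-2}+a^{-3}}{2}c+\frac{a^{-2}-a^{-3}}{2}c^{-1}$. Hence $W$ and $W^\dagger$ are both diagonal in the eigenbasis $\{\Pi_1,\Pi_\omega,\Pi_{\omega^2}\}$ of the single-site cycle $c$. The eigenvalues are $\alpha_1=\frac{1+3a^{-2}}{4}$, $\alpha_\omega=\frac{1-3a^{-2}}{4}+i\frac{\sqrt3}{2}a^{-3}$ and $\alpha_{\omega^2}=\overline{\alpha_\omega}$.

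Measure each site in this basis and output $X=\operatorname{Re}\prod_j\alpha_{s_j}$, taking the real part after the product. Then $\mathbb{E}[X]=\operatorname{Re}\tr(W^{\otimes n}\tilde\rho^{\otimes 3})=\tr(\rho^3)$.

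For the range you must also check $|\alpha_\omega|\ge\alpha_1$. This holds because $|\alpha_\omega|^2-\alpha_1^2=\frac34(a^{-6}-a^{-2})\ge 0$ for $a\in(0,1]$. Hence $|X|\le|\alpha_\omega|^n$ with $|\alpha_\omega|^2=(1-6a^{-2}+9a^{-4}+12a^{-6})/16$, and Hoeffding (or your median-of-means) gives the stated count. Normality also makes your uniform bound on $\tr((W^\dagger W)^{\otimes n}\tilde\rho^{\otimes 3})$ immediate, so no Pauli importance sampling is needed.
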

\begin{proof}
    Let $C = \pi_{123}$ denote the cyclic permutation on three copies of the $n$-qubit Hilbert space, and set
    \begin{equation}
        H = \frac{C + C^{-1}}{2} \ .
    \end{equation}
    For every state $\rho$, we have
    \begin{equation}
        \tr(H\rho^{\otimes 3}) = \tr(\rho^3) \ .
    \end{equation}
    Indeed, $\tr(C\rho^{\otimes 3}) = \tr(C^{-1}\rho^{\otimes 3}) = \tr(\rho^3)$.

    We first give an estimator for the generic quantity $\tr(\rho^3)$ from noisy copies. Let
    \begin{equation}
        \widetilde\rho = \DN[\rho]
    \end{equation}
    and let $\mathcal{I}_\lambda$ be the single-qubit (invertible) quantum channel which acts on Paulis as 
    \begin{equation}
        \mathcal{I}_\lambda[P] = \begin{cases}
        (1-\lambda)^{-1}P, \quad P\neq \mathds{1} \\ 
        P, \quad P = \mathds{1} 
        \end{cases} 
    \end{equation}
    Then define the corrected observable

    \begin{equation}
        H_{\lambda'} = \big(\DNINV\big)^{\otimes 3}[H] \ .
    \end{equation}
    It follows that
    \begin{equation}
        \tr(H_{\lambda'}\widetilde\rho^{\otimes 3}) = \tr(H\rho^{\otimes 3}) = \tr(\rho^3) \ .
    \end{equation}
    Thus it suffices to measure $H_{\lambda'}$ on three noisy copies.

    To make this explicit, work on one physical site and let $c$ denote the three-cycle on the three corresponding qubits. Then $C = c^{\otimes n}$ and
    \begin{equation}
        H_{\lambda'} = \frac{h_+^{\otimes n} + h_-^{\otimes n}}{2} \ ,
    \end{equation}
    where
    \begin{equation}
        h_+ = \big(\mathcal{I}_{\lambda'}\big)^{\otimes 3}[c], \quad h_- = \big(\mathcal{I}_{\lambda'}\big)^{\otimes 3}[c^{-1}] = h_+^\dagger \ .
    \end{equation}
    On one site we write
    \begin{equation}
        4c = \mathds{1} + A_{12}+A_{13}+A_{23} + iB, \quad 4c^{-1} = \mathds{1} + A_{12}+A_{13}+A_{23} - iB \ ,
    \end{equation}
    so with $a = 1-\lambda'$,
    \begin{equation}
        h_+ = \frac{\mathds{1} + a^{-2}(A_{12}+A_{13}+A_{23}) + ia^{-3}B}{4} \ .
    \end{equation}
    Using
    \begin{equation}
        A_{12}+A_{13}+A_{23} = 2(c+c^{-1}) - \mathds{1}, \quad iB = 2(c-c^{-1}),
    \end{equation}
    this becomes
    \begin{equation}
        h_+ = \frac{1-a^{-2}}{4}\mathds{1} + \frac{a^{-2}+a^{-3}}{2}c + \frac{a^{-2}-a^{-3}}{2}c^{-1} \ .
    \end{equation}
    Let $\omega = e^{2\pi i/3}$. Since $c$ has eigenvalues $1,\omega,\omega^2$, it follows that $h_+$ is diagonal in the same basis, with eigenvalues
    \begin{equation}
        \alpha_1 = \frac{1+3a^{-2}}{4}, \quad \alpha_\omega = \frac{1-3a^{-2}}{4} + i\frac{\sqrt{3}}{2}a^{-3}, \quad \alpha_{\omega^2} = \overline{\alpha_\omega} \ .
    \end{equation}
    Let $\Pi_1,\Pi_\omega,\Pi_{\omega^2}$ denote the corresponding local spectral projectors of $c$. Then $h_+$ and $h_-$ are simultaneously diagonal in the tensor-product basis $\Pi_{s_1}\otimes \cdots \otimes \Pi_{s_n}$, with $s_j \in \{1,\omega,\omega^2\}$. Therefore one experiment can be implemented as follows: on each site, measure the projective measurement $\{\Pi_1,\Pi_\omega,\Pi_{\omega^2}\}$ on the three corresponding noisy qubits; if the outcomes are $s_1,\dots,s_n$, output
    \begin{equation}
        X = \Re\left(\prod_{j=1}^n \alpha_{s_j}\right) \ .
    \end{equation}
    By construction,
    \begin{equation}
        \mathbb{E}[X] = \tr(H_{\lambda'}\widetilde\rho^{\otimes 3}) = \tr(\rho^3) \ ,
    \end{equation}
    so $X$ is an unbiased estimator.

    It remains to bound the range of $X$. We have
    \begin{equation}
        |\alpha_\omega|^2 = \frac{1 - 6a^{-2} + 9a^{-4} + 12a^{-6}}{16} \ .
    \end{equation}
    Moreover,
    \begin{equation}
        |\alpha_\omega|^2 - \alpha_1^2 = \frac{12a^{-6} - 12a^{-2}}{16} \geq 0 \ ,
    \end{equation}
    so $|\alpha_\omega| \geq \alpha_1$ for all $a \in (0,1]$. Hence every outcome obeys
    \begin{equation}
        |X| \leq |\alpha_\omega|^n = \left(\frac{1 - 6a^{-2} + 9a^{-4} + 12a^{-6}}{16}\right)^{n/2} \ .
    \end{equation}
    If we average $T$ independent copies of $X$, Hoeffding's inequality gives
    \begin{equation}
        \textnormal{Pr}\left[\left|\frac{1}{T}\sum_{t=1}^T X_t - \tr(\rho^3)\right| > \epsilon\right] \leq 2\exp\left(-\frac{T\epsilon^2}{2|\alpha_\omega|^{2n}}\right) \ .
    \end{equation}
    Thus it suffices to take
    \begin{equation}
        T = O\left(\frac{|\alpha_\omega|^{2n}}{\epsilon^2}\log\frac{1}{\delta}\right) = O\left(\frac{1}{\epsilon^2}\left(\frac{1 - 6a^{-2} + 9a^{-4} + 12a^{-6}}{16}\right)^n \log\frac{1}{\delta}\right) \ .
    \end{equation}
    For the promise problem of Definition \ref{def:moment_testing_prob}, the two hypotheses differ by a constant in $\tr(\rho^3)$, so taking $\epsilon$ to be a sufficiently small absolute constant yields the stated decision procedure.

    If one only wants the promise problem, a slightly better strategy is to omit the inverse channel and measure $H$ itself on three noisy copies. One experiment then outputs a random variable $Y \in [-1/2,1]$ with expectation
    \begin{equation}
        \mathbb{E}[Y] = \tr(\widetilde\rho^3) \ .
    \end{equation}
    Using the explicit form
    \begin{equation}
        \Delta_3 = \frac{2^{2n}(C+C^{-1}) - 2^{n+1}(F_{12}+F_{13}+F_{23}) + 4\mathds{1}_{3n}}{2^{3n}(2^n+1)(2^n+2)}
    \end{equation}
    and the convention $\sigma_p - \sigma_q = \Delta_3/18$, the expectation gap between the two hypotheses is
    \begin{equation}
        \left|\mathbb{E}_{\mathcal{E}_p}[Y] - \mathbb{E}_{\mathcal{E}_q}[Y]\right| = \frac{G_n(a)}{18\cdot 4^n(2^n+1)(2^n+2)} \ ,
    \end{equation}
    where
    \begin{equation}
        G_n(a) = (1+9a^2+6a^3)^n + (1+9a^2-6a^3)^n - 6(1+3a^2)^n + 4 \ .
    \end{equation}
    This follows from the local identities
    \begin{equation}
        2^{-3}\tr\!\big(c\,\mathcal D_{\lambda'}^{\otimes 3}[c^{-1}]\big) = \frac{1+9a^2+6a^3}{16}, \quad 2^{-3}\tr\!\big(c\,\mathcal D_{\lambda'}^{\otimes 3}[c]\big) = \frac{1+9a^2-6a^3}{16},
    \end{equation}
    and
    \begin{equation}
        2^{-3}\tr\!\big(c\,\mathcal D_{\lambda'}^{\otimes 3}[f_{ij}]\big) = \frac{1+3a^2}{8} \ ,
    \end{equation}
    with $f_{ij}$ the local swap on copies $i,j$. Since
    \begin{equation}
        1+9a^2+6a^3 > \max\{1+9a^2-6a^3,\,1+3a^2\}
    \end{equation}
    for every $a>0$, the first term dominates exponentially, so for fixed $\lambda'$ we have
    \begin{equation}
        G_n(a) = \Theta\big((1+9a^2+6a^3)^n\big) \ .
    \end{equation}
    Another application of Hoeffding therefore gives
    \begin{equation}
        T = O\left(\left(\frac{16}{1+9a^2+6a^3}\right)^{2n}\log\frac{1}{\delta}\right)
    \end{equation}
    experiments for the distinguishing task. This is slightly better than the generic deconvolved estimator because it uses the special promise structure of the hypothesis testing problem rather than recovering $\tr(\rho^3)$ for arbitrary $\rho$.
\end{proof}

\vspace{-2em}
\subsection{Exponential speedup}
The following is an immediate corollary upon combining the lower and upper bounds.
\begin{theorem}
    Let $N_{\rm inj}$ denote the sample complexity of the three-copy cycle test performed on injected copies of $\rho$ subject to depolarizing noise of strength $\lambda'$. Let $N_{\rm raw}$ denote the sample complexity of any three-copy algorithm for the distinguishing problem without injection, and thereby subject to at least two layers of depolarizing noise of strength $\lambda$. Then for any $\lambda' < 19\lambda / 12$, we have 
    \begin{equation}
        \frac{N_{\rm raw}}{N_{\rm inj}} \geq \Omega(\exp(\lambda n)) .
    \end{equation}
\end{theorem}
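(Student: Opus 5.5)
The plan is to combine the lower bound of Theorem~\ref{thm:noisy_moment_lb} with the cycle-test upper bound of Theorem~\ref{theorem:moment_ub}, then compare exponential rates to first order in the noise. Write $\eta = 1-\lambda$ and $a = 1-\lambda'$. The two bounds give
\begin{equation}
N_{\rm raw} \geq \Omega\left(\min\left\{2^{n/2}, \left(\frac{16}{f_{\rm raw}(\lambda)}\right)^n\right\}\right), \qquad N_{\rm inj} \leq O\left(\left(\frac{16}{f_{\rm inj}(\lambda')}\right)^{2n}\right),
\end{equation}
where $f_{\rm raw}(\lambda) = 1+9\eta^6+6\eta^{10}$ and $f_{\rm inj}(\lambda') = 1+9a^2+6a^3$. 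In the $N_{\rm raw}$ bound I replace $(2^n+1)^2(2^n+2)^2/R_n(\eta)$ by its leading behaviour. This is justified because $R_n(\eta) = \Theta(f_{\rm raw}^n)$: the other bases $1+9\eta^6-6\eta^{10}$ and $1+3\eta^6$ are strictly smaller. The same reasoning gives $G_n(a) = \Theta(f_{\rm inj}^n)$ on the upper-bound side.

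First, I will take the ratio. Outside the $2^{n/2}$ branch this gives
\begin{equation}
\frac{N_{\rm raw}}{N_{\rm inj}} \geq \Omega\left(\left(\frac{f_{\rm inj}(\lambda')^2}{16\, f_{\rm raw}(\lambda)}\right)^n\right).
\end{equation}
At $\lambda=\lambda'=0$ both $f$'s equal $16$, so the base is exactly $1$, and the exponent is decided by first-order terms.

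Second, I will expand the logarithms. One has $f_{\rm raw} = 16 - 114\lambda + O(\lambda^2)$ and $f_{\rm inj} = 16 - 36\lambda' + O(\lambda'^2)$. The base of the ratio therefore has logarithm
\begin{equation}
\frac{114}{16}\lambda - \frac{72}{16}\lambda' + O(\lambda^2+\lambda'^2) = \frac{9}{2}\left(\frac{19}{12}\lambda - \lambda'\right) + O(\lambda^2).
\end{equation}
This is where the threshold $19/12$ comes from. If $\lambda' \leq c\lambda$ with $c<19/12$, the leading term is at least $\frac{9}{2}(\frac{19}{12}-c)\lambda = \Omega(\lambda)$. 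For sufficiently small constant $\lambda$ this beats the quadratic remainder, so the ratio is at least $\exp(\Omega(\lambda n))$.

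Third, I will handle the $2^{n/2}$ branch of the minimum. For small $\lambda$ we have $16/f_{\rm raw} = 1+O(\lambda) < \sqrt{2}$, so the minimum is attained by the noise-dependent term and the branch is inactive. For larger $\lambda$, I will instead bound $2^{n/2}/N_{\rm inj}$ directly. This works because $N_{\rm inj}$ grows only at rate $(16/f_{\rm inj})^{2n}$, which stays below $2^{n/2}$ while $\lambda'$ is small. The statement should be read with $\lambda$ a small constant below the regime where $(16/f_{\rm inj})^2$ reaches $\sqrt{2}$.

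I expect the main obstacle to be making the first-order comparison uniform, i.e.\ turning the strict inequality $\lambda' < 19\lambda/12$ into a genuine rate $\Omega(\lambda)$ with controlled constants. Concavity or convexity of $\log f$ in the noise parameter would give one-sided bounds without Taylor remainders. Failing that, I will state the result for $\lambda'\le c\lambda$ with fixed $c<19/12$ and $\lambda$ below an explicit constant, which is how the informal theorem phrases it.
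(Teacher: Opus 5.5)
Your proposal is correct and is essentially the paper's proof: combine Theorems~\ref{thm:noisy_moment_lb} and~\ref{theorem:moment_ub}, reduce the ratio to $\bigl(f_{\rm inj}(\lambda')^2/(16 f_{\rm raw}(\lambda))\bigr)^n$, and linearize at zero noise to read off the $19/12$ threshold. Your extra care is justified. You check that the $2^{n/2}$ branch is inactive for small $\lambda$, and you restate the conclusion as $\exp(\Omega(\lambda n))$ for $\lambda'\le c\lambda$ with fixed $c<19/12$ and $\lambda$ small. This matters because the paper's argument is only first order, and the exact threshold actually lies below $19\lambda/12$ at order $\lambda^2$ (roughly $0.92\lambda^2$ below). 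So the literal phrasings ``any $\lambda'<19\lambda/12$'' and ``$\Omega(\exp(\lambda n))$'' are not what the two bounds deliver.
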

\begin{proof}
    Combining theorems \ref{thm:noisy_moment_lb} and \ref{theorem:moment_ub}, the ratio is
      \begin{equation}
        \frac{N_{\rm raw}}{N_{\rm inj}} \geq \Omega\left(\frac{G_n(1-\lambda')^2}{16^nR_n(1-\lambda)}\right) .
    \end{equation}
    Simplifying gives
    \begin{equation}
        \Omega\left(\left(\frac{(1 + 9(1-\lambda')^2 + 6(1-\lambda')^3)^2}{16(1 + 9(1-\lambda)^6 + 6(1-\lambda)^{10})}\right)^n\right) 
    \end{equation}
    This is exponential in $n$ whenever the numerator is greater than the denominator. Letting $x=1-\lambda, y=1-\lambda'$, it is easy to verify that this requirement imposes an upper constraint on $\lambda'$ for fixed $\lambda$ corresponding to the solution of
    \begin{equation}
        6y^3 + 9y^2 +1 = 4\sqrt{1+9x^6+6x^{10}}
    \end{equation}
    Working to first order in small $\lambda$, this gives 
    \begin{equation}
        \lambda'_{\rm max} < 19\lambda/12 ,
    \end{equation}
    and likewise to first order the ratio is lower bounded by $\exp(\Omega(\lambda n))$ when $\lambda' < \lambda_{\rm max}$. Naturally, a more detailed upper bound can be derived for larger $\lambda$, but this establishes an exponential sample complexity separation that permits the injection procedure to incur a constant fraction of additional error while growing the surface code compared to the case without injection.
\end{proof}

\section{Exponential Speedup in Classical Shadow Tomography}

We have seen that fault tolerance can recover many of the gains of multi-copy learning primitives that are otherwise exponentially degraded by the presence of noise. A second broad class of learning primitivies utilizes randomized preprocessing of quantum states to accurately learn a large array of properties. The most well-studied such learning algorithm is classical shadow tomography \cite{Huang_2020}. However, in physical experiments, such primitives would require applying many layers of complex, random unitary ensembles directly to fragile physical states. In this section, we show that classical shadow tomography also experiences an exponential speedup when input states are uploaded and sophisticated randomized preprocessing can be performed fault-tolerantly. More broadly, this indicates that the randomized toolbox of quantum learning theory regains its utility as a measurement primitive once fault tolerance is appropriately leveraged.

\subsection{Lower bound with noisy brickwork circuits}

\begin{lemma} \label{lemma:shadows_lower_bound}
    Consider a classical shadows protocol implemented with a unitary brickwork circuit of depth $d$ and local depolarizing noise of strength $\lambda$ between each circuit layer. Then the number of samples $N$ required to estimate $\tr(P_k \rho)$ to absolute error $\epsilon$, for arbitrary Pauli $P_k$ on $k$ contiguous sites and any quantum state $\rho$, scales as 
    \begin{equation}
        N \geq \Omega\left(\frac{1}{\omega_{\lambda, d}(P_k)(1-\lambda)^{2k}\epsilon^2}\right)
    \end{equation}
\end{lemma}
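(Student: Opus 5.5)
The plan is to reduce the classical shadows protocol to a linear-unbiased estimator and lower bound its variance. Noisy brickwork shadows as in \cite{Hu_2025} proceed as follows. A random depth-$d$ brickwork unitary $U$ is applied with $\mathcal{D}_\lambda^{\otimes n}$ after every layer, and then a computational-basis measurement gives an outcome $b$. The estimator for $\tr(P_k\rho)$ is $\hat o = \tr(P_k\,\mathcal{M}^{-1}_{\lambda,d}[\tilde U^\dagger\ketbra{b}{b}\tilde U])$, where $\mathcal{M}_{\lambda,d}$ is the noisy measurement channel. Because the brickwork ensemble is locally Clifford/Haar twirled, $\mathcal{M}_{\lambda,d}$ is diagonal in the Pauli basis. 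On $P_k$ its eigenvalue factors into two pieces: a contribution $(1-\lambda)^{k}$ from the noise layer acting directly on the input (the first layer hits the $k$ qubits of the support before any spreading), and a remaining weight $\omega_{\lambda,d}(P_k)$. The latter is the probability, with noise damping included, that the Heisenberg-evolved $P_k$ is mapped to a $Z$-type string under the random circuit. I will take this Pauli-weight dynamics on the $k$ contiguous sites (the standard Markov chain on Pauli supports used for shallow shadows) as the definition of $\omega_{\lambda,d}(P_k)$. The first step is to write $\mathcal{M}_{\lambda,d}[P_k] = (1-\lambda)^{k}\omega_{\lambda,d}(P_k)P_k$, absorbing any further damping into $\omega$.

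The second step computes the second moment of $\hat o$. Since $\hat o = \tr(P_k\tilde U^\dagger\ketbra{b}{b}\tilde U)/((1-\lambda)^k\omega_{\lambda,d})$, and the numerator is the noisy-propagated expectation of $P_k$ in a basis state, we get
\[
\mathbb{E}[\hat o^2] = \frac{\mathbb{E}_{U,b}\big[\tr(P_k\,\mathcal{N}_U^\dagger[\ketbra{b}{b}])^2\big]}{(1-\lambda)^{2k}\omega_{\lambda,d}^2}.
\]
For the maximally mixed state, or any state with $\tr(P_k\rho)\approx 0$, the numerator equals the same Pauli-weight probability, namely $\omega_{\lambda,d}(P_k)$ times the surviving noise factor. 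This follows from the twirl identity $\mathbb{E}_b \tr(Q\ketbra bb)^2=1$ for $Z$-strings and $0$ otherwise. The target bound is then $\mathrm{Var}[\hat o]\gtrsim 1/(\omega_{\lambda,d}(P_k)(1-\lambda)^{2k})$. The bound is stated for arbitrary $\rho$, and I will evaluate it at a state where it is attained. This suffices for a worst-case lower bound on $N$.

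The third step converts the variance into a sample bound, and it needs care. A large variance for the canonical estimator does not by itself bound every estimator. So for the ``best raw protocol'' I would use a hypothesis-testing lower bound: distinguish $\rho_\pm=(\mathds{1}\pm 2\epsilon P_k)/2^n$ with outcomes collected from the noisy brickwork measurement. The one-shot $\chi^2$ divergence between the two outcome distributions is bounded via the argument of Lemma \ref{thm:Heisenberg}. Taking $\mu=2^{-n}(1-2\epsilon)$, the quantity $\sum_b \tr(F_b\Delta)^2/\tr(F_b\sigma)$, with $F_b$ the noisy Heisenberg-evolved projectors, reduces to $4\epsilon^2\,\mathbb{E}_b\tr(P_k\mathcal{N}_U^\dagger[\ketbra bb])^2$ up to constants. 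By the computation above this equals $O(\epsilon^2\omega_{\lambda,d}(P_k)(1-\lambda)^{2k})$, averaged over the random $U$; the randomness in $U$ is known classical side information, so it can be included in the outcome. Lemma \ref{lemma:martingale} then gives $N\geq\Omega(1/(\omega_{\lambda,d}(P_k)(1-\lambda)^{2k}\epsilon^2))$.

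The main obstacle is this exact second-moment identity: showing that the noisy propagated weight squared equals $(1-\lambda)^{2k}\omega_{\lambda,d}$ rather than something mismatched with the first moment. The issue is that the damping enters squared in the second moment but linearly in the channel eigenvalue. I would handle it by defining $\omega_{\lambda,d}$ directly through the second-moment Markov chain, following \cite{Hu_2025}, so the two agree by construction.
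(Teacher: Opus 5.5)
Your proposal is correct, but it reaches the bound by a different route from the paper.

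\textbf{The paper's route.} The paper works only with the scalar shadow statistic $Y=\tr(P_k\hat\sigma)\in\{-1,0,1\}$. It writes the law of $Y$ conditioned on the Pauli-weight trajectory $Z=(l_1,\dots,l_d)$: probability $1-q(Z)$ for $Y=0$, and $q(Z)\bigl(1\pm(1-\lambda)^k s(Z)\theta\bigr)/2$ for $Y=\pm1$. It bounds the conditional KL divergence between $\rho_{\pm\epsilon}$ by $4(1-\lambda)^{2k}q(Z)s(Z)^2\epsilon^2\le 4(1-\lambda)^{2k}q(Z)s(Z)\epsilon^2$. Averaging over $Z$ with the chain rule produces $\omega_{\lambda,d}(P_k)=\mathbb{E}_Z[q(Z)s(Z)]$, and Pinsker plus Le Cam finish the argument.

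\textbf{Your route.} You keep the full record $(U,b)$ and bound the per-shot $\chi^2$ divergence directly. It is at most $\frac{16\epsilon^2}{1-2\epsilon}\,\mathbb{E}_U\,2^{-n}\sum_b\langle b|\mathcal{N}_U[P_k]|b\rangle^2$, which is the squared weight of the $Z$-diagonal part of the noisy Schr\"odinger-evolved $P_k$. The 2-design (Pauli-mixing) property turns this into the Markov chain on Pauli supports, and Lemma~\ref{lemma:martingale} replaces Pinsker.

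\textbf{What each buys.} Your version is more general and more rigorous in three ways:
\begin{itemize}
\item It lower-bounds every post-processing of the brickwork data, not only the shadow estimator.
\item It treats Haar as well as Clifford two-qubit gates; cross terms vanish on average.
\item It tolerates adaptive circuit choices between shots.
\end{itemize}
The paper's version is shorter and more elementary. For Clifford gates nothing is lost, because $(Z,Y)$ is then sufficient for $\rho_\pm$. It does, however, lean on the idealized trajectory picture with the damping $s(Z)$ attached to the signal.

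\textbf{The ``main obstacle'' is not one.} You should not resolve it by redefining $\omega_{\lambda,d}$. The second-moment chain carries the damping squared, $(1-\lambda)^{2k}(1-\lambda)^{2\sum_i l_i}$, and a lower bound on $N$ needs only an \emph{upper} bound on the $\chi^2$. Since $(1-\lambda)^{2\sum_i l_i}\le e^{-2\lambda\sum_i l_i}\le e^{-\lambda\sum_i l_i}$, you get $\mathbb{E}_U\,2^{-n}\sum_b\langle b|\mathcal{N}_U[P_k]|b\rangle^2\le(1-\lambda)^{2k}\omega_{\lambda,d}(P_k)$ with $\omega$ exactly as the paper defines it. This is the same move as the paper's $s(Z)^2\le s(Z)$. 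Defining $\omega$ through the second-moment chain would prove a formally different (stronger) statement, and it cannot make the first and second moments agree anyway.

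\textbf{Smaller points.}
\begin{itemize}
\item Step 2 can be dropped entirely; the first moment never enters Step 3. As written, its normalization is also inconsistent. An estimator built from the noisy-propagated numerator $\tr(P_k\mathcal{N}_U^\dagger[\ketbra{b}{b}])$ is unbiased only after dividing by the second-moment quantity, not by $(1-\lambda)^k\omega_{\lambda,d}$.
\item In Step 3, say explicitly that you use only the first half of Lemma~\ref{thm:Heisenberg}, namely $\tr(F_b\sigma)\ge\mu\tr F_b$ with $\tr F_b=1$ by unitality. Its Cauchy--Schwarz step would yield only $\mu^{-1}\tr(\Delta^2)=O(\epsilon^2)$ and discard the factor $\omega_{\lambda,d}(P_k)(1-\lambda)^{2k}$.
\end{itemize}
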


\begin{proof}
    Let $P_k\in \mathcal{P}_n$ be a Pauli operator acting as $X, Y$, or $Z$ on $k$ contiguous sites, and define the family of quantum states parameterized by $\theta$:
    \begin{equation}
        \rho_\theta = \frac{\mathds{1}+ \theta P_k}{2^n} \ .
    \end{equation}
    Note that $\tr(P_k\rho_\theta) = \theta$. Hence, any algorithm capable of estimating $\tr(P_k\rho)$ to within accuracy $\epsilon$ for arbitrary quantum states $\rho$ is capable of distinguishing the states $\rho_{-\epsilon}, \rho_{\epsilon}$ with high probability. Our goal is to show that to accomplish this distinguishing task, a classical shadow protocol implemented with $\lambda$-noisy brickwork circuits of depth $d$ will require at least $\Omega(\epsilon^{-2}\omega^{-1}_{\lambda, d}(P_k))$ samples of the unknown state, implying a lower bound on the sample complexity of learning $\tr(P_k\rho)$ for general $\rho$. 

    As in e.g. \cite{Hu_2025}, the classical shadow estimator for $\theta$ is given by 
    \begin{equation}
        \hat{\theta} = \frac{1}{\omega_{\lambda, d}(P_k)}\mathbb{E}_{\hat{\sigma}}[\tr(P_k\hat\sigma)]
    \end{equation}
    where $\hat{\sigma}$ is the classical shadow generated by our fixed unitary ensemble and computational basis measurement. Moreover, \cite{Hu_2025} demonstrates that under this $\lambda$-noisy, depth-$d$ brickwork ensemble, the Pauli shadow weight $\omega_{\lambda, d}(P_k)$ is given by 
    \begin{equation}
        \omega_{\lambda, d}(P_k) = \mathbb{E}\left[\frac{1}{3^{l_d}}\exp\left(-\lambda\sum_{i=1}^d l_i\right)\right] \ ,
    \end{equation}

    where $Z = (l_1, l_2, ..., l_d)$ is a random variable where $l_i$ denotes the weight of $P_k$ after applying $i$ layers of the unitary ensemble. Conceptually, $\omega_{\lambda, d}$ corresponds to how closely the eventual measurement basis aligns with the most ``informative" basis in which to measure $P_k$. Since $P_k$ is either $\sigma_X, \sigma_Y,$ or $\sigma_Z$ in each of $k$ relevant sites, only measurements along the correct direction at the correct site provide information relevant to the desired expectation value. This contributes the $1/3^{l_d}$ factor. Depolarizing noise sends non-identity Paulis to the identity with probability $\lambda$ after each layer, which results in the exponential damping term. It will be useful to define
    \begin{equation}
        q(Z) = 1/3^{l_d}, \quad s(Z) = \exp\left(-\lambda\sum_{i=1}^d l_i\right) \ .
    \end{equation}
    Now noting that the random variable 
    \begin{equation}
        Y = \tr(P_k\hat\sigma)
    \end{equation}
    can only take on values in $\{-1, 0, 1\}$, we see that for the state $\rho_\theta$,
    \begin{equation}
    \label{eq:shadows_functionals}
        \Pr{Y = 0| Z} = 1 - q(Z), \quad \Pr{Y = \pm 1| Z} = \frac{q(Z)(1\pm (1-\lambda)^{k}s(Z)\theta)}{2}
    \end{equation}
    Here, the additional $(1-\lambda)^{-k}$ factor comes from the fact that the shadow-weight noise model applies the first layer of noise after the first gate, so we must apply a noise channel directly to $\rho_\theta$ before computing our estimator. Our hypothesis testing task is to decide between having received $\rho_{-\epsilon}, \rho_{\epsilon}$ after collecting measurement outcomes. Let $P_-, P_+$ denote the outcome distributions corresponding to the shadow statistic $Y$ under ground truth $\rho_{-\epsilon}, \rho_{\epsilon}$, respectively. To obtain a lower bound on the sample complexity of the hypothesis testing task, our goal is to bound the distinguishability of $P_+$ and $P_-$. Morever, note that
    \begin{equation}
    \label{eq:shadows_chain_rule}
        d_{KL}(P_+\|P_-) \leq \mathbb{E}_Z\big[d_{KL}(P_+(Y|Z)\|P_-(Y|Z))\big] \ .
    \end{equation}
    due to the chain rule of KL divergence and the fact that $Z$ is independent of the ground truth. By \eqref{eq:shadows_functionals} and upon simplifying the definition of KL divergence, we obtain
    \begin{equation}
        d_{KL}(P_+(Y|Z)\|P_-(Y|Z)) = \epsilon (1-\lambda)^{k}q(Z)s(Z)\log \frac{1+ (1-\lambda)^{k}s(Z)\epsilon}{1-(1-\lambda)^{k}s(Z)\epsilon}
    \end{equation}
    For $s(Z)\epsilon < 1/2$ and noting $s(Z), (1-\lambda)^{k} \leq 1$, we have 
    \begin{equation}
    \label{eq:shadows_simp}
       \epsilon (1-\lambda)^{-k}q(Z)s(Z)\log \frac{1+ (1-\lambda)^{k}s(Z)\epsilon}{1-(1-\lambda)^{k}s(Z)\epsilon} \leq 4(1-\lambda)^{2k}q(Z)s(Z)\epsilon^2
    \end{equation}
    where we use
    \begin{equation}
        \log \frac{1+x}{1-x} \leq \frac{2x}{1-x} \leq 4x
    \end{equation}
    for $x \leq 1/2$, and we also use $s(Z)^2 \leq s(Z)$. Since $\omega_{\lambda, d}(P_k) = \mathbb{E}_Z[q(Z)s(Z)]$, we combine \eqref{eq:shadows_chain_rule} and \eqref{eq:shadows_simp} to obtain
    \begin{equation}
        d_{KL}(P_+\|P_-) \leq 4(1-\lambda)^{2k}\epsilon^2 \omega_{\lambda, d}(P_k) \ .
    \end{equation} 
    Collecting $N$ shots and applying Pinsker's inequality, we find
    \begin{equation}
        d_{\rm TV}(P_+, P_-) \leq \epsilon(1-\lambda)^{k}\sqrt{2N\omega_{\lambda, d}(P_k)} \ .
    \end{equation}
    By Lemma \ref{lemma:le_cam}, we obtain that to successfully solve the hypothesis testing task with probability at least $2/3$, we require
    \begin{equation}
        N \geq \Omega\left(\frac{1}{\omega_{\lambda, d}(P_k)(1-\lambda)^{2k}\epsilon^2}\right) \ .
    \end{equation}
\end{proof}
Next, we obtain a simple upper bound for $\omega_{\lambda, d}(P_k)$, allowing us to obtain a quantitative lower bound on the $\lambda$-noisy sample complexity. 
\begin{lemma} \label{lemma:shadow_norm_ub}
    \begin{equation}
        \sup_{d\geq 0} \omega_{\lambda, d}(P_k) \leq \max\left\{3^{-k}, e^{-\lambda \lfloor (k-1)/2\rfloor}\sup_{d\geq 0}\omega_{\lambda=0, d}(P_k)\right\}
    \end{equation}
\end{lemma}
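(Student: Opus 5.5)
We work directly with the formula of \cite{Hu_2025} quoted in the proof of Lemma~\ref{lemma:shadows_lower_bound},
\begin{equation}
    \omega_{\lambda,d}(P_k)=\mathbb{E}_Z\!\left[3^{-l_d}\exp\!\Big(-\lambda\sum_{i=1}^d l_i\Big)\right],
\end{equation}
where $l_i$ is the weight of the Heisenberg-evolved string after $i$ brickwork layers. The idea is to split according to depth. For $d=0$ there is no evolution and $\omega_{\lambda,0}(P_k)=3^{-k}$, which is the first term of the maximum. For $d\geq 1$ we will bound the damping factor $s(Z)$ pointwise by a quantity that depends only on $k$, then pull it out of the expectation.

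The key step is a deterministic lower bound on the weight after one layer: $l_1\geq \lceil k/2\rceil\geq \lfloor (k-1)/2\rfloor$ for every trajectory. The brickwork layer is a product of disjoint two-qubit gates, and a traceless operator on a block can only be mapped to a traceless operator on that block. So every block that meets the support of $P_k$ still carries at least one non-identity site after the layer. Since $P_k$ occupies $k$ contiguous sites, it meets at least $\lceil k/2\rceil$ disjoint pairs, which gives the bound. The looser exponent $\lfloor (k-1)/2\rfloor$ in the statement also covers the boundary-alignment case where the first pair is only half-covered.

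With this, for $d\geq 1$ we drop all terms $i\geq 2$ from the sum, using $l_i\geq 0$, to get
\begin{equation}
    s(Z)\leq e^{-\lambda l_1}\leq e^{-\lambda\lfloor (k-1)/2\rfloor}.
\end{equation}
Substituting this gives
\begin{equation}
    \omega_{\lambda,d}(P_k)\leq e^{-\lambda\lfloor (k-1)/2\rfloor}\,\mathbb{E}_Z[3^{-l_d}].
\end{equation}

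The remaining step is to identify $\mathbb{E}_Z[3^{-l_d}]$ with the noiseless shadow weight $\omega_{0,d}(P_k)$. This requires the law of the weight trajectory $Z$ to be independent of $\lambda$. In the model of \cite{Hu_2025}, the Pauli-weight Markov chain is generated by the random two-qubit gates alone, and the noise enters only through the multiplicative factor $s(Z)$. Taking the supremum over $d\geq 1$ and combining with the $d=0$ case then yields the claimed maximum.

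The main obstacle is confirming this decoupling precisely, since it is where the argument could fail. If in that framework the noise is instead modeled as sending the string to identity, then $s(Z)$ is exactly the survival probability along a $\lambda$-independent trajectory, and the identification still holds. We would also check carefully the contiguous-block counting at the edges of the brickwork, which is why we settle for the slightly weaker exponent $\lfloor (k-1)/2\rfloor$.
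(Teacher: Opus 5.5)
Your proposal is correct and follows the paper's proof. You split off $d=0$ (giving $3^{-k}$), bound $\exp(-\lambda\sum_i l_i)\le \exp(-\lambda l_1)$ using a deterministic lower bound on $l_1$ from the block structure of the first brickwork layer, and identify the remaining $\mathbb{E}[3^{-l_d}]$ with $\omega_{\lambda=0,d}(P_k)$, because the weight trajectory's law in the model of \cite{Hu_2025} does not depend on $\lambda$. Your count $l_1\ge\lceil k/2\rceil$ is actually sharper than the $\lfloor (k-1)/2\rfloor$ the paper uses, so the weaker exponent in the statement follows immediately.
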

\begin{proof}
    From \cite{Hu_2025}, we have 
    \begin{equation}
        \omega_{\lambda, d}(P_k) = \mathbb{E}\left[\frac{1}{3^{l_i}}\exp\left(-\lambda\sum_{i=1}^d l_i\right)\right] \ .
    \end{equation}
    Because all $l_i \geq 0$, the exponential term is bounded above by $\exp(-\lambda l_1)$. We note that for $\Theta(1)$ noise rate $\lambda$, adding depth tends to degrade performance, and the optimal depth approaches 0 as $k$ grows large. Hence, in the asymptotic regime of interest, bounding the sum in the exponential by its first term becomes relatively tight. 

    As established in \cite{Hu_2025}, in the brickwork circuit model, a random two-qubit unitary acting on a pair of sites which correspond to non-identity Pauli terms can only map the two-site operator to an operator with at least one non-identity Pauli. When we act on $P_k$ with pairwise two-qubit unitaries, this property implies that regardless of the realization of random unitaries sampled from our ensemble, the resulting operator must contain at least $\lfloor (k-1)/2\rfloor$ non-identity terms. For example, this lower bound can be achieved in the case that $k$ is even and the pairwise unitaries are misaligned with the sites corresponding to non-identity Paulis. With this, we obtain
    \begin{equation}
        \omega_{\lambda, d}(P_k) \leq \mathbb{E}\left[\frac{1}{3^{l_d}}\exp\left(-\lambda l_1\right)\right] \leq e^{-\lambda\lfloor (k-1)/2\rfloor}\mathbb{E}\left[\frac{1}{3^{l_d}}\right]
    \end{equation}
    Now we observe that the remaining expectation is exactly $\omega_{\lambda = 0, d}(P_k)$, the Pauli shadow weight under a \textit{noiseless} shadow circuit. Maximizing over $d\geq 1$, we have 
    \begin{equation}
        \sup_{d\geq 1} \omega_{\lambda, d}(P_k) \leq e^{-\lambda\lfloor (k-1)/2\rfloor}\sup_{d\geq 1}\omega_{\lambda=0, d}(P_k)
    \end{equation}
    At $d=0$, $\omega_{\lambda, d=0}(P_k) = 3^{-k}$ (independent of $\lambda$ as no noisy layers are executed), which completes the proof. 
\end{proof}
Now, we are able to obtain lower bounds on the sample complexity of brickwork shallow shadows under a $\lambda$-noisy circuit model. 
\subsection{Upper bound with injection and exponential speedup}
To compare this lower bound against the injection-enhanced procedure, we prove the following Lemma.

\begin{lemma} \label{lemma:shadows_ub}
Let $P_k$ be a Pauli operator supported on $k$ contiguous qubits. Consider the protocol that first injects the input state $\rho$ into a logical register via
\begin{equation}
\rho_{\mathrm{log}} = U_{\mathrm{enc}} \, \mathcal{D}_\eta^{\otimes n}(\rho)\, U_{\mathrm{enc}}^\dagger,
\end{equation}
and then applies a fault-tolerant implementation of the noiseless brickwork shallow-shadows ensemble on the logical qubits at depth $d = \Theta(\log k)$. Then there exists an estimator of $\mathrm{Tr}(P_k \rho)$ which, for constant success probability, achieves absolute error $\epsilon$ using
\begin{equation}
N = O\!\left(\frac{3^{3k/4}\,\mathrm{poly}(k)}{(1-\eta)^{2k}\,\epsilon^2}\right)
\end{equation}
copies of $\rho$. Achieving failure probability at most $\delta$ incurs an additional multiplicative factor of $\log(1/\delta)$.
\end{lemma}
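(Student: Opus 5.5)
Since the upload acts on the logical register as the one-time channel $\mathcal{D}_\eta^{\otimes n}$, I will reduce the problem to a noiseless shallow-shadows protocol run on the depolarized state $\tilde\rho = \mathcal{D}_\eta^{\otimes n}(\rho)$. By Theorem~\ref{thm:arb_input_threshold_theorem} and Lemma~\ref{lemma:twirling}, the fault-tolerant brickwork circuit followed by logical computational-basis measurement reproduces the ideal outcome distribution on $\tilde\rho$ up to an arbitrarily small trace-distance error $\epsilon_{\rm FT}$. I will take $\epsilon_{\rm FT}$ polynomially small in $\epsilon$, which costs only polylogarithmic overhead and does not change the sample count. Since $P_k$ is supported on $k$ sites, $\tr(P_k\tilde\rho) = (1-\eta)^k \tr(P_k\rho)$. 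The target quantity can therefore be recovered by estimating $\tr(P_k\tilde\rho)$ to accuracy $\epsilon(1-\eta)^k$ and rescaling by $(1-\eta)^{-k}$. This rescaling is the source of the $(1-\eta)^{-2k}$ factor.

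For the estimator I will use the noiseless brickwork shadow estimator of \cite{Hu_2025}, $\hat\theta = Y/\omega_{0,d}(P_k)$ with $Y = \tr(P_k\hat\sigma)$. As in the proof of Lemma~\ref{lemma:shadows_lower_bound}, $Y\in\{-1,0,1\}$, and $\mathbb{E}[Y] = \omega_{0,d}(P_k)\tr(P_k\tilde\rho)$. This follows from the Pauli-weight formula with $\lambda = 0$: $\omega_{0,d}(P_k) = \mathbb{E}[3^{-l_d}]$, where $l_d$ is the random weight after $d$ layers. The estimator is therefore unbiased, and its second moment satisfies $\mathbb{E}[\hat\theta^2] \le \mathbb{E}[|Y|]/\omega_{0,d}^2 = 1/\omega_{0,d}(P_k)$. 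Equivalently, from the $\{-1,0,1\}$ outcome structure, $\Pr[Y\neq 0] = \omega_{0,d}$, so the variance is at most $1/\omega_{0,d}(P_k)$. Chebyshev's inequality then gives constant success probability from $N = O\big(1/(\omega_{0,d}(P_k)(1-\eta)^{2k}\epsilon^2)\big)$ samples. Median-of-means boosts this to failure probability $\delta$ at the cost of a $\log(1/\delta)$ factor.

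The remaining step, which I expect to be the main obstacle, is showing that at depth $d = \Theta(\log k)$ the weight satisfies $\omega_{0,d}(P_k) \ge 3^{-3k/4}/\mathrm{poly}(k)$. I will not rederive this but import the noiseless shallow-shadows analysis of \cite{Hu_2025}. The relevant facts from that analysis are that the Pauli-weight process on a contiguous string behaves like a Markov chain under random two-qubit gates, and that for logarithmic depth the optimal shadow norm for contiguous $k$-local Paulis scales as $3^{ck}\mathrm{poly}(k)$. The constant $c$ is at most roughly $\log_3 2$-type values, and in particular $c \le 3/4$. To make this self-contained, my first attempt would be a direct lower bound on $\mathbb{E}[3^{-l_d}]$. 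The argument runs as follows. With constant probability per layer, each two-site block inside the support either contracts to a single nonidentity site (probability $3/15$ for each $2\to1$ transition) or stays at weight $2$. Tracking the chain via the transfer matrix of the brickwork two-site process, I would extract its leading eigenvalue per site. A crude strategy suffices for the $3/4$ exponent: restrict to the event that after $O(\log k)$ layers the weight sits near $k/2$-type typical values. That event has probability at least $1/\mathrm{poly}(k)$ times an exponential factor, and the factor combines with $3^{-l_d}$ to give $3^{-3k/4}$. If that computation proves delicate, I will instead cite the $d=\Theta(\log k)$ shadow-norm bound of \cite{Hu_2025} directly. Substituting into the sample bound from the previous paragraph yields the claimed $N$.
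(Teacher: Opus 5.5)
Your proposal is correct and follows essentially the paper's argument. You rescale via $\tr(\overline{P}_k\rho_{\mathrm{log}})=(1-\eta)^k\tr(P_k\rho)$, import the $d=\Theta(\log k)$ bound $\log_3\|P_k(d)\|_{\rm sh}^2\le (k+d)\bigl(3/4+e^{-\gamma d}/d^{3/2}\bigr)$ from Theorem~1 of \cite{Hu_2025}, and apply concentration with variance $1/\omega_{0,d}(P_k)=\|P_k(d)\|_{\rm sh}^2$. Where the paper only cites ``standard concentration bounds,'' you spell this step out via $Y\in\{-1,0,1\}$, Chebyshev, and median-of-means for the $\log(1/\delta)$ factor.

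Two small remarks. First, your optional self-contained sketch is not right as stated, and you should lean on your fallback of citing \cite{Hu_2025}. Weights near $k/2$ are atypical, and the claimed combination to $3^{-3k/4}$ is unjustified. The natural route is Jensen, $\mathbb{E}[3^{-l_d}]\ge 3^{-\mathbb{E}[l_d]}$, with $\mathbb{E}[l_d]\approx\tfrac34\bigl(k+O(d)\bigr)$ once the weight has relaxed. Second, the per-shot fault-tolerance error must be $O(1/N)$, hence exponentially small in $k$ rather than merely polynomially small in $\epsilon$; this still costs only polylogarithmic overhead.
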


\begin{proof}
Let $\overline{P}_k = U_{\mathrm{enc}} P_k U_{\mathrm{enc}}^\dagger$ denote the corresponding logical Pauli observable. Since the single-qubit depolarizing channel $\mathcal{D}_\eta$ acts diagonally in the Pauli basis and multiplies each non-identity Pauli by a factor $(1-\eta)$, we have
\begin{equation}
\mathrm{Tr}(\overline{P}_k \rho_{\mathrm{log}}) = \mathrm{Tr}\!\left(P_k \mathcal{D}_\eta^{\otimes n}(\rho)\right) = (1-\eta)^k \mathrm{Tr}(P_k \rho).
\end{equation}
Define $\mu = \mathrm{Tr}(P_k \rho)$ and $\mu_{\mathrm{log}} = \mathrm{Tr}(\overline{P}_k \rho_{\mathrm{log}})$. Then
$\mu_{\mathrm{log}} = (1-\eta)^k \mu$. Thus, estimating $\mu$ to additive error $\epsilon$ is equivalent to estimating $\mu_{\mathrm{log}}$ to additive error
$\epsilon_{\mathrm{log}} = (1-\eta)^k \epsilon$.

After injection, the protocol proceeds fault-tolerantly, so we may analyze the logical circuit using the noiseless shallow-shadows framework. For the brickwork ensemble at depth $d$, the squared shadow norm of a $k$-local Pauli satisfies
\begin{equation}
\log_3 \| P_k(d)\|_{\mathrm{sh}}^2 
\le (k+d)\left(\frac{3}{4} + \frac{e^{-\gamma d}}{d^{3/2}}\right),
\end{equation}
for $\exp(-\gamma) = (4/5)^2$, and the optimal depth is $d^* = \Theta(\log k)$, from Theorem 1 of \cite{Hu_2025}. Substituting $d = d^*$ yields
\begin{equation}
\| P_k(d^*)\|_{\mathrm{sh}}^2 \leq 3^{3k/4} \, \mathrm{poly}(k).
\end{equation}
The same bound applies to $\overline{P}_k$.

By standard concentration bounds for the classical shadows estimator, estimating $\mu_{\mathrm{log}}$ to accuracy $\epsilon_{\mathrm{log}}$ with constant success probability requires
\begin{equation}
N = O\!\left(\frac{\|\overline{P}_k(d^*)\|_{\mathrm{sh}}^2}{\epsilon_{\mathrm{log}}^2}\right).
\end{equation}
Substituting the bound on the shadow norm and $\epsilon_{\mathrm{log}} = (1-\eta)^k \epsilon$ gives
\begin{equation}
N = O\!\left(\frac{3^{3k/4}\,\mathrm{poly}(k)}{(1-\eta)^{2k}\,\epsilon^2}\right).
\end{equation}
Finally, rescaling the estimator by $(1-\eta)^{-k}$ yields an estimate of $\mu = \mathrm{Tr}(P_k \rho)$ with additive error at most $\epsilon$. This completes the proof.
\end{proof}
Once again, the following is an immediate corollary upon combining the lower and upper bounds.
\begin{theorem}
    Let $N_{\rm inj}$ denote the sample complexity of the shallow shadows protocol performed on injected copies of $\rho$ subject to depolarizing noise of strength $\lambda'$. Let $N_{\rm raw}$ denote the sample complexity of any brickwork-circuit strategy for estimating a $k$-local Pauli on contiguous sites without injection. Then for small $\lambda$, for any $\lambda' < 5\lambda / 4$, we have 
    \begin{equation}
        \frac{N_{\rm raw}}{N_{\rm inj}} \geq \exp (\Omega(\lambda k)) .
    \end{equation}
    More generally, this is true whenever 
    \begin{equation}
    \lambda' <  \begin{cases}
        1-(1-\lambda )e^{-\lambda/4}, \quad \lambda \leq \frac{1}{2}\log 3 \\
        1-(1-\lambda )3^{-1/8}, \quad \lambda >\frac{1}{2}\log 3
    \end{cases}  \ .     
    \end{equation}
\end{theorem}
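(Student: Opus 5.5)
We combine Lemma~\ref{lemma:shadows_lower_bound}, Lemma~\ref{lemma:shadow_norm_ub} and Lemma~\ref{lemma:shadows_ub} and compare the exponential rates in $k$.

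\textbf{Lower bound on $N_{\rm raw}$.} Lemma~\ref{lemma:shadows_lower_bound} holds for every depth $d$. The raw learner may choose the best $d$, so
\begin{equation}
N_{\rm raw}\geq \Omega\!\left(\frac{1}{(1-\lambda)^{2k}\epsilon^2\,\sup_{d}\omega_{\lambda,d}(P_k)}\right).
\end{equation}
Lemma~\ref{lemma:shadow_norm_ub} bounds the supremum by $\max\{3^{-k},\,e^{-\lambda\lfloor (k-1)/2\rfloor}\sup_d\omega_{0,d}(P_k)\}$. For the noiseless weight, note that $\omega_{0,d}(P_k)=\|P_k(d)\|_{\rm sh}^{-2}$. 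The optimal noiseless shallow-shadow norm from \cite{Hu_2025} satisfies $\inf_d\|P_k(d)\|_{\rm sh}^2\geq 3^{3k/4}/\mathrm{poly}(k)$; this should be the matching lower half of Theorem~1 there, and I will cite it as such. Hence $\sup_d\omega_{0,d}\leq \mathrm{poly}(k)\,3^{-3k/4}$. The two branches of the max then give
\begin{equation}
N_{\rm raw}\gtrsim \frac{1}{(1-\lambda)^{2k}\epsilon^2}\min\!\left\{3^{k},\; \frac{e^{\lambda k/2}\,3^{3k/4}}{\mathrm{poly}(k)}\right\}.
\end{equation}

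\textbf{Upper bound on $N_{\rm inj}$.} Lemma~\ref{lemma:shadows_ub} applied with injection noise $\eta=\lambda'$ gives $N_{\rm inj}=O(3^{3k/4}\mathrm{poly}(k)/((1-\lambda')^{2k}\epsilon^2))$. The injection noise is depolarizing by Theorem~\ref{thm:arb_input_threshold_theorem} together with Lemma~\ref{lemma:twirling}.

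\textbf{Ratio.} Dividing, the $3^{3k/4}$ factors cancel in the second branch, and
\begin{equation}
\frac{N_{\rm raw}}{N_{\rm inj}}\gtrsim \frac{1}{\mathrm{poly}(k)}\left(\frac{1-\lambda'}{1-\lambda}\right)^{2k}\min\{3^{k/4},\,e^{\lambda k/2}\}.
\end{equation}
The minimum equals $e^{\lambda k/2}$ exactly when $\lambda\leq \tfrac12\log 3$; otherwise it equals $3^{k/4}$.

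In the first regime the per-$k$ base is $\big((1-\lambda')/(1-\lambda)\big)^2e^{\lambda/2}$. This exceeds $1$ iff $1-\lambda'>(1-\lambda)e^{-\lambda/4}$, i.e.\ $\lambda'<1-(1-\lambda)e^{-\lambda/4}$. In the second regime the same computation with $3^{1/4}$ in place of $e^{\lambda/2}$ gives $\lambda'<1-(1-\lambda)3^{-1/8}$. These are the two stated cases.

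For small $\lambda$, expanding gives $1-(1-\lambda)e^{-\lambda/4}=\tfrac54\lambda+O(\lambda^2)$, which recovers the condition $\lambda'<5\lambda/4$. The exponent of the base is $\tfrac{\lambda}{2}-2\lambda+2\lambda'+O(\lambda^2)$, which is not $\Omega(\lambda)$ near the threshold. So I would state the rate as $\exp(\Omega(\lambda k))$ for $\lambda'\leq c\lambda$ with any fixed $c<5/4$, where the base is $1+\Omega(\lambda)$. The $\mathrm{poly}(k)$ factors are absorbed into the exponential for $k$ large.

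\textbf{Main obstacle.} The delicate step is the lower bound $\inf_d\|P_k(d)\|_{\rm sh}^2\geq 3^{3k/4}/\mathrm{poly}(k)$, i.e.\ that noiseless brickwork shadows cannot beat the $3^{3k/4}$ rate. Without it, the $3^{3k/4}$ factors would not cancel. If \cite{Hu_2025} gives only the upper bound, I would fall back on a direct argument: each light-cone boundary site contributes an independent factor of order $3^{-3/4}$.

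A secondary check concerns Lemma~\ref{lemma:shadows_lower_bound}. Its $(1-\lambda)^{2k}$ factor appears as $(1-\lambda)^{-k}$ in one intermediate line, and I would verify that the exponent has the correct sign so that the raw protocol incurs the extra $(1-\lambda)^{-2k}$.
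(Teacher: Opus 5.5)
\textbf{The gap is the step you flag as the main obstacle, and the claimed bound there is false.} The overall plan matches the paper's: combine Lemmas~\ref{lemma:shadows_lower_bound}, \ref{lemma:shadow_norm_ub} and~\ref{lemma:shadows_ub}, reach a ratio of the form $\big(\tfrac{1-\lambda'}{1-\lambda}\big)^{2k}\min\{3^{k/4},e^{\lambda k/2}\}$, and split at $\lambda=\tfrac12\log 3$. But the bound $\inf_d\|P_k(d)\|_{\rm sh}^2\geq 3^{3k/4}/\mathrm{poly}(k)$ does not hold, so this is more than a missing citation.

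A depth-$1$ counterexample suffices. Take $P_k=Z^{\otimes k}$ with $k$ even and the brickwork pairs aligned with the support. Each random two-qubit gate sends $ZZ$ to a uniformly random non-identity two-qubit Pauli, which is diagonal with probability $3/15=1/5$. Equivalently, per pair $\mathbb{E}[3^{-l_1}]=\tfrac{6}{15}\cdot\tfrac13+\tfrac{9}{15}\cdot\tfrac19=\tfrac15$. Hence $\omega_{0,1}(P_k)=5^{-k/2}$, and
\[
\|P_k(1)\|_{\rm sh}^2=5^{k/2}=25^{k/4}<27^{k/4}=3^{3k/4}.
\]
This discrepancy is exponential in $k$, and misaligned placements cost only a constant factor. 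The $3^{3k/4}$ in Theorem~1 of Hu et al.\ is an upper bound only. Indeed, Jensen gives $1/\mathbb{E}[3^{-l_d}]\leq 3^{\mathbb{E}[l_d]}$, and fluctuations of $l_d$ toward small weight can make the true norm exponentially smaller. So your light-cone fallback cannot rescue the claim either.

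\textbf{Why the mismatch is fatal.} If the raw lower bound is written with the true $\omega^*:=\sup_d\omega_{0,d}(P_k)$, but $N_{\rm inj}$ is bounded by the loose $3^{3k/4}\mathrm{poly}(k)$, the second branch of your ratio becomes $e^{\lambda\lfloor (k-1)/2\rfloor}3^{-3k/4}/\omega^*$. Since $\omega^*\ge 5^{-k/2}$, this is at most $e^{\lambda k/2}(\sqrt5/3^{3/4})^k$. That is exponentially small for $\lambda\lesssim 0.04$, which is exactly the small-noise regime of the $5\lambda/4$ claim.

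\textbf{The fix (the paper's route).} Never expand $\omega^*$. Let the uploaded learner run the noiseless brickwork ensemble at the depth (nearly) attaining $\omega^*$. The argument of Lemma~\ref{lemma:shadows_ub} then gives $N_{\rm inj}=O\big(1/(\omega^*(1-\lambda')^{2k}\epsilon^2)\big)$; this is what the paper means by calling the numerator the reciprocal of $\omega^*$. Then
\begin{equation*}
\frac{N_{\rm raw}}{N_{\rm inj}}=\Omega\Big(\big(\tfrac{1-\lambda'}{1-\lambda}\big)^{2k}\,\frac{\omega^*}{\max\{3^{-k},\,e^{-\lambda\lfloor (k-1)/2\rfloor}\omega^*\}}\Big)=\Omega\Big(\big(\tfrac{1-\lambda'}{1-\lambda}\big)^{2k}\min\{3^{k}\omega^*,\,e^{\lambda\lfloor (k-1)/2\rfloor}\}\Big).
\end{equation*}
Here $\omega^*$ cancels exactly in the second branch. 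The first branch needs only $3^k\omega^*\geq 3^{k/4}/\mathrm{poly}(k)$, i.e.\ $\|P_k(d)\|_{\rm sh}^2\leq 3^{3k/4}\mathrm{poly}(k)$ at $d=\Theta(\log k)$. That is the direction Hu et al.\ actually prove.

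With this change, your case analysis and threshold computation go through unchanged. Your remark that the rate is $\exp(\Omega(\lambda k))$ only for $\lambda'\leq c\lambda$ with fixed $c<5/4$ is a fair sharpening of the statement.
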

\begin{proof}
    Because the squared shadow norm is the reciprocal of the Pauli shadow weight, the numerator $3^{3k/4}\textnormal{poly}(k)$ in the upper bound from Lemma \ref{lemma:shadows_ub} is exactly the reciprocal of $\omega^*(P_k) \coloneqq \sup_{d\geq 1}\omega_{\lambda=0, d}(P_k)$. Hence, 
    \begin{equation}
        \frac{N_{\rm raw}}{N_{\rm inj}} = \Omega\left(\left(\frac{1-\lambda'}{1-\lambda}\right)^{2k} \frac{\omega^*(P_k)}{\max \{3^{-k}, e^{-\lambda \lfloor (k-1)/2\rfloor}\omega^*(P_k)\}}\right) = \Omega\left(\left(\frac{1-\lambda'}{1-\lambda}\right)^{2k} \min\left\{3^{k}\omega^*(P_k), e^{\lambda \lfloor (k-1)/2\rfloor}\}\right\}\right)
    \end{equation}
    using Lemmas \ref{lemma:shadows_lower_bound} and \ref{lemma:shadow_norm_ub} for the denominator and Lemma \ref{lemma:shadows_ub} for the numerator. One can check that rewriting in base $e$ gives
\begin{equation}
     \frac{N_{\rm raw}}{N_{\rm inj}} \geq \Omega\left(\exp\left[\left(2\log\frac{1-\lambda'}{1-\lambda}+\min\left\{\frac{\log 3}{4}, \frac{\lambda}{2}\right\} \right)k\right]\right)
\end{equation}
so the prefactor of $k$ in the exponent must be a positive constant for an exponential separation to persist. Enforcing this condition yields the result, and for small $\lambda$ we have $1-(1-\lambda )e^{-\lambda/4} \approx 5\lambda/4$, so in the small-noise regime we retain a provable exponential speedup whenever injection incurs less than $\approx 20 \%$ noise overhead. We note that this does not rule out the potential for a quantitatively exponential separation at larger injection noise overheads, but simply establishes a rigorous condition under which a provable exponential speedup persists.
\end{proof}

\section{Description of Numerical Experiments}
In this section we describe the numerical experiments conducted in Figure \ref{fig:astronomy} at a high level.

\begin{center}
\textit{Input states}    
\end{center}

The optical imaging pipeline of Ref. \cite{mokeev2025enhancingopticalimagingquantum} can be understood as follows. The quantum state of the incoming photons is modeled as 
\begin{equation}
    \rho = b\ketbra{\psi_{\rm star}}{\psi_{\rm star}} + (1-b) \ketbra{\psi_{\rm planet}}{\psi_{\rm planet}}
\end{equation}
$b \in (0.5, 1)$ is the brightness parameter, and it may be the case that $\bra{\psi_{\rm star}}\psi_{\rm planet}\rangle \neq 0$. One assumes $b$ is given by other astrophysical measurements, and may be close to $1$ when the star is much brighter than the planet. The goal is to measure $\bra{\psi_{\rm planet}}O\ket{\psi_{\rm planet}}$; this is challenging due to the mixed state provided. The specific form of the two pure states is obtained from plane-wave solutions for the incoming light being reflected off the telescope, which in this case may be an array of atomic cavities; this structure is utilized in our numerical simulations, and the overlap between states is controlled by a translation of the plane-wave solutions in the plane of the telescope. The two states in our hypothesis testing problem correspond to switching $b \leftrightarrow (1-b)$ while fixing the spatial separation between the two pure states in the $x$-direction.  

\begin{center}
\textit{Observable estimation}    
\end{center}

The key observation of \cite{mokeev2025enhancingopticalimagingquantum} is that the quantity $\bra{\psi_{\rm planet}}O\ket{\psi_{\rm planet}}$ can be written in terms of expectations with respect to the \textit{eigenvectors} of $\rho$; namely, upon rewriting $\rho = r\ketbra{V_1}{V_1} + (1-r)\ketbra{V_2}{V_2}$ with $\bra{V_1}V_2\rangle = 0$ and $r, (1-r)$ the two eigenvalues, one has
\begin{equation}
    \bra{\psi_{\rm planet}}O\ket{\psi_{\rm planet}} = |c_1|^2 \tr(\rho \ketbra{V_1}{V_1}) + |c_2|^2 \tr(\rho \ketbra{V_2}{V_2}) + 2\Re(c_1^\star c_2\bra{V_1}O|V_2\rangle), 
\end{equation}
with $c_1, c_2$ calculable from known values. The task then becomes estimating the two diagonal expectations and the one off-diagonal expectation. 

\begin{center}
\textit{Filtering pipeline}    
\end{center}

To estimate these three quantities, Ref. \cite{mokeev2025enhancingopticalimagingquantum} provides a quantum algorithm that processes copies of $\rho$ into approximate copies of the states $\ket{V_1}, \ket{V_2}$. This is done with two nested primitives. First, sequential copies of $\rho$ are captured and stored, and the quantum principal component analysis algorithm is used to prepare an effective unitary that approximates $\exp(\pm i x \rho)$ for a chosen scalar parameter $x$. This unitary is implemented in a controlled fashion on another, coherently-evolving copy of $\rho$, interleaved with single-qubit rotations on the ancilla system. These single-qubit rotations are chosen to implement a quantum signal processing (QSP) subroutine such that the effective polynomial transform is close to a Heaviside function, with step between $xr$ and $x(1-r)$. If implemented appropriately, the resulting map that interleaves QSP rotations with queries of the approximate $\exp(\pm i x \rho)$ gate approximately implements the transform
\begin{equation}
    \ketbra{0}{0}_{\rm anc} \otimes \rho \,\,\longmapsto\,\, r\ketbra{0}{0}_{\rm anc} \otimes \ketbra{V_1}{V_1} +  (1-r)\ketbra{1}{1}_{\rm anc} \otimes\ketbra{V_2}{V_2} \ .
\end{equation}
Measuring the ancilla qubits produces labeled approximate samples of the eigenvectors; in practice, these can be mixed states that are close to pure. There are several sources of approximation error, even in the noiseless setting. First, the quality of approximation of each density-matrix exponentiation (DME) depends on the choice of $x$ and the number of rounds $M$ used to create each oracle query. As $x$ decreases, smaller $M$ will suffice to achieve a fixed target accuracy. However, the interleaved QSP protocol will allow either eigenvector to leak into the other branch with probability that increases as $x$ shrinks. This establishes a tradeoff for choosing the parameter $x$, and is explicitly considered in the asymptotic setting in \cite{mokeev2025enhancingopticalimagingquantum}; the takeaway is that $x$ must scale proportionally to the quantity $(\epsilon^{1/2}/\mathrm{QSP \ polynomial \ degree})$, where $\epsilon$ is the final admissible estimation error for the observable. Our code numerically optimizes for $x$ in tandem with minimizing the QSP degree, and the optimal solutions quantitatively agree with this asymptotic scaling. The QSP polynomial optimization is performed using the QRISP library. 

Once labeled copies of the eigenvectors have been stored, one can solve for the two diagonal quantities $\tr(\rho \ketbra{V_1}{V_1})$ and $\tr(\rho\ketbra{V_2}{V_2})$ by directly performing measurements of the  labeled eigenstates. For the off-diagonal quantity, one needs a more complex measurement based on a block-encoded variant of the SWAP test; see the Supplementary Material of \cite{mokeev2025enhancingopticalimagingquantum}. Our numerics directly estimate the means and variances of each of these three quantities, treating them as random variables. In computing sample complexity, we account for the sampling overhead of postselecting on the ancilla, as well as the number of copies utilized in each round of the block-encoding procedure.

\begin{center}
\textit{Sample complexity and success probability}    
\end{center}
We choose $O$ to be an observable that measures the horizontal spatial location of the source. In the raw noise model, a layer of depolarizing noise is applied to the memory register after each of the $M$ interactions in the DME step. However, at each error value, the QSP pipeline is optimized based on error in the \textit{final observable estimate}, a deliberate overfitting choice that physically corresponds to the raw learner being given prescient knowledge of the estimation bias caused by noise, which is our proxy for the $N_{\rm raw}$ setup in this work. Meanwhile, in the uploaded model, depolarizing noise is applied whenever a new state is loaded into the PCA register, and the QSP angles are fitted to satisfy a trace-distance condition on the output state. This configuration is fixed across all noise points and will yield the $N_{\rm inj}$ sample complexity.

Using the outputs of these circuits, we compute the mean and variance of the composite estimator. Given $N$ total repetitions, we assume a Gaussian model for the realized values of the estimator, and damp the standard deviation by $1/\sqrt{N}$. The resulting Gaussian density is used to compute hypothesis-testing success probability using standard z-score statistics. Sample complexity corresponding to a fixed success probability can be backed out directly from the z-score.

\putbib
\end{bibunit}

\end{document}